\documentclass[11pt]{article}

\usepackage[utf8]{inputenc}

\usepackage{setspace}
\usepackage{appendix}
\usepackage{xfrac}
\usepackage{amssymb,amsfonts}
\usepackage[figuresright]{rotating}
\usepackage{amsmath,amssymb}
\usepackage{graphicx}% Include figure files
\usepackage{dcolumn}% Align table columns on decimal point
\usepackage{bm}% bold math
\usepackage{amscd,amsthm}
\usepackage{ifthen}
\usepackage{mathtools}

\usepackage{pgf,tikz}
\usepackage{pgfplots}
\usetikzlibrary{calc,shadows}
\usetikzlibrary{pgfplots.groupplots}
\usepackage{subfigure}
\usepackage{url}
\usetikzlibrary{pgfplots.groupplots}

\usetikzlibrary{external}
\tikzexternalize[prefix=./]
%\tikzset{external/force remake}

\usepackage{amsthm}
\newtheorem{lemma}{Lemma}

\newtheorem{theorem}{Theorem}

\newtheorem{proposition}{Proposition}

\usepackage{hyperref}
\hypersetup{
    bookmarks=true,         % show bookmarks bar?
    unicode=false,          % non-Latin characters in Acrobat bookmarks
    pdftoolbar=true,        % show Acrobat toolbar?
    pdfmenubar=true,        % show Acrobat menu?
    pdffitwindow=false,     % window fit to page when opened
    pdfstartview={FitH},    % fits the width of the page to the window
    pdfauthor={Reinhard Heckel},     % author
    pdfsubject={Subject},   % subject of the document
    pdfcreator={Reinhard Heckel},   % creator of the document
    pdfproducer={Producer}, %!TEX encoding = UTF-8 Unicode: producer of the document
    pdfnewwindow=true,      % links in new window
    colorlinks=false,       % false: boxed links; true: colored links
    linkcolor=red,          % color of internal links
    citecolor=green,        % color of links to bibliography
    filecolor=magenta,      % color of file links
    urlcolor=cyan           % color of external links
}

%%%%%%%%%%
\setlength{\oddsidemargin}{0pt}
\setlength{\evensidemargin}{0pt}
\setlength{\textwidth}{6.5in}
\setlength{\topmargin}{-0.3in}
\setlength{\textheight}{8.89in}

%%%%%%%%%% 

 \newcommand{\mc}[0]{\mathcal }
 \newcommand{\mb}[0]{\mathbb }

\newcommand\norm[2][\Tnorm]{\ensuremath{{\left\Vert #2 \right\Vert}_{#1}}}

\newcommand\ind[2][\Tind]{\ensuremath{ {\mathbf{1}_{#1} (#2) } }}

\newcommand\Tinnerprod{}
\newcommand{\innerprod}[3][\Tinnerprod]{\ifthenelse{\equal{#1}{}}{\ensuremath{\left<#2,#3\right>}}{\ensuremath{\left<#2,#3\right>_{#1}}}}

\newcommand\PR[1]{\ensuremath{{\mathbb{P}}\!\left[#1\right]}}

\newcommand\vect[1]{\mathbf #1}

\renewcommand\Re[0]{\mathrm{Re}}

%%%%%%%%%%%%%%%%%%

\newcommand{\val}{ {\bm \alpha} }
\newcommand{\vbe}{ {\bm \beta} }

\newcommand{\va}{\vect{a}}  
\newcommand{\vb}{\vect{b}}

\newcommand{\vf}{\vect{f}}  
\newcommand{\vg}{\vect{g}}

\newcommand{\vn}{\vect{n}}
  
\newcommand{\vp}{\vect{p}}  
\newcommand{\vq}{\vect{q}}
\newcommand{\vr}{\vect{r}}  
\newcommand{\vs}{\vect{s}}  

\newcommand{\vu}{\vect{u}}  
\newcommand{\vv}{\vect{v}}  

\newcommand{\vx}{\vect{x}}  
\newcommand{\vy}{\vect{y}}  
\newcommand{\vz}{\vect{z}}

\newcommand{\mA}{\vect{A}}  
\newcommand{\mB}{\vect{B}} 
\newcommand{\mC}{\vect{C}} 
\newcommand{\mD}{\vect{D}}

\newcommand{\mF}{\vect{F}}
\newcommand{\mG}{\vect{G}}
\newcommand{\mH}{\vect{H}}
\newcommand{\mI}{\vect{I}}

\newcommand{\mL}{\vect{L}}

\newcommand{\mQ}{\vect{Q}}
\newcommand{\mR}{\vect{R}}

\newcommand{\mU}{\vect{U}}
\newcommand{\mV}{\vect{V}}

\newcommand{\mZ}{\vect{Z}}

 % vector x 
%%%%%%%%%%%%%%%%%%%%%%%%%%%%%%%%%%%%%%%%%

%% sets 

\newcommand{\safemath}[2]{\newcommand{#1}{\ensuremath{#2}}}
\safemath{\complexset}{\mathbb{C}}
\safemath{\reals}{\mathbb{R}}
\newcommand{\comp}[1]{\overline{#1}}

\newcommand\defeq{\coloneqq}
\newcommand{\abs}[1]{ \left| #1 \right| }

%%%% Probability Theory

\newcommand\Tex{}
\newcommand\Ex[2][\Tex]{%
\ifthenelse{\equal{#1}{}}{{\mathbb E}[#2]}{\ensuremath{\underset{#1}{\mathbb E}\left[ #2\right]}}}

\newcommand\EX[2][\Tex]{
\ifthenelse{\equal{#1}{}}{{\mathbb E}[#2]}{\ensuremath{\underset{#1}{\mathbb E}\left[ #2\right]}}}

%%%% Math operators 

\DeclareMathOperator{\sinc}{sinc}

%%%% Matrix Theory

 % Pseudoinverse
\newcommand{\inv}[1]{  {#1}^{ -1 } } % Inverse of a matrix

 % condition number of a matrix

\newcommand{\conj}[1]{ {#1}^* } % complex conjungate of a matrix
\newcommand{\herm}[1]{{#1}^H} % conjungate transpose of a matrix
\newcommand{\transp}[1]{{#1}^T} % transpose of a matrix
 % rank of a matrix
\newcommand{\mtx}[1]{\mathbf #1}
%%%%%%%%%%%%%%%%%%%%

\newcommand{\minlet}[1]{\tilde #1}

% specific definitions 

\newcommand{\sign}{\mathrm{sign}}

\newcommand{\sfunc}{s_{\hspace{-0.02cm}H}} % spreading function
\newcommand{\tfunc}{L_{\hspace{-0.02cm}H}} % time-varying transfer function

\newcommand{\prsig}[0]{x}  % letter for the probing signal
\newcommand{\prsigF}[0]{X}  % Fourier transform of the probing signal

\newcommand{\derd}{\partial}
\newcommand{\opnormss}{} % or \opnormss

\newcommand{\Tint}{T}% ovserved over [\Tint, \Tint]
\newcommand{\Bint}{B}
\newcommand{\btlim}[1]{\overline{#1}} % band/time limited
% discretization points
\newcommand\SRF{\mathrm{SRF}}
\newcommand\taum{\tau_{\max}}
\newcommand\num{\nu_{\max}}

\newcommand{\N}{N}
\newcommand{\K}{K}
\renewcommand{\S}{S}

\newcommand{\FK}{F} % letter for Fejer Kernel

\newcommand{\x}{\tau}
\newcommand{\y}{\nu}
\renewcommand{\L}{{L}} % 2\N + 1
\newcommand{\infdist}[1]{\left| #1\right|}
%%%%%%%%%%%%%%%%%%%%

\newcommand{\tauc}{\bar \tau}
\newcommand{\nuc}{\bar \nu}
\newcommand{\T}{\mathcal S} % R: before T

\newcommand{\TL}[1]{\tilde{#1}}

%%%%%%%%%%%%%%%%%%%%%%%%%%%%%%%%%%%%%%%%%

%%%%%%%%%%%%%%

\begin{document}

\title{Super-Resolution Radar}

% System identification via convex optimization 
\author{
Reinhard Heckel\thanks{
Corresponding author: \href{mailto:reinhard.heckel@gmail.com}{reinhard.heckel@gmail.com}
} \thanks{R.~Heckel was with the Department of Information Technology and Electrical Engineering, ETH Zurich, Switzerland and is now with IBM Research, Zurich, Switzerland} , Veniamin I.~Morgenshtern\thanks{Department of Statistics, Stanford University, CA} , and Mahdi Soltanolkotabi\thanks{Ming Hsieh Department of Electrical Engineering, University of Southern California, Los Angeles, CA }  
}

\date{Nov.~2014; revised Aug.~2015; last revised Dec.~2020}
%\author{%%%% First author details
%% draft mode
%%\openup 1em % \doublespacing
%\name{Reinhard Heckel$^*$}
%\address{Dept.~of Information Technology and Electrical Engineering, ETH Zurich, Zurich, Switzerland\thanks{R. Heckel was with the Dept. of Information Technology and Electrical Engineering, ETH Zurich, Switzerland, and is now with IBM Research, Zurich, Switzerland}\email{$^*$Corresponding author: reinhard.heckel@gmail.com}.}
%%%%%%%% Second author details
%\name{Veniamin I.~Morgenshtern}
%\address{Dept.~of Statistics, Stanford University, CA}
%%%%%%%%
%\and
%%%%%%%% Third author details
%\name{Mahdi Soltanolkotabi}
%\address{Ming Hsieh Dept. of Electrical Engineering, University of Southern California, Los Angeles, CA}}

%%%%%%%%%%%%

\maketitle

\begin{abstract}
% draft mode
%\openup 1em % \doublespacing
%In this paper we study the identification of a time-varying linear system whose response is a weighted superposition of delayed and Doppler-shifted versions of a probing signal. 
In this paper we study the identification of a time-varying linear system from its response to a known input signal. 
More specifically, we consider systems whose response to the input signal is given by a weighted superposition of delayed and Doppler shifted versions of the input. 
This problem arises in a multitude of applications such as wireless communications and radar imaging. Due to practical constraints, the input signal has finite bandwidth $B$, and the received signal is observed over a finite time interval of length $T$ only. This gives rise to a delay and Doppler resolution of $1/B$ and $1/T$.  
We show that this resolution limit can be overcome, i.e., we can exactly recover the continuous delay-Doppler pairs and the corresponding attenuation factors, by  solving a convex optimization problem. 
This result holds provided that the distance between the delay-Doppler pairs is at least $2.37/B$ in time or $2.37/T$ in frequency. Furthermore, this result allows the total number of delay-Doppler pairs to be linear up to a log-factor in $BT$, the dimensionality of the response of the system, and thereby the limit for identifiability. 
Stated differently, we show that we can estimate the time-frequency components of a signal that is $\S$-sparse in the \emph{continuous} dictionary of time-frequency shifts of a random window function, from a number of measurements, that is linear up to a log-factor in $\S$.
%{
%Super-resolution; radar; convex programming; compressed sensing; sparsity; line spectral estimation; linear time-varying system
%}
%%%% If classification number provided then
%\\
%2000 Math Subject Classification: 34K30, 35K57, 35Q80,  92D25
\end{abstract}

%!TEX root = superres_radar.tex

\section{Introduction}
The identification of \emph{time-varying} linear systems 
is a fundamental problem in many engineering applications. 
Concrete examples include radar and the identification of dispersive communication channels. 
%Radar systems and wireless communication channels are typically modeled as  linear systems whose response is a weighted superposition of delayed and Doppler shifted versions of the probing signal. 
%This model arises in radar, as we are sending a probing signal $x(t)$, that gets reflected by $\S$ moving targets that are modeled by points scatters, and observe the corresponding radiation reflected from those objects. 
%By identifying the delay-Doppler shifts 
%
%The time shifts due to the propagation of the signal, and the frequency shifts are Doppler shifts, 
%
%In radar imaging, this input-output relation corresponds to a scene consisting of $\S$ moving targets that are modeled by point scatters, $x(t)$ is the probing signal, and $(Hx)(t)$ is the radiation reflected by the point scatters. 
%The relative distances and velocities of the targets can be obtained from the time-frequency shifts.  
%Formally, we consider systems $H$ whose response to the probing signal $x(t)$ is given by 
%\begin{align} 
%(H x)(t)=\sum_{j=1}^\S  b_j ({\bar{\mathcal F}}_{\nuc_j} {\bar {\mathcal T}}_{\tauc_j} x)(t)
%%
%=\sum_{j=1}^\S  b_j \prsig(t-\tauc_j)  e^{i2\pi \nuc_j t}
%\label{eq:iorelintro}
%\end{align}
%where 
%\[
%({\bar {\mathcal T}}_{\tauc} x)(t)
%\defeq x(t - \tauc) 
%\quad \text{ and } \quad 
%({\bar{\mathcal F}}_{\nuc} x)(t) 
%\defeq 
%x(t) e^{i2\pi \nuc t}
%\]
%are time and frequency shift operators, respectively. 
In this paper, we study the problem of identifying a system $H$ whose response $y = Hx$  to the probing signal $x$ can be described by finitely many delays and Doppler shifts:
\begin{align} 
y(t)
=
\sum_{j=1}^\S  b_j \prsig(t-\tauc_j)  e^{i2\pi \nuc_j t}.
\label{eq:iorelintro}
\end{align}
Here, $b_j$ is the attenuation factor corresponding to the delay-Doppler pair  $(\tauc_j, \nuc_j)$. 
%The time-frequency (delay-Doppler) shifts $(\tauc_j, \nuc_j)$ and the attenuation factors $b_j$ completely characterize the system $H$. 
In radar imaging, for example, this input-output relation corresponds to a scene consisting of $\S$ moving targets modeled by point scatters, where the input $x$ is the probing signal transmitted by the radar, and the output $y$ is the superposition of the reflections of the  probing signal by the point scatters. %\MST{In particular, the delay-Doppler pair $(\tauc_j, \nuc_j)$ is a function of the relative distances and velocities of the targets.}
The relative distances and velocities of the targets can be obtained from the delay-Doppler pairs $(\tauc_j, \nuc_j)$. 
% and $x(t)$ is available for estimation of the system's parameters $(b_j,\tauc_j, \nuc_j)$. 

In order to identify the system $H$ (e.g.,~to locate the targets in radar) we need to estimate the continuous time-frequency shifts $(\tauc_j, \nuc_j)$ and the  corresponding attenuation factors $b_j$ from a single input-output measurement, i.e., from the response $y$ to some known and suitably selected probing signal $x$. % R: it is super important that this is an input-output measurement, not from input output measurements (many) this completely changes the problem!! Also note that 
%In order to identify the linear system $H$ (e.g.~to locate the targets in radar) we wish to estimate the continuous delay-Doppler shifts $(\tauc_j, \nuc_j)$ and the corresponding attenuation factors $b_j$ from input-output measurements of the probing signal $x(t)$ and its response $y(t)$. \MST{We note that in many applications the probing signal $x(t)$ is known and can even be selected in a suitable way.}
% R: this sentence is problematic, because a) it must be a single input output measurement c) the probing signal is known, and can be selected
%can be controlled (selected in a suitable way) and is known. 
%measurement, i.e., based on the response $Hx$ of the system $H$ to the probing signal $x$. It is important to point out that the probing signal $x$ can be controlled (selected in a suitable way) and is known. 
There are, however, important constraints on the type of input-output measurements that can be performed in practice: The probing signal $x$ must be band-limited and approximately time-limited. Also, the response $y$ can be observed only over a finite time interval. 
For concreteness, we assume that we observe the response $y$ over an interval of length $T$ and that $x$ has bandwidth $B$ and is approximately supported on a time interval of length proportional to $T$. 
%For practical reasons it is desireable to not choose 
%For practical reasons it is important to make $B$ and $T$ as small as possible. 
This time- and band-limitation determines the ``natural'' resolution of the system, i.e., the accuracy up to which the delay-Doppler pairs can be identified is $1/\Bint$ and $1/\Tint$ in $\tau$- and $\nu$-directions, respectively. 
This resolution is achieved by a standard pulse-Doppler radar that %samples the received signal at its Nyqist rate, and 
performs digital matched filtering in order to detect the delay-Doppler pairs. 

From \eqref{eq:iorelintro}, it is evident that band- and approximate time-limitation of $x$ implies that $y$ is band- and approximately time-limited as well---provided that the delay-Doppler pairs are compactly supported. 
For example in radar, due to path loss and finite velocity of the targets or objects in the scene this is indeed the case \cite{strohmer_pseudodifferential_2006}. 
Throughout, we will therefore assume that 
$
(\tauc_j, \nuc_j) \in [-T/2,T/2]\times[-B/2,B/2]
$. This is not a restrictive assumption as the region in the $(\tau,\nu)$-plane where the delay-Doppler pairs are located can have area $BT \gg 1$, 
which is very large. In fact, for certain applications, it is reasonable to assume that the system is \emph{underspread}, i.e., that the delay Doppler pairs lie in a region of area $\ll 1$ \cite{taubock_compressive_2010,bajwa_learning_2008,bajwa_identification_2011}. We do not need the underspread assumption in this paper. 

Since $y$ is band-limited and approximately time-limited, by the $2WT$-Theorem \cite{slepian_bandwidth_1976,durisi_sensitivity_2012}, it is essentially characterized by on the order of $BT$ coefficients. 
We therefore sample $y$ in the interval $[-T/2, T/2]$ at rate $1/B$, so as to collect $\L \defeq BT$ samples\footnote{For simplicity we assume throughout that $\L = BT$ is an odd integer.}. 
Furthermore, we choose $x$ partially periodic by taking its samples $x_\ell = x(\ell/B)$ to be $\L$-periodic for $3L$ many samples, and zero otherwise, so that $x$ is essentially supported on an interval of length $3T$. 
For the readers familiar with wireless communication, we point out that the partial periodization of $x$ serves a similar purpose as the cyclic prefix used in OFDM systems. 
%In addition to be band limited, $x$ is chosen as essentially time limited to $cT$ in practice as well. 
%Imposing an input band limitation and output time limitation leads to a discrete input-output relation. 
%% as outlined next, and detailed in Section \ref{sec:probform}. 
%Specifically, first note that, due to 
%Since $x(t)$ is band limited, it is uniquely specified by its samples $x(\ell/B)$; % according to $x(t) = \sum_{\ell} x(\ell/B) \sinc\left( tB - \ell \right).$. 
%we choose its samples as 
%We choose the samples as $x(\ell/B) = \x_\ell$, for $\ell/B \in [-3T/2, 3T/2]$, where $x_\ell$ is $\L$-periodic\footnote{For simplicity we assume throughout that $BT$ is an odd integer.} with $\L \defeq BT$, and $x(\ell/B) = 0$ for all other $\ell$. 
%Since the bandwidth of $x(t)$ is $B$, and the frequency shifts introduced by $H$ are no more than $B$, the bandwidth of the output $(Hx)(t)$ is at most $2B$, and thus we loose  little information when sampling at rate $1/B$ (in practice, often $\nu_j \ll B$, in which case we essentially do not lose any information). 
As detailed in Section \ref{sec:probform}, the corresponding samples $y_p \defeq y(p/B)$ in the interval $ p/B \in [-T/2, T/2]$ are given by 
\begin{align}
y_p
&= 
\sum_{j=1}^{\S} b_j  
[\mc F_{\nu_j}
\mc T_{\tau_j}
\vx ]_p
%
%=
%\sum_{j=1}^{\S} b_j
%e^{i2\pi p \nu_j  }
%\frac{1}{L} \sum_{k,\ell=-\N}^{\N}  e^{-i2\pi k  \tau_j }   e^{i2\pi (p-\ell) \frac{k}{L}  }  x_{\ell}
, \quad p = -\N,...,\N, \quad 
N \defeq \frac{L-1}{2},
\label{eq:periorel}
\end{align}
where 
\begin{align}
[\mc T_{\tau} \vx ]_p
\defeq
\frac{1}{L}
\sum_{k=-\N}^{\N} 
\left[ 
\left(
\sum_{\ell=-\N}^{\N}
x_{\ell} e^{- i2\pi \frac{\ell k}{L}  }
\right)
e^{-i2\pi k  \tau }   
\right]
e^{i2\pi \frac{p k}{L}  } 
\quad
\text{and}
\quad 
[\mc F_{\nu} \vx ]_p
\defeq x_p e^{i2\pi p \nu }.
\label{eq:deftimefreqshifts}
\end{align}
Here, we defined\footnote{To avoid ambiguity, from here onwards we refer to $(\tauc_j, \nuc_j)$ as delay-Doppler pair and to $(\tau_j, \nu_j)$ as time-frequency shift. 
%In this paper we use the following terminology consistently: the pairs $(\tauc_j, \nuc_j)$ are called delay-Doppler shifts and the pairs $(\tau_j, \nu_j)$ are called time-frequency shifts.
} the time-shifts $\tau_j \defeq  \tauc_j/T$ and frequency-shifts $\nu_j \defeq \nuc_j/B$. Since $(\tauc_j, \nuc_j) \in [-T/2,\allowbreak T/2] \allowbreak \times[-B/2,B/2]$ we have $(\tau_j, \nu_j) \in [-1/2,1/2]^2$. Since $\mc T_{\tau}\vx$ and $\mc F_{\nu}\vx$ are $1$-periodic in $\tau$ and $\nu$, we can assume in the remainder of the paper that $(\tau_j, \nu_j) \in [0,1]^2$. 
The operators $\mc T_{\tau}$ and $\mc F_{\nu}$ can be interpreted as fractional time and frequency shift operators in $\complexset^\L$. If the $(\tau_j,\nu_j)$ lie on a $(1/L,1/L)$ grid, the operators $\mc F_{\nu}$ and $\mc T_{\tau}$ reduce to the ``natural'' time frequency shift operators in $\complexset^\L$, i.e., $[\mc T_{\tau} \vx ]_p = x_{p - \tau \L}$ and $[\mc F_{\nu} \vx ]_p = x_p e^{i2\pi p \frac{\nu\L}{\L} }$. 
The definition of a time shift in \eqref{eq:deftimefreqshifts} as taking the Fourier transform, 
modulating the frequency, and taking the inverse Fourier transform is a very natural definition of a \emph{continuous} time-shift $\tau_j \in [0,1]$ of a \emph{discrete} vector $\vx = \transp{[x_0,...,x_{\L-1}]}$. 
Finally note that to obtain \eqref{eq:periorel} (see Section \ref{sec:probform}) from \eqref{eq:iorelintro}, we approximate a periodic sinc function with a finite sum of sinc functions (this is where partial periodization of $x$ becomes relevant). Thus \eqref{eq:periorel} does not hold exactly if we take the probing signal to be essentially time-limited. However, in Section \ref{sec:probform} we show that the incurred relative error decays as $1/\sqrt{\L}$ % (in the $\ell_2$-norm of the samples), 
and is therefore negligible for large $\L$. 
The numerical results in Section \ref{sec:numres} indeed confirm that this error is negligible. 
If we took $x$ to be $T$-periodic on $\reals$,  \eqref{eq:periorel} becomes exact, but at the cost of $x$ not being time-limited. 

The problem of identifying the system $H$ with input-output relation \eqref{eq:iorelintro} under the constraints that the probing signal $x$ is band-limited and the response to the probing signal $y=Hx$ is observed on a finite time interval now reduces to the  estimation of the triplets $(b_j, \tau_j, \nu_j)$ from the samples in \eqref{eq:periorel}. 
Motivated by this connection to the continuous system model, 
in this paper, we consider the problem of recovering the attenuation factors $b_j$ and the corresponding time-frequency shifts $(\tau_j, \nu_j )\in [0,1]^2, j=1,...,\S$, from the samples $y_p, p=-\N,...,\N$, in~\eqref{eq:periorel}. 
We call this the super-resolution radar problem, as recovering the exact time-frequency shifts $(\tau_j,\nu_j)$ ``breaks'' the natural resolution limit  of $(1/B,1/T)$ achieved by standard pulse-Doppler radar. 
%$(1/\L,1/\L)$ (for the $(\tauc_j,\nuc_j)$ this translates to a resolution limit of $(1/B,1/T)$). 

Alternatively, one can view the super-resolution radar problem as that of recovering a signal that is $\S$-sparse in the continuous dictionary of time-frequency shifts of an $\L$-periodic sequence $x_\ell$. %of samples of the probing signal. 
In order to see this, and  to better understand the super-resolution radar problem, it is instructive to consider two special cases.
%In this paper, we present a simple convex program that allows us to recover the $(b_n, \tau_n, \nu_n)$ under very general conditions. 
%Note that \eqref{eq:periorel} is a weighted superposition of \emph{partial} time and frequency shifted versions of the (input) sequence $a_\ell$, where \emph{partial} time-frequency shifts arise do to the band and time-limitation of the input and output, respectively. 
%To better understand the complexity of this problem, and the conditions under which recovery is possible, we next discuss special cases. 
\subsection{Time-frequency shifts on a grid \label{sec:ongrid}}
If the delay-Doppler pairs $(\tauc_j, \nuc_j)$ lie on a $(\frac{1}{B},  \frac{1}{T})$ grid or equivalently if the time-frequency shifts $(\tau_j, \nu_j)$ lie on a $(\frac{1}{L},\frac{1}{L})$ grid, 
the super-resolution radar problem reduces to a sparse signal recovery problem with a Gabor measurement matrix. 
To see this, note that in this case $\tau_j \L$ and $\nu_j\L$ are integers in $\{0,...,\L-1\}$, and  \eqref{eq:periorel} reduces to 
%\MS{we may want to use $(\tilde{\tau}_j,\tilde{\nu}_j)$ instead of $(m_j,q_j)$ because it is confusing.} R: changed
%To see this, suppose that the time-frequency shifts lie on a $(\frac{1}{B},  \frac{1}{T})$ grid, i.e., $\tauc_j= \frac{m_j}{B}$, $\nuc_j = \frac{q_j}{T}$, where $m_j,q_j \in \{0,...,\L-1\}$ are the positions of the time-frequency shifts on the grid. With $\tau_j = \frac{m_j}{BT}= \frac{m_j}{\L}$ and $\nu_j = \frac{q_j}{BT} = \frac{q_j}{\L}$, \eqref{eq:periorel} reduces to 
\begin{align}
y_p
&= \sum_{j=1}^{\S} b_j
x_{p - \tau_j \L}
e^{i2\pi \frac{ (\nu_j \L)   p}{\L}  },  \quad p = -\N,...,\N.\label{eq:periorel_gabor}
\end{align}
Equation \eqref{eq:periorel_gabor} can be written in matrix-vector form 
\[
\vy = \mG \vs. 
\]
Here, $[\vy]_p \defeq y_p$, $\mG \in \complexset^{\L \times \L^2}$ is the Gabor matrix with window $\vx$, where the entry in the $p$th row and $(\tau_j\L, \nu_j\L)$-th column is $x_{p - \tau_j \L} e^{i2\pi \frac{(\nu_j \L)   p}{\L} }$,   
and $\vs \in \complexset^{\L^2}$ is a sparse vector where the $j$-th non-zero entry is given by $b_j$ and is indexed by $(\tau_j\L, \nu_j\L)$. 
%\MS{We should write the formula exactly here we can not refer to a future equation. What is $x_\ell$. I suggest replacing the sentence above with the following}  R: the matrix is defined by the equation above, 
Thus, the recovery of the triplets $(b_j,\tau_j,\nu_j)$ amounts to recovering the $\S$-sparse vector $\vs \in \complexset^{\L^2}$ from the measurement vector $\vy \in \complexset^\L$. This is a sparse signal recovery problem with a Gabor measurement matrix. A---by now standard---recovery approach is to solve a simple convex $\ell_1$-norm-minimization program. 
From \cite[Thm.~5.1]{krahmer_suprema_2014} we know that, provided the $x_\ell$ are i.i.d.~sub-Gaussian random variables, and provided that $S\le c L/(\log L)^4$ for a sufficiently small numerical constant $c$,  with high probability, all $\S$-sparse vectors $\vs$ can be recovered from $\vy$ via $\ell_1$-minimization.
Note that the result \cite[Thm.~5.1]{krahmer_suprema_2014} only applies to the Gabor matrix $\mG$ and therefore does not apply to the super-resolution problem where the ``columns'' $\mc F_{\nu} \mc T_{\tau} \vx$ are highly correlated.

%%%%%%%%%%%%%%%%%%%%%%%%%%%%%%%%%%%%%%%%%
\subsection{Only time or only frequency shifts \label{sec:redsupres}}

Next, we consider the case of only time or only frequency shifts, and show that in both cases recovery of the $(b_j,\tau_j)$ and the $(b_j,\nu_j)$, is equivalent to the recovery of a weighted superposition of spikes from low-frequency samples. Specifically, if $\tau_j = 0$ for all $j$, \eqref{eq:periorel} reduces to 
\begin{align}
y_p = 
 x_p \sum_{j=1}^{\S} b_j
e^{i2\pi p \nu_j  }, \quad p = -\N,...,\N.
\label{eq:supres}
\end{align}
The $y_p$ above are samples of a mixture of $\S$ complex sinusoids, and  estimation of the $(b_j,\nu_j)$ corresponds to determining the magnitudes and the frequency components of these sinusoids. 
%Expressed differently, the $y_p$ are the lowest $\L$ Fourier series coefficients of a signal %$z$ 
%that is a %$x = \sum_{n=1}^\S b_n \delta_{\nu_n}$, where $\delta_{\nu_n}$
%weighted (by the $b_j$) superposition of Dirac measures at locations $\nu_j$. 
%In this sense, the $y_p$ correspond to the low-frequency components of $z$. %, i.e., we essentially observe the projection of $x$ onto its low-frequency components. 
Estimation of the $(b_j,\nu_j)$ is known as a line spectral estimation problem, and can be solved using approaches such as Prony's method \cite[Ch.~2]{gershman_space-time_2005}. 
Recently, an alternative approach for solving this problem has been proposed, specifically in~\cite{candes_towards_2014} it is shown that exact recovery of the $(b_j,\nu_j)$ is possible by solving a convex total-variation norm minimization problem. This results holds provided that the minimum separation between any two $\nu_j$ is larger than $2/\N$. %Furthermore, the corresponding estimator  performs well in the presence of noise.
% This result is interesting as it shows that the positions of the spikes can be identified exactly by solving a (simple) convex optimization program. 
An analogous situation arises when there are only time shifts ($\nu_j = 0$ for all $j$) as taking the discrete Fourier transform of $y_p$ yields a relation exactly of the form \eqref{eq:supres}.
%%%%%%%%%%%%%%%%%%%%%%%%%

\subsection{Main contribution}

In this paper, we consider a random probing signal by taking the $x_\ell$ in \eqref{eq:periorel} to be i.i.d. Gaussian (or sub-Gaussian) random variables. We show that with high probability, the triplets $(b_j,\tau_j,\nu_j)$ can be recovered perfectly from the $L$ samples $y_p$ by essentially solving a convex program. This holds provided that two conditions are satisfied:
\begin{itemize}
\item \emph{Minimum separation condition:} We assume the time-frequency shifts $(\tau_j,\nu_j) \in [0,1]^2, j = 1,...,\S$, satisfy the minimum separation condition 
\begin{align}
\max(|\tau_j - \tau_{j'}|, |\nu_j - \nu_{j'}| ) \geq \frac{2.38}{\N}
%\max(|\tau_j - \tau_{j'}|, |\nu_j - \nu_{j'}| ) \geq \frac{2.38}{\N}
, \text{ for all } j\neq j',
\label{eq:minsepcond}
\end{align}
where $|\tau_j - \tau_{j'}|$ is the wrap-around distance on the unit circle. For example, $|3/4-1/2|=1/4$ but $|5/6-1/6|=1/3\neq 2/3$. 
Note that the time-frequency shifts must not be separated in both time \emph{and} frequency, e.g., \eqref{eq:minsepcond} can hold even when $\tau_j = \tau_{j'}$ for some $j\neq j'$. 
%Let $T = \{\vr_1, \vr_2,...,\vr_\S\} \subset [0,1]^2$ be any set of points obeying the minimum separation condition 
%\[
%\infdist{\vr_j - \vr_k} \geq  \frac{2.38}{\N}, \text{ for all } \vr_j, \vr_k \in \T \text{ with } \vr_j \neq \vr_k,
%\]
\item \emph{Sparsity:}
We also assume that the number of time-frequency shifts $S$ obeys
\begin{align*}
S\le c\frac{L}{(\log L)^3}, 
%\label{eq:llinscon}
\end{align*}
where $c$ is a numerical constant.
\end{itemize}
This result is essentially optimal in terms of the allowed sparsity level, as the number $\S$ of unknowns can be linear---up to a log-factor---in the number of observations $L$. Even when we are given the time-frequency shifts $(\tau_j,\nu_j)$, we can only hope to recover the corresponding attenuation factors $b_j, j=1,...,\S$, by solving \eqref{eq:periorel}, provided that $\S \leq \L$. 

We note that some form of separation between the time-frequency shifts is necessary for stable recovery. 
To be specific, we consider the simpler problem of line spectral estimation (cf.~Section \ref{sec:redsupres}) that is obtained from our setup by setting $\tau_j=0$ for all $j$. 
Clearly, any condition necessary for the line spectral estimation problem is also necessary for the super-resolution radar problem. 
Consider an interval of the $\nu$-axis of length $\frac{2\S'}{\L}$. 
If there are more than $\S'$ frequencies $\nu_j$ in this interval, then the problem of recovering the $(b_j,\nu_j)$ becomes extremely ill-posed when $\S'$ is large \cite[Thm.~1.1]{donoho1992superresolution}, \cite{morgenshtern2014stable}, \cite[Sec.~1.7]{candes_towards_2014}, \cite{beurling1989collected}. 
Hence, in the presence of even a tiny amount of noise, stable recovery is not possible. The condition in~\eqref{eq:minsepcond} 
allows us to have  $0.42\,\S'$ time-frequency shifts in an interval of length $\frac{2\S'}{\L}$, an optimal number of frequencies up to the constant $0.42$. 
We emphasize that while some sort of separation between the time-frequency shifts is necessary, the exact form of separation required in \eqref{eq:minsepcond} may not be necessary for stable recovery and less restrictive conditions may suffice.
Indeed, in the simpler problem of line spectral estimation (i.e., $\tau_j=0$ for all $j$), Donoho \cite{donoho1992superresolution} showed that stable super-resolution is possible via an exhaustive search algorithm even when condition~\eqref{eq:minsepcond} is violated locally as long as every interval of the $\nu$-axis of length $\frac{2\S'}{\L}$ contains less than $\S'/2$ frequencies $\nu_j$ and $\S'$ is small (in practice, think of $\S'\lesssim 10$). The exhaustive search algorithm is infeasible in practice and an important open question in the theory of line spectral estimation is to develop a feasible algorithm that achieves the stability promised in \cite{donoho1992superresolution}. In the special case when the $b_j$ are real and positive, stable recovery can be achieved by convex optimization \cite{morgenshtern2014stable,Caratheodory_ueber_1911,fuchs_sparsity_2005}, see also  \cite{schiebinger_superresolution_2015} for recent results on more general positive combination of waveforms.

%\footnote{The theorem continues to hold for the $a_\ell$ zero-mean sub-Gaussian with variance $1/L$.} $\mathcal N(0,1/\L)$, $\L \defeq 2\N + 1$, and let the sign of the $b_n$ be i.i.d.~uniform on $\{-1,1\}$. 

% on the random sign.. 

% implications for the continuous setup
Translated to the continuous setup, our result implies that with high probability we can identify the triplets $(b_j, \tauc_j,\nuc_j)$ perfectly provided that  
\begin{equation}
	\label{eq:mindist}
%	\max(|\tauc_j - \tauc_{j'}|, |\nuc_j - \nuc_{j'}| ) \geq \frac{4.77}{\N}, \text{ for all } j\neq j', R: this is wrong!
|\tauc_j -\tauc_{j'}| \geq \frac{4.77}{B} \,\text { or } \, |\nuc_j - \nuc_{j'}| \geq \frac{4.77}{T}, \quad \text{ for all } j\neq j'
\end{equation}
and $S\le c\frac{BT}{\left(\log (BT)\right)^3}$.
%\label{eq:llinscon}
%\]
%
Since we can exactly identify the delay-Doppler pairs $(\tauc_j,\nuc_j)$, our result offers a significant improvement in resolution over conventional radar techniques. 
Specifically, with a standard pulse-Doppler radar that samples the received signal and performs digital matched filtering in order to detect the targets, 
%only the natural resolution of $(1/T,1/B)$ can be attained, i.e., 
the delay-Dopper shifts $(\tauc_j,\nuc_j)$ can only be estimated up to an uncertainty of about $(1/T,1/B)$. 

We hasten to add that in the radar literature, the term super-resolution is often used for  the ability to resolve very close targets, specifically even closer than the Rayleigh resolution limit \cite{quinquis_radar_2004} that is proportional to $1/B$ and $1/T$ for delay and Doppler resolution, respectively.  
%Our work permits identification up to a constant factor within the Rayleigh resolution limit (cf.~\eqref{eq:mindist}), but does not allow resolving closer targets. 
%not beyond. 
Our work permits identification of a \emph{each} target with precision much higher than $1/B$ and $1/T$ as long as other targets are not too close,  specifically other targets should be separated by a constant multiple of the Rayleigh resolution limit (cf. \eqref{eq:mindist}).

Recall that $(\tau_j, \nu_j) \in [0,1]^2$ translates to $(\tauc_j,\nuc_j) \in [-T/2, T/2] \times [-B/2,B/2]$, i.e., the $(\tauc_j,\nuc_j)$ can lie in a rectangle of area $\L=BT \gg 1$, i.e., the system $H$ does not need to be underspread\footnote{A system is called underspread if its spreading function is supported on a rectangle of area much less than one.}. 
The ability to handle systems that are \emph{overspread} is important in radar applications. Here,  we might need to resolve targets with large relative distances and relative velocities, resulting in delay-Doppler pairs $(\tauc_j,\nuc_j)$ that lie in a region of area larger than $1$ in the time-frequency plane. 

We finally note that standard non-parametric estimation methods such as the MUSIC algorithm can in general not be applied directly to the super-resolution radar problem. 
That is, because MUSIC relies on \emph{multiple} measurements (snapshots) \cite[Sec.~6.3]{stoica_spectral_2005}, whereas we assume only a \emph{single} measurement $y_p, p = -\N, ..., \N$ to be available. 
However, by choosing the probing signal $\vx$ in \eqref{eq:periorel} to be periodic, the single measurement $y_p, p = -\N, ..., \N$ can be transformed into multiple measurements and MUSIC may be applied. This approach, discussed in detail in Appendix \ref{app:MUSIC}, however, requires the $\nu_j$ to be distinct, the time-shifts $\tau$ to lie in a significantly smaller range than $[0,1]$, $S < \sqrt{L}$, 
and is (significantly) more sensitive to noise than our convex programming based approach. 
For the case that multiple measurements are available, e.g., by observing distinct paths of a signal by an array of antennas, subspace methods have been studied for delay-Doppler estimation \cite{jakobsson_subspace-based_1998}.

\subsection{Notation}
We use lowercase boldface letters to denote (column) vectors and uppercase boldface letters to designate matrices. 
The superscripts $\transp{}$ and $\herm{}$ stand for transposition and Hermitian transposition, respectively. 
For the vector $\vx$, $x_q$ and $[\vx]_q$ denotes its $q$-th entry, $\norm[2]{\vx}$ its $\ell_2$-norm and $\norm[\infty]{\vx} = \max_q |x_q|$ its largest entry.  
For the matrix $\mA$, $[\mA]_{ij}$ designates the entry in its $i$-th row and $j$-th column, %$\pinv{\mA}$ % \defeq \inv{(\herm{\mA} \mA )} \herm{\mA}$
% is its pseudo-inverse, 
 $\norm[\opnormss]{\mA} \defeq\;$ $\max_{\norm[2]{\vv} = 1  } \norm[2]{\mA \vv}$ its spectral norm, $\norm[F]{\mA} \defeq (\sum_{i,j} |[\mA]_{ij}|^2 )^{1/2}$ its Frobenius norm, and $\mA \succeq 0$ signifies that $\mA$ is positive semidefinite. 
The identity matrix is denoted by $\mI$. 
For convenience, we will frequently use a two-dimensional index for vectors and matrices, e.g., we write $[\vg]_{(k,\ell)}, k,\ell=-\N,...,\N$ for 
$
\vg = \transp{[g_{(-\N,-\N)}, g_{(-\N,-\N+1)}, ...,g_{(-\N,\N)},g_{(-\N+1,-\N)},...,g_{(\N,\N)}]}. 
$
For a complex number $b$ with  polar decomposition $b = |b|e^{i2\pi \phi}$, $\sign(b) \defeq e^{i2\pi \phi}$. Similarly, for a vector $\vb$, $[\sign(\vb)]_k \defeq \sign([\vb]_k)$. For the set $\T$, $|\T|$ designates its cardinality and $\comp{\T}$ is its complement. The sinc-function is denoted as $\sinc(t)  = \frac{\sin(\pi t)}{\pi t}$. 
For vectors $\vr, \vr' \in [0,1]^2$, $\infdist{\vr - \vr'}=\max(|r_1-r'_1|,|r_2-r'_2|)$. Here, $|x-y|$ is the wrap-around distance on the unit circle between two scalars $x$ and $y$. For example, $|3/4-1/2|=1/4$ but $|5/6-1/6|=1/3\neq 2/3$. Throughout, $\vr$ denotes a 2-dimensional vector with entries $\x$ and $\y$, i.e., $\vr = \transp{[\x,\y]}$. 
Moreover $c,\tilde c, c', c_1,c_2,...$ are numerical constants that can take on different values at different occurrences. Finally, $\mathcal N(\mu, \sigma^2)$ is the Gaussian distribution with mean $\mu$ and variance $\sigma^2$. 
%Finally, $\omega \defeq e^{-i2\pi}$. 

\section{Recovery via convex optimization \label{sec:recovery}}

In this section we present our approach to the recovery of the parameters $(b_j, \tau_j,\nu_j)$ from the samples $y_p$ in \eqref{eq:periorel}. Before we proceed we note that \eqref{eq:periorel} can be rewritten as (see Appendix \ref{app:discspfunc} for a detailed derivation) 
\begin{align}
y_p =
\sum_{j=1}^{\S} b_j
\sum_{k,\ell = -\N}^\N 
D_{\N} \! \left( \frac{\ell}{\L} - \tau_j \right) D_{\N} \! \left( \frac{k}{\L} - \nu_j \right)  x_{p- \ell}  e^{i2\pi \frac{ pk}{\L}}, \quad p = -\N,...,\N,
\label{eq:iowithdirchkernel1}
\end{align}
%
%\begin{align}
%y_p =
%\sum_{j=1}^{\S} b_j
%\sum_{\ell,r = -\N}^\N 
%D_{\N} \! \left( \frac{\ell}{\L} - \tau_j \right) D_{\N} \! \left( \frac{r }{\L} - \nu_j \right)  x_{p- \ell}  e^{i2\pi \frac{r p}{\L}}, \quad p = -\N,...,\N
%\label{eq:iowithdirchkernel1}
%\end{align}
%
where 
\begin{align}
D_{\N}(t) 
\defeq %\sum_{k\in \mb Z} \sinc\left( L(t - k) \right)
%\frac{\sin(\pi L t)}{L \sin(\pi t)} 
\frac{1}{L} \sum_{k=-\N}^{\N} e^{i2\pi t k}
\label{eq:defDirichlet}
\end{align}
is the Dirichlet kernel.
\newcommand{\setA}{\mathcal A}
We define atoms $\va \in \complexset^{\L^2}$ as 
\begin{align}
[\va(\vr)]_{(k,\ell)} = %e^{i2\pi \phi}   
D_{\N} \! \left( \frac{\ell}{\L} - \x \right) D_{\N} \! \left( \frac{k }{\L} -  \y \right), \quad \vr = \transp{[\x,\y]},\quad k,\ell = -\N,...,\N. 
\label{eq:defatoms}
\end{align}
Rewriting \eqref{eq:iowithdirchkernel1} in matrix-vector form yields 
\[
\vy = \mG \vz, \quad \vz =   \sum_{j=1}^{\S}  |b_j| e^{i2\pi \phi_j} \va(\vr_j), \quad \vr_j = \transp{[\x_j,\y_j]}, 
\]
where $b_j = |b_j|e^{i2\pi \phi_j}$ is the polar decomposition of $b_j$ and $\mG \in \complexset^{\L \times \L^2}$ is the Gabor matrix defined by 
\begin{align}
[\mG]_{p, (k,\ell)} \defeq x_{p- \ell}  e^{i2\pi \frac{k p}{\L}}, \quad k,\ell, p = -\N,...,\N. 
%y_p = \sum_{\ell,r = -\N}^\N x_{(\ell,r)} x_{p- \ell}  e^{i2\pi \frac{r p}{\L}}
\label{eq:defgabormtx}
\end{align}
The signal $\mathbf{z}$ is a sparse linear combination of time and frequency shifted versions of the atoms $\va(\vr)$. A regularizer that promotes such a sparse linear combination is the atomic norm induced by these signals~\cite{chandrasekaran_convex_2012}. The atoms in the set $\setA \defeq \{ e^{i2\pi \phi} \va(\vr), \vr \in [0,1]^2,\phi \in [0,1]\}$ are the building blocks of the signal~$\vz$. The atomic norm $\norm[\setA]{\cdot}$ is defined as 
\[
\norm[\setA]{\vz} 
= \inf \left\{ t > 0\colon \vz \in t\, \mathrm{conv}(\setA) \right\}
%= \inf_{|b_n|\geq 0, \vr \in [0,1]^, b_n} \left\{ \sum_n |b_n| \colon \vz = \sum_n |b_n|e^{i2\pi \phi_n} \va(\tau_n, \nu_n, \phi_n) \right\}.
= \inf_{b_j \in \complexset, \vr_j \in [0,1]^2} \left\{ \sum_j |b_j| \colon \vz = \sum_j b_j \va(\vr_j) \right\},
\]
where $\mathrm{conv}(\setA)$ denotes the convex hull of the set $\setA$. 
The atomic norm can enforce sparsity in $\setA$ because low-dimensional faces of $\mathrm{conv}(\setA)$ correspond to signals involving only a few atoms \cite{chandrasekaran_convex_2012,tang_compressed_2013}. 
A natural algorithm for estimating $\vz$ %(and in turn $b_j, \mathbf{r}_j$) 
is the atomic norm minimization problem \cite{chandrasekaran_convex_2012}
\newcommand{\AN}{\mathrm{AN}}
\begin{align}
\AN(\vy) \colon \;\; \underset{\minlet{\vz}  }{\text{minimize}} \,  \norm[\setA]{\minlet{\vz} } \; \text{ subject to } \; \vy = \mG \minlet{\vz}.
\label{eq:primal}
\end{align}
Once we obtain $\vz$, the recovery of the time-frequency shifts is a 2D line spectral estimation problem that can be solved with standard approaches such as Prony's method, see e.g.~\cite[Ch.~2]{gershman_space-time_2005}. In Section \ref{sec:estfromdual}, we will present a more direct approach for recovering the time-frequency shifts $\mathbf{r}_j$. When the time-frequency shifts $\vr_j$ are identified, the coefficients $b_j$ can be obtained by solving the linear system of equations 
\[
\vy = \sum_{j=1}^\S b_j \mG \va(\vr_j). 
\]
Computation of the atomic norm involves taking the infimum over infinitely many parameters and may appear to be daunting. 
For the case of only time or only frequency shifts (cf.~Section \ref{sec:redsupres}), the atomic norm can be characterized in terms of linear matrix inequalities \cite[Prop.~2.1]{tang_compressed_2013}; this allows us to formulate the atomic norm minimization program as a semidefinite program that can be solved efficiently. 
The characterization \cite[Prop.~2.1]{tang_compressed_2013} relies on a classical Vandermonde decomposition lemma for Toeplitz matrices by Carath\'eodory and Fej\'er. 
While this lemma generalizes to higher dimensions \cite[Thm.~1]{yang_vandermonde_2015}, this generalization fundamentally comes with a rank constraint on the corresponding Toeplitz matrix. This appears to prohibit a characterization of the atomic norm paralleling that of \cite[Prop.~2.1]{tang_compressed_2013} which explains why no semidefinite programming formulation of the atomic norm minimization problem \eqref{eq:primal} is known, to the best of our knowledge. 
Nevertheless, based on \cite[Thm.~1]{yang_vandermonde_2015}, one can obtain a semidefinite programming \emph{relaxation} of $\AN(\vy)$. 

Instead of taking that route, and explicitly stating the corresponding semidefinite program, we show in Section \ref{sec:estfromdual} that the time-frequency shifts $\vr_j$ can be identified directly from the dual solution of the atomic norm minimization problem $\AN(\vy)$ in \eqref{eq:primal}, and propose a semidefinite programming \emph{relaxation}  that allows us to find a solution of the dual efficiently. 

%Note that for the case of only time or only frequency shifts, the atomic norm can be characterized in terms of linear matrix inequalities \cite[Prop.~2.1]{tang_compressed_2013}.  

%However, since the atomic norm can be characterized in terms of linear matrix inequalities, \eqref{eq:primal} can be formulated as a semidefinite program, which allows to recover $\vz$ efficiently. Instead of taking that route, and explicitly stating the corresponding semidefinite program, we show in Section \ref{sec:estfromdual} that the time-frequency shifts $\vr_j$ can be identified directly from the dual solution of the atomic norm minimization problem \eqref{eq:primal}, which can also be solved via semidefinite programming. 
% Furthermore, we shall also show that the dual of \eqref{eq:primal} can be solved via semidefinite programming.} 

%%%%%%%%%%
\section{Main result \label{sec:prefres}}

Our main result, stated below, provides conditions guaranteeing that the solution to $\AN(\vy)$ in \eqref{eq:primal} is $\vz$ (perfect recovery).
%$\hat\vz=\vz$. % solution $\hat\vz$ to
%
% ~\eqref{eq:primal} perfectly recovers $\vz =   \sum_{j=1}^{\S} b_j \va(\vr_j)$.
As explained in Section \ref{sec:recovery}, from $\vz$ we can obtain the triplets $(b_j, \tau_j,\nu_j)$ easily. 

\begin{theorem} Assume that the samples of the probing signal $x_\ell, \ell =-N,...,N$, are i.i.d.~$\mathcal N(0,1/\L)$ random variables, $\L=2\N+1$. Let $\mathbf{y}\in\mathbb{C}^L$, with $L \geq 1024$, contain the samples of the output signal obeying the input-output relation \eqref{eq:periorel}, i.e., 
\[
\vy = \mG \vz, \quad \vz = \sum_{\vr_j \in \T}  b_j \va(\vr_j),
\]
where $\mathbf{G}$ is the Gabor matrix of time-frequency shifts of the input sequence $x_\ell$ defined in \eqref{eq:defgabormtx}. Assume that the $\sign(b_j)$ are i.i.d.~uniform on $\{-1,1\}$ and that the set of time-frequency shifts $\T = \{\vr_1, \vr_2,...,\vr_\S \} \allowbreak \subset [0,1]^2$ obeys the minimum separation condition 
\begin{equation}
%\infdist{\vr_j - \vr_{j'}} \geq  \frac{2.38}{\N}, \text{ for all } \vr_j, \vr_{j'} \in \T  \text{ with } j \neq j'.
\max(|\tau_j - \tau_{j'}|, |\nu_j - \nu_{j'}| ) \geq \frac{2.38}{\N}
\text{ for all } [\tau_j, \nu_j], [\tau_{j'}, \nu_{j'}] \in \T  \text{ with } j \neq j'.
\label{eq:minsepcond1}
\end{equation}
Furthermore, choose $\delta > 0$ and assume  
\begin{align*}
S\le c\frac{L}{(\log(L^6/\delta))^3}, 
%\label{eq:llinscon}
\end{align*}
%\[
%L 
%\leq c \log(c'L^6/\delta)^3 
%= \log(c'L^6/\delta)^3
%\]
where $c$ is a numerical constant. Then, with probability at least $1-\delta$, $\vz$ is the unique minimizer of $\AN(\vy)$ in \eqref{eq:primal}. 
%\footnote{This theorem holds for $N\ge 512$.}
\label{thm:mainres}
\end{theorem}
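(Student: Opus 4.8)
The plan is to establish perfect recovery by constructing a valid \emph{dual certificate} for the atomic norm minimization problem $\AN(\vy)$, following the now-standard strategy initiated by Cand\`es and Fernandez-Granda for super-resolution and extended to the compressive setting by Tang, Bhaskar, Shah, and Recht. By convex duality, $\vz = \sum_j b_j \va(\vr_j)$ is the unique minimizer of \eqref{eq:primal} provided there exists a dual vector $\vq \in \complexset^\L$ such that the trigonometric-polynomial-like function $Q(\vr) \defeq \langle \herm{\mG}\vq, \va(\vr)\rangle$ interpolates the signs of the coefficients on the support, i.e. $Q(\vr_j) = \sign(b_j)$ for all $\vr_j \in \T$, and is strictly bounded, $|Q(\vr)| < 1$ for all $\vr \notin \T$. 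Because the atoms $\va(\vr)$ are built from products of Dirichlet kernels, $Q$ is a bivariate object expressible through the random window $\vx$, and the extra subtlety compared with \cite{candes_towards_2014} is that the "kernel" governing the interpolation is random rather than deterministic.

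The key steps, in order: (i) Introduce the random kernel obtained by combining the Gabor structure with the squared Dirichlet (Fej\'er-type) kernel — concretely the map $\vr \mapsto \herm{\mG}\mG\va(\vr)$, together with its partial derivatives, and compute its expectation over the Gaussian $x_\ell$; one expects $\Ex{}{}$ of the relevant quantities to reproduce, up to scaling, the two-dimensional separable Fej\'er kernel that appears in the deterministic super-resolution proof. (ii) Build the \emph{candidate} certificate by the interpolation ansatz: take $Q(\vr) = \sum_j \alpha_j \bar K(\vr,\vr_j) + \sum_j \langle \bm\beta_j, \nabla_2 \bar K(\vr,\vr_j)\rangle$ with $\bar K$ the expected kernel, and solve the linear system enforcing $Q(\vr_j)=\sign(b_j)$ and $\nabla Q(\vr_j)=0$; invertibility of this system and the bound $|Q(\vr)|<1$ off the support follow from the minimum-separation condition \eqref{eq:minsepcond1} exactly as in the deterministic 2D analysis (this is where the constant $2.38$ and the requirement $L \ge 1024$ enter, via explicit kernel-decay estimates). (iii) Show the \emph{random} certificate $Q$ concentrates uniformly around this expected certificate: control $\|\herm{\mG}\mG\va(\vr) - \Ex{}{\herm{\mG}\mG\va(\vr)}\|$ and the analogous deviations for derivatives, uniformly over $\vr$ in the unit square, using chaining / Bernstein-type bounds for the suprema of chaos processes driven by the sub-Gaussian $x_\ell$ — this is precisely the technology of \cite{krahmer_suprema_2014} and is the reason the sparsity budget degrades to $S \lesssim L/(\log(L^6/\delta))^3$. (iv) Combine (ii) and (iii): on the high-probability event, $Q$ still satisfies $Q(\vr_j)=\sign(b_j)$, $|Q(\vr)|<1$ off $\T$, and the needed curvature conditions near each $\vr_j$, so $\vq$ with $\herm{\mG}\vq$ equal to the coefficient pattern of $Q$ is a valid certificate; uniqueness follows from the strict inequality plus a standard argument that any other feasible point would have strictly larger atomic norm.

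I would organize the off-support bound by splitting the unit square into a region "near" $\T$ (within a constant multiple of $1/\N$ of some $\vr_j$), where one argues $|Q|<1$ via a second-order Taylor expansion using $Q(\vr_j)=\pm1$, $\nabla Q(\vr_j)=0$, and negative-definiteness of the Hessian of $|Q|^2$ there, and a region "far" from $\T$, where crude magnitude bounds on the kernel and its perturbation suffice; on a fine grid one gets the bound with a union bound, then extends to all $\vr$ by a Bernstein-type inequality controlling the Lipschitz constant of $Q$. The randomness of the signs $\sign(b_j)$ being i.i.d.\ uniform $\{-1,1\}$ is used, Rademacher-style, to gain a $\sqrt{\log}$ factor in the concentration of $Q$ — this is the standard trick that lets $S$ be nearly linear in $L$ rather than $\sqrt L$.

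The main obstacle will be step (iii): proving that the random interpolation kernel $\herm{\mG}\mG\va(\vr)$ and its first and second derivatives concentrate \emph{uniformly in $\vr\in[0,1]^2$} around their expectations, with deviation small enough to preserve both the interpolation identities (after correction) and the strict sub-unit bound. This requires bounding the supremum of a second-order chaos process in the entries $x_\ell$ indexed by the continuum $\vr$, tracking how the relevant $\ell_2\to\ell_2$, Frobenius, and $\ell_{2\to\infty}$ metrics scale with $S$, $L$, and the separation, and doing so simultaneously for the value, gradient, and Hessian of the certificate — the bookkeeping of these mixed norms, together with an $\epsilon$-net argument that is fine enough yet does not blow up the union bound, is the technical heart of the proof and the source of the $(\log(L^6/\delta))^3$ factor.
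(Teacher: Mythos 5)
Your high-level strategy---dual-certificate construction via an interpolation ansatz, inversion of a small linear system for the interpolation coefficients, pointwise concentration, and extension to all of $[0,1]^2$ by an $\epsilon$-net plus Bernstein's polynomial inequality---is exactly the paper's route. But your step~(ii) as written has a gap that is one of the central technical obstacles the paper has to solve. You propose the ansatz $Q(\vr) = \sum_j \alpha_j \bar K(\vr,\vr_j) + \sum_j \langle \bm\beta_j, \nabla_2 \bar K(\vr,\vr_j)\rangle$ with $\bar K$ the \emph{expected} (deterministic) kernel. A valid dual certificate, however, must lie in the range of the map $\vq \mapsto \innerprod{\vq}{\mc F_\nu\mc T_\tau\vx}$, i.e.\ its coefficients must have the constrained form \eqref{eq:dualpolyinprop}. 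A linear combination of the deterministic squared Fej\'er kernel and its derivatives is \emph{not} of this form and so is simply not feasible for the dual---no amount of concentration repairs that. The paper constructs \emph{random} interpolating functions $G_{(m,n)}(\vr,\vr_j) = \frac{\L^2}{M^2}\innerprod{\mG\herm{\mF}\vg_{(m,n)}(\vr_j)}{\mc F_\nu\mc T_\tau\vx}$ specifically so that each term, and hence $Q$, is of the required form, and then shows (via the Hanson--Wright inequality) that these random kernels concentrate around $\bar G^{(m+m',n+n')}(\vr-\vr_j)$. A closely related point: even once you switch to a random kernel, you cannot take its partial derivatives in the second argument as your correction functions, because differentiating a function of form \eqref{eq:dualpolyinprop} leaves the admissible family; this is why the paper uses three \emph{separate} random kernels $G_{(0,0)}, G_{(1,0)}, G_{(0,1)}$ rather than one kernel and its $\nabla_2$-derivatives. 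And because the interpolation kernels are random, the $3\S\times3\S$ coefficient matrix $\mD$ is random as well; its invertibility does \emph{not} ``follow exactly as in the deterministic 2D analysis'' but must itself be established by showing $\norm[\opnormss]{\mD-\bar\mD}$ is small with high probability (Lemma~\ref{lem:preptaubound}).

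A secondary difference: you invoke the suprema-of-chaos-processes machinery of Krahmer et al.\ for the uniform deviation bound. The paper instead uses Hanson--Wright for the quadratic form $\herm{\vx}\mV\vx$ at each fixed $\vr$, a union bound over a polynomially sized grid, and Bernstein's polynomial inequality to pass to the continuum. Both routes are plausible, but the paper's is more elementary and localizes the randomness cleanly into the entries of $\mD$ and the vectors $\vv^{(m,n)}(\vr)$, which is what makes the Rademacher/Hoeffding gain from the random signs $\sign(b_j)$ transparent in the final sparsity bound.
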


Recall that the complex-valued coefficients $b_j$ in \eqref{eq:iorelintro} in the radar model describe the attenuation factors. %(e.g.~reflection of the probing signal from the $j$-th target). 
Therefore, it is natural to assume that the phases of different $b_j$ are independent from each other and are uniformly distributed on the unit circle of the complex plane. Indeed, in standard models in wireless communication and radar~\cite{bello_characterization_1963}, the $b_j$ are assumed to be complex Gaussian. To keep the proof simple, in Theorem~\ref{thm:mainres} we assume that the $b_j$ are real-valued. The assumption that the coefficients $b_j$ have random sign is the real-valued analogue of the random phase assumption discussed above. Theorem~\ref{thm:mainres} continues to hold for complex-valued $b_j$ (only the constant $2.38$ in \eqref{eq:minsepcond1} changes slightly). While this random sign assumption is natural for many applications, we believe that is not necessary for our result to hold. Finally, we would like to point out that Theorem \ref{thm:mainres} continues to hold for sub-Gaussian sequences $x_\ell$ with trivial modifications to our proof. 

The proof of Theorem \ref{thm:mainres} is based on analyzing the dual of $\AN(\vy)$. We will prove that the recovery is perfect by constructing an appropriate dual certificate. The existence of this dual certificate guarantees that the solution to $\AN(\vy)$ in \eqref{eq:primal} is $\vz$. 
This is a standard approach, e.g., in the compressed sensing literature, the existence of a related dual certificate guarantees that the solution to $\ell_1$-minimization is exact \cite{candes_robust_2006}. 
Specifically, the dual of $\AN(\vy)$ in \eqref{eq:primal} is \cite[Sec.~5.1.16]{boyd_convex_2004}
\begin{align}
\underset{\vq}{\text{maximize}} \; \Re \innerprod{\vq}{\vy} \text{ subject to } \norm[\setA^\ast]{\herm{\mG} \vq} \leq 1, 
\label{eq:dual}
\end{align}
where $\vq = \transp{[ q_{-\N}, ..., q_{\N}]}$ and 
\begin{align*}
\norm[\setA^\ast]{\vv}  
= \sup_{\norm[\setA]{\vz} \leq 1} \Re \innerprod{\vv}{\vz} 
%= \sup_{\vr \in [0,1]^2, \phi \in [0,1]} \Re \innerprod{\vv}{e^{i2\pi \phi}\va(\vr)}
= \sup_{\vr \in [0,1]^2} \left| \innerprod{\vv}{\va(\vr)} \right| 
\end{align*}
is the dual norm. 
Note that the constraint of the dual \eqref{eq:dual} can be rewritten as: 
\begin{align}
\norm[\setA^\ast]{\herm{\mG} \vq} 
= \sup_{\vr \in [0,1]^2} \left| \innerprod{ \vq}{\mG \va(\vr)} \right|
= \sup_{[\tau,\nu] \in [0,1]^2} \left| \innerprod{\vq}{ \mc F_\nu \mc T_\tau  \vx } \right| 
%= \sup_{\vr \in [0,1]^2} |Q(\vr)| 
\leq 1, 
\label{eq:constrdual}
\end{align}
where we used $\mG \va(\vr) =  \mc F_\nu \mc T_\tau  \vx$. By definition of the time and frequency shifts in \eqref{eq:deftimefreqshifts} it is seen that $\innerprod{\vq}{ \mc F_\nu \mc T_\tau  \vx }$ is a 2D trigonometric polynomial (in $\tau,\nu$, see   \eqref{eq:dualpolyinprop} for its specific form). 
The constraint in the dual is therefore equivalent to the requirement that the absolute value of a specific 2D trigonometric polynomial is bounded by one. 
A sufficient condition for the success of atomic norm minimization is given by the existence of a certain dual certificate of the form $\innerprod{\vq}{ \mc F_\nu \mc T_\tau  \vx }$. This is formalized by Proposition \ref{prop:dualmin} below and is a consequence of strong duality. Strong duality is implied by Slater's conditions being satisfied \cite[Sec.~5.2.3]{boyd_convex_2004} (the primal problem only has equality constraints). 
%Since the primal problem only has equality constraints, Slater's condition holds and implies strong duality \cite[Sec.~5.2.3]{boyd_convex_2004}.

%dual certificate of the form $\innerprod{\vq}{ \mc F_\nu \mc T_\tau  \vx }$ (cf.~\eqref{eq:constrdual}) by the following proposition. 
%By weak duality, we have that for $\vz$ primal feasible and for $\vq$ dual feasible that
%\[
%\Re \innerprod{\vq}{\vy} =  \Re \innerprod{\vq}{\mG \vz} \leq \norm[\setA]{\vz}.
%\]
%The following proposition is a consequence of strong duality and provides a way to show that the solution to $\AN(\vy)$ in \eqref{eq:primal} is $\vz$,provided there exists a certain dual certificate of the form $\innerprod{\vq}{ \mc F_\nu \mc T_\tau  \vx }$ (cf.~\eqref{eq:constrdual}). %, i.e., a trigonometric polynomial with specific properties.

\begin{proposition} 
Let $\vy=\mG\vz$ with $\vz =   \sum_{\vr_j \in \T}  b_j  \va(\vr_j)$. If there exists a dual polynomial
$%\begin{align*}
Q(\vr) =  \innerprod{\vq}{\mc F_{\nu} \mc T_{\tau} \vx } 
%=   \sum_{p=-\N}^{\N} \conj{[\mc F_{\nu} \mc T_{\tau} \vx]}_p q_p
%= \sum_{p=-\N}^{\N} \left( e^{-i2\pi p \y}  \sum_{k,\ell = -\N}^{\N}  x_\ell e^{i2\pi (p-\ell) \frac{k}{L} } e^{-i2\pi k \x}  \right)  q_p
%\left( \omega^{p \y}  \sum_{k,\ell = -\N}^{\N}  x_\ell \omega^{-(p-\ell) \frac{k}{L} } \omega^{k \x}  \right)  q_p
$%\end{align*}
% \[
% \left( e^{-i2\pi p \y}  \sum_{k,\ell = -\N}^{\N}  x_\ell e^{i2\pi (p-\ell) \frac{k}{L} } e^{-i2\pi k \x}  \right)  q_p
% \]
with complex coefficients $\vq = \transp{[q_{-\N}, ..., q_{\N}]}$ such that 
\begin{align}
Q(\vr_j) = \sign(b_j), \text{ for all } \vr_j \in \T, \text{ and } |Q(\vr)| < 1 \text{ for all } \vr \in [0,1]^2 \setminus \T
\label{eq:dualpolyinatmincon}
\end{align}
then $\vz$ is the unique minimizer of $\AN(\vy)$. 
Moreover, $\vq$ is a dual optimal solution. 
\label{prop:dualmin}
\end{proposition}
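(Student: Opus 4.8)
The plan is the standard dual-certificate argument: I will show that $(\vz,\vq)$ is a simultaneously primal/dual optimal pair via a weak-duality sandwich, and then upgrade optimality to uniqueness using the \emph{strict} bound $|Q(\vr)|<1$ off $\T$. First, weak duality: for any $\minlet{\vz}$ with $\mG\minlet{\vz}=\vy$ and any $\vq'$ with $\norm[\setA^\ast]{\herm{\mG}\vq'}\le 1$, writing $\minlet{\vz}$ as a convex combination of atoms and using $|\innerprod{\herm{\mG}\vq'}{\va(\vr)}|=|\innerprod{\vq'}{\mG\va(\vr)}|\le\norm[\setA^\ast]{\herm{\mG}\vq'}$ gives $\Re\innerprod{\vq'}{\vy}=\Re\innerprod{\herm{\mG}\vq'}{\minlet{\vz}}\le\norm[\setA^\ast]{\herm{\mG}\vq'}\norm[\setA]{\minlet{\vz}}\le\norm[\setA]{\minlet{\vz}}$, so the optimal value of \eqref{eq:dual} never exceeds the optimal value of $\AN(\vy)$.

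Next I would check that the hypothesised $\vq$ is dual feasible and attains $\norm[\setA]{\vz}$. By \eqref{eq:constrdual}, $\norm[\setA^\ast]{\herm{\mG}\vq}=\sup_{\vr\in[0,1]^2}|Q(\vr)|$, and \eqref{eq:dualpolyinatmincon} forces this supremum to equal exactly $1$ (it equals $1$ on $\T$ and is $<1$ off $\T$), so $\vq$ is feasible for \eqref{eq:dual}. Moreover $\vy=\mG\vz=\sum_{\vr_j\in\T}b_j\,\mG\va(\vr_j)=\sum_j b_j\,\mc F_{\nu_j}\mc T_{\tau_j}\vx$, and a direct computation using $Q(\vr_j)=\sign(b_j)$ gives $\Re\innerprod{\vq}{\vy}=\sum_j|b_j|$ (for the real-valued $b_j$ of Theorem~\ref{thm:mainres}, $b_j\sign(b_j)=|b_j|$ makes this immediate). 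Since the particular representation $\vz=\sum_j b_j\va(\vr_j)$ already gives $\norm[\setA]{\vz}\le\sum_j|b_j|$, writing $p^\star,d^\star$ for the optimal values of $\AN(\vy)$ and \eqref{eq:dual} and combining with weak duality yields $\sum_j|b_j|=\Re\innerprod{\vq}{\vy}\le d^\star\le p^\star\le\norm[\setA]{\vz}\le\sum_j|b_j|$, so every inequality is an equality: $\vz$ solves $\AN(\vy)$, $\vq$ solves \eqref{eq:dual}, and $\norm[\setA]{\vz}=\sum_j|b_j|$.

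For uniqueness, let $\hat{\vz}$ be any minimizer. The atom set $\setA$ is compact (a continuous image of $[0,1]^2\times[0,1]$), so the infimum defining $\norm[\setA]{\hat{\vz}}$ is attained; write $\hat{\vz}=\sum_k\hat b_k\va(\hat\vr_k)$ with distinct $\hat\vr_k$ and $\sum_k|\hat b_k|=\norm[\setA]{\hat{\vz}}=\norm[\setA]{\vz}$. Then $\norm[\setA]{\vz}=\Re\innerprod{\vq}{\vy}=\Re\innerprod{\vq}{\mG\hat{\vz}}=\Re\sum_k\hat b_k Q(\hat\vr_k)\le\sum_k|\hat b_k||Q(\hat\vr_k)|\le\sum_k|\hat b_k|=\norm[\setA]{\vz}$, forcing $|Q(\hat\vr_k)|=1$ for every $k$ with $\hat b_k\neq 0$. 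Since $|Q(\vr)|<1$ for $\vr\notin\T$, each such $\hat\vr_k$ lies in $\T$, so $\hat{\vz}=\sum_{\vr_j\in\T}\hat c_j\va(\vr_j)$ is supported on $\T$. Finally $\mG(\hat{\vz}-\vz)=0$ reads $\sum_j(\hat c_j-b_j)\,\mc F_{\nu_j}\mc T_{\tau_j}\vx=0$, and since the vectors $\mc F_{\nu_j}\mc T_{\tau_j}\vx$, $j=1,\dots,\S$, are linearly independent we get $\hat c_j=b_j$ for all $j$, i.e.\ $\hat{\vz}=\vz$.

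Since this is the standard dual-certificate template (the strong-duality statement the text alludes to, made explicit by exhibiting the certificate), I do not expect a real obstacle in proving the proposition itself. The one point deserving a line of justification is the linear independence of $\{\mc F_{\nu_j}\mc T_{\tau_j}\vx\}_{j=1}^{\S}$ used in the last step of the uniqueness argument: this requires $\S\le\L$ (guaranteed by the sparsity hypothesis of Theorem~\ref{thm:mainres}) and holds with probability one for the i.i.d.\ Gaussian window $\vx$, since it is a proper algebraic condition on $\vx$ — these vectors are in fact orthonormal whenever the $\vr_j$ lie on the $(1/\L,1/\L)$ grid. The genuinely hard work, namely constructing a $2$D trigonometric polynomial $Q$ with the interpolation and boundedness properties \eqref{eq:dualpolyinatmincon}, lies outside this proposition and is carried out in the subsequent sections.
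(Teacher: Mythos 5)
Your proof is correct and follows essentially the same dual-certificate argument the paper uses in Appendix~\ref{sec:proofprop:dualmin}: dual feasibility of $\vq$, the sandwich $\norm[\setA]{\vz}\le\Re\innerprod{\vq}{\vy}\le\norm[\setA]{\vz}$ (the paper phrases the first inequality via the definition of the atomic norm and the second via H\"older; you phrase both via weak duality, which is the same content), and then uniqueness by showing any minimizer is supported on $\T$ and invoking linear independence. The only substantive difference is a welcome one: you are more precise than the paper in the last step, noting correctly that what is needed is linear independence of $\{\mG\va(\vr_j)\}_{j=1}^{\S}=\{\mc F_{\nu_j}\mc T_{\tau_j}\vx\}_{j=1}^{\S}$ (not of the atoms $\va(\vr_j)$ in $\complexset^{\L^2}$ themselves, as the paper's phrasing loosely suggests) and that this holds almost surely for Gaussian $\vx$ once $\S\le\L$; you also spell out the compactness/Carath\'eodory justification that the atomic decomposition of a competing minimizer is attained, which the paper leaves implicit.
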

The proof of Proposition \ref{prop:dualmin} is standard, see e.g., \cite[Proof of Prop.~2.4]{tang_compressed_2013}, and is provided in Appendix \ref{sec:proofprop:dualmin} for convenience of the reader. 
The proof of Theorem \ref{thm:mainres} consists of  constructing a dual polynomial satisfying the conditions of Proposition \ref{prop:dualmin}, see Section \ref{app:proofmainres}.

%%%%%%%%%
\section{%Derivation of the input-output relation%Detailed problem formulation
Relationship between the continuous time and discrete time models
\label{sec:probform}
}

In this section we discuss in more detail how the discrete time model \eqref{eq:periorel} follows from the continuous time model \eqref{eq:iorelintro} through band- and time-limitations. This section is geared towards readers with interest in radar and wireless communication applications. Readers not interested in the detailed justification of \eqref{eq:periorel} may wish to skip this section on a first reading.
We start by %providing an alternative way of thinking about the continuous input-output relation and 
explaining the effect of band- and time-limitation in continuous time. We show that \eqref{eq:periorel} holds \emph{exactly} when the probing signal is $T$-periodic, and holds \emph{approximately} when the probing signal is essentially time-limited on an interval of length $3T$, as discussed in the introduction. Finally, we explicitly quantify the corresponding approximation error. 

As mentioned previously, radar systems and wireless communication channels are typically modeled as linear systems whose response is a weighted superposition of delayed and Doppler-shifted versions of the probing signal. In general, the response $y=Hx$ of the system $H$ to the probing signal $\prsig$ is given by
\begin{equation}
y(t) = \iint   \sfunc (\tau,\nu) \prsig(t-\tau)  e^{i2\pi \nu t}  d\nu d\tau,
\label{eq:ltvsys}
\end{equation}
where $\sfunc(\tau,\nu)$ denotes the spreading function associated with the system. 
In the channel identification and radar problems, 
the probing signal $x$ can be controlled by the system engineer and is known. The spreading function depends on the scene and is unknown. 
%, which describes the scene being sensed. 
We assume that the spreading function consists of $\S$ point scatterers. In radar, these point scatterers correspond to moving targets. Mathematically, this means that the spreading function specializes to 
\begin{align}
\sfunc(\tau,\nu) = \sum_{j=1}^{\S} b_j \delta(\tau-\tauc_j) \delta(\nu-\nuc_j).
\label{eq:origradarspreadfunc}
\end{align}
Here, $b_j$, $j=1,...,\S$, are (complex-valued) attenuation factors associated with the delay-Doppler pair $(\tauc_j,\nuc_j)$. 
%Owing to path loss and finite velocity of the targets or objects in the scene, we may assume that \cite{strohmer_pseudodifferential_2006}
%\begin{align}
%(\tauc_j, \nuc_j) \in [0,\taum]\times [0, \num]
%\label{eq:lieonrect}
%\end{align}
%for some constants $\taum, \num$. 
With \eqref{eq:origradarspreadfunc}, \eqref{eq:ltvsys} reduces to the input-output relation \eqref{eq:iorelintro} stated in the introduction, i.e., to
\[
y(t) = \sum_{j=1}^\S  b_j \prsig(t-\tauc_j)  e^{i2\pi \nuc_j t}  .
\]
%The goal is to identify the triplets $(b_j,\tauc_j,\nuc_j)$ from the response $y(t)$ to a (known) probing signal $\prsig(t)$, as those parameters characterize the system. For the radar application, this yields the position and relative speed of the objects. 

%%%
\subsection{The effect of band- and time-limitations}

In practice, the probing signal $x$ has finite bandwidth $B$ and the received signal $y$ can only be observed over a finite time interval of length $T$. 
We refer to this as time-limitation, even though $y$ is non-zero outside of the time interval of length $T$.  
As shown next, this band- and time-limitations lead to a discretization of the input-output relation \eqref{eq:ltvsys} and determine the ``natural'' resolution of the system of $1/\Bint$ and $1/\Tint$ in $\tau$- and $\nu$-directions, respectively. 
 Specifically, using the fact that $\prsig$ is band-limited to $[-\Bint/2,\Bint/2]$, \eqref{eq:ltvsys} can be rewritten in the form
\begin{equation}
y(t) = %\frac{1}{B\Tint} 
\sum_{k \in \mb Z}\sum_{\ell \in \mb Z} \overline{\sfunc}\! \left(\frac{\ell}{B},\frac{k}{\Tint} \right) \prsig \!\left( t-\frac{\ell}{B}  \right) e^{i2\pi \frac{k}{\Tint} t},
\label{eq:sfunc_discrete}
\end{equation}
%for $0\leq t < \Tint$, 
with $t \in [-\Tint/2,\Tint/2]$. Here,
\begin{align}
\overline{\sfunc}(\tau,\nu) \label{eq:btlimiorel} 
\defeq %B\Tint  \sinc( \tau B ) \ast \sfunc(\tau,\nu) \ast \sinc( \nu  \Tint  ).
%B\Tint  
\iint \! \sfunc(\tau',\nu')    \sinc( (\tau \!- \!\tau') B ) \sinc( (\nu \!-\! \nu')  \Tint  )  d\tau' d\nu'
\end{align}
is a smeared version of the original spreading function. 
The relation \eqref{eq:sfunc_discrete} appears in  \cite{bello_characterization_1963}; for the sake of completeness, we detail the  derivations leading to \eqref{eq:sfunc_discrete} in Appendix \ref{app:iorelproof}. 
For point scatterers, i.e., for the spreading function in \eqref{eq:origradarspreadfunc}, \eqref{eq:btlimiorel} specializes to
\begin{align}
%	\label{eq:sbar}
\overline{\sfunc}(\tau,\nu)  = %B\Tint  \sinc( \tau B ) \ast \sfunc(\tau,\nu) \ast \sinc( \nu  \Tint  ).
%B\Tint  
\sum_{j=1}^{\S} b_j    \sinc( (\tau - \tauc_j) B ) \sinc( (\nu - \nuc_j)  \Tint  ).
\label{eq:specialradar}
\end{align}
%
%%%%%%%
%
\begin{figure}
\centering
\includegraphics[width=\textwidth]{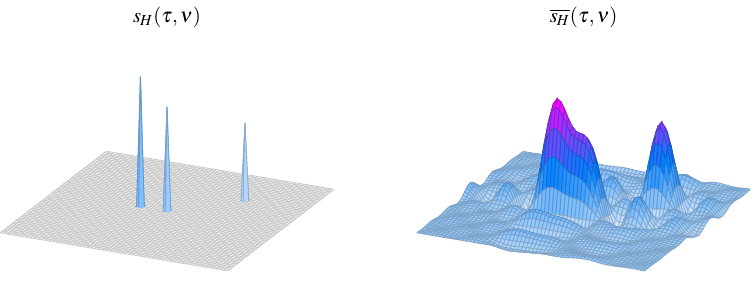}
%\begin{tikzpicture}[scale=1] 
%\begin{groupplot}[axis lines=none,group style={group size=2 by 1,horizontal sep=1.4cm},
%zmin=-0.2,zmax=1,
%ylabel=$\tau$,
%xlabel=$\nu$,
%width=0.5\textwidth,/tikz/font=\small,colormap/cool]
%    \nextgroupplot[]% 1
%	 \addplot3[surf,shader=faceted] table[x index=1,y index=0,z index=2] {./dat/sorig_tfs.dat};
%	 \nextgroupplot[]% 2
%	 \addplot3[surf,shader=faceted] table[x index=1,y index=0,z index=2] {./dat/smeared_tfs.dat};
%  \end{groupplot}
%\node (title1) at ($(group c1r1.center)+(0,2.4cm)$) {$\sfunc(\tau,\nu)$};
%\node (title2) at ($(group c2r1.center)+(0,2.4cm)$) {$\overline{\sfunc}(\tau,\nu)$};
%\end{tikzpicture}
\caption{\label{fig:intersect} Illustration of the spreading function $\sfunc(\tau,\nu)$ and the corresponding smeared spreading function $\overline{\sfunc}(\tau,\nu)$.}
\end{figure}%
%%%%%%
%
%Band-limiting the input and time-limiting the corresponding output hence leads to a discretization of the input-output relation, with ``resolution'' $1/\Bint$ in $\tau$-direction and $1/\Tint$ in $\nu$-direction. This ``resolution'' determines the accuracy up to which an object can be localized. 
%Thus there is a fundamental tradeoff between choosing $T$ and $B$ small due to practical constraints, and choosing $T$ and $B$ large, in order to improve the accuracy of object localization. To overcome this tradeoff, we propose to study a super-resolution radar architecture, inspired by recent results on super-resolving point sources at unknown locations from low-pass Fourier measurements \cite{candes_towards_2014}. 
%
Imagine for a moment that we could measure $\overline{\sfunc}(\tau,\nu)$ directly. We see that $\overline{\sfunc}(\tau,\nu)$ is the 2D low-pass-filtered version of the signal $\sfunc(\tau,\nu)$ in~\eqref{eq:origradarspreadfunc}, where the filter has resolution $1/B$ in $\tau$ direction and resolution $1/T$ in $\nu$ direction; see Figure \ref{fig:intersect} for an illustration. %We call the resolution $1/B$ in $\tau$ and $1/T$ in $\nu$ the ``natural'' resolution of the radar. Is it possible to break this natural resolution limit? 
%\textcolor{red}{We shall refer to this function as the smeared spreading function.} RH: terminology introduced earlier now
Estimation of the triplets $(b_j,\tauc_j,\nuc_j),\ j=1,\ldots,S$,  from $\overline{\sfunc}(\tau,\nu)$ is the classical 2D line spectral estimation problem. %that can be solved provided the minimum separation between the time-frequency shifts $(\tau_n, \nu_n)$ is sufficiently large.
In our setup,  the situation is further complicated by the fact that we can not measure $\overline{\sfunc}(\tau,\nu)$ directly. We only get access to $\overline{\sfunc}(\tau,\nu)$ after the application of the Gabor linear operator in~\eqref{eq:sfunc_discrete}. %To summarize, the problem of super-resolution radar consists of recovering the high-frequency information about the collection of spikes~\eqref{eq:origradarspreadfunc} based on low-pass signal~\eqref{eq:btlimiorel}, to which the linear Gabor operator~\eqref{eq:sfunc_discrete} was applied.
 
%It follows from \eqref{eq:btlimiorel} that for a compactly supported $\sfunc(\tau,\nu)$ the corresponding quantity $\overline{\sfunc}(\tau,\nu)$ will not be compactly supported. 
%Most of the volume of $\overline{\sfunc}(\tau,\nu)$ will, however, be supported on 
%\[
%[0,\taum)\times [0, \num) + (-1/\Bint,1/\Bint) \times (-1/\Tint,1/\Tint)
%\]
%so that we can approximate \eqref{eq:sfunc_discrete} by restricting summation to the indices $(r,l)$ satisfying $(r/B, l/\Tint) \in [0,\taum)\times [0, \num)$. 
%If $\taum \gg 1/\Bint$ and if $\num \gg 1/\Tint$, this is a good approximation. 
%If also $\num \ll \Bint$ and $\taum \ll \Tint$, then we can recover the samples $\overline{\sfunc}\! \left(\frac{r}{B},\frac{l}{\Tint} \right)$ from the samples of $y(t)$ for $t = m/B$, for $t\in [0,\Tint)$. Note that $\overline{\sfunc}(\tau,\nu)$ is exactly band limited to $[0,1/\Tint) \times [0,1/\Bint)$ and 
%
%For 
%\[
%\sfunc(\tau,\nu) = \sum_i x_i \delta(\tau - \tau_i)\delta(\nu - \nu_i)
%\]
%we have
%\begin{align}
%\overline{\sfunc}(\tau,\nu)  = %B\Tint  \sinc( \tau B ) \ast \sfunc(\tau,\nu) \ast \sinc( \nu  \Tint  ).
%B\Tint  \sum_i x_i    \sinc( (\tau - \tau_i) B ) \sinc( (\nu - \nu_i)  \Tint  )   \label{eq:io} 
%\end{align}
%thus $\overline{\sfunc}(\tau,\nu)$ is approximately compactly supported on $[0,\taum)\times [0, \num)$. 
%
%

\subsection{Choice of probing signal}

Next, we consider a probing signal that is band-limited and essentially time-limited to an interval of length $3T$. To be concrete, we consider the signal 
%\MS{This is a bit confusing because so far we said the probing signal was $x(t)$ not $\tilde{x}(t)$}
%$T$-periodic on an interval of length $3T$ and decays fast outside this interval. 
%We next consider a probing signal that is bandlimited and essentially time-limited to an interval of length $3T$. We define this signal as 
\begin{align}
\TL{x}(t) = \sum_{\ell=-\L-\N}^{\L+\N} x_\ell \sinc(tB - \ell)
\label{eq:truncsignal}
\end{align}
%where $h(t)$ is the Fej\'er kernel (we might as well choose a kernel different to the Fej\'er kernel; the properties we need is that $h(t)$ is bandlimited, and decays sufficiently fast in time):
%\[
%h(t) = \left(  \frac{\sin(t \pi B/2 )}{ t \pi B/2}   \right)^2. 
%\]
where the coefficients $x_\ell$ are $L$-periodic, with the $x_\ell, \ell = -\N,...,\N$, i.i.d.~$\mc N(0,1/\L)$. A realization of the random signal $\TL{x}(t)$ along with its Fourier transform is depicted in Figure \ref{fig:prsig}. 
Since the sinc-kernel above is band-limited to $[-\frac{B}{2},\frac{B}{2}]$, $\TL{x}$ is band-limited to $[-\frac{B}{2},\frac{B}{2}]$ as well. 
As the sinc-kernel decays relatively fast, $\TL{x}$ is essentially supported on the interval $[-\frac{3T}{2},\frac{3T}{2}]$. 
We hasten to add that there is nothing fundamental about using the sinc-kernel to construct $\TL{x}$ here;  
we choose it out of mathematical convenience, and could as well use a kernel that decays faster in time. 
%Our choice is soley out of mathematical convenience\footnote{Specifically, we choose the probing signal to explicitly specify its samples $\TL{x}_\ell(\ell/B)$ for $\ell = -2\L,...,\L$, and $\TL{x}_\ell(p/B) = 0$ else. If we choose a signal $\bar x(t)$ that decays faster than $\TL{x}(t)$ outside the interval $[-2T,T]$, we could still achieve that $\bar{x}_\ell(\ell/B) = x_\ell$ for $\ell=-2\L, ...,\L$, however its samples $\bar{x}_\ell(\ell/B)$ outside this interval would be non-zero (but small), and we would need to bound the corresponding errors. }
%Note that by construction $\TL{x}(t)$ is bandlimited and essentially time limited to an interval of length $3T$. Moreover, its samples satisfy $\TL{x}(\ell/B) = \TL{x}_\ell$. 
For the readers familiar with wireless communication, we point out that the partial periodization of $\TL{x}$ serves a similar purpose as the cyclic prefix used in OFDM systems. 

\begin{figure}
\centering
\includegraphics{./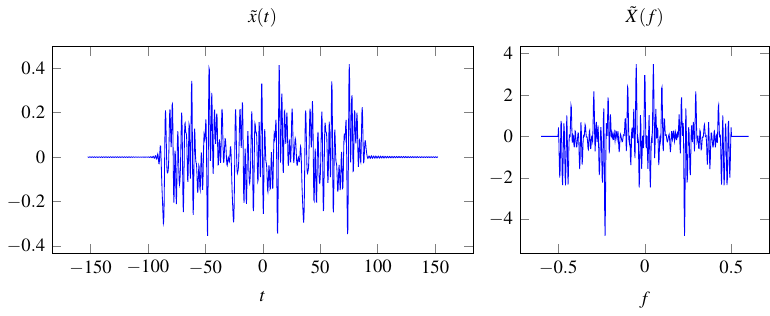}
% \begin{tikzpicture}[scale=1] 
% \begin{groupplot}[ group style={group size=2 by 1,horizontal sep=0.8cm,vertical sep=2.5cm},
% %ylabel=$d$,
% %xlabel=$\gamma$,
% width=0.5\textwidth,/tikz/font=\small]
%     
%     \nextgroupplot[title=$\tilde x(t)$, width=0.6\textwidth,height=0.35\textwidth,xlabel=$t$]% 1
% 	\addplot +[mark=none] table[x index=0,y index=1]{./dat/prsig_sinc.dat}; 
% 	%\addlegendentry{$\tilde x(t)$}; 
% 		
% 	 \nextgroupplot[title=$\tilde X(f)$,xlabel=$f$,width=0.4\textwidth,height=0.35\textwidth]% 2
% 	\addplot +[mark=none] table[x index=0,y index=1]{./dat/Fprsig_sinc.dat};
%	 %\addlegendentry{$\tilde X(f)$}; 
% 
%   \end{groupplot}
%  
% \end{tikzpicture}
\caption{\label{fig:prsig} 
The probing signal $\TL{x}(t)$ and the real part of its Fourier transform $\TL{X}(f)$ for $B=1, T = 61, \L = 61$: $\TL{x}(t)$ is essentially time-limited on an interval of length $3T$ and band-limited to $[-B/2,B/2]$.  
}
\end{figure}

%\subsection{The super-resolution radar problem}
\subsection{Sampling the output}
Recall that delay-Doppler pairs satisfy, by assumption, $(\tauc_j, \nuc_j) \in [-\frac{T}{2},\frac{T}{2}]\times [\frac{B}{2},\frac{B}{2}]$. Thus, $H\TL{x}$ is band-limited to $[-B,B]$ and approximately time-limited to $[-2T,2T]$. According to the $2WT$-Theorem \cite{slepian_bandwidth_1976,durisi_sensitivity_2012}, $H \TL{x}$ 
has on the order of $BT$ degrees of freedom so that one can think of $H \TL{x}$ as having effective dimensionality on the order of $BT$. 
Therefore, by sampling $H\TL{x}$ at rate $1/B$ in the interval $[-T/2,T/2]$, we collect the number of samples that matches the dimensionality of $H\TL{x}$ up to a constant.  The samples $\TL{y}_p \defeq (H\TL{x})(p/B)$ are given by
\begin{equation}
\TL{y}_p
 %\defeq (H\TL{x})(p/B)
=
 %y\left(\frac{p}{\Bint}\right) = %\frac{1}{B\Tint} 
\sum_{k,\ell \in \mb Z}  
 \overline{\sfunc} \left(\frac{\ell}{B},\frac{k}{\Tint} \right) 
 \TL{\prsig} \left(\frac{p-\ell}{B}  \right) 
 e^{i2\pi \frac{k p}{\Bint \Tint}},\quad  p = -\N,...,\N,
\label{eq:sfunc_discretesample}
\end{equation}
where $\N = (\L-1)/2$ and $\L = BT$. 
Substituting \eqref{eq:specialradar} into \eqref{eq:sfunc_discretesample} 
and defining $\TL{x}_{\ell} := \TL{x}(\ell/B)$ yields
\begin{align}
\TL{y}_p 
= 
\sum_{j=1}^{\S} b_j
\sum_{k,\ell \in \mb Z} 
 \sinc(\ell - \tauc_j B) \sinc(k - \nuc_j T) \, \TL{x}_{p-\ell} e^{i2\pi \frac{k p}{\Bint \Tint}}.  %,\quad  p = -(\Bint \Tint-1)/2,...,(\Bint \Tint-1)/2 . 
\label{eq:sfunc_discretesample2}
\end{align}
Next, we rewrite \eqref{eq:sfunc_discretesample2} in terms of the following equivalent expression of the Dirichlet kernel defined in \eqref{eq:defDirichlet}
\begin{align}
D_\N( t ) = 
\sum_{k\in \mb Z} 
\sinc\left(  \L \left( t  - k \right) \right),
\label{eq:dirichletsum}
\end{align}
and a truncated version 
\begin{align}
\quad 
\tilde D_\N( t ) 
\defeq
\sum_{k=-1}^1 
\sinc\left(  \L \left( t  - k \right) \right),
\label{eq:dirichletsum1}
\end{align}
as 
\begin{align}
\TL{y}_p =
\sum_{k,\ell = -\N}^\N 
\sum_{j=1}^{\S} b_j
\tilde D_{\N} \! \left( \frac{p-\ell}{\L} - \tau_j \right) D_{\N} \! \left( \frac{k }{\L} - \nu_j \right)  x_{\ell}  e^{i2\pi \frac{k p}{\L}}, \quad p = -\N,...,\N,
\label{eq:iowithdirchkernelapprox}
\end{align}
where $\tau_j = \tauc_j/T$, $\nu_j =\nuc_j/B$, as before (see Appendix \ref{eq:proofeq:iowithdirchkernel} for details). 
For $t \in [-1.5, 1.5]$, $\TL{D}_\N( t )$ is well approximated by $D_\N( t )$, therefore 
\eqref{eq:iowithdirchkernelapprox} is well approximated by %the \emph{exact} input-output relation
% becomes (see Appendix \ref{eq:proofeq:iowithdirchkernel} for details) 
\begin{align}
y_p =
\sum_{k,\ell = -\N}^\N 
\sum_{j=1}^{\S} b_j
D_{\N} \! \left( \frac{\ell}{\L} - \tau_j \right) D_{\N} \! \left( \frac{k }{\L} - \nu_j \right)  x_{p- \ell}  e^{i2\pi \frac{k p}{\L}}, \quad p = -\N,...,\N.
\label{eq:iowithdirchkernel}
\end{align}
This is equivalent to \eqref{eq:periorel}, as already mentioned in Appendix \ref{app:discspfunc}. 
Note that \eqref{eq:iowithdirchkernel} can be viewed as the periodic equivalent of \eqref{eq:sfunc_discrete} (with $\overline{\sfunc}(\tau,\nu)$ given by \eqref{eq:specialradar}). 
We show in~Appendix \ref{eq:proofeq:iowithdirchkernel} that the discretization~\eqref{eq:iowithdirchkernel} can be obtained from \eqref{eq:sfunc_discrete} without any approximation if the probing signal $x(t)$ is chosen to be $T$-periodic with its samples selected as $x(\ell/B) = x_\ell$, for all $\ell$. Recall that the samples of the quasi-periodic probing signal satisfy $\TL{x}(\ell/B) = x_\ell$ for $\ell \in [-\N -\L, \L +\N]$ and $\TL{x}(\ell/B) = 0$ otherwise. Clearly, the periodic probing signal is not time-limited and, hence, can not be used in practice. 
The error we make by approximating $\TL{y}_p$ with $y_p$ is very small, as shown by the following proposition, proven in Appendix \ref{app:errbound}. 
We also show numerically in Section \ref{sec:numres} that in practice, the error we make by approximating $\TL{y}_p$ with $y_p$ is negligible. 

\begin{proposition}
Let the $x_\ell$ be i.i.d.~$\mc N(0,1/\L)$ and let the sign of $b_j$ be i.i.d.~uniform on $\{-1,1\}$. For all $\alpha>0$, the difference between $\TL{y}_p$ in \eqref{eq:iowithdirchkernelapprox} and $y_p$ in \eqref{eq:iowithdirchkernel} satisfies 
\[
\PR{|y_p - \TL{y}_p| 
\geq
c \frac{\alpha}{\L}
\norm[2]{\vb} 
}
\leq (4 + 2 \L^2) e^{-\frac{\alpha}{2}},
\]
where $\vb = \transp{[b_1,...,b_\S]}$ and $c$ is a numerical constant. 
\label{prop:errbound}
\end{proposition}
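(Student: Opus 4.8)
The plan is to bound $|y_p - \TL{y}_p|$ by isolating the single source of discrepancy between \eqref{eq:iowithdirchkernelapprox} and \eqref{eq:iowithdirchkernel}: the replacement of the truncated Dirichlet kernel $\TL{D}_\N$ by the full Dirichlet kernel $D_\N$ in the delay variable, together with the shift of the summation index from $x_\ell$ in \eqref{eq:iowithdirchkernelapprox} to $x_{p-\ell}$ in \eqref{eq:iowithdirchkernel}. First I would write
\[
y_p - \TL{y}_p = \sum_{j=1}^\S b_j \sum_{k,\ell=-\N}^\N \left( D_\N\!\left(\tfrac{\ell}{\L}-\tau_j\right) - \TL D_\N\!\left(\tfrac{\ell}{\L}-\tau_j\right)\right) D_\N\!\left(\tfrac{k}{\L}-\nu_j\right) x_{p-\ell}\, e^{i2\pi \frac{kp}{\L}},
\]
after reindexing so both sums use $x_{p-\ell}$ (the periodicity of $x_\ell$ and of the Dirichlet kernels in their arguments makes this legitimate). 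The key point is that $D_\N(t) - \TL D_\N(t) = \sum_{|k|\geq 2}\sinc(\L(t-k))$, which for $t \in [-1/2,1/2]$ (the relevant range since $\ell/\L - \tau_j$ can be reduced mod $1$) is tiny: each term is at most $\frac{1}{\pi \L |t-k|}$, so the tail sums to $O(1/\L)$. Thus the ``error kernel'' has sup-norm $O(1/\L)$ uniformly.

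Next I would fix $j$ and view $\sum_{k,\ell} (\text{error kernel})_\ell \, D_\N(\tfrac{k}{\L}-\nu_j)\, x_{p-\ell} e^{i2\pi kp/\L}$ as an inner product $\langle \vu_j, \vx\rangle$ (up to the fixed phase and shift), where $\vu_j$ is a deterministic vector whose entries are sums over $k$ of the error kernel times $D_\N$ times the exponential. Since the $x_\ell$ are i.i.d.\ $\mathcal N(0,1/\L)$, $\langle \vu_j,\vx\rangle$ is Gaussian with variance $\frac{1}{\L}\|\vu_j\|_2^2$. I would bound $\|\vu_j\|_2$: the $\ell$-sum of the squared error kernel contributes $O(1/\L)$ from the sup-norm bound times $\L$ entries — actually I need to be slightly careful and use that $\sum_\ell |D_\N(\tfrac{\ell}{\L}-\tau_j) - \TL D_\N(\tfrac{\ell}{\L}-\tau_j)|^2 = O(1/\L)$ — and the $k$-sum of $|D_\N|^2$ contributes $O(\L)$ in the natural normalization (or $O(1)$ after the right accounting), giving $\|\vu_j\|_2 = O(1/\sqrt\L)$ and hence $\mathrm{Var}(\langle \vu_j,\vx\rangle) = O(1/\L^2)$. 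A standard Gaussian tail bound then yields $\PR{|\langle \vu_j,\vx\rangle| \geq c\alpha/\L} \leq 2e^{-\alpha/2}$ for each $j$ (absorbing the variance constant into $c$).

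Then I would combine the $\S$ terms. Since $y_p - \TL y_p = \sum_j b_j \langle \vu_j,\vx\rangle$ (with appropriate phases), conditioning on $\vx$ and using the random-sign assumption on $b_j$, Hoeffding's inequality for $\sum_j \pm |b_j|\langle\vu_j,\vx\rangle$ gives a sub-Gaussian tail with parameter $(\sum_j |b_j|^2|\langle\vu_j,\vx\rangle|^2)^{1/2} \leq \|\vb\|_2 \max_j |\langle \vu_j,\vx\rangle|$; alternatively, one can bound $|y_p - \TL y_p| \leq \|\vb\|_2 (\sum_j |\langle \vu_j,\vx\rangle|^2)^{1/2}$ directly by Cauchy–Schwarz and control the sum of squared Gaussians. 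Either way, taking a union bound over the $j = 1,\dots,\S$ events $\{|\langle \vu_j,\vx\rangle| \geq c\alpha/\L\}$ — and noting $\S \leq \L^2$ — produces the claimed failure probability $(4 + 2\L^2)e^{-\alpha/2}$, with the constant $4$ coming from a couple of auxiliary events (e.g.\ bounding $\|\vx\|_2$ or handling the $k$-tail of $D_\N$ if one also truncates there). The main obstacle I anticipate is the bookkeeping in the second step: getting clean, explicit $\ell^2$-bounds on the error kernel $D_\N - \TL D_\N$ and on the mixed $k,\ell$ vector $\vu_j$ with the $1/\L$ normalization of \eqref{eq:defDirichlet} tracked correctly, so that the variance genuinely comes out as $O(1/\L^2)$ rather than $O(1/\L)$ — the factor $1/\sqrt\L$ improvement over the naive estimate is exactly what makes the bound useful, and it relies on the fast ($1/|k|$) decay of the sinc tail being squared and summed.
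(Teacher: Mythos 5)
Your core idea (replace $D_\N$ by $\tilde D_\N$ in the $\tau$-variable, observe the error kernel has sup-norm $O(1/L)$, treat the $\ell$-sum as a Gaussian linear form in $\vx$, and combine over $j$ using the random signs of $b_j$ via Hoeffding) is the paper's strategy, and your first alternative for the combination step is exactly what the paper does. But two points in your sketch need attention.

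First, you are over-complicating the $k$-sum. Once you write $y_p - \TL y_p$ in your displayed form, notice that the inner $k$-sum does not touch the $\ell$-dependent error kernel at all, and by the identity in \eqref{eq:sumdirfre},
\[
\sum_{k=-\N}^{\N} D_\N\!\left(\tfrac{k}{L}-\nu_j\right)e^{i2\pi \frac{kp}{L}} = e^{i2\pi p\nu_j},
\]
which is a unit-modulus phase. So there is no ``mixed $k,\ell$ vector $\vu_j$'' to control: the error reduces cleanly to $\sum_j b_j\,e^{i2\pi p\nu_j}\,\epsilon_{j,p}$ with $\epsilon_{j,p}=\sum_\ell(\tilde D_\N - D_\N)(\tfrac{p-\ell}{L}-\tau_j)x_\ell$. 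This is why the variance comes out as $O(1/L^2)$ without any delicate Parseval-type bookkeeping: the $\ell$-kernel is uniformly $O(1/L)$ over $L$ entries, so the variance is $\tfrac{1}{L}\cdot L\cdot O(1/L^2)=O(1/L^2)$. The obstacle you worried about at the end of your proposal disappears.

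Second, your ``alternatively'' branch --- bounding $|y_p-\TL y_p| \le \norm[2]{\vb}\bigl(\sum_j|\langle\vu_j,\vx\rangle|^2\bigr)^{1/2}$ by Cauchy--Schwarz and controlling the sum of squared Gaussians --- does not give the stated bound. Each $\langle\vu_j,\vx\rangle$ has variance $O(1/L^2)$, so $\sum_j|\langle\vu_j,\vx\rangle|^2$ concentrates near $\S\cdot O(1/L^2)$, and the square root is $\sqrt{\S}/L$, not $1/L$. You would lose a factor $\sqrt{\S}$, which can be as large as $\sqrt{L}$. The random-sign assumption on $b_j$ is essential; the Hoeffding route you list first is the one that works, and matches the paper. (There the constant $4$ comes from Hoeffding's inequality for complex sums, and $2\L^2$ from a union bound over the $\le\S L\le L^2$ Gaussian tails of $\epsilon_{j,p}$.)
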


%\VM{Need to discuss why $\norm[2]{\vy} \approx \norm[2]{\vb}$. I don't see this right away.}
Proposition \ref{prop:errbound} ensures that with high probability, the  $\TL{y}_p$ are very close to the $y_p$. Note that, under the conditions of Theorem \ref{thm:mainres}, 
$\norm[2]{\vy} \approx \norm[2]{\vb}, \vy = \transp{[y_{-\N},...,y_{\N}]}$,  with high probability (not shown here). Since it follows from Proposition \ref{prop:errbound} that 
\[
\norm[2]{\vy - \TL{\vy} } \leq \frac{c_1 }{\sqrt{\L}} \norm[2]{\vb}
\]
holds with high probability, we can also conclude that $\frac{\norm[2]{\vy - \TL{\vy}} }{\norm[2]{\vy} } \leq \frac{c_2}{\sqrt{\L}},  \TL{\vy} = \transp{[\TL{y}_{-\N},...,\TL{y}_{\N}]}$ holds with high probability. That is, the relative error we make in approximating the $\TL{y}_p$ with the $y_p$ tends to zero in $\ell_2$ norm as $\L$ grows. 

\newcommand{\dT}{\mathcal S} % discrete version of T

\section{Super-resolution radar \label{sec:discretesuperes} on a grid}
%\VM{I would move this section earlier and would feature it as the main result. The continuous setting would then receive less significance. I'm happy to help with the corresponding restructuring, if we agree to do this.}\MS{I think its better to have the continuous section. What I'm not sure of is whether we should move Section 4 to appendix or not as it is kind of technical and beside the point of this paper}
% Discussed, and decided to leave it there

One approach to estimate the triplets $(b_j,\tau_j, \nu_j)$ from the samples $y_p$ in \eqref{eq:periorel} is to suppose the time-frequency shifts lie on a \emph{fine} grid, and solve the problem on that grid. %\footnote{} 
When the time-frequency shifts do not exactly lie on a grid this leads to an error that becomes small as the grid becomes finer. In Section \ref{sec:numres} below we study numerically the error incurred by this approach; for an analysis of the corresponding error in related problems see \cite{tang_sparse_2013}. 
Our results have immediate consequences for the corresponding (discrete) sparse signal recovery problem, as we discuss next.

Suppose we want to recover a sparse discrete signal $s_{m,n} \in \complexset,\; m,n = 0,...,\K-1$, $\K \geq \L=2\N+1$, from samples of the form 
\begin{align}
y_p 
&= 
\sum_{m,n=0}^{\K-1} s_{m,n}  
[\mc F_{m/K}
\mc T_{n/K}
\vx]_p, \quad p=-\N,...,\N. \\ 
 &= \sum_{m,n=0}^{\K-1} \left(
 e^{i2\pi p \frac{m}{\K}  }
 \frac{1}{L} \sum_{\ell,k=-\N}^{\N}  e^{-i2\pi k  \frac{n}{\K} }   e^{i2\pi (p-\ell) \frac{k}{\L}  }  x_{\ell} \right) s_{m,n}, \quad p=-\N,...,\N.
\label{eq:perioreldisc}
\end{align}
% \begin{align}
% y_p
% &= \sum_{m,n=0}^{\K-1} \left(
% e^{i2\pi p \frac{m}{\K}  }
% \frac{1}{L} \sum_{\ell,k=-\N}^{\N}  e^{-i2\pi k  \frac{n}{\K} }   e^{i2\pi (p-\ell) \frac{k}{\L}  }  x_{\ell} \right) s_{m,n}, \quad p=-\N,...,\N.
% \label{eq:perioreldisc}
% \end{align}
To see the connection to the gridless setup in the previous sections, note that the recovery of the $\S$-sparse (discrete) signal $s_{m,n}$ is equivalent to the recovery of the triplets $(b_j, \tau_j,\nu_j)$ from the samples $y_p$ in \eqref{eq:periorel} if we assume that the $(\tau_j,\nu_j)$ lie on a $(1/K,1/K)$ grid
%\footnote{The discussion in this section could analogously be conduced for a signal on a grid with different spacing in $\tau$ and in $\nu$-direction, i.e., $\{(m/\K, n/ \K')\}$ with $m=0,...,\K-1$, $n=0,...,\K'-1$, and $\K \neq \K'$.} 
(the non-zeros of $s_{m,n}$ correspond to the $b_j$). 
Writing the relation  \eqref{eq:perioreldisc} in matrix-vector form yields 
\[
\vy = \mR \vs
\]
where $[\vy]_{p} \defeq y_p$, $[\vs]_{(m,n)} \defeq s_{m,n}$, and $\mR \in \complexset^{\L \times \K^2}, \K \geq \L$, is the matrix with $(m,n)$-th column given by $\mc F_{m/K}
\mc T_{n/K} \vx$. 
The matrix $\mR$ contains as columns ``fractional'' time-frequency shifts of the sequence $x_\ell$. If $\K = \L$, $\mR$ contains as columns only ``whole'' time-frequency shifts of the sequence $x_\ell$ and $\mR$ is equal to the Gabor matrix  $\mG$ defined by \eqref{eq:defgabormtx}. 
In this sense, $\K = \L$ is the natural grid (cf.~Section \ref{sec:ongrid}) and the ratio $\SRF \defeq\K/\L$ can be interpreted as a super-resolution factor. The super-resolution factor determines by how much the $(1/\K,1/\K)$ grid is finer than the original $(1/\L,1/\L)$ grid.

A standard approach to the recovery of the sparse signal $\vs$ from the underdetermined linear system of equations $\vy = \mR \vs$ is to solve the following convex program:
%\VM{I changed $\bar{\vs}$ to $\hat{\vs}$ here and be consistent. If this collides with other notations, need to choose some other superscript uniformly throughout the paper.} R: perfectly fine here
\begin{align}
	\mathrm{L1}(\vy) \colon \;\;
\underset{\minlet{\vs}}{\text{minimize}} \; \norm[1]{\minlet{\vs}} \text{ subject to } \vy = \mR \minlet{\vs}.
\label{eq:l1minmG}
\end{align}
The following theorem is our main result for recovery on the fine grid.   
%We prove the following theorem for the discrete setup.
\begin{theorem} Assume that the samples of the probing signal $x_\ell = -\N,...,\N$, in \eqref{eq:perioreldisc} are i.i.d. $\mathcal N(0,1/\L)$ random variables, $L=2N+1$. Let $\mathbf{y}\in\mathbb{C}^\L$, with $L\geq 1024$ be the samples of the output signal obeying the input-output relationship \eqref{eq:perioreldisc}, i.e., $\vy = \mathbf{R}\mathbf{s}$.   
%Here $\mathbf{R}$ is the matrix that depends on the input sequence $x_\ell$ via \eqref{eq:perioreldisc} andthe 
Let $\mathcal{S} \subseteq \{0,...,\K-1\}^2$ be the support of the vector $[\vs]_{(m,n)}$, $m,n=0,...,\K-1$, and suppose that it satisfies the minimum separation condition
\[
\min_{(m,n), (m', n') \in \mathcal{S} \colon (m,n) \neq ( m', n')} 
\frac{1}{\K} \max(|m- m'|, |n - n'|) \geq \frac{2.38}{\N}.
\]
Moreover, suppose that the non-zeros of $\vs$ have random sign, i.e., $\sign([\vs]_{(m,n)}), (m,n) \in \dT$ are i.i.d.~uniform on $\{-1,1\}$. 
% $\sign(b_n)$ are i.i.d.~uniform on $\{-1,1\}$. Also, $\T = \{\vr_1, \vr_2,...,\vr_\S \} \subseteq \{0,...,\K-1\}^2$ is a set of time-frequency shifts obeying the minimum separation condition 
%\[
%\frac{1}{K}\infdist{\vr_j - \vr_{j'}} \geq  \frac{2.38}{\N}, \text{ for all } \vr_j, \vr_{j'} \in \T  \text{ with } \vr_j \neq \vr_{j'}.
%\]
Choose $\delta>0$ and assume
\begin{align*}
S\le c\frac{L}{(\log(L^6/\delta))^3}, 
%\label{eq:llinscon}
\end{align*}
where $c$ is a numerical constant. Then, with probability at least $1-\delta$, $\mathbf{s}$ is the unique minimizer of $\mathrm{L1}(\vy)$ in \eqref{eq:l1minmG}. %, i.e., $\hat\vs=\vs$. R: \hat \vs is the minimization variable not the argmin
%\footnote{This theorem holds for $N\ge 512$.}
\label{cor:discretesuperres}
\end{theorem}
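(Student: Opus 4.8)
The plan is to derive Theorem~\ref{cor:discretesuperres} as a corollary of Theorem~\ref{thm:mainres} and its proof. The key observation is that the columns of $\mR$ are exactly the vectors $\mG\va(\vr)$ evaluated on the grid: since $\mG\va(\vr)=\mc F_\nu\mc T_\tau\vx$ for $\vr=\transp{[\tau,\nu]}$, the $(m,n)$-th column of $\mR$ equals $\mG\va(\vr_{m,n})$ with $\vr_{m,n}\defeq\transp{[n/\K,m/\K]}$. Hence, writing $\T\defeq\{\vr_{m,n}\colon(m,n)\in\mathcal{S}\}$, the measurement model $\vy=\mR\vs$ is identical to $\vy=\mG\vz$ with $\vz=\sum_{(m,n)\in\mathcal{S}}s_{m,n}\,\va(\vr_{m,n})$, i.e., to the model of Theorem~\ref{thm:mainres} restricted to time-frequency shifts lying on the $(1/\K,1/\K)$ grid. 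The grid minimum-separation condition of Theorem~\ref{cor:discretesuperres}, once expressed in terms of the points $\vr_{m,n}$, is precisely \eqref{eq:minsepcond1}, and the sparsity bound and the random-sign hypothesis are the same as in Theorem~\ref{thm:mainres}.

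First I would invoke the proof of Theorem~\ref{thm:mainres} (Section~\ref{app:proofmainres}): under these hypotheses, with probability at least $1-\delta$, there is a dual polynomial $Q(\vr)=\innerprod{\vq}{\mc F_\nu\mc T_\tau\vx}$ satisfying the conclusions of Proposition~\ref{prop:dualmin}, namely $Q(\vr_{m,n})=\sign(s_{m,n})$ for all $(m,n)\in\mathcal{S}$ and $|Q(\vr)|<1$ for all $\vr\in[0,1]^2\setminus\T$; in particular $|Q(\vr_{m,n})|<1$ at every off-support grid point $(m,n)\notin\mathcal{S}$. I would also record that the interpolation construction behind this certificate shows, along the way, that the columns $\{\mG\va(\vr_{m,n})\colon(m,n)\in\mathcal{S}\}$ of $\mR$ are linearly independent, i.e., the submatrix $\mR_{\mathcal{S}}$ of $\mR$ with columns indexed by $\mathcal{S}$ is injective; this is the step where the bound $\S\le\L$ is used.

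The rest is the standard $\ell_1$ dual-certificate computation. Since the $(m,n)$-th column of $\mR$ is $\mG\va(\vr_{m,n})$, one has $\innerprod{\vq}{\mR\vs'}=\sum_{(m,n)}Q(\vr_{m,n})\,s'_{m,n}$ for every $\vs'$, and in particular $\innerprod{\vq}{\vy}=\innerprod{\vq}{\mR\vs}=\sum_{(m,n)\in\mathcal{S}}\sign(s_{m,n})\,s_{m,n}=\norm[1]{\vs}$, using that the $s_{m,n}$ are real. For any $\vs'$ with $\mR\vs'=\vy$, $\norm[1]{\vs}=\Re\innerprod{\vq}{\vy}=\Re\sum_{(m,n)}Q(\vr_{m,n})\,s'_{m,n}\le\sum_{(m,n)}|Q(\vr_{m,n})|\,|s'_{m,n}|\le\norm[1]{\vs'}$, so $\vs$ is a minimizer of $\mathrm{L1}(\vy)$. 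For uniqueness, any minimizer $\vs'$ must make this chain of inequalities tight; since $|Q(\vr_{m,n})|<1$ strictly for $(m,n)\notin\mathcal{S}$, tightness forces $s'_{m,n}=0$ for all $(m,n)\notin\mathcal{S}$, so $\vs'-\vs$ is supported on $\mathcal{S}$, and then $\mR(\vs'-\vs)=0$ combined with the injectivity of $\mR_{\mathcal{S}}$ gives $\vs'=\vs$.

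I do not anticipate a genuine obstacle: Theorem~\ref{cor:discretesuperres} is essentially the restriction of Theorem~\ref{thm:mainres} to on-grid supports. The two points that deserve explicit verification are that the dual certificate built for Theorem~\ref{thm:mainres} is more than sufficient for the $\ell_1$ problem --- it controls $|Q|$ on all of $[0,1]^2$, whereas $\mathrm{L1}(\vy)$ only needs control at the finitely many off-support grid points --- and the injectivity of $\mR_{\mathcal{S}}$ needed for uniqueness, which is already contained in the proof of Theorem~\ref{thm:mainres}. One could instead try to deduce uniqueness directly from the uniqueness clause of Theorem~\ref{thm:mainres} (any $\mathrm{L1}$-minimizer $\vs'$ yields an $\AN$-feasible $\vz'$ with atomic norm at most $\norm[1]{\vs}=\norm[\setA]{\vz}$, forcing $\vz'=\vz$), but since distinct on-grid atoms can be linearly dependent once $\K>\L$, the support-localization step above would still be required, so the injectivity of $\mR_{\mathcal{S}}$ cannot be avoided.
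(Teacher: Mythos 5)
Your proposal is correct and takes essentially the same approach as the paper: Appendix~\ref{sec:proofdiscrete} likewise restricts the dual polynomial of Proposition~\ref{prop:dualpolynomial} to the grid, sets $[\vv]_{(m,n)} = Q([m/\K, n/\K])$ (so $\vv = \herm{\mR}\vq$ lies in the row space of $\mR$), and then invokes a standard $\ell_1$ dual-certificate proposition. The only cosmetic differences are that you re-derive that $\ell_1$ argument inline rather than citing it as a black box, and that you explicitly flag the injectivity of $\mR_{\mathcal{S}}$ needed for uniqueness, which the paper's cited proposition states as a hypothesis ($\mR_{\dT}$ has full column rank) without separately verifying it there.
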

%\RH{The set $\T$ is different here, it is an index set here (maybe we should use another letter, to make the difference clear?)
%}
%\MS{How about $\mathcal{S}$}
%\RH{Ok, changed it}
% $\sign(b_n)$ are i.i.d.~uniform on $\{-1,1\}$. Also, $\T = \{\vr_1, \vr_2,...,\vr_\S \} \subseteq \{0,...,\K-1\}^2$ is a set of time-frequency shifts obeying the minimum separation condition 
%\[
%\frac{1}{K}\infdist{\vr_j - \vr_{j'}} \geq  \frac{2.38}{\N}, \text{ for all } \vr_j, \vr_{j'} \in \T  \text{ with } \vr_j \neq \vr_{j'}.
%\]

%\begin{theorem}
%Let $\dT \subseteq \{0,...,\K-1\}^2$ be the support of the vector $\{ [\vs]_{(m,n)}\}_{m,n=0}^{\K-1}$, obeying
%\[
%\min_{(m,n), (\tilde m, \tilde n) \in \dT\colon (m,n)\neq (\tilde m, \tilde n)}  \frac{1}{\K}\max(|m-\tilde m|, |n-\tilde n|) \geq \frac{2.38}{\N}
%\]
%and suppose that the non-zeros of $\vs$ have random signs, i.e., $\sign( [\vs]_{(m,n)} ), (m,n) \in \dT$ are i.i.d.~uniform on $\{-1,1\}$. Suppose that $\N \geq 512$ and let the coefficients $a_\ell, \ell=-\N,...,\N$ be chosen i.i.d. from $\mathcal N(0,1/\L)$, $\L = 2\N + 1$, and suppose that 
%\[
%L \geq \S c  \log^3\left(\frac{c' L^6}{\delta} \right).
%\]
%Then, with probability at least $1-\delta$, the solution to \eqref{eq:l1minmG} is equal to $\vs$.
%\label{cor:discretesuperres}
%\end{theorem}

The proof of Theorem \ref{cor:discretesuperres}, presented in Appendix \ref{sec:proofdiscrete}, is closely linked to that of Theorem \ref{thm:mainres}. 
As reviewed in Appendix \ref{sec:proofdiscrete}, the existence of a certain dual certificate guarantees that $\vs$ is the unique minimizer of $\mathrm{L1}(\vy)$ in \eqref{eq:l1minmG}. 
% R: why is this important? We are not reviewing this, it is just part the proof
The dual certificate is obtained directly from the dual polynomial for the continuous case (i.e., from Proposition \ref{prop:dualpolynomial} in Section \ref{app:proofmainres}). 
%\VM{R: why is this important? We are not reviewing this, it is just part the proof. Answer: I think all this dual certificate business is really important I constantly think about the reader, for whom this is a first paper on compressed sensing. I'm trying to help such a reader. For him Appendix \ref{sec:proofdiscrete} will be the center of this paper.} 

\subsection{Implementation details}

The matrix $\mR$ has dimension $\L \times \K^2$, thus as the grid becomes finer (i.e., $\K$ becomes larger) the complexity of solving \eqref{eq:l1minmG} increases. 
The complexity of solving \eqref{eq:l1minmG} can be managed as follows. First, the complexity of first-order convex optimization algorithms (such as TFOCS \cite{becker_templates_2011}) for solving \eqref{eq:l1minmG} is dominated by multiplications with the matrices $\mR$ and $\herm\mR$. Due to the structure of $\mR$, those multiplications can be done very efficiently utilizing the fast Fourier transform. Second, in practice we have 
$
(\tauc_j, \nuc_j) \in [0,\taum]\times [0, \num]
$, 
which means that 
\begin{align}
(\tau_j, \nu_j) \in 
\left[0, \frac{\taum  }{ T }  \right] 
\times 
\left[0, \frac{\num }{ B}  \right] .
\label{eq:restaunun}
\end{align}
It is therefore sufficient to consider the restriction of $\mR$ to the 
$\frac{\taum \num K^2}{BT} = \taum \num \L \cdot\SRF^2$ many columns corresponding to the $(\tau_j, \nu_j)$ satisfying  \eqref{eq:restaunun}. Since typically $\taum \num \ll BT=\L$, this results in a significant reduction of the problem size.

%%%%
\subsection{Numerical results}
\label{sec:numres}

We next evaluate numerically the resolution obtained by  our approach. 
We consider identification of the time-frequency shifts from the response to the (essentially) time-limited signal in \eqref{eq:truncsignal}, which corresponds to identification from the samples  $\TL{y}_p, p=-\N,...,\N$ given by \eqref{eq:iowithdirchkernelapprox}, without and with additive Gaussian noise. 
We also consider identification from the response to a signal with $\L$-periodic samples $x(\ell/B) = x_\ell$, which corresponds to identification from the samples $y_p, p=-\N,...,\N$ in \eqref{eq:iowithdirchkernel}. 
To account for additive noise, we solve the following modification of $\mathrm{L1}(\vy)$ in \eqref{eq:l1minmG}
\begin{align}
\text{L1-ERR}\colon \underset{\minlet{\vs}}{\text{minimize}} \; \norm[1]{\minlet{\vs}} \text{ subject to } 
\norm[2]{\vy - \mR \minlet{\vs} }^2 \leq \delta,
\label{eq:BDDN}
\end{align}
with $\delta$ chosen on the order of the noise variance.  
We choose $\L = 201$, and for each problem instance, we draw $S=10$ 
time-frequency shifts $(\tau_j,\nu_j)$ uniformly at random from $[0,2/\sqrt{201}]^2$, which amounts to drawing the corresponding delay-Doppler pairs $(\tauc_j,\nuc_j)$ from $[0,2]\times [0,2]$. 
We use SPGL1 \cite{BergFriedlander:2008} to solve \text{L1-ERR}. 
The attenuation factors $b_j$ corresponding to the time-frequency shifts $(\tau_j,\nu_j)$ are drawn uniformly at random from the complex unit disc, independently across $j$. 
In Figure \ref{fig:realrecov} we plot the average resolution error versus the super-resolution factor $\SRF = K/L$. The resolution error is defined as 
the average over $j=1,...,S$ of $\L \sqrt{ (\hat \tau_j - \tau_j)^2 + (\hat \nu_j - \nu_j)^2}$, where the $(\hat \tau_j, \hat \nu_j)$ are the time-frequency shifts obtained by solving \eqref{eq:BDDN}. 
There are three error sources incurred by this approach. The first is the gridding error obtained by assuming the points lie on a fine grid with grid constant $(1/\K,1/\K)$. 
The second is the model error from approximating the $\TL{y}_p$ in \eqref{eq:iowithdirchkernelapprox}, obtained by sending an essentially time-limited probing signal (cf.~\eqref{eq:truncsignal}), with the $y_p$ in \eqref{eq:iowithdirchkernel}, obtained by sending a truly periodic input signal $x(t)$. The third is the additive noise error. 
Note that the resolution attained at $\SRF=1$ corresponds to the resolution attained by matched filtering and by the compressive sensing radar architecture \cite{herman_high-resolution_2009} discussed in Section \ref{sec:ongrid}. 
We see that for all $\SRF>1$, the resolution is significantly improved using our super-resolution radar approach. 
We furthermore observe that for low values of $\SRF$, the gridding error dominates, while for large values of $\SRF$, the additive noise error dominates. 
By looking at the noiseless case, it is seen that the gridding error decays as $1/\SRF$, e.g., at $\SRF = 20$, the error is about $0.4/20$. 
This demonstrates empirically that in practice solving the super-resolution radar problem on a fine grid is essentially as good as solving it on the continuum--provided the super-resolution factor is chosen sufficiently large. 
Finally, we observe that the model error is negligible, even for large signal-to-noise ratios (the curves in Figure \ref{fig:realrecov} corresponding to the noiseless and the periodic case are indistinguishable).

\begin{figure}
\begin{center}
\includegraphics{./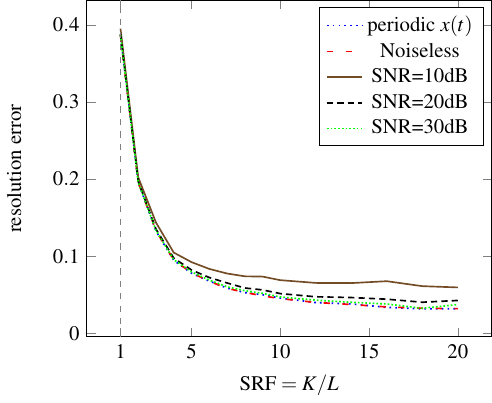}
% \begin{tikzpicture}[scale=1,thick] 
% \begin{axis}[xlabel={$\SRF = K/\L$}, ylabel = {resolution error},  xtick={1,5,10,15,20}]   
% \addplot +[thick,mark=none, dotted] table[x index=0,y index=1]{./dat/offgrid_data.dat}; 
% \addlegendentry{periodic $x(t)$}
% \addplot +[thick,mark=none,loosely dashed] table[x index=0,y index=2]{./dat/offgrid_data.dat};
% \addlegendentry{Noiseless}
%%  \addplot +[thick,mark=none] table[x index=0,y index=1]{./dat/offgrid_data_5dB.dat};
%% \addlegendentry{SNR=5dB}
% \addplot +[thick,mark=none] table[x index=0,y index=3]{./dat/offgrid_data.dat};
% \addlegendentry{SNR=10dB}
%  \addplot +[thick,mark=none,densely dashed] table[x index=0,y index=4]{./dat/offgrid_data.dat}; 
% \addlegendentry{SNR=20dB}
%  \addplot +[thick,mark=none, green, densely dotted] table[x index=0,y index=5]{./dat/offgrid_data.dat};
% \addlegendentry{SNR=30dB} 
%  \end{axis}
%  \draw[very thin,dashed,gray] (0.568,0) -- (0.568,5.7);
% \end{tikzpicture}
 \end{center}
 \caption{\label{fig:realrecov} Resolution error $\L \sqrt{ (\hat \tau_j - \tau_j)^2 + (\hat \nu_j - \nu_j)^2}$ for the recovery of $\S = 10$ time-frequency shifts
 from the samples $y_p, p=-\N,...,\N$ in \eqref{eq:iowithdirchkernel} (periodic input signal $x(t)$), and identification from the samples $\TL{y}_p, p =-\N, ...,\N$ in \eqref{eq:iowithdirchkernelapprox} (essentially time-limited input signal $\TL{x}(t)$) with and without additive Gaussian noise $n_p$ of a certain signal-to-noise ratio $\text{SNR} \defeq \norm[2]{[\TL{y}_{-N},...,\TL{y}_{N}]}^2/\norm[2]{[n_{-N},...,n_{N}]}^2$, by solving \text{L1-ERR}.   
}
\end{figure}

%%%%%%%%%%%%%%%%%%%%%%%%%%%%%%%%%%%%%%%%%%%%%%%%%%%%%%%%%%%%

\section{Identification of time-frequency shifts}
\label{sec:sdprel}

In this section we show that the time-frequency shifts can be obtained from a solution to the dual problem, and present a semidefinite programming relaxation to obtain a solution of the dual efficiently. 

\subsection{Semidefinite programming relaxation of the dual}
Our relaxation is similar in spirit to related convex programs in \cite[Sec.~3.1]{bhaskar_atomic_2012}, \cite[Sec.~4]{candes_towards_2014}, and \cite[Sec.~2.2]{tang_compressed_2013}. 
We show that the constraint in the dual is equivalent to the requirement that the absolute value of a specific 2D trigonometric polynomial is bounded by one, and, therefore, this constraint can be formulated as a linear matrix inequality. 
The dimensions of the corresponding matrices are, however, unspecified. 
Choosing a certain relaxation degree for those matrices, and substituting the constraint in the dual with the corresponding matrix inequality leads to a semidefinite programming relaxation of the dual. 

Recall that the constraint of the dual program  \eqref{eq:dual} is 
\[
\norm[\setA^\ast]{\herm{\mG} \vq} 
= \sup_{\vr \in [0,1]^2} \left| \innerprod{\herm{\mG} \vq}{\va(\vr)} \right|
\leq 1. 
\]
By definition of the Dirichlet kernel in \eqref{eq:defDirichlet}, the vector $\va(\vr)$ defined in \eqref{eq:defatoms} can be written as
\[
\va(\vr) =  \herm{\mF} \vf(\vr)
\]
where $\herm{\mF}$ is the (inverse) 2D discrete Fourier transform matrix with the entry in the $(k,\ell)$-th row and $(r,q)$-th column given by $[\herm{\mF}]_{(k,\ell), (r,q)} \defeq \frac{1}{\L^2} e^{i2\pi \frac{qk + r\ell}{\L}}$ and the entries of the  vector $\vf$ are given by $[\vf(\vr)]_{(r,q)} \defeq e^{-i2\pi (r\tau + q \nu)}$, $k, \ell, q, r = -\N, ..., \N$, $\vr=[\tau, \nu]^T$. With these definitions, 
\begin{align}
\innerprod{\herm{\mG}\vq}{\va(\vr)} 
= \innerprod{\herm{\mG}\vq}{ \herm{\mF} \vf(\vr) } 
= \innerprod{\mF\herm{\mG}\vq}{\vf(\vr) } 
= \sum_{r,q=-\N}^\N [\mF\herm{\mG}\vq]_{(r,q)} e^{i2\pi (r\tau + q \nu)}.
\label{eq:Q2Dtripolyresp}
\end{align}
Thus, the constraint in the dual \eqref{eq:dual} says that the 2D trigonometric polynomial in \eqref{eq:Q2Dtripolyresp} is bounded in magnitude by $1$ for all $\vr \in [0,1]^2$. 
The following form of the bounded real lemma allows us to approximate this constraint by a linear matrix inequality.

%%%
%\begin{proposition}[{\cite[Cor.~4.25, p.~127]{dumitrescu_positive_2007}}]
%	Let $P$ be a bivariate trigonometric polynomial in $\vr \in [0,1]^2$
%	\begin{equation*}
%		 P(\vr) = \sum_{k, \ell=-\N}^{\N} p_{(k,\ell)} e^{i2\pi (k \tau + \ell \nu)}.
%	\end{equation*}
%	Then $\sup_{\vr \in [0,1]^2} |P(\vr)| \leq 1$ holds if and only if there exists a Hermitian matrix $\mQ\in \complexset^{ \L^2 \times \L^2}$, $\L = 2\N+1$, such that 
%	\begin{equation*}
%		\begin{bmatrix} 
%		\mQ & \vp \\ \herm{\vp} & 1 
%		\end{bmatrix} 
%		\succeq \vect{0} \qquad \text{and} \qquad \forall k,\ell = -\N,...,\N, \quad  \mathrm{trace}( (\boldsymbol{\Theta}_{k} \otimes \boldsymbol{\Theta}_{\ell})  \boldsymbol{Q}) = 
%		\begin{cases}
%		1, & (k,\ell) = (0,0) \\
%		0, & \text{otherwise}
%		\end{cases}
%		% \delta_{(k,q)},
%	\end{equation*}
%	where $\boldsymbol{\Theta}_{k}$ designates the elementary Toeplitz matrix with ones on the $k$-th diagonal and zeros elsewhere.
%	\label{prop:BRL}
%\end{proposition}
%%%%
\begin{proposition}[{\cite[Cor.~4.25, p.~127]{dumitrescu_positive_2007}}]
Let $P$ be a bivariate trigonometric polynomial in $\vr=[\tau, \nu]^T$
\begin{equation*}
 P(\vr) = \sum_{k, \ell=-\N}^{\N} p_{(k,\ell)} e^{i2\pi (k \tau + \ell \nu)}.
\end{equation*}
If
\[
\sup_{\vr \in [0,1]^2} |P(\vr)| < 1
\]
then there exists a matrix $\mQ\succeq 0$ such that 
\begin{equation}
\begin{bmatrix} 
\mQ & \vp \\ \herm{\vp} & 1 
\end{bmatrix} 
\succeq \vect{0} \qquad \text{and} \qquad \forall k,\ell = -\N,...,\N, \quad  \mathrm{trace}( (\boldsymbol{\Theta}_{k} \otimes \boldsymbol{\Theta}_{\ell})  \mQ) = 
\begin{cases}
1, & (k,\ell) = (0,0) \\
0, & \text{otherwise}
\end{cases}
% \delta_{(k,q)},
\label{eq:adfol}
\end{equation}
where $\boldsymbol{\Theta}_{k}$ designates the elementary Toeplitz matrix with ones on the $k$-th diagonal and zeros elsewhere. 
The vector $\vp$ contains the coefficients $p_{(k,\ell)}$ of $P$ as entries, and is padded with zeros to match the dimension of $\mQ$. 
	
Reciprocally, if there exists a matrix $\mQ \succeq 0$ satisfying \eqref{eq:adfol}, then 
\[
\sup_{\vr \in [0,1]^2} |P(\vr)| \leq 1.
\]
\label{prop:BRL}
\end{proposition}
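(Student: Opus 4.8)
The statement to prove is Proposition~\ref{prop:BRL}, the bounded real lemma for bivariate trigonometric polynomials. Since the paper explicitly attributes this to \cite[Cor.~4.25, p.~127]{dumitrescu_positive_2007}, the ``proof'' is really a matter of invoking that reference and, at most, translating its formulation into the notation used here. The plan is to exhibit the correspondence between the sum-of-squares / positive-semidefinite certificate in Dumitrescu's book and the trace constraints stated in \eqref{eq:adfol}.

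\medskip

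\textbf{Sketch of the argument.} First I would recall that a univariate trigonometric polynomial that is nonnegative on the torus admits a Fej\'er--Riesz factorization $|R(\tau)|^2$ with $R$ a polynomial of the same degree; equivalently, its coefficient vector arises as $\mathrm{trace}(\boldsymbol{\Theta}_k \mQ)$ for a Gram matrix $\mQ\succeq 0$, where $\boldsymbol{\Theta}_k$ is the elementary Toeplitz shift. In two variables the Fej\'er--Riesz theorem fails, but the Gram-matrix / moment characterization of the cone of trigonometric polynomials that are \emph{sums of squares} (equivalently, that dominate $1$ via a Schur-complement argument) still goes through, now with the tensor-product shift operators $\boldsymbol{\Theta}_k\otimes\boldsymbol{\Theta}_\ell$ playing the role of the bivariate Toeplitz selectors. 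This is precisely the content of Dumitrescu's bounded real lemma. I would then set up the Schur complement: the block inequality
\[
\begin{bmatrix} \mQ & \vp \\ \herm{\vp} & 1 \end{bmatrix}\succeq \vect{0}
\]
is equivalent (when the relevant blocks are of full size) to $\mQ - \vp\herm{\vp}\succeq 0$, and the trace constraints force $\mQ$ to be the Gram matrix of a bivariate polynomial $\Sigma$ with $\Sigma(\vr)\equiv 1 + (\text{trigonometric SOS})$; combining these gives $|P(\vr)|^2 \le \Sigma(\vr)$ pointwise with $\Sigma \le 1$, hence $\sup_{\vr}|P(\vr)|\le 1$. For the forward (``reciprocally'') direction, strict boundedness $\sup_\vr |P(\vr)|<1$ means $1 - |P(\vr)|^2$ is strictly positive on $[0,1]^2$, and one invokes the (bivariate) Positivstellensatz / SOS-density result from Dumitrescu to write it as a sum of squares, whose Gram matrix is the desired $\mQ$.

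\medskip

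\textbf{Main obstacle.} The delicate point — and the reason the result is quoted rather than reproved — is the forward direction: in two or more variables, a strictly positive trigonometric polynomial need not be exactly a sum of Hermitian squares of polynomials of the same multidegree, so one cannot simply mimic the 1D Fej\'er--Riesz proof. What rescues the statement is that Dumitrescu works with the multidimensional version where SOS certificates exist after allowing the Gram matrix $\mQ$ to have a slightly larger (but still finite, controlled) size; the trace conditions in \eqref{eq:adfol} are written exactly so as to accommodate this by zero-padding $\vp$. The remaining step is purely bookkeeping: matching the indexing conventions (centered indices $-\N,\dots,\N$ here versus $0,\dots,2\N$ in the reference), checking that $(\boldsymbol{\Theta}_k\otimes\boldsymbol{\Theta}_\ell)$ extracts the $(k,\ell)$-th coefficient of the bivariate trigonometric polynomial associated with $\mQ$, and confirming that the normalization $\mathrm{trace}((\boldsymbol{\Theta}_0\otimes\boldsymbol{\Theta}_0)\mQ)=1$ corresponds to the constant term being one. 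Since all of this is carried out in detail in \cite{dumitrescu_positive_2007}, the proof here consists of citing Corollary~4.25 there and noting the notational dictionary above.
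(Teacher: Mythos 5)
The paper does not prove this proposition; it is stated verbatim as a cited result from Dumitrescu~\cite[Cor.~4.25]{dumitrescu_positive_2007}, so your decision to invoke that reference and provide only a translation of notation is exactly what the paper itself does.

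One small inaccuracy in your sketch of the ``reciprocal'' direction is worth flagging because it suggests a slight misreading of the Schur-complement mechanism. You write that the trace constraints force the Gram polynomial $\Sigma$ of $\mQ$ to equal ``$1+(\text{trigonometric SOS})$'' and then invoke ``$\Sigma\leq 1$'', which is contradictory (a sum of squares is nonnegative, so $1+\text{SOS}\geq 1$). The correct statement is cleaner: the trace conditions
$\mathrm{trace}((\boldsymbol{\Theta}_k\otimes\boldsymbol{\Theta}_\ell)\mQ)=\delta_{(k,\ell),(0,0)}$
force the Gram polynomial associated with $\mQ$ to be \emph{identically equal to $1$}, not $1$ plus something. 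The Schur complement $\mQ-\vp\herm{\vp}\succeq 0$ then says that the Gram polynomial of $\mQ-\vp\herm{\vp}$, namely $1-|P(\vr)|^2$, is a sum of Hermitian squares, hence nonnegative on the torus, giving $\sup_\vr |P(\vr)|\leq 1$. For the other direction, the construction is $\mQ = \mM + \vp\herm{\vp}$ where $\mM\succeq 0$ is a Gram matrix representing the strictly positive polynomial $1-|P(\vr)|^2$ as a sum of squares (which exists by the multivariate Positivstellensatz, at the cost of $\mM$ possibly being larger than $\L^2\times\L^2$); then the Gram polynomial of $\mQ$ is $1$ and $\mQ-\vp\herm{\vp}=\mM\succeq 0$. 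Your description of the main obstacle (failure of exact degree-bounded Fej\'er--Riesz in two variables, handled by allowing $\mQ$ to grow) is accurate and is precisely why the paper later uses this only as a sufficient condition and a relaxation.
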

%%%

Contrary to the corresponding matrix inequality for the 1D case in \cite{bhaskar_atomic_2012,candes_towards_2014,tang_compressed_2013}, where the size of the matrix $\mQ$ is fixed to $\L \times \L$, here, the size of the matrix $\mQ$ is not determined and may, in principle, be significantly larger than the minimum size $\L^2\times \L^2$. 
%That is, because $\mQ$ is the Gram matrix of a sum-of-squares representation of a two-variate trigonometric polynomial, and 
This stems from the fact that the sum-of-squares representation of a positive trigonometric polynomial of degree $(\L, \L)$ possibly involves factors of degree larger than $(\L, \L)$ (see, e.g., \cite[Sec.~3.1]{dumitrescu_positive_2007}). 
%
%%%%%%%%%%%%%%%
%Specifically, Proposition \ref{prop:BRL} is proven by using that the positive trigonometric polynomial $X(\vr) = 1-P(\vr)$ is sum-of-squares, i.e., it can be represented as $X(\vr) = \sum_{\ell} |X_\ell(\vr)|^2$, where the $X_\ell(\vr)$ have only monomials of non-negative degree. However the degrees of the $X_\ell(\vr)$ may be larger (element wise) than $(\L,\L)$, the degree of $X(\vr)$. 
%The polynomial $X$ has a Gram matrix $\mX\succeq 0$ associated with it, with the dimensions of this matrix determined by the maximum degree of the $X_\ell(\vr)$. The matrix $\mQ = \mX + \herm{\vh} \vh$ is a Gram representation of $1$. 
%%%%%%%%%%%%%%%
%
%Specifically, $\mQ$ is a certain Gram matrix of the polynomial $1$
%That is, because $\mQ$ is a Gram matrix associated with the (2D) trigonometric polynomial $1$, and the corresponding sum-of-squares representation might involve terms of degree larger than 
%
% of a two-variate trigonometric polynomial of degree $(\L, \L)$ possibly involves factors of degree larger than $(\L, \L)$ (e.g., \cite[Sec.~3.1]{dumitrescu_positive_2007}), and thus the size of the corresponding Gram matrix $\mQ$ depends on the degree of a sum-of-squares representation of a $2D$ trigonometric polynomial. 
%However, to implement Proposition \ref{prop:BRL}, we have to bound the degree of the sum-of-squres polynomials to a 
%
Therefore, Proposition \ref{prop:BRL} only gives a sufficient condition and can not be used to \emph{exactly} characterize the constraint of the dual program \eqref{eq:dual}. 
Fixing the degree of the matrix $\mQ$ to the minimum size of $\L^2 \times \L^2$ yields a \emph{relaxation} of the constraint of the dual in \eqref{eq:dual} that leads to the following semidefinite programming \emph{relaxation} of the dual program \eqref{eq:dual}: 
%\VM{Choose can not or can not and use consistently.}
%used as a necessary and sufficient condition for the constraint of the dual program \eqref{eq:dual} to hold, however it can be used as sufficient condition: 
%The constraint of the dual program \eqref{eq:dual} is satisfied if there exists a positive semidefinite matrix $\mQ \in \complexset^{\L^2 \times \L^2}$ such that 

%\VM{Added hats in the cvx program and around}\MS{bar is probably more appropriate since hat is usually the solution}\VM{Hats are now only in the primal, as suggested by Reinhard.} R: Resolved

\begin{align}
\underset{\vq, \mQ \in \complexset^{\L^2\times \L^2}, \mQ \succeq 0}{\text{maximize}} \, \Re \innerprod{\vq}{\vy} 
 \text{ subject to \eqref{eq:dualsemdefconstraint}} .
\label{eq:dualsemdef}
\end{align}
\begin{align}
\begin{bmatrix} 
		\mQ & \mF\herm{\mG}\vq \\ \herm{\vq} \mG \herm{\mF} & 1 
		\end{bmatrix} 
		\succeq \vect{0}, \quad  \mathrm{trace}( (\boldsymbol{\Theta}_{k} \otimes \boldsymbol{\Theta}_{\ell})  \mQ) = 
		\begin{cases}
		1, & (k,\ell) = (0,0) \\
		0, & \text{otherwise}
		\end{cases}.
\label{eq:dualsemdefconstraint}
\end{align}
Note that we could also use a higher relaxation degree than $(\L,\L)$, which would, in general, lead to a better approximation of the original problem. 
However, the relaxation of minimal degree are known to  yield optimal solution in practice, as, e.g., observed in a related problem of 2D FIR filter design~\cite{dumitrescu_positive_2007}. 
In Section \ref{sec:estfromdual}, we report an example  that shows that the relaxation of minimal degree also yields optimal solutions for our problem in practice.

%%%%%%%%%%%%%%

\subsection{\label{sec:estfromdual}Estimation of the time-frequency shifts from the dual solution}
%
%\VM{I propose to move Prop. \ref{prop:dualmin} to right after Theorem \ref{thm:mainres}. Right now this crucial lemma is hidden in a relatively technical section, which many people might prefer to skip. This works OK, for people who know exactly what dual certificates are, but not OK for others who don't. The part that I propose to move is in red.}

Proposition \ref{prop:dualmin} suggests that an estimate $\hat \T$ of the set of time-frequency shifts $\T$ can be obtained from a dual solution $\vq$ by identifying the $\vr_j$ for which the dual polynomial $Q(\vr) = \innerprod{\vq}{\mc F_{\nu} \mc T_{\tau} \vx } = \innerprod{\vq}{\mG \va(\vr)}$ has magnitude $1$. In general, the solution $ \vq$ to \eqref{eq:dualsemdef} %\MS{This has not been defined} RH: fine like this?
is not unique but we can ensure that
\[
\T \subseteq \hat \T \defeq \{\vr \colon |\innerprod{ \vq}{\mG \va(\vr)}| = 1 \}.
\] 
%provided the optimal primal solution applied to $\vy = \mG \vz$ is $\vz = \sum_{\vr_n\in \T} b_n \mG \va(\vr_n)$. 
To see this assume that $\T \setminus \hat \T \neq \emptyset$. Then, we have 
\begin{align*}
\Re \innerprod{ \vq}{\mG \vz} 
%= \Re \innerprod{\vq}{\mG \hat \vz^\star} 
&= \Re \innerprod{ \vq}{\mG \sum_{\vr_j \in \T}  b_j \va(\vr_j)} \\
&= \sum_{\vr_j \in \T \cap \hat \T } \Re ( \conj{b}_j \innerprod{\vq}{\mG \va(\vr_j)} )
+ \sum_{\vr_j \in \T \setminus \hat \T } \Re( \conj{b}_j \innerprod{\vq}{\mG \va(\vr_j)} ) \\
&<\sum_{\vr_j \in \hat \T \cap \T } |b_n|
+ \sum_{\vr_j \in \T \setminus \hat \T } |b_n| 
= \norm[\setA]{\vz},
\end{align*}
where the strict inequality follows from $\left|\innerprod{\vq}{\mG \va(\vr)}\right| < 1$ for $\vr \in \T \setminus \hat \T$, by definition of the set $\hat \T$. This contradicts strong duality, and thus implies that $\T \setminus \hat \T = \emptyset$, i.e., we must have $\T \subseteq \hat \T$. 

In general, we might have $\T \neq \hat \T$. However, in ``most cases'', standard semidefinite programming solvers (e.g., SDPT3) will yield a solution such that $\T  = \hat \T$, see \cite[Prop.~2.5]{tang_compressed_2013} and \cite[Sec.~4]{candes_towards_2014} for formal results on related problems. 

%Specifically, the following can be established (not shown here; 
%the proof is analogous to that of \cite[Prop.~2.5]{tang_compressed_2013}, see also \cite[Sec.~4]{candes_towards_2014}). Provided there exists a solution $ \vq$ to the dual \eqref{eq:dualsemdef} such that 
%\[
%\innerprod{ \vq}{\mG \va(\vr_j)}  = \sign(b_j), \text{ for all } \vr_j \in \T, \text{ and } |\innerprod{ \vq}{\mG \va(\vr)} | < 1 \text{ for all } \vr \in [0,1]^2 \setminus \T
%\]
%and we use an interior point method such as SDPT3 to solve \eqref{eq:dualsemdef} we have $\hat \T = \T$. 

We next provide a numerical example where the time-frequency shifts can be recovered perfectly from a solution to the semidefinite program \eqref{eq:dualsemdef}.  We choose $N=8$, consider the case of two time-frequency shifts, specifically $\T = \{ (0.2,0.5), (0.8,0.5)\}$, and let the coefficients $x_\ell, \ell = -\N,...,\N$ and the $b_j, j=1,2$, be i.i.d.~uniform on the complex unit sphere. 
In Figure \ref{fig:exdualpoly} 
%\MS{I think we should make this plot 3d. If you agree and give me the function I can do it.} RH: discuss in next version
we plot the dual polynomial $Q(\vr) =  \innerprod{\vq}{\mG \va(\vr)}$ with $\vq$ obtained by solving \eqref{eq:dualsemdef} via CVX \cite{cvx_2014} (CVX calls the SDPT3 solver). It can be seen that the time-frequency shifts can be recovered perfectly, i.e., $\hat \T = \T$. 

%, a Matlab package for specifying and solving convex programs 
%%%%%%%%%%%%%%%%%%%%%%%%%%%%%%%%%%%%%%%%%

\begin{figure}
\centering
%\begin{tikzpicture}[scale=1] 
%\begin{groupplot}[group style={group size=3 by 1,horizontal sep=0.8cm,vertical sep=2.5cm,xlabels at=edge bottom, ylabels at=edge left},
%%ylabel=$d$,
%%xlabel=$\gamma$,
%width=0.38\textwidth,/tikz/font=\small]
%    \nextgroupplot[title={$|Q(\tau, 0.5)|$},xlabel=$\tau$]% 1
%	\addplot +[mark=none,solid] table[x index=0,y index=1]{./dat/dualpoly.dat}; 
%		%\addlegendentry{$Q(\tau, 0.5)$}; 
%	\addplot[color=red,mark=none] coordinates {(0.2,0)(0.2,1)};
%   	\addplot[color=red,mark=none] coordinates {(0.8,0)(0.8,1)};
%	
%\nextgroupplot[title={$|Q(0.2,\nu)|$},xlabel=$\nu$]% 2	
%	\addplot +[mark=none,solid,blue,xlabel=$\tau$] table[x index=0,y index=2]{./dat/dualpoly.dat}; 
%	%\addlegendentry{$Q(0.2,\nu)$}; 	
%	   \addplot[color=red,mark=none] coordinates {(0.5,0)(0.5,1)};
%	 
%	 \nextgroupplot[title={$|Q(0.8,\nu)|$},xlabel=$\nu$]% 1	
%	\addplot +[mark=none,solid,blue,xlabel=$\tau$] table[x index=0,y index=3]{./dat/dualpoly.dat}; 
%	%\addlegendentry{$Q(0.8,\nu)$}; 	
%	  \addplot[color=red,mark=none] coordinates {(0.5,0)(0.5,1)};
%	
%	\end{groupplot}
%\end{tikzpicture}
\includegraphics[width=\textwidth]{./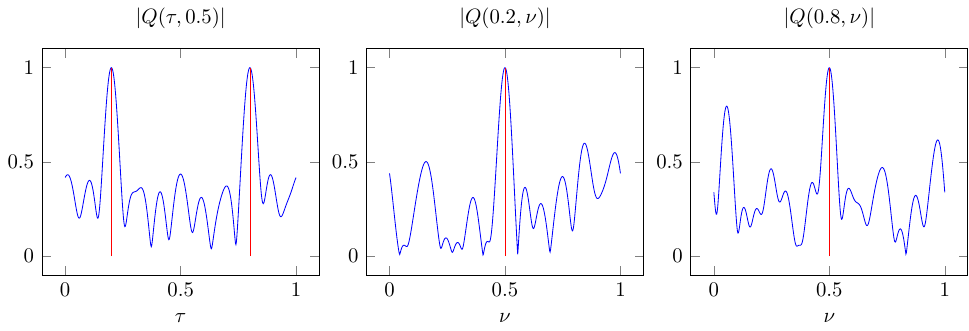}
\caption{\label{fig:exdualpoly} Localization of the time-frequency shifts via the estimated dual polynomial $Q(\tau,\nu)$ obtained by solving \eqref{eq:dualsemdef} with noiseless measurements. The red lines show the actual positions of the time-frequency shifts located at $(0.2,0.5)$ and $(0.8,0.5)$. Note that the estimated dual polynomial satisfies $|Q(\tau,\nu)| = 1$ if $(\tau,\nu) \in \{(0.2,0.5),(0.8,0.5)\}$ and $|Q(\tau,\nu)| < 1$ otherwise, thereby providing accurate identification of the time-frequency shifts.}
\end{figure}

%%%%%%%%%%%%%%%%%%%%%%%%%%%%%%%%%%%%%%%%%%%%
\subsection{Recovery in the noisy case \label{sec:noisycase}}

In practice, the samples $y_p$ in \eqref{eq:periorel} are corrupted by additive noise. In that case, perfect recovery of the $(b_j,\tau_j,\nu_j)$ is in general no longer possible, and we can only hope to identify the time-frequency shifts up to an error. In the noisy case, we solve the following convex program:
\begin{align}
\underset{\minlet\vz}{\text{minimize}} \,  \norm[\setA]{\minlet{\vz}} \; \text{ subject to } \; \norm[2]{\vy - \mG \minlet{\vz}} \leq \delta.
\label{eq:lasso}
\end{align}
The semidefinite programing formulation of the dual of \eqref{eq:lasso} takes the form
\begin{align}
\underset{\vq, \mQ}{\text{maximize}} \, \Re \innerprod{\vq}{\vy} - \delta \norm[2]{\vq}
 \text{ subject to \eqref{eq:dualsemdefconstraint}} 
\label{eq:lassodualsemdef}
\end{align}
and we again estimate the time-frequency shifts $\vr_j$ as the $\vr$ for which the dual polynomial $Q(\vr) =  \innerprod{\vq}{\mG \va(\vr)}$ achieves magnitude $1$. 
We leave theoretical analysis of this approach to future work, and only provide a
 numerical example demonstrating that this approach is stable. 

We choose $N=5$, consider the case of one time-frequency shift at $(\tau_1,\nu_1) =  (0.5,0.8)$ (so that the dual polynomial can be visualized in 3D) and let the coefficients $x_\ell, \ell = -\N,...,\N$ and $b_1$ be random variables that are i.i.d.~uniform on the complex unit sphere but normalized to have variance $1/\L$. 
We add i.i.d.~complex Gaussian noise $n_p$ to the samples samples $y_p$ in \eqref{eq:periorel}, such that the signal-to-noise ratio (SNR), defined as $\text{SNR} = \norm[2]{[y_{-\N},...,y_{N}]}^2/\norm[2]{[n_{-\N},...,n_{N}]}^2$, is 10dB (we express the SNR in decibels computed as $10\log_{10}(\text{SNR})$).  
In Figure \ref{fig:noisypoly} we plot the dual polynomial $Q(\vr) =  \innerprod{\vq}{\mG \va(\vr)}$ with $\vq$ obtained by solving \eqref{eq:lassodualsemdef} (with $\delta=0.8$) using CVX. 
The time-frequency shift for which the dual polynomial achieves magnitude $1$ is $(0.4942,0.7986)$; it is very close to the original time-frequency shift $(0.5,0.8)$. 

%%%%%%

\begin{figure}
\centering
%\begin{tikzpicture}[scale=0.97] 
%
%
%\begin{axis}[name=plot1,height=10cm,width=10cm,colormap/cool,xlabel=$\tau$,ylabel=$\nu$, zlabel={$|Q(\tau,\nu)|$},]
%	 \addplot3[surf,shader=faceted,view={60}{30}] table[x index=1,y index=0,z index=2] {./dualpoly_noisy.dat};
%	 \addplot3[black,thick] coordinates { (0.48, 0.78, 1) (0.52, 0.78, 1) (0.52, 0.82, 1) (0.48, 0.82, 1) (0.48, 0.78, 1) };
%  \end{axis}
%	 
%
%	 
%  \begin{axis}[name=plot2,at={($(plot1.east)+(2.5cm,1.5cm)$)},anchor=west,height=6cm,width=6cm,view={0}{90},colormap/cool,xlabel=$\tau$,ylabel=$\nu$]
%   	 \addplot3[surf,shader=faceted,view={0}{90}] table[x index=1,y index=0,z index=2] {./dualpoly_noisy_part.dat};
%         \addplot3[mark=x,black,thick,mark size=3pt] coordinates{(0.4942,0.7986,1)};
%         \addplot3[mark=oplus,black,thick,mark size=3pt] coordinates{(0.5,0.8,1)};
%  \end{axis}
%
%	 \draw[dotted] (plot1.center)+(0.98,2.77) -- (plot2.north west);
%	 \draw[dotted] (plot1.center)+(0.87,2.66) -- (plot2.south west);
%
%
%\end{tikzpicture}
\includegraphics[width=\textwidth]{./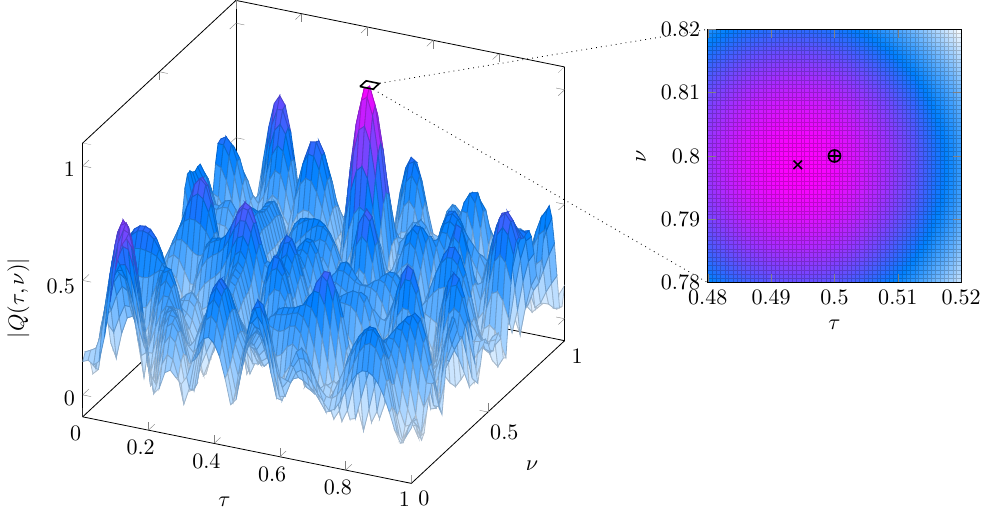}
\caption{\label{fig:noisypoly} 
Localization of the time-frequency shifts via the estimated dual polynomial $Q(\tau,\nu)$ obtained by solving \eqref{eq:lassodualsemdef} using noisy measurements (10dB noise). The estimated dual polynomial satisfies $|Q(\tau,\nu)| = 1$ for $(\tau,\nu)  = (0.4942,0.7986)$ (marked by $\times$); this is very close to the original time-frequency shift $(0.5,0.8)$ (marked by $\oplus$). }
\end{figure}
%%%%%%

%!TEX root = superres_radar.tex

%%%%%%%%%%%%%%%%%%%%%%%%%%%%%%%%%%%%%%%%%

\section{Relation to previous work \label{sec:priorwork}}

%\VM{Need to email to Stoica and learn if the problem we are solving can potentially be solved via spectral methods. If not, we should be more bold about our results.}
%
%\VM{It's a bit strange that we don't provide names of authors in the first paragraph (for example Bello), but we give names in the rest of the section. Also I would remove Mahdi's name for modesty reasons.}  

%Mathematically, identification of a linear system $H$ amounts to extract the system's parameters from the response $Hx$ of the system $H$ to the (known) probing signal $x$. A set of systems $\mathcal Q$ can be identified, if there exists a probing signal, such that the response $Hx$ uniquely determines $H\in \mathcal Q$. 

%To the best of our knowledge this is the first paper considering the recovery of the time-frequency shifts $(\tau_j,\nu_j)$ from the samples in \eqref{eq:periorel}. 
The general problem of extracting the spreading function $\sfunc (\tau,\nu)$ of a linear time varying system of the form \eqref{eq:ltvsys} is known as system identification. It has been shown that 
%While linear time invariant (LTI) systems are always identifiable, for LTV systems the situations is fundamentally different. Specifically, it has been shown that 
 LTV systems with spreading function compactly supported on a known region of area $\Delta$ in the time-frequency plane are identifiable if and only if $\Delta \leq 1$ as shown by Kailath, Bello, Kozek, and Pfander in \cite{kailath_measurements_1962,bello_measurement_1969,kozek_identification_2005,pfander_measurement_2006}. If the spreading function's support region is unknown, a necessary and sufficient condition for identifiability is $\Delta\leq 1/2$ \cite{heckel_identification_2013}. 
In contrast to our work, the input (probing) signal in \cite{kailath_measurements_1962,bello_measurement_1969,kozek_identification_2005,pfander_measurement_2006,heckel_identification_2013} is not band-limited, and the response to the probing signal is not time-limited. In fact, the probing signal in those works is a (weighted) train of Dirac impulses, a signal that decays neither in time nor in frequency. 
If it is necessary to use a band-limited probing signal and observe the response to the probing signal for a finite amount time (as is the case in practice), it follows from the theory developed in~\cite{kailath_measurements_1962,bello_measurement_1969,kozek_identification_2005,pfander_measurement_2006,heckel_identification_2013} that $\sfunc (\tau,\nu)$ can be identified with precision of $1/B$ and $1/T$ in $\tau$ and $\nu$ directions, respectively. This is good enough if the function $\sfunc (\tau,\nu)$ is smooth in the sense that it does not vary much at the scale $1/B$ and $1/T$ in $\tau$ and $\nu$ directions, respectively, and its support contains only a few disjoint regions. The assumptions we make about  $\sfunc (\tau,\nu)$ in this paper are very far from smoothness: Our $\sfunc (\tau,\nu)$ consists of Dirac delta functions, and hence is peaky, and all the elements of its support are disjoint. In this sense, the results in this paper are complementary to those in~\cite{kailath_measurements_1962,bello_measurement_1969,kozek_identification_2005,pfander_measurement_2006,heckel_identification_2013}.

Taub\"{o}ck et al.~\cite{taubock_compressive_2010} and Bajwa et al.~\cite{bajwa_learning_2008} 
considered the identification of LTV systems with 
spreading function compactly supported  in a rectangle of area $\Delta \leq 1$. 
In \cite{taubock_compressive_2010,bajwa_learning_2008}, the time-frequency shifts lie on a \emph{coarse} grid (i.e., the grid in Section \ref{sec:ongrid}). In our setup, the time-frequency shifts need not lie on a grid and may in principle lie in a rectangle of area $\L = BT$ that is considerably larger than $1$. 
In a related direction, Herman and Strohmer \cite{herman_high-resolution_2009} (in the context of compressed sensing radar) and Pfander et al.~\cite{pfander_identification_2008} considered the case where the time-frequency shifts lie on a coarse grid (i.e., the grid in Section \ref{sec:ongrid}).  %$\left(\frac{1}{B},  \frac{1}{T}\right)$ grid, cf.~Section \ref{sec:ongrid}. 

Bajwa et al.~\cite{bajwa_identification_2011} also considered the identification of an LTV of the form \eqref{eq:iorelintro}. The approach in \cite{bajwa_identification_2011} requires 
that the time-frequency shifts $(\tauc_j,\nuc_j)$  lie in a rectangle of area much smaller than $1$ (this is required for a certain approximation in \cite[Eq.~5]{bajwa_identification_2011} to be valid) and 
in the worst case $(BT)^2\geq c \S$. Both assumption are not required here. %However, the approach in \cite{bajwa_identification_2011} does not require a minimal separation condition (recall, that for stable recovery a minimum separation condition has to be satisfied regardless of the recovery algorithm). 
  
In \cite{candes_towards_2014}, Cand\`es and Fernandez-Granda study the recovery of the frequency components of a mixture of $\S$ complex sinusoids from $\L$ equally spaced samples as in \eqref{eq:supres}. As mentioned previously, this corresponds to the case of only time or only frequency shifts. 
Recently, Fernandez-Granda \cite{fernandez-granda_super-resolution_2015} improved the main result of \cite{candes_towards_2014} by providing an tighter constant in the minimum separation condition. 
Tang et al.~\cite{tang_compressed_2013} study a related problem, namely the recovery of the frequency components from a random subset of the $\L$ equally spaced samples. Both \cite{candes_towards_2014,tang_compressed_2013} study convex algorithms analogous to the algorithm studied here, and the proof techniques developed in these papers inspired the analysis presented here. In \cite{MSthesis} the author improved the results of \cite{candes_towards_2014} with simpler proofs by building approximate dual polynomials. We believe that one can utilize this result to simplify our proofs and/or remove the random sign assumption. We leave this to future work. 
Finally, we would like to mention a few recent papers that use algorithms not based on convex optimization for the super-resolution problem \cite{demanet2014recoverability, demanet2013super, fannjiang2011music, liao2014music, moitra2014threshold}. We should note that some of these approaches can handle smaller separation compared to convex optimization based approaches but the stability of these approaches to noise is not well understood, they do not as straightforwardly generalize to higher dimensions, and often need the model order (e.g., the number of frequencies) as input parameter. 
To the best of our knowledge, the approaches in \cite{demanet2014recoverability, demanet2013super, fannjiang2011music, liao2014music, moitra2014threshold} have never been generalized to the more general radar problem as in \eqref{eq:periorel}.  

%In the present paper, we allow general (i.e., continuous, discrete, or mixed continuous-discrete) spreading functions that can be supported in the entire $(\tau,\nu)$-plane with possibly fragmented support region. 
%Herman and Strohmer \cite{herman_high-resolution_2009}, in the context of compressed sensing radar, and Pfander et al. \cite{pfander_identification_2008} considered the problem of identifying finite-dimensional matrices that are sparse in the basis of time-frequency shift matrices. 
%This setup can be obtained from ours by discretization of the input-output relation \eqref{eq:ltvsys} through band-limitation of the probing signal and time-limitation and sampling of the output signal. 
%The signal recovery problem in \cite{taubock_compressive_2010,bajwa_learning_2008,bajwa_identification_2011,herman_high-resolution_2009,pfander_identification_2008} is a standard \emph{single measurement} recovery problem \cite{elad_sparse_2010}. 

%%%%%%%%%%%%%%%%%%%%%%%%%%%%%%%%%%%%%%%%%%%%%%%%%%%%%%%%%%%

\section{Construction of the dual polynomial  \label{app:proofmainres}}

In this section we prove Theorem \ref{thm:mainres} 
by constructing a dual polynomial that satisfies the conditions of Proposition \ref{prop:dualmin}. Existence of the dual polynomial is guaranteed by the following proposition, which is the main technical result of this paper. 
\begin{proposition}
Assume that the samples of the probing signal $x_\ell, \ell =-N,...,N$, are i.i.d.~$\mathcal N(0,1/\L)$ random variables and $\L = 2\N+1 \geq 1024$. Let $\T = \{\vr_1, \vr_2,...,\vr_\S\} \subset [0,1]^2$ be an arbitrary set of points obeying the minimum separation condition 
\begin{align}
%\infdist{\vr_j - \vr_{j'}} \geq  \frac{2.38}{\N}, \text{ for all } \vr_j, \vr_{j'} \in \T \text{ with } j \neq j'.
\max(|\tau_j - \tau_{j'}|, |\nu_j - \nu_{j'}| ) \geq \frac{2.38}{\N}
\text{ for all } [\tau_j, \nu_j], [\tau_{j'}, \nu_{j'}] \in \T  \text{ with } j \neq j'.
\label{eq:mindistcond}
\end{align}
Let $\vu \in \{-1,1\}^\S$ be a random vector, whose entries are i.i.d. and uniform on $\{-1,1\}$. Choose $\delta>0$ and assume
\[
S \le c\frac{L}{\log^3\left(\frac{L^6}{\delta} \right)}
\]
where $c$ is a numerical constant.
%Let $x,y$ be the coordinates of $\vr$, i.e., $\vr = [x,y]$. 
Then, with probability at least $1-\delta$, there exists a trigonometric polynomial $Q(\vr)$, $\vr = \transp{[\x,\y]}$, of the form
\begin{align}
Q(\vr) =  
 \innerprod{\vq}{\mc F_{\nu} \mc T_{\tau} \vx } 
=   \sum_{p=-\N}^{\N} \conj{[\mc F_{\nu} \mc T_{\tau} \vx]}_p q_p
%\innerprod{\vq}{\mG \va(\vr)} 
%= \sum_{p=-\N}^{\N} \left( e^{-i2\pi p \y}  \sum_{k,\ell = -\N}^{\N}  x_\ell e^{i2\pi (p-\ell) \frac{k}{L} } e^{-i2\pi k \x}  \right) q_p
 = \sum_{k,\ell = -\N}^{\N}
  \underbrace{
  \left(
  \frac{1}{\L}
  \conj{x}_\ell
  \sum_{p=-\N}^{\N}
    e^{i2\pi (p-\ell) \frac{k}{L} } 
   q_p
   \right)
   }_{q_{k,\ell} \defeq }
   e^{-i2\pi (k \x + p \y)} 
%\left( \omega^{p \y}  \sum_{k,\ell = -\N}^{\N}  x_\ell \omega^{-(p-\ell) \frac{k}{L} } \omega^{k \x}  \right)  q_p
\label{eq:dualpolyinprop}
\end{align}
%
% \begin{align}
% Q(\vr) = \sum_{p=-\N}^{\N} [\mc F_{-\nu} \mc T_{\tau} x]_p q_p
% % \left( e^{-i2\pi p \y}  \sum_{k,\ell = -\N}^{\N}  x_\ell e^{i2\pi (p-\ell) \frac{k}{L} } e^{-i2\pi k \x}  \right)  q_p
% % y_p
% % &=
% % \sum_{j=1}^{\S} b_j
% % [\mc F_{\nu_j}
% % \mc T_{\tau_j}
% % x]_p
% \label{eq:dualpolyinprop}
% \end{align}
% \begin{align}
% Q(\vr) = \sum_{p=-\N}^{\N} \left( e^{-i2\pi p \y}  \sum_{k,\ell = -\N}^{\N}  x_\ell e^{i2\pi (p-\ell) \frac{k}{L} } e^{-i2\pi k \x}  \right)  q_p
%
% \label{eq:dualpolyinprop}
% \end{align}
with complex coefficients $\vq = \transp{[ q_{-\N}, ..., q_{\N}]}$ such that 
\begin{align}
Q(\vr_j) = u_j, \text{ for all } \vr_j \in \T, \text{ and } |Q(\vr)| < 1 \text{ for all } \vr \in [0,1]^2 \setminus \T.
\label{eq:intboundcondpro}
\end{align}
\label{prop:dualpolynomial}
\end{proposition}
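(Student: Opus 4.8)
\textbf{Proof proposal for Proposition~\ref{prop:dualpolynomial}.}

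The plan is to follow the now-standard interpolation construction for dual certificates, adapted to the two-dimensional time-frequency setting with a random Gabor window. First I would build a deterministic candidate kernel $K(\vr,\vr')$ that, for fixed grid-like problems, plays the role of the squared Fej\'er kernel used in \cite{candes_towards_2014,tang_compressed_2013}; here the natural choice is a separable product $K(\vr,\vr') = \bar K(\tau-\tau')\bar K(\nu-\nu')$ of one-dimensional kernels whose Fourier support matches the degree $(\N,\N)$ of the polynomials in \eqref{eq:dualpolyinprop}. One then posits an ansatz of the form $Q(\vr) = \sum_{\vr_j \in \T} \alpha_j K(\vr,\vr_j) + \sum_{\vr_j \in \T} \beta_{j,1}\partial_\tau K(\vr,\vr_j) + \sum_{\vr_j \in \T}\beta_{j,2}\partial_\nu K(\vr,\vr_j)$, and chooses the coefficients $\alpha_j,\beta_{j,1},\beta_{j,2}$ to enforce the interpolation conditions $Q(\vr_j)=u_j$ together with the stationarity conditions $\nabla Q(\vr_j)=\vect{0}$ for all $j$. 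The stationarity conditions are what force $|Q|$ to have a strict local maximum (of value $1$) at each $\vr_j$, which is the mechanism for the strict inequality $|Q(\vr)|<1$ off $\T$.

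The crucial new ingredient, and the step I expect to be the main obstacle, is that $Q$ must lie in the column space of $\herm{\mG}$ acting on atoms --- i.e., it must be expressible as $\innerprod{\vq}{\mc F_\nu \mc T_\tau \vx}$ for a single coefficient vector $\vq\in\complexset^\L$, where $\mG$ is the \emph{random} Gabor matrix with window $\vx$. In the line-spectral problems of \cite{candes_towards_2014,tang_compressed_2013} the analogous kernel is deterministic and lives automatically in the right space; here the polynomial is constrained to the $\L$-dimensional random subspace, so the construction must be randomized. The idea is to replace the deterministic kernel $K$ by a random surrogate built from $\vx$: roughly, one takes $\tilde K(\vr,\vr') = \sum_p \conj{[\mc F_\nu\mc T_\tau\vx]}_p [\mc F_{\nu'}\mc T_{\tau'}\vx]_p$ (and its partial derivatives), which is manifestly of the required form with $\vq = \sum_j \gamma_j \mc F_{\nu_j}\mc T_{\tau_j}\vx + (\text{derivative terms})$, and whose expectation over $\vx$ reproduces (a multiple of) the deterministic separable kernel $K$, because $\Ex{\conj{x}_\ell x_{\ell'}} = \frac{1}{\L}\delta_{\ell\ell'}$. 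One then proceeds as in \cite{tang_compressed_2013}: solve the interpolation system using the \emph{expected} kernel, show via matrix Bernstein / suprema-of-chaos bounds (this is where the $(\log(\L^6/\delta))^3$ factor and the sparsity restriction $\S \le cL/(\log(\L^6/\delta))^3$ enter, cf.~\cite{krahmer_suprema_2014}) that the random Gram-type matrix concentrates around its expectation, so the interpolation system is invertible with well-controlled inverse, and the resulting coefficients are small.

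With the coefficients controlled, the remaining work is the familiar ``near region / far region'' split of $[0,1]^2\setminus\T$. Near each $\vr_j$ one Taylor-expands $Q$ and uses the interpolation and stationarity conditions plus bounds on $\|\nabla^2 Q(\vr_j)\| \preceq$ (a negative definite matrix) to get $|Q(\vr)|^2 < 1$ in a fixed-size neighborhood whose radius is dictated by the minimum separation $2.38/\N$; far from all $\vr_j$ one uses decay of the kernel $\bar K$ and its derivatives together with the smallness of the coefficients to bound $|Q(\vr)|$ strictly below $1$ directly. The random-sign vector $\vu$ enters here: since the dominant term in $Q$ is $\sum_j u_j K(\vr,\vr_j)$ and the $u_j$ are i.i.d.~Rademacher, Hoeffding-type inequalities give that $|\sum_j u_j K(\vr,\vr_j)|$ is bounded by a constant strictly less than $1$ with high probability simultaneously over a fine net in the far region, after which a Lipschitz/union-bound argument extends it to all $\vr$. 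Summing the failure probabilities over the concentration step, the invertibility step, and the net in the far region, and absorbing everything into the allowed $\delta$, completes the proof. The main technical burden is the concentration analysis of the random kernel and its derivatives in the two-dimensional setting --- establishing, uniformly, that the random Gabor-structured quantities are close to their separable-kernel expectations --- which is precisely where the chaos estimates of \cite{krahmer_suprema_2014} and the cubic log-factor are needed.
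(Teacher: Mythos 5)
Your high-level strategy is the right one and matches the paper's: an interpolation ansatz built from random kernels of the constrained form \eqref{eq:dualpolyinprop}, stationarity conditions at each $\vr_j$, concentration of the random interpolation system around a deterministic counterpart, and a near/far split of $[0,1]^2\setminus\T$. However, there are three concrete places where the proposal as written would break.

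First, the random surrogate you posit, $\tilde K(\vr,\vr') = \sum_p \conj{[\mc F_\nu\mc T_\tau\vx]}_p\,[\mc F_{\nu'}\mc T_{\tau'}\vx]_p$, has the wrong expectation. Because $\mc F_{\nu}\mc T_{\tau}\vx = \mG\herm{\mF}\vf(\vr)$ and $\EX{\herm{\mG}\mG}=\mI$, the expectation of that inner product is proportional to $D_\N(\tau-\tau')\,D_\N(\nu-\nu')$ --- a product of Dirichlet kernels with $1/|t|$ decay, not the squared Fej\'er kernel with $1/t^4$ decay. Without fast decay the off-diagonal entries of the interpolation Gram matrix do not sum geometrically, the matrix cannot be shown to be near-identity under a $\mathcal{O}(1/\N)$ minimum separation, and the far-region bound fails. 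The fix, which is the heart of the paper's construction, is to \emph{bake the Fej\'er weights into the coefficient vector}: take $\vq=\mG\herm{\mF}\vg_{(m,n)}(\vr_j)$, where $\vg_{(m,n)}(\vr_j)$ carries the Fej\'er coefficients $g_r g_q$ modulated by $e^{-i2\pi(\tau_j r + \nu_j q)}$ and the derivative factors $(i2\pi r)^m(i2\pi q)^n$. This yields $\EX{G_{(m,n)}(\vr,\vr_j)} = \bar G^{(m,n)}(\vr-\vr_j)$, the desired squared-Fej\'er quantity.

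Second, you cannot use $\partial_\tau K(\vr,\vr_j)$, $\partial_\nu K(\vr,\vr_j)$ (derivatives in the first argument) as correction functions: the derivative of a polynomial of the form \eqref{eq:dualpolyinprop} is in general \emph{not} of the form \eqref{eq:dualpolyinprop} (the factor $-i2\pi k$ appearing inside the $k$-sum cannot be absorbed into a constant $\vq$). In the deterministic case this is invisible because $\bar G(\vr-\vr_j)$ depends only on the difference, so $\partial_\tau$ and $-\partial_{\tau_j}$ coincide; here the kernel is not shift-invariant and only differentiation through the $\vg$-side (equivalently w.r.t. $\vr_j$) preserves the constrained form. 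The paper's $G_{(1,0)}(\vr,\vr_j)$, $G_{(0,1)}(\vr,\vr_j)$ are built precisely this way, and the identity $\EX{G^{(m,n)}_{(m',n')}(\vr,\vr_j)} = \bar G^{(m+m',n+n')}(\vr-\vr_j)$ is what lets the deterministic analysis of \cite{candes_towards_2014} be recycled.

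Third, on the concentration tool: you lean on the suprema-of-chaos bounds of \cite{krahmer_suprema_2014}, but the paper observes in the introduction that those results apply to the Gabor \emph{matrix}, not to the continuously-indexed, highly-correlated ``columns'' $\mc F_\nu\mc T_\tau\vx$; they do not directly deliver uniform control here. The paper instead expresses each $G^{(m,n)}_{(m',n')}(\vr,\vr_j)$ as a quadratic form $\herm{\vx}\mV\vx$, applies the Hanson--Wright inequality pointwise, bounds $\norm[F]{\mV}$, and then passes to uniformity via an $\epsilon$-net of cardinality polynomial in $\L$ together with Bernstein's polynomial inequality to control the increments. That route is where the $\log^3(\L^6/\delta)$ factor actually comes from (one $\log$ from the net cardinality, two from the union bound over the $\S^2$ Gram entries and the Hanson--Wright tail). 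If you wish to pursue the chaos-process route you would need a nonstandard chaining argument adapted to the continuous parameter, which the paper deliberately avoids.
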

We provide a proof of Proposition \ref{prop:dualpolynomial} by constructing $Q(\vr)$ explicitly. 
Our construction of the polynomial $Q(\vr)$ is inspired by that in \cite{candes_towards_2014,tang_compressed_2013}. 
To built $Q(\vr)$ we need to construct a 2D-trigonometric polynomial that satisfies \eqref{eq:intboundcondpro}, and whose coefficients $q_{k,\ell}$ are constraint to be of the form  \eqref{eq:dualpolyinprop}. 
It is instructive to first consider the construction of a 2D trigonometric polynomial
\begin{align}
\bar Q(\vr) = \sum_{k,\ell = -\N}^\N \bar q_{k,\ell} e^{-i2\pi (k \x + \ell \y)}
\label{eq:Qtrigonpoly}
\end{align}
satisfying \eqref{eq:intboundcondpro} without any constraint on the coefficients $\bar q_{k,\ell}$. 
To this end, we next review the construction in \cite{candes_towards_2014} establishing that there exists a 2D trigonometric polynomial $\bar Q$ satisfying simultaneously the interpolation condition 
$\bar Q(\vr_j) = u_j, \text{ for all } \vr_j \in \T$, and the boundedness condition $\abs{\bar Q(\vr)} < 1$, for all $\vr \notin \T$, provided the minimum separation condition \eqref{eq:mindistcond} holds. 
In order to construct $\bar Q$, Cand\`es and Fernandez-Granda \cite{candes_towards_2014} interpolate the points $u_j$ with a fast-decaying kernel $\bar G$ and its partial derivatives according to  
\begin{align}
\bar Q(\vr) = \sum_{k=1}^\S \bar \alpha_k \bar G(\vr-\vr_k) + \bar \beta_{1k} \bar G^{(1,0)}(\vr - \vr_k) +  \bar\beta_{2k} \bar G^{(0,1)}(\vr - \vr_k).
\label{eq:detintpolC}
\end{align}
Here, $\bar G^{(m,n)}(\vr) \defeq \frac{\derd^m }{ \derd \x^m} \frac{\derd^n }{ \derd \y^n}  \bar G(\vr)$ and 
$  
\bar G(\vr)
\defeq \FK(\tau) \FK(\nu)
$
where $\FK(t)$ is the squared Fej\'er kernel defined as
\[
 \FK(t) \defeq \left( \frac{\sin\left( M \pi t\right)}{M \sin(\pi t)} \right)^4, \quad M \defeq \frac{\N}{2}+1.
\]
For $\N$ even, the Fej\'er kernel is a trigonometric polynomial of degree $\N/2$. It follows that $\FK(t)$ is  a trigonometric polynomial of degree $\N$ and can be written as a trigonometric polynomial with coefficients $g_k$ according to 
\begin{align}
 \FK(t) = \frac{1}{M} \sum_{k=-\N}^{\N}  g_k e^{i2\pi t k}, \quad M \defeq \frac{\N}{2}+1
\label{eq:def:FejK}
\end{align}
Since shifted versions of $F(t)$ and the derivatives of $\FK(t)$ are also 1D trigonometric polynomials of degree $\N$, it follows that the kernel $\bar G$, its partial derivatives, and shifted versions thereof, are 2D trigonometric polynomials of the form \eqref{eq:Qtrigonpoly}. 
Since $\FK(t)$ decays rapidly around the origin $t=0$ ($\FK(0)=1$), $\bar G(\vr)$ decays rapidly around the origin $\vr = \vect{0}$ as well. 
To ensure that $\bar Q(\vr)$ reaches local maxima, which is necessary for the interpolation and boundedness conditions to hold simultaneously, the coefficients $\bar \alpha_k, \bar \beta_{1k}$ and $\bar \beta_{2k}$ are chosen in a specific way guaranteeing that 
\begin{align}
\bar Q(\vr_k) = u_k,  \quad
 \bar Q^{(1,0)}(\vr_k) = 0, \text{ and }  \bar Q^{(0,1)}(\vr_k) = 0, \text{ for all } \vr_k \in \T,
\label{eq:condQinterpolv}
\end{align} 
where $\bar Q^{(m,n)}(\vr) \defeq \frac{\derd^m }{ \derd \x^m} \frac{\derd^n }{ \derd \y^n}  \bar Q(\vr)$. 
The idea of this construction is to interpolate the $u_j$ with the functions $\bar G(\vr - \vr_j)$ (the $\alpha_k$ are close to the $u_j$) and to slightly adopt this interpolation near the $\vr_j$ with the functions $\bar G^{(1,0)}(\vr - \vr_j)$ , $\bar G^{(0,1)}(\vr - \vr_j)$ to ensure that local maxima are achieved at the $\vr_j$ (the $\beta_{1k}, \beta_{2k}$ are in fact very small). 
The key properties of the interpolating functions used in this construction are that $G(\vr - \vr_j)$ decays fast around $\vr_j$, as this enables a good constant in the minimum separation condition \eqref{eq:mindistcond}, and that the ``correction'' functions $\bar G^{(1,0)}(\vr - \vr_j)$, $\bar G^{(0,1)}(\vr - \vr_j)$ are small at $\vr_j$, but sufficiently large in a small region relatively close to $\vr_j$, and decay fast far from $\vr_j$ as well. 
%Requiring the partial derivatives of $Q(\vr)$ to be zero on $\T$ will ensure that the magnitude of $Q(\vr)$ reaches local maxima on $\T$, a property that is important for establishing $\abs{Q(\vr)} < 1$ for all $\vr \notin \T$. 

A first difficulty with generalizing this idea to the case where the coefficients $q_{k,\ell}$ of the 2D trigonometric polynomial have the special form  \eqref{eq:dualpolyinprop} is this: Since $\vx$ is a random vector our interpolation and correction functions are naturally non-deterministic and thus showing the equivalent to \eqref{eq:condQinterpolv}, namely \eqref{eq:interpcondQ}, 
requires a probabilistic analysis. 
Specifically, we will use concentration of measure results.  
A second difficulty is that interpolating the points $u_j$ with shifted versions of a \emph{single} function will not work, as shifted versions of a function of the special form  \eqref{eq:dualpolyinprop}, playing the role of $\bar G$, are in general not of the form \eqref{eq:dualpolyinprop} (the time and frequency shift operators $\mc T_\tau$ and $\mc F_\nu$ do not commute). 
As a result we have to work with different interpolating functions for different $\vr_j$'s. These function reach their maxima at or close to the $\vr_j$'s. %but not necessarily exactly at the $\vr_j$'s. 
A third difficulty is that we cannot simply use the derivatives of our interpolation functions as  ``correction'' functions, because the derivatives of a polynomial of the form \eqref{eq:dualpolyinprop} are in general not of the form \eqref{eq:dualpolyinprop}. 

We will construct the polynomial $Q(\vr)$ by interpolating the points $(\vr_k, u_k)$ with functions $G_{(m,n)}(\vr,\vr_k),\allowbreak m,n =0,1$ that have the form \eqref{eq:dualpolyinprop}:   
\begin{align}
Q(\vr) = \sum_{k=1}^\S \alpha_k G_{(0,0)}(\vr,\vr_k) + \beta_{1k} G_{(1,0)}(\vr,\vr_k) +  \beta_{2k} G_{(0,1)}(\vr,\vr_k).
\label{eq:dualpolyorig}
\end{align}  
Choosing the functions $G_{(m,n)}(\vr,\vr_k)$ in \eqref{eq:dualpolyorig} to be of the form \eqref{eq:dualpolyinprop} ensures that $Q(\vr)$ itself is of the form \eqref{eq:dualpolyinprop}. 
We will show that, with high probability, there exists a choice of coefficients $\alpha_k, \beta_{1k}, \beta_{2k}$ such that  
\begin{align}
Q(\vr_k) = u_k,  \quad
 Q^{(1,0)}(\vr_k) = 0, \text{ and }  Q^{(0,1)}(\vr_k) = 0, \text{ for all } \vr_k \in \T, \quad    
\label{eq:interpcondQ}
\end{align}
where $Q^{(m,n)}(\vr) \defeq \frac{\derd^m }{ \derd \x^m} \frac{\derd^n }{ \derd \y^n}   Q(\vr)$. 
This ensures that $Q(\vr)$ reaches local maxima at the $\vr_k$. We will then show that with this choice of coefficients, the resulting polynomial satisfies $|Q(\vr)| < 1$ for all $\vr \notin \T$. 

Now that we have stated our general strategy, we turn our attention to the construction of the interpolating and correction functions $G_{(m,n)}(\vr,\vr_k)$. We will chose $G_{(0,0)}(\vr,\vr_k)$ such that its peak is close to $\vr_k$ and it decays fast around $\vr_k$. 
%In order to construct functions $G_{(m,n)}(\vr,\vr_k)$ of the form \eqref{eq:dualpolyinprop}, first recall that 
For $G_{(0,0)}(\vr,\vr_k)$ to have the form \eqref{eq:dualpolyinprop} it must be random (due to $\vx$); we will show that the properties just mentioned are satisfied with high probability. 
We start by recalling (cf.~Section \ref{sec:sdprel})
\[
\mc F_\nu \mc T_\tau \vx = \mG \herm{\mF} \vf(\vr),
\]
where $\mG$ is the Gabor matrix (cf.~\ref{eq:defgabormtx}). Here,  $\herm{\mF}$ is the inverse 2D discrete Fourier transform matrix with the entry in the $(k,\ell)$-th row and $(r,q)$-th column given by $[\herm{\mF}]_{(k,\ell), (r,q)} \defeq \frac{1}{\L^2} e^{i2\pi \frac{qk + r\ell}{\L}}$ and $[\vf(\vr)]_{(r,q)} = e^{-i2\pi (r\tau + q \nu)}$ with $k, \ell, q, r = -\N, ..., \N$.  
Next, define the vector $\vg_{(m,n)}(\vr_j)   \in \complexset^{\L^2}$  as 
\[
[\vg_{(m,n)}(\vr_j)]_{(r,q)} = g_r g_q e^{-i2\pi(\x_j r + \y_j q)}  (i2\pi r)^m (i2\pi q )^n, \quad r,q=-\N,...,\N,\quad \vr_j = \transp{[\tau_j,\nu_j]}. 
\]
Here, the $g_r$ are the coefficients of the squared Fej\'er kernel in \eqref{eq:def:FejK}. 
With this notation, we define 
\begin{align}
%G(\vr) \defeq \frac{1}{M^2} 
%%\left[ \omega^{\innerprod{ [-\N, -\N] }{ [x,y]  }}  \cdots \omega^{\innerprod{ [\N, \N] }{ [x,y]  }} \right] 
%\left[ \omega^{-\N \x  -\N \y}  \cdots \omega^{\N \x + \N \y} \right] 
%\herm{\mA} \mA \begin{bmatrix}
%g(-\N)g(-\N) \\
%\vdots \\
%g(\N)g(\N) \\
%\end{bmatrix} 
%%
G_{(m,n)}(\vr, \vr_j) 
&\defeq
\frac{\L^2}{M^2}
\innerprod{\mG \herm{\mF} \vg_{(m,n)}(\vr_j)
}{\mc F_\nu \mc T_\tau \vx } 
 \nonumber \\
&=
\frac{\L^2}{M^2}
\herm{\vf}(\vr) \mF \herm{\mG} \mG \herm{\mF}  \vg_{(m,n)}(\vr_j).
\label{eq:defkernelG}
\end{align}
By identifying $\vq$ in \eqref{eq:dualpolyinprop} with 
$\mG \herm{\mF} \vg_{(m,n)}(\vr_j)$, 
we immediately see that $G_{(m,n)}(\vr, \vr_j)$ and in turn $Q(\vr)$ has the form \eqref{eq:dualpolyinprop}, as desired. 
%, since the entries of the vector $\herm{\vf}(\vr) \herm{\mA}$ are the terms in the bracket in \eqref{eq:dualpolyinprop}. 
%
 %For now, it is only important to note that the Kernel $G(\vr)$ (and its derivatives) decay rapidly away from the origin. 
The particular choice of $G_{(m,n)}(\vr, \vr_j)$ is motivated by the fact---made precise later---that $G_{(m,n)}(\vr, \vr_j)$ concentrates around the deterministic function $G^{(m,n)}(\vr - \vr_j)$ of \eqref{eq:detintpolC}. 
Thus, $G_{(0,0)}(\vr,\vr_j)$ decays rapidly around $\vr_j$ with high probability. To demonstrate this empirically, in Figure \ref{fig:GbarG} we plot $\bar G(\vr)$ and $G_{(0,0)}(\vr,\vect{0})/ G_{(0,0)}(\vect{0},\vect{0})$ for $\N = 60$ and $\N=300$. Note that close to $\vr= \vect{0}$, the random function $G_{(0,0)}(\vr,\vect{0})$ and the deterministic kernel $\bar G(\vr)$ are very close. 
A simple calculation shows that the expected value of $G_{(m,n)}(\vr, \vr_j)$ with respect to  $\mathbf{x}$ is equal to $G^{(m,n)}(\vr - \vr_j)$. 
Specifically, 
as shown later on in Section \ref{sec:concstep1}, $\EX{\herm{\mG} \mG} = \mI$. This immediately implies that 
\begin{align}
\EX{G_{(m,n)}(\vr,\vr_j)} 
&= 
\frac{1}{M^2}
\herm{\vf}(\vr)  \vg_{(m,n)}(\vr_j)  \nonumber \\
&= 
\frac{1}{M^2}
\sum_{r,q=-\N}^\N
e^{i2\pi (r\x + q \y )}
g_r g_q e^{-i2\pi(\x_j r + \y_j q)}  (i2\pi r)^m (i2\pi q)^n \nonumber \\
&= 
\frac{\derd^m }{ \derd \x^m} \frac{\derd^n }{ \derd \y^n}
\frac{1}{M^2}
\sum_{r,q=-\N}^\N
e^{i2\pi( r (\x - \x_j)   + q (\y - \y_j)  )}
g_r g_q   \nonumber \\
&=
\frac{\derd^m }{ \derd \x^m} \frac{\derd^n }{ \derd \y^n}
\FK(\x-\x_j)  \FK(\y - \y_j)  \nonumber \\ 
&=  \bar G^{(m,n)}(\vr-\vr_j). 
\label{eq:expGmn}
\end{align}
%
% 
%
%
%\section{Plot of the Kernel}
%
%\begin{align}
%G(\vr) 
%&= \frac{1}{M^2} \sum_{p, \tilde k} \left(\sum_{\ell, \tilde \ell,k}  x_\ell \conj{a}_{\tilde \ell} \omega^{- \frac{(p-\tilde \ell)\tilde k}{L} }  \omega^{\frac{(p-\ell)k}{L} }  g(k) \right) g(p) \omega^{\tilde k \tau + p \nu}  \nonumber \\
%%
%&= \frac{1}{M^2} \sum_{p, \tilde k} \omega^{- \frac{p \tilde k}{L}  } \left(\sum_{\ell}  x_\ell  \left( \sum_{ \tilde \ell} \conj{a}_{\tilde \ell} \omega^{\frac{\tilde \ell \tilde k}{L}  } \right)  
%\left(\sum_{k} \omega^{\frac{(p-\ell)k}{L} }  g(k) \right) \right) g(p) \omega^{\tilde k \tau + p \nu} \nonumber \\
%%
%&= \frac{1}{M^2} \sum_{p, \tilde k} \omega^{- \frac{p \tilde k}{L}  } A(\tilde k) \left(\sum_{\ell}  x_\ell    K\left(\frac{p-\ell}{L} \right)  \right) g(p) \omega^{\tilde k \tau + p \nu} \nonumber \\
%\end{align}
%
%
% 
%
\begin{figure}
\centering
\includegraphics{./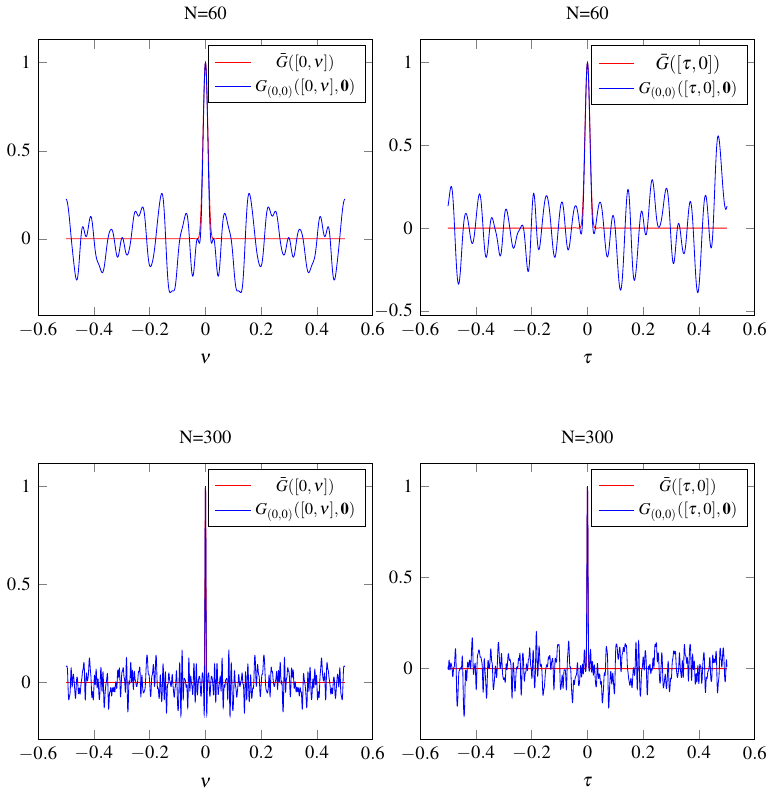}
\caption{\label{fig:GbarG} Plots of the random kernel $G_{(0,0)}(\vr,\vect{0}) / G_{(0,0)}(\vect{0},\vect{0})$ along with the deterministic kernel $\bar G(\vr)$.}
\end{figure}
%
%
%
%%%%%%%%%%%%%%%%%%%%%%%%%%%%%%%%%%%%%%%%%%%

The remainder of the proof is organized as follows. 
\begin{description}
\item[Step 1:] \label{it:step1} 
We will show that for every $\vr \in [0,1]^2$ the function $G_{(0,0)}(\vr,\vr_j)$ is close to $\bar G(\vr-\vr_j)$ with high probability, i.e., $|G_{(0,0)}(\vr,\vr_j) - \bar G(\vr-\vr_j)|$ is small. 
\item[Step 2:]\label{it:step2} 
We will then show that for a randomly chosen $\vx$,  with high probability, there exists a specific choice of coefficients $\alpha_k, \beta_{1k}, \beta_{2k}$ guaranteeing that \eqref{eq:interpcondQ} is satisfied. 
\item[Step 3:]\label{it:step3} 
We conclude the proof by showing that with the coefficients chosen as in Step 2, $\abs{Q(\vr)} < 1$ with high probability uniformly for all $\vr \notin \T$. 
This is accomplished using an $\epsilon$-net argument. % similar to the one used in \cite{tang_compressed_2013}:
%this will be accomplished as follows: 
\begin{description}
\item[Step 3a:]\label{it:step3a} Let $\Omega \subset [0,1]^2$ be a (finite) set of grid points. For every $\vr\in \Omega$, we show that $Q(\vr)$ is ``close'' to $\bar Q(\vr)$ with high probability. 
\item[Step 3b:]\label{it:step3b} We use Bernstein's polynomial inequality to conclude that this result holds with high probability uniformly for all $\vr \in [0,1]^2$.
\item[Step 3c:]\label{it:step3c} 
Finally, we combine this result with a result in  \cite{candes_towards_2014} that shows that $\abs{\bar Q(\vr)} < 1$ for all $\vr \notin \T$ to conclude that $\abs{Q(\vr)} < 1$ holds with high probability uniformly for all  $\vr \notin \T$. 
\end{description}
\end{description}

%%%%%%%%%%%%%%%%%%%%%%%%%%%%%%%%%%%%%%%%%%%%

\subsection{Step 1: Concentration of $G_{(0,0)}(\vr,\vr_j)$ around $\bar G(\vr-\vr_j)$}
\label{sec:concstep1}

In this subsection we establish the following result. 
%This subsection is devoted to establish the following result. 

\begin{lemma}
Let 
$G_{(m',n')}^{(m,n)}(\vr, \vr_j)
=
\frac{\derd^m }{ \derd \x^m} \frac{\derd^n }{ \derd \y^n}
G_{(m',n')}(\vr, \vr_j)
$ and 
fix $\vr, \vr_j \in [0,1]^2$. For all $\alpha \geq 0$, and for all nonnegative integers $m, m',n,n'$ with $m + m'+n + n' \leq 4$, 
\begin{align}
&\PR{ \frac{1}{\kappa^{m+m'+n+n'}}  | G^{(m,n)}_{(m',n')}(\vr, \vr_j) - \bar G^{(m + m',n + n')}(\vr - \vr_j)  |  > c_1  12^{\frac{m + m'+n+ n'}{2}}   \frac{\alpha}{\sqrt{\L}}  } \nonumber \\
&\hspace{8cm}\leq 2 \exp\left( - c \min\left( \frac{ \alpha^2}{c_2^4   },  \frac{ \alpha}{ c_2^2   }\right) \right),  \label{eq:polydev}
\end{align}
where $\kappa \defeq \sqrt{|\FK''(0)|}$ and $c,c_1,c_2$ are numerical constants. 
\label{lem:polybo}
\end{lemma}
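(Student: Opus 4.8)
The plan is to write the random quantity $G^{(m,n)}_{(m',n')}(\vr,\vr_j)$ as a quadratic form in the Gaussian vector $\vx$, subtract its mean (which equals $\bar G^{(m+m',n+n')}(\vr-\vr_j)$ by the computation in \eqref{eq:expGmn}), and apply the Hanson--Wright inequality. Concretely, from \eqref{eq:defkernelG} we have $G_{(m',n')}(\vr,\vr_j) = \frac{\L^2}{M^2}\herm{\vf}(\vr)\mF\herm{\mG}\mG\herm{\mF}\vg_{(m',n')}(\vr_j)$, and differentiating $m$ times in $\x$ and $n$ times in $\y$ only hits the deterministic factor $\herm{\vf}(\vr)$ (bringing down factors $(i2\pi r)^m$, $(i2\pi q)^n$ in the $(r,q)$ component), so $G^{(m,n)}_{(m',n')}(\vr,\vr_j)$ is still of the form $\frac{\L^2}{M^2}\herm{\vw}\herm{\mG}\mG\vv$ for explicit deterministic vectors $\vw = \herm{\mF}\vf^{(m,n)}(\vr)$ and $\vv = \herm{\mF}\vg_{(m',n')}(\vr_j)$ depending on the derivative orders. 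Since each column of $\mG$ is a time-frequency shift of $\vx$, the expression $\herm{\vw}\herm{\mG}\mG\vv$ is a bilinear form $\vx^H \mathbf{M} \vx$ (after polarization, or directly since $\mG$ is linear in $\vx$); a short calculation identifies the matrix $\mathbf{M}$ in terms of modulations and translations of $\vw,\vv$. Because $\EX{\herm{\mG}\mG}=\mI$ (to be shown in this subsection, as the text notes), $\EX{\vx^H\mathbf{M}\vx} = \herm{\vw}\vv$, and $\frac{\L^2}{M^2}\herm{\vw}\vv = \bar G^{(m+m',n+n')}(\vr-\vr_j)$ by the same summation that produced \eqref{eq:expGmn}.

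The next step is to apply Hanson--Wright: for $\vx\sim\mathcal N(0,\frac1\L\mI)$,
\[
\PR{ |\vx^H\mathbf{M}\vx - \EX{\vx^H\mathbf{M}\vx}| > t } \le 2\exp\!\left(-c\min\!\left(\frac{\L^2 t^2}{\norm[F]{\mathbf{M}}^2},\frac{\L t}{\norm[\opnormss]{\mathbf{M}}}\right)\right).
\]
So the whole lemma reduces to bounding $\norm[F]{\mathbf{M}}$ and $\norm[\opnormss]{\mathbf{M}}$ and tracking how they scale with the number of derivatives. For the operator norm, $\mathbf{M}$ is built from translation/modulation operators (which are unitary) sandwiching rank-one-ish pieces involving $\vw$ and $\vv$, so $\norm[\opnormss]{\mathbf{M}}\lesssim \norm[2]{\vw}\norm[2]{\vv}$; similarly $\norm[F]{\mathbf{M}}\lesssim \sqrt{\L}\,\norm[2]{\vw}\norm[2]{\vv}$ (the extra $\sqrt\L$ coming from the $\L$ translation shifts summed incoherently). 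It remains to estimate $\norm[2]{\vw}$ and $\norm[2]{\vv}$: these are $\ell_2$-norms of $\herm{\mF}$ applied to vectors whose $(r,q)$-entries are $g_r g_q$ times monomials $r^{m'}q^{n'}$ (resp. $r^m q^n$) of total degree $\le 4$. Using $\sum_r g_r^2 r^{2a}\asymp M^{1+2a}\kappa^{2a}$ (a property of the squared Fejér kernel, with $\kappa=\sqrt{|\FK''(0)|}\asymp M$), one gets $\norm[2]{\vw}\asymp \frac{M^{?}}{\L}\kappa^{m+n}$ and likewise for $\vv$, and plugging in yields exactly the claimed bound with the $\kappa^{m+m'+n+n'}$ normalization and the $12^{(m+m'+n+n')/2}$ factor (the $12$ is the crude per-derivative constant, e.g. from $2\pi<\sqrt{12}\cdot$const combined with the Fejér moment ratios), and probability tail $2\exp(-c\min(\alpha^2/c_2^4,\alpha/c_2^2))$ after matching $t = c_1 \kappa^{m+m'+n+n'}12^{(\cdots)/2}\alpha/\sqrt\L$.

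The main obstacle I expect is the bookkeeping in the concentration step: correctly identifying the matrix $\mathbf{M}$ so that $\herm{\mG}\mG$ (a sum over $\L$ unit-modulus-weighted translates, which is \emph{not} a simple Toeplitz or circulant matrix because $\mc T_\tau$ and $\mc F_\nu$ don't commute) has the right mean $\mI$ and controllable Frobenius/operator norms uniformly in the derivative orders — and doing this with constants clean enough to produce the stated $12^{(m+m'+n+n')/2}$ and $\kappa$-scaling rather than just "some polynomial in $m,n$". A secondary technical point is verifying the Fejér-kernel moment estimates $\sum_r g_r^2 r^{2a}$ and the relation between $\kappa$ and $M$ with enough precision; these are routine but must be done carefully since the final constant $2.38$ in the separation condition ultimately depends on them. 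Everything else — the differentiation commuting with the random part, the mean computation, and the Hanson--Wright application — is mechanical given the setup already laid out before the lemma.
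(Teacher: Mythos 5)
Your high-level plan is the same as the paper's: write $G^{(m,n)}_{(m',n')}(\vr,\vr_j)$ as a quadratic form $\herm{\vx}\mV\vx$ in the Gaussian vector $\vx$, identify its mean via $\EX{\herm{\mG}\mG}=\mI$ as $\bar G^{(m+m',n+n')}(\vr-\vr_j)$, and apply the Hanson--Wright inequality. All of that is correct, and the mean computation is indeed mechanical. The paper, incidentally, never bounds $\norm[\opnormss]{\mV}$ separately; it substitutes $t=\norm[F]{\mV}\alpha/\L$ into Hanson--Wright and uses the trivial $\norm[\opnormss]{\mV}\leq\norm[F]{\mV}$ to collapse the second term, which already produces exactly the stated tail $2\exp(-c\min(\alpha^2/c_2^4,\alpha/c_2^2))$. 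So your parallel operator-norm bound is unnecessary work.

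The gap is in the Frobenius-norm step, which is the entire technical content of the lemma. You propose to bound $\norm[F]{\mV}\lesssim\sqrt{\L}\,\norm[2]{\vw}\norm[2]{\vv}$ with $\vw=\herm{\mF}\vf^{(m,n)}(\vr)$, $\vv=\herm{\mF}\vg_{(m',n')}(\vr_j)$, and then evaluate $\norm[2]{\vw}$, $\norm[2]{\vv}$ via Parseval-type moments $\sum_r g_r^2 r^{2a}$. This does not deliver the claimed $\sqrt{\L}$. Working out the quadratic form explicitly, $[\mV]_{s,t}=\sum_p\conj{b_p(p-s)}\,a_p(p-t)$ (up to the $\L^2/M^2$ scaling), where $a_p(\ell)=\sum_k \vv_{(k,\ell)}e^{i2\pi kp/\L}$ and likewise for $b_p$; Cauchy--Schwarz in $p$ followed by Parseval gives only $\norm[F]{\mV}\lesssim \L\,\norm[2]{\vw}\norm[2]{\vv}$, i.e.\ a factor $\sqrt{\L}$ worse than what the lemma requires, and the stronger claim $\norm[\opnormss]{\mV}\lesssim\norm[2]{\vw}\norm[2]{\vv}$ does not follow from ``unitary sandwiching of rank-one pieces'' (a sum of $\L$ unitarily-conjugated rank-ones has operator norm up to $\L$ times the rank-one norm; you are implicitly assuming a near-orthogonality that would itself need to be proved). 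This loss is not cosmetic: it propagates through Lemma~\ref{lem:preptaubound} and would degrade the final sparsity guarantee from $S\lesssim \L/\mathrm{polylog}(\L)$ to $S\lesssim \sqrt{\L}/\mathrm{polylog}(\L)$.

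To get the extra $\sqrt{\L}$ one must use the \emph{pointwise} decay of the squared Fejér kernel and its derivatives, $|F^{(m)}(t)|\leq \bar c\,(2\pi\N)^{m}\min(1,(2Mt)^{-4})$, not just its Parseval moments. This is exactly what the paper's Lemma~\ref{lem:boundonVFnorm} does: it shows $|[\mV]_{\tilde\ell,\ell}|\leq (2\pi\N)^{m+m'+n+n'}\,U(\cdot)$ with $U(t)\lesssim\min(1,1/(\L|t|))$ as a function of (essentially) $\ell-\tilde\ell$, so that the double sum $\sum_{\ell,\tilde\ell}|[\mV]_{\tilde\ell,\ell}|^2$ contributes one free index of size $\L$ and one summable series, giving $\norm[F]{\mV}^2\lesssim \L\,(2\pi\N)^{2(m+m'+n+n')}$. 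Your blanket $\ell_2$ estimates throw this structure away. (Separately, your moment estimate $\sum_r g_r^2 r^{2a}\asymp M^{1+2a}\kappa^{2a}$ with $\kappa\asymp M$ appears to be off by $M^{2a}$; the correct scaling is $\sum_r g_r^2 (2\pi r)^{2a}\asymp M(2\pi\N)^{2a}$, i.e.\ $M\kappa^{2a}$ up to constants. This would also need fixing, but it is minor compared to the Frobenius-norm issue.)
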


To this aim, first note that, by the definition of $G_{(m',n')}(\vr, \vr_j)$ in \eqref{eq:defkernelG} we have 
\begin{align}
\label{eq:defGmnrr}
G_{(m',n')}^{(m,n)}(\vr, \vr_j)
=
\frac{\derd^m }{ \derd \x^m} \frac{\derd^n }{ \derd \y^n}
G_{(m',n')}(\vr, \vr_j)
=
\frac{\L^2}{M^2}
\herm{(\vf^{(m,n)}(\vr))} \mF \herm{\mG} \mG \herm{\mF}  \vg_{(m',n')}(\vr_j),
\end{align}
where for $r, q= -\N, ..., \N$ we define $[\vf^{(m,n)}(\vr)]_{(r,q)} \defeq  (-i2\pi r)^m (-i2\pi r)^n   e^{-i2\pi (r\tau + q \nu)}$.  
Lemma~\ref{lem:polybo} is now proven in two steps. 
First, we show that $\EX{\herm{\mG} \mG} = \mI$. 
From this, following calculations similar to \eqref{eq:expGmn}, we obtain
\begin{align}
\EX{
G^{(m,n)}_{(m',n')}(\vr, \vr_j)} 
=   \bar G^{(m + m',n + n')}(\vr - \vr_j). 
\label{eq:expGmnGen}
\end{align}
Second, we express $G^{(m,n)}_{(m',n')}(\vr,\vr_j)$ as a quadratic form in $\mathbf{x} \defeq \transp{[x_{-\N},...,x_{\N}]}$, and utilize the Hanson-Wright inequality stated in the lemma below to show that $G^{(m,n)}_{(m',n')}(\vr,\vr_j)$ does not deviate too much from its expected value $\bar G^{(m + m',n + n')}(\vr - \vr_j)$. 

\begin{lemma}[Hanson-Wright inequality {\cite[Thm.~1.1]{rudelson_hanson-wright_2013}}]
Let $\mathbf{x} \in \mathbb R^L$ be a random vector with independent zero-mean $K$-sub-Gaussian entries (i.e., the entries obey $\sup_{p\geq 1} p^{-1} (\EX{|x_\ell|^p})^{1/p} \leq K$), and let $\mV$ be an $L\times L$ matrix. Then, for all $t\geq 0$, 
\[
\PR{ | \transp{\mathbf{x}} \mV \mathbf{x}  -  \EX{\transp{\mathbf{x}} \mV \mathbf{x}} |  > t } \leq 2 \exp\left( - c \min\left( \frac{t^2}{K^4 \norm[F]{\mV}^2  },  \frac{t}{K^2 \norm[\opnormss]{\mV}  }\right) \right)
\]
where $c$ is a numerical constant. 
\label{thm:hanswright}
\end{lemma}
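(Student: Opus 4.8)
The plan is to establish the bound by the classical route: symmetrize and normalize, split the quadratic form into its diagonal and off-diagonal parts, and control each through a moment-generating-function estimate followed by Chernoff's bound. First I would reduce to $K=1$ by replacing $\mV \mapsto \mV/K^2$ and $t \mapsto t/K^2$, and reduce to symmetric $\mV$ by noting $\transp{\mathbf{x}}\mV\mathbf{x} = \transp{\mathbf{x}}\tfrac12(\mV+\transp{\mV})\mathbf{x}$ while $\norm[F]{\tfrac12(\mV+\transp{\mV})} \le \norm[F]{\mV}$ and $\norm[\opnormss]{\tfrac12(\mV+\transp{\mV})} \le \norm[\opnormss]{\mV}$. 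Then I would write
\[
\transp{\mathbf{x}}\mV\mathbf{x} - \EX{\transp{\mathbf{x}}\mV\mathbf{x}} = \underbrace{\textstyle\sum_i [\mV]_{ii}\bigl(x_i^2 - \EX{x_i^2}\bigr)}_{=:D} \;+\; \underbrace{\textstyle\sum_{i \neq j} [\mV]_{ij}\, x_i x_j}_{=:S},
\]
and it suffices to bound $\PR{|D| > t/2}$ and $\PR{|S| > t/2}$ each by a constant multiple of the target expression and then union-bound.

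For the diagonal term $D$: the coordinates $x_i$ being $1$-sub-Gaussian, the centered squares $x_i^2 - \EX{x_i^2}$ are independent, mean zero, and sub-exponential with an absolute-constant $\psi_1$-norm. The Bernstein inequality for weighted sums of independent sub-exponential variables then yields $\PR{|D| > t/2} \le 2\exp\bigl(-c\min(t^2/\sum_i [\mV]_{ii}^2,\; t/\max_i |[\mV]_{ii}|)\bigr)$, which is dominated by the claimed bound since $\sum_i [\mV]_{ii}^2 \le \norm[F]{\mV}^2$ and $\max_i |[\mV]_{ii}| \le \norm[\opnormss]{\mV}$.

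The off-diagonal term $S$ is the heart of the argument; here I would bound $\EX{e^{\lambda S}}$ for $\lambda$ in a suitable range. \textbf{Step (a): decoupling.} A standard decoupling inequality for off-diagonal chaos gives $\EX{e^{\lambda S}} \le \EX{e^{c_0 \lambda S'}}$ for an absolute constant $c_0$, with $S' = \sum_{i,j} [\mV]_{ij}\, x_i x_j'$ and $\mathbf{x}'$ an independent copy of $\mathbf{x}$. \textbf{Step (b): conditioning.} Given $\mathbf{x}$, $S'$ is a linear form $\sum_j (\sum_i [\mV]_{ij} x_i)\, x_j'$ in the independent sub-Gaussian coordinates $x_j'$, so $\Ex[\mathbf{x}']{e^{c_0\lambda S'}} \le e^{C\lambda^2 \norm[2]{\mV\mathbf{x}}^2}$. \textbf{Step (c): Gaussian linearization.} Writing $e^{C\lambda^2\norm[2]{\mV\mathbf{x}}^2} = \Ex[\vg]{e^{\sqrt{2C}\lambda \innerprod{\mV\vg}{\mathbf{x}}}}$ for an independent standard Gaussian $\vg$ (using symmetry of $\mV$), and then taking $\EX{\cdot}$ over the sub-Gaussian $\mathbf{x}$, gives $\EX{e^{\lambda S}} \le \Ex[\vg]{e^{C'\lambda^2 \norm[2]{\mV\vg}^2}}$. \textbf{Step (d): Gaussian chaos MGF.} Now $\norm[2]{\mV\vg}^2 = \innerprod{\vg}{\transp{\mV}\mV\vg}$ is a Gaussian chaos with PSD matrix $\transp{\mV}\mV$, and its MGF is the explicit $\det(\mI - 2C'\lambda^2\transp{\mV}\mV)^{-1/2} = \prod_i (1 - 2C'\lambda^2\sigma_i^2)^{-1/2}$ over the singular values $\sigma_i$ of $\mV$; for $\lambda \le c/\norm[\opnormss]{\mV}$ the elementary estimate $-\log(1-u) \le 2u$ on $[0,1/2]$ gives $\EX{e^{\lambda S}} \le e^{C''\lambda^2\norm[F]{\mV}^2}$, using $\sum_i \sigma_i^2 = \norm[F]{\mV}^2$. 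Finally, $\PR{S > t/2} \le \exp(-\lambda t/2 + C''\lambda^2\norm[F]{\mV}^2)$ optimized over $0 < \lambda \le c/\norm[\opnormss]{\mV}$ yields $\exp\bigl(-c'\min(t^2/\norm[F]{\mV}^2,\ t/\norm[\opnormss]{\mV})\bigr)$; applying the same chain to $-\mV$ (for $\PR{S < -t/2}$) and to $-D$, then combining all bounds and relabeling constants, gives the lemma.

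I expect Steps (a)–(c) — the decoupling followed by the two-stage Gaussian comparison — to be the main obstacle: one must invoke decoupling in exactly the form that controls the off-diagonal chaos alone (not the full bilinear form), and track absolute constants carefully through the conditioning and linearization, so that the clean bound $\EX{e^{\lambda S}} \le e^{C''\lambda^2\norm[F]{\mV}^2}$ valid for all $\lambda \le c/\norm[\opnormss]{\mV}$ emerges; once that is in hand, the remaining Chernoff optimization and the diagonal estimate are routine.
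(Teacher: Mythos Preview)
Your outline is essentially the Rudelson--Vershynin proof and is correct as a sketch: the diagonal/off-diagonal split, Bernstein for the diagonal, decoupling plus conditioning plus the Gaussian-linearization trick $e^{C\lambda^2\norm[2]{\mV\mathbf{x}}^2}=\Ex[\vg]{e^{\sqrt{2C}\lambda\innerprod{\vg}{\mV\mathbf{x}}}}$ to pass to the explicit Gaussian MGF, and the Chernoff optimization, all go through as you describe.

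There is nothing to compare against here, though: the paper does not give its own proof of this lemma. It is quoted verbatim from \cite[Thm.~1.1]{rudelson_hanson-wright_2013} and used as a black box (in the proof of Lemma~\ref{lem:polybo}, to control the deviation of the quadratic form $\herm{\mathbf{x}}\mV^{(m,n)}_{(m',n')}\mathbf{x}$ from its mean). So your proposal is not an alternative to the paper's argument; it \emph{is} the argument of the cited reference.
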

We first establish $\EX{\herm{\mG} \mG} = \mI$. 
By definition of the Gabor matrix in \eqref{eq:defgabormtx}, the entry in the $(k,\ell)$-th row and $(k',\ell')$-th column of $\herm{\mG} \mG$ is given by 
\[
[\herm{\mG} \mG]_{(k,\ell), (k',\ell')}  =  \sum_{p=-\N}^\N \conj{x}_{p-\ell} x_{p-\ell'} e^{-i2\pi \frac{kp}{\L}}  e^{i2\pi \frac{k'p}{\L}}.
\]
Noting that $\EX{x_\ell} = 0$, we conclude that $\EX{[\herm{\mG} \mG]_{(k,\ell), (k',\ell')}} = 0$ for $\ell \neq \ell'$. For $\ell = \ell'$, using the fact that $\EX{\conj{x}_{p-\ell} x_{p-\ell}} = 1/\L$, we arrive at 
\[
\EX{ [\herm{\mG} \mG]_{(k,\ell), (k',\ell')} } = \frac{1}{\L} \sum_{p=-\N}^\N    e^{i2\pi \frac{(k' - k )p}{\L}}.
\]
The latter is equal to $1$ for $k = k'$ and $0$ otherwise. This concludes the proof of $\EX{\herm{\mG} \mG} = \mI$.

We now turn our attention to the concentration part of the argument, where we express $G^{(m,n)}_{(m',n')}(\vr,\vr_j)$ as a quadratic form in $\mathbf{x}$ and apply Lemma~\ref{thm:hanswright}. To this end, first note that 
\begin{align}
[L \mG \herm{\mF} \vg_{(m',n')}(\vr_j) ]_p 
&=
\frac{1}{\L}
\sum_{k,\ell=-\N}^N 
\left(
\sum_{r,q=-\N}^\N  
g_r g_q e^{-i2\pi(\x_j r + \y_j q)}  (i2\pi r)^{m'} (i2\pi q)^{n'}
e^{i2\pi \frac{q k + r\ell }{\L}}
\right)
x_{p-\ell} e^{i2\pi \frac{kp}{\L}} 
%[  \vg_{(m',0)}(\vr_j) ]_{(k,q)}  
%
\nonumber \\
&=
 \sum_{\ell=-\N}^\N  x_\ell   \sum_{r=-\N}^\N e^{i2\pi \frac{r(p-\ell)}{\L}} 
 g_r g_p e^{-i2\pi(\x_j r - \y_j p)}  (i2\pi r)^{m'} (-i2\pi p)^{n'}
\label{eq:GFHgs},
\end{align}
where we used that $\frac{1}{\L} \sum_{k=-\N}^\N e^{i2\pi \frac{k (p+q)}{\L}}$ is equal to $1$ if $p= -q$ and equal to $0$ otherwise, together with the fact that $x_\ell$ is $\L$-periodic.   
We next write \eqref{eq:GFHgs} in matrix-vector form. For the ease of presentation, we define the matrix $\mA( \vg_{(m',n')}(\vr_j)) \in \complexset^{\L\times \L}$ (note that $\mA$ is a function of $\vg_{(m',n')}(\vr_j)$) by
\[
[\mA(  \vg_{(m',n')}(\vr_j) )]_{p,\ell} \defeq  \sum_{k=-\N}^{\N}   e^{i2\pi\frac{k(p-\ell)}{L}} 
g_k g_p e^{-i2\pi(\x_j k - \y_j p)}  (i2\pi k)^{m'} (-i2\pi p)^{n'}.
\]
Utilizing this shorthand, writing \eqref{eq:GFHgs} in matrix-vector form yields 
\[
L \mG \herm{\mF} \vg_{(m',n')}(\vr_j)  = 
\mA(  \vg_{(m',n')}(\vr_j) ) \vx, 
\]
where $\vx = \transp{[x_{-N}, ..., x_{\N}]}$.

Analogously as in \eqref{eq:GFHgs}, we have 
\begin{align}
[L \mG \herm{\mF} \vf^{(m,n)}(\vr)]_{p} = 
 \sum_{\ell=-\N}^{\N}x_{\ell}
\sum_{k=-\N}^{\N}    
e^{i2\pi\frac{k(p-\ell)}{L}}  e^{-i2\pi(k \x - p \y)}
(-i2\pi k)^{m}
(i2\pi p)^{n}
. 
\label{eq:GFhfrm}
\end{align}
Defining the matrix $\herm{\mA}(\vf^{(m,n)}(\vr))\in \complexset^{L\times L}$ by 
\[
[\herm{\mA}(\vf^{(m,n)}(\vr))  ]_{\tilde \ell, p} = \sum_{\tilde k=-\N}^{\N}    e^{-i2\pi\frac{\tilde k ( p - \tilde \ell)}{L}} e^{i2\pi(\tilde k \x - p \y)}
(i2\pi \tilde k)^{m}
(-i2\pi p)^{n}
\]
allows us to express \eqref{eq:GFhfrm} in matrix-vector form according to 
\[
\L\,\herm{(\vf^{(m,n)}(\vr))} \mF \herm{\mG} = \herm{\mathbf{x}} \herm{\mA}(\vf^{(m,n)}(\vr)).
\]
This allows us to represent $G^{(m,n)}_{(m',n')}(\vr, \vr_j)$ in the desired quadratic form
\begin{align}
G^{(m,n)}_{(m',n')}(\vr, \vr_j) 
= \frac{\L^2}{M^2} \herm{(\vf^{(m,n)}(\vr))} \mF \herm{\mG} \mG \herm{\mF} \vg_{(m',n')}(\vr_j)
=    \herm{\mathbf{x}}  
\underbrace{\frac{1}{M^2} \herm{\mA}(\vf^{(m,n)}(\vr))  \mA(\vg_{(m',n')}(\vr_j)) }_{\mV^{(m,n)}_{(m',n')}(\vr,\vr_j)  \defeq} \mathbf{x}, 
\label{eq:gmnqform}
\end{align}
where
\begin{align}
&[\mV^{(m,n)}_{(m',n')}(\vr,\vr_j)]_{\tilde \ell, \ell}  \label{eq:defV}\\
&\hspace{0.7cm} =\frac{1}{M^2}  \sum_{p,k,\tilde k=-\N}^{\N}    e^{i2\pi\frac{(p - \ell )k}{L}}  e^{-i2\pi\frac{(p - \tilde \ell)\tilde k}{L}} e^{i2\pi(\tilde k \x - p (\y - \y_j) - k \x_j  )}  g_p g_k (i2\pi \tilde k)^m (i2\pi k)^{m'} (-i2\pi p)^{n+n'} .
\nonumber 
\end{align}
%\Comment{
%
%\begin{align*}
%&[\mV^{(m,n)}_{(m',n')}(\vr,\vr_j)]_{\tilde \ell, \ell}  \\
%&\hspace{0.7cm} =\frac{1}{M^2}  \sum_{p,k,\tilde k=-\N}^{\N}    e^{-i2\pi\frac{(p + \ell )k}{L}}  e^{i2\pi\frac{(p + \tilde \ell)\tilde k}{L}} 
%\underbrace{e^{i2\pi(\tilde k \x + p) } }_{[\vf]_{\tilde k, p} }
%\underbrace{e^{-i2\pi (p \y_j +k \x_j) }  g_p g_k }_{[\vg]_{p,k}}
%\end{align*}
%
%}
In order to evaluate the RHS of the Hanson-Wright inequality, we will need the following upper bound on $\norm[F]{\mV^{(m,n)}_{(m',n')}(\vr,\vr_j)}$. We defer the proof to Section \ref{seclem:boundonVFnorm}.
\begin{lemma}
For all $\vr$ and $\vr_j$, and for all non-negative $m, m',n,n'$ with $m + m'+n+n' \leq 4$, 
\begin{equation}
\norm[F]{\mV^{(m,n)}_{(m',n')}(\vr,\vr_j)} \leq c_1 (2\pi \N)^{m + m' + n +n'} \sqrt{L}.
\label{eq:boundonVFnorm}
\end{equation}
\label{lem:boundonVFnorm}
\end{lemma}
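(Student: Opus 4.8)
The plan is to bound the Frobenius norm of $\mV^{(m,n)}_{(m',n')}(\vr,\vr_j) = \frac{1}{M^2}\herm{\mA}(\vf^{(m,n)}(\vr))\,\mA(\vg_{(m',n')}(\vr_j))$ by factoring it through the two rectangular-looking matrices $\mA(\cdot)$. First I would use submultiplicativity: $\norm[F]{\mV^{(m,n)}_{(m',n')}} \le \frac{1}{M^2}\norm[F]{\herm{\mA}(\vf^{(m,n)}(\vr))}\,\norm[\opnormss]{\mA(\vg_{(m',n')}(\vr_j))}$, or the symmetric version with the roles of spectral and Frobenius norm swapped — I would pick whichever of the two factors is more convenient to control in Frobenius norm. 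Since both $\mA$-matrices have a very explicit entrywise form as finite trigonometric sums weighted by the Fej\'er coefficients $g_k$ and monomials $(i2\pi k)^{m'}$, $(i2\pi p)^{n'}$ (resp. $(i2\pi\tilde k)^m$, $(-i2\pi p)^n$), the bound will come down to estimating these explicitly.

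The key computation is the Frobenius norm of, say, $\mA(\vg_{(m',n')}(\vr_j))$. From its definition,
\[
[\mA(\vg_{(m',n')}(\vr_j))]_{p,\ell} = g_p\,(-i2\pi p)^{n'}\,e^{i2\pi p\y_j}\sum_{k=-\N}^{\N} g_k (i2\pi k)^{m'} e^{-i2\pi k\x_j} e^{i2\pi k(p-\ell)/\L}.
\]
Squaring and summing over $\ell$ first, the inner sum over $\ell$ of the product of two such $k$-sums collapses by orthogonality of $\{e^{i2\pi k\ell/\L}\}$ to $\L\sum_{k} g_k^2 (2\pi k)^{2m'}$; then summing over $p$ picks up $\sum_p g_p^2 (2\pi p)^{2n'}$. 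Hence $\norm[F]{\mA(\vg_{(m',n')}(\vr_j))}^2 = \L\big(\sum_k g_k^2(2\pi k)^{2m'}\big)\big(\sum_p g_p^2(2\pi p)^{2n'}\big)$, and each factor is at most $(2\pi\N)^{2m'}\sum_k g_k^2$ (resp. with $n'$). Since $\sum_k g_k^2 = M\sum_k g_k^2/M$ and $\frac{1}{M}\sum_k g_k e^{i2\pi tk} = \FK(t)$ with $\FK(0)=1$, one has by Parseval $\frac{1}{M^2}\sum_k g_k^2 = \int_0^1 \FK(t)^2\,dt \le 1$, so $\sum_k g_k^2 \le M^2$. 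For the spectral norm of $\herm{\mA}(\vf^{(m,n)}(\vr))$ I would note it is a composition of a diagonal modulation/monomial factor (bounded in operator norm by $(2\pi\N)^{m+n}$ times a constant from the Fej\'er weights) with a DFT-type matrix that is an isometry up to the factor $\sqrt{\L}$; a direct bound gives $\norm[\opnormss]{\herm{\mA}(\vf^{(m,n)}(\vr))} \le c\,(2\pi\N)^{m+n}\,M$. Multiplying and dividing by $M^2$ then yields $\norm[F]{\mV^{(m,n)}_{(m',n')}} \le c_1(2\pi\N)^{m+m'+n+n'}\sqrt{\L}$, as claimed.

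The main obstacle I anticipate is getting the $M$-versus-$\N$ and the $g_k$-normalization bookkeeping exactly right so that the powers of $M$ cancel cleanly against the $1/M^2$ prefactor and the Fej\'er-coefficient sums are correctly bounded by the $L^2$ norm of $\FK$ and its derivatives (rather than, say, the $L^\infty$ norm, which would be lossy). In particular one must be careful that the monomial weights $(2\pi k)^{m'}$ inside the sums are bounded uniformly by $(2\pi\N)^{m'}$ only because $|k|\le\N$, and that the operator-norm factor really does only cost $(2\pi\N)^{m+n}$ and not more — this uses that the remaining DFT-like part is exactly an isometry (up to scaling) and not merely bounded. The constraint $m+m'+n+n'\le 4$ is what keeps all constants absolute. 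Everything else is a routine orthogonality-and-Parseval computation.
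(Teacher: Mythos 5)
Your structural approach is genuinely different from the paper's, and it is the cleaner route in principle. You factor $\mV^{(m,n)}_{(m',n')} = \frac{1}{M^2}\herm{\mA}(\vf^{(m,n)})\mA(\vg_{(m',n')})$ and use $\norm[F]{AB}\le\norm[\opnormss]{A}\norm[F]{B}$, then compute $\norm[F]{\mA(\vg_{(m',n')})}$ exactly by orthogonality and bound $\norm[\opnormss]{\herm{\mA}(\vf^{(m,n)})}$ by factoring through DFT-type isometries and diagonal monomial modulations. The paper instead bounds $|[\mV^{(m,n)}_{(m',n')}]_{\tilde\ell,\ell}|$ entrywise, using pointwise derivative bounds on the squared Fej\'er kernel (cf.~\eqref{eq:boundderrFej}) and an auxiliary decaying function $U$, and then sums the squared entrywise bound; that route is longer but makes no use of submultiplicativity. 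Your orthogonality computation
\[
\norm[F]{\mA(\vg_{(m',n')})}^2 = \L\Bigl(\sum_k g_k^2(2\pi k)^{2m'}\Bigr)\Bigl(\sum_p g_p^2(2\pi p)^{2n'}\Bigr)
\]
is correct, as is the operator-norm bound $\norm[\opnormss]{\herm{\mA}(\vf^{(m,n)})}\le \L(2\pi\N)^{m+n}\approx 4M(2\pi\N)^{m+n}$ (note there are no Fej\'er weights in $\herm{\mA}(\vf^{(m,n)})$, so the parenthetical remark about them should be dropped).

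However, the bound $\sum_k g_k^2\le M^2$ is a genuine error and breaks the proof. It is obtained by bounding $\int_0^1 \FK(t)^2\,dt$ by $\sup_t\FK(t)^2 = 1$, which is lossy by a factor of order $M$. Tracking it through, your stated intermediate bounds give
\[
\norm[F]{\mV^{(m,n)}_{(m',n')}}
\le \frac{1}{M^2}\cdot cM(2\pi\N)^{m+n}\cdot\sqrt{\L}(2\pi\N)^{m'+n'}\cdot M^2
= cM\sqrt{\L}(2\pi\N)^{m+m'+n+n'},
\]
which is off from the lemma's claim by a factor of $M\sim\N/2$. The correct estimate is $\sum_k g_k^2\le M$. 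One way to see it: since $\FK(t)\in[0,1]$, $\int_0^1\FK(t)^2\,dt \le \int_0^1\FK(t)\,dt = g_0/M \le 1/M$, whence $\sum_k g_k^2 = M^2\int_0^1\FK^2\le M$. Equivalently, the coefficients $g_k$ of the squared Fej\'er kernel are nonnegative with $g_k\le 1$ and $\sum_k g_k = M\,\FK(0) = M$, so $\sum_k g_k^2 \le (\max_k g_k)\sum_k g_k \le M$. Plugging $\sum_k g_k^2\le M$ into your chain of inequalities yields $\norm[F]{\mV^{(m,n)}_{(m',n')}}\le c_1(2\pi\N)^{m+m'+n+n'}\sqrt{\L}$, so the approach does work once this one estimate is tightened.
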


We are now ready to establish Lemma \ref{lem:polybo} by applying the Hanson-Wright inequality. 
To this end note that using $\kappa=\sqrt{\FK''(0)} = \sqrt{\frac{\pi^2}{3}(\N^2+4\N)}$ and utilizing \cite[Eq.~2.23]{candes_towards_2014} we have% tang: below 4.2; verified with maple; Carlos: Eq. 2.23
%we have (recall that ) that 
%\Comment{
%Moreover, $\kappa =\sqrt{\pi^2/3(\N^2+4 \N)}$ (from $Thus, by $\N \geq 512$, we have $0.5491\leq \frac{\N}{\kappa}\leq 0.5514$. 
%}
\[
\frac{(2\pi \N)^{m}}{\kappa^{m}} =  \frac{(2\pi \N)^{m}}{( \frac{\pi^2}{3} (\N^2 + 4 \N))^{(m)/2}}  \leq 12^{\frac{m}{2}} . 
\]
Setting $\mV \defeq \mV^{(m,n)}_{(m',n')}(\vr, \vr_j)$ for ease of presentation, we have 
\begin{align}
&\hspace{-1cm}
\PR{ \frac{1}{\kappa^{m+m'+n+n'}}  | G^{(m,n)}_{(m',n')}(\vr, \vr_j) - \bar G^{(m + m',n + n')}(\vr - \vr_j)  |  > c_1  12^{\frac{m + m'+n+ n'}{2}}   \frac{\alpha}{\sqrt{\L}}  } \nonumber \\
&\leq
\PR{  | G^{(m,n)}_{(m',n')}(\vr, \vr_j) - \bar G^{(m + m',n + n')}(\vr - \vr_j)  |  > c_1  (2\pi \N)^{m + m'+n +n'}   \frac{\alpha}{\sqrt{\L}}  } \nonumber \\
&\leq \PR{ | \transp{\mathbf{x}} \mV  \mathbf{x}  -  \EX{\transp{\mathbf{x}} \mV  \mathbf{x}} |  >  \norm[F]{ \mV} \frac{\alpha}{\L}  } \label{eq:uselemboundvfnor} \\
&\leq 2 \exp\left( - c \min\left( \frac{\norm[F]{\mV }^2 \alpha^2}{\L^2 K^4 \norm[F]{\mV  }^2  },  \frac{\norm[F]{\mV} \alpha}{ \L K^2 \norm[\opnormss]{\mV}  }\right) \right) \label{eq:usehansonwr} \\
&\leq 2 \exp\left( - c \min\left( \frac{ \alpha^2}{c_2^4   },  \frac{ \alpha}{ c_2^2   }\right) \right).  \label{eq:simphanswrer}
\end{align}
Here, \eqref{eq:uselemboundvfnor} follows from~\eqref{eq:gmnqform} and~\eqref{eq:boundonVFnorm}, together with the fact that  
%  $\norm[F]{\mV} \leq c_1 (2\pi \N)^{n+m} \sqrt{L}$ (cf.~Lemma \ref{lem:boundonVFnorm}), and from $G^{(m,n)}(\vr) =  \herm{\mathbf{x}} \mV \mathbf{x}$ and
$\EX{\herm{\mathbf{x}} \mV \mathbf{x}} =\EX{G^{(m,n)}_{(m',n')}(\vr, \vr_j)} = \bar G^{(m+m',n+n')}(\vr - \vr_j)$ (cf.~\eqref{eq:expGmnGen}). To obtain \eqref{eq:usehansonwr}, we used Lemma \ref{thm:hanswright} with $t=\norm[F]{\mV} \frac{\alpha}{\L}$. 
Finally, \eqref{eq:simphanswrer} holds because the sub-Gaussian parameter $K$ of the random variable $[\mathbf{x}]_\ell \sim \mathcal N(0,1/\L)$ is given by $K = c_2/\sqrt{\L}$ (e.g., \cite[Ex.~5.8]{vershynin_introduction_2012}) and $\norm[F]{\mV }/\norm[\opnormss]{\mV }  \geq 1$.

%%%%%%%%%%%%%%%%%%%%%%%%%%%%%%%%%%%%%%%%%%%%

\subsubsection{Proof of Lemma \ref{lem:boundonVFnorm}:}
\label{seclem:boundonVFnorm}

We start by upper-bounding $|[\mV^{(m,n)}_{(m',n')}(\vr,\vr_j)]_{\tilde \ell, \ell}|$. 
By definition of $\FK(t)$ (cf.~\eqref{eq:def:FejK})
\begin{align*}
&[\mV^{(m,n)}_{(m',n')}(\vr,\vr_j)]_{\tilde \ell, \ell}   \nonumber \\
&= 
  \sum_{p=-\N}^{\N}
   \left(\frac{1}{M} \sum_{k=-\N}^{\N} g_k (i2\pi k)^{m'} e^{i2\pi\left( \frac{p - \ell }{L}  - \tau_j \right)k} \right)
  \left(
\frac{1}{M}  
  \sum_{\tilde k=-\N}^{\N} (i2\pi \tilde k)^{m}
     e^{-i2\pi \left( \frac{p - \tilde \ell}{L}  - \x \right)\tilde k }  \right) \cdot \nonumber \\
   &\hspace{9.5cm} \cdot g_p  (-i2\pi p)^{n+n'}   e^{-i2\pi  (\y - \y_j)p} \nonumber \\
&= 
\sum_{p=-\N}^{\N}
F^{(m')}
\left( 
\frac{p-\ell}{L} - \tau_j
\right)
  \left(
\frac{1}{M}  
  \sum_{\tilde k=-\N}^{\N} (i2\pi \tilde k)^{m}
     e^{-i2\pi \left( \frac{p - \tilde \ell}{L}  - \x \right)\tilde k }  \right) 
     g_p  (-i2\pi p)^{n+n'}   e^{-i2\pi  (\y - \y_j)p},\nonumber \\
\end{align*}
where $F^{(m)}(t) \defeq \frac{\derd^m }{ \derd \x^m}  F(t)$. 
Since $|g_p| \leq 1$ holds for all $p$, we obtain % and $L$ periodicity of .. , 
\begin{align}
|[\mV^{(m,n)}_{(m',n')}(\vr,\vr_j)]_{\tilde \ell, \ell}|
&\leq (2\pi \N)^{n+n'} \sum_{p=-\N}^{\N}    \left| \FK^{(m')} \left( \frac{p-\ell}{L}   - \tau_j   \right) \right|   
\left| \frac{1}{M} \sum_{\tilde k=-\N}^{\N}  (-i2\pi \tilde k)^m    e^{i2\pi\left(\frac{p - \tilde \ell}{L} - \x \right) \tilde k  }  \right| \nonumber \\
&= (2\pi \N)^{n+n'} \sum_{p=-\N}^{\N}    \left| \FK^{(m')} \left( \frac{p}{L}    +s/\L - \tau_j \right) \right|   
\left| \frac{1}{M} \sum_{\tilde k=-\N}^{\N}  (-i2\pi \tilde k)^m    e^{i2\pi\left(\frac{p + \ell + s - \tilde \ell}{L} - \x \right) \tilde k  }  \right| ,  \label{mahlabel}
\end{align}
where we choose $s$ as the integer minimizing $|s/\L - \tau_j|$ and used the fact that the absolute values in the sum above are $L$-periodic in $p$ (recall that $\FK(t)$ is $1$-periodic).  

We proceed by upper-bounding $|F^{(m)}(t)|$. 
To this aim, we use Bernstein's polynomial inequality, (cf.~Proposition \ref{prop:bernstein} below), to conclude that 
\begin{align}
\sup_{t} \left| F^{(m)}(t)  \right|  \leq (2 \pi \N)^{m} \sup_{t}  |F(t)| = (2 \pi \N)^{m}. 
\label{eq:Fmtb1}
\end{align}
Also note that, from \cite[Lem.~2.6]{candes_towards_2014} we know that for $|t| \in [1/(2N), 1/2]$, there exists a numerical constant $\tilde c$ such that 
\begin{align}
|F^{(m)}(t)| 
%\leq c (2\pi M)^m \frac{1}{(2Mt)^2}.
\leq \tilde c (2\pi \N)^m \frac{1}{(2Mt)^4}.
\label{eq:Fmtb2}
\end{align}
Combining \eqref{eq:Fmtb1} and \eqref{eq:Fmtb2} we arrive at 
\begin{align}
|F^{(m)}(t)| 
\leq 
H^{(m)}(t)
\defeq
\bar c (2\pi \N)^{m} 
 \min \left( 1, \frac{1}{(2M t)^4 } \right).
%\begin{cases}
%1, & |t| \leq \frac{1}{2\N} \\
%\frac{\tilde c}{(2Mt)^4}, & t \in [ \frac{1}{2N}, \frac{1}{2}]  
%\end{cases}
\label{eq:boundderrFej}
\end{align}
Utilizing the latter inequality we have 
\begin{align}
\left| \FK^{(m)} \left( \frac{p}{L} + s/\L - \tau_j \right) \right|
&\leq
H^{(m)}\left( \frac{p}{L} + s/\L - \tau_j \right)  \nonumber \\
&\leq 
c' H^{(m)}\left( \frac{p}{L}\right) \label{eq:usesdef} \\
&\leq c' \bar c (2\pi \N)^{m}  \min \left( 1, \frac{1}{(2Mp /L  )^4 } \right) \label{eq:uaseeq:boundderrFej}\\
&\leq  c' \bar c (2\pi \N)^{m}   \min \left( 1, \frac{16}{p^4 } \right) \leq 16 c' \bar c (2\pi \N)^{m} \min \left( 1, \frac{1}{p^4 } \right).
\label{eq:stdnqwml}
\end{align}
Here, \eqref{eq:usesdef} holds because when $s$ is the integer minimizing $|s/\L - \tau_j|$, we have that $|s/\L - \tau_j|\leq 1/(2L)$. Therefore, for all $p$ with $|p|>0$, $2M ( p/\L    - s/L -  \tau_j )$ is within a constant factor of $2M p/L$, which proves that \eqref{eq:usesdef} holds for a numerical constant $c'$. 
To obtain \eqref{eq:uaseeq:boundderrFej} we used \eqref{eq:boundderrFej}. Finally, \eqref{eq:stdnqwml} follows from $\frac{\L}{2M} = \frac{2\N+1}{\N+2} < 2$. 

Plugging \eqref{eq:stdnqwml} into \eqref{mahlabel} we obtain 
\begin{align}
&|[\mV^{(m,n)}_{(m',n')}(\vr,\vr_j)]_{\tilde \ell, \ell}|  \nonumber \\
&\hspace{0.7cm}\leq (2\pi \N)^{m+m'+n+n'}    
\underbrace{ \hat c (2\pi \N)^{-m}   \!\! \sum_{p=-\N}^{\N} \min\left(1, \frac{1}{p^4} \right)    
\left| \frac{1}{M} \sum_{\tilde k=-\N}^{\N}     (-i2\pi \tilde k)^m e^{i2\pi\left(\frac{ p + s+ \ell - \tilde \ell}{L} - \x \right) \tilde k  } \right|}_{ U\left( \x - \frac{s + \ell - \tilde \ell}{L} \right)  \defeq}   \label{mah2},
%
%&\leq (2\pi \N)^{m+n}   U\left(\x -  \frac{\ell - \tilde \ell}{L}\right)  \nonumber
\end{align}
where $\hat c = 16 c' \bar c$. 
We show in Appendix \ref{sec:boundU} that $U(t)$ is $1$-periodic and satisfies $U(t) \leq c \min(1, \frac{1}{L |t|})$ for $|t|\leq 1/2$. Using this bound together with \eqref{mah2} we conclude that 
\begin{align}
\norm[F]{\mV^{(m,n)}_{(m',n')}(\vr,\vr_j)}^2 
=& \sum_{\ell, \tilde \ell=-\N}^{\N} \left|[\mV^{(m,n)}_{(m',n')}(\vr,\vr_j)]_{\tilde \ell, \ell} \right|^2\nonumber\\ %\nonumber \\
% \left| \omega^{\tilde \ell \y}  g(\tilde \ell) \frac{\N/2+1}{2\N+1}   \bar K\left( \frac{\ell- \tilde \ell}{L} \right)\right |^2 \\
\le&  (2\pi \N)^{2(m +m'+ n + n')}  \sum_{\ell, \tilde \ell = -\N}^\N   U^2\!\left( \x  - \frac{s + \ell - \tilde \ell }{L} \right)\nonumber\\%. \label{eq:ubl}
\le& (2\pi \N)^{2(m +m'+ n + n')} \sum_{\tilde{\ell}=-N}^N\sum_{\ell=-\N}^\N \left(  c \min\left(1, \frac{1}{L |\ell/L|}\right)  \right)^2\nonumber\\
\le&c^2(2\pi \N)^{2(m +m'+ n + n')}\sum_{\tilde{\ell}=-N}^N \left( 1 + 2 \sum_{\ell\geq 1}   \frac{1}{\ell^2}  \right) \nonumber\\
=&c^2L(2\pi \N)^{2(m +m'+ n + n')}\left( 1 + \frac{\pi^2}{3} \right) \nonumber.
% \nonumber\\
%=& \underbrace{c^2 \left( 1 + \frac{\pi^2}{3}  \right)}_{c_2}.
\end{align}
The proof is now complete by setting $c_1=c\sqrt{1+\pi^2/3}\sqrt{L}$.

%\begin{align}
%|[\mV^{(m,n)}(\vr)]_{\tilde \ell, \ell}| 
%%
%&\leq (2\pi \N)^{m+n}   U\left(\x -  \frac{\ell - \tilde \ell}{L}\right)  \nonumber
%\end{align}
%
%which yields 
%\begin{align}
%\norm[F]{\mV^{(m,n)}(\vr)}^2 
%= \sum_{\ell, \tilde \ell=-\N}^{\N} \left|[\mV^{(m,n)}]_{\tilde \ell, \ell} \right|^2 %\nonumber \\
%% \left| \omega^{\tilde \ell \y}  g(\tilde \ell) \frac{\N/2+1}{2\N+1}   \bar K\left( \frac{\ell- \tilde \ell}{L} \right)\right |^2 \\
%\leq  (2\pi \N)^{2(n+m)}  \sum_{\ell, \tilde \ell = -\N}^\N   U^2\!\left( \x  - \frac{\ell - \tilde \ell }{L} \right). \label{eq:ubl}
%\end{align}
%Using that $U(t)$ is $1$-periodic, and upper-bounded by $c\min(1, \frac{1}{L |t|})$ we obtain 
%\begin{align}
%\sum_{\ell=-\N}^\N U^2\left( \x -  \frac{\ell - \tilde \ell }{L}  \right) 
%\leq \sum_{\ell=-\N}^\N \left(  c \min\left(1, \frac{1}{L |\ell/L|}\right)  \right)^2
%\leq c^2 \left( 1 + 2 \sum_{\ell\geq 1}   \frac{1}{\ell^2}  \right) 
%= \underbrace{c^2 \left( 1 + \frac{\pi^2}{3}  \right)}_{c_2}.
%\label{eq:sumUsq}
%\end{align}
%Substituting \eqref{eq:sumUsq} into \eqref{eq:ubl} yields 
%\begin{align}
%\norm[F]{\mV^{(m,n)}(\vr)}^2 \leq (2\pi \N)^{2(n+m)} \sum_{\tilde \ell=\N}^{\N} c_2 
%= 
%(2\pi \N)^{2(n+m)}  c_2 L
%\end{align}
%where $c_1 = \sqrt{c_2}$. 
%This concludes the proof. 

\subsection{Step 2: Choice of the coefficients $\alpha_k, \beta_{1k}, \beta_{2k}$}

We next show that, with high probability, it is possible to select the coefficients $\alpha_k, \beta_{1k}, \beta_{2k}$ such that $Q(\vr)$ satisfies \eqref{eq:interpcondQ}. 
To this end, we first review the result in \cite{candes_towards_2014} that ensures that there exists a set of coefficients $\bar \alpha_k, \bar \beta_{1k}, \bar \beta_{2k}$ such that  \eqref{eq:condQinterpolv} is satisfied. Specifically, writing \eqref{eq:condQinterpolv} in matrix form yields 
\begin{align}
\underbrace{
\begin{bmatrix}
\bar \mD^{(0,0)} & \frac{1}{\kappa}  \bar \mD^{(1,0)} & \frac{1}{\kappa} \bar \mD^{(0,1)} \\
-\frac{1}{\kappa} \bar \mD^{(1,0)} & -\frac{1}{\kappa^2} \bar \mD^{(2,0)} & -\frac{1}{\kappa^2} \bar \mD^{(1,1)} \\
-\frac{1}{\kappa} \bar \mD^{(0,1)} & -\frac{1}{\kappa^2} \bar \mD^{(1,1)} & -\frac{1}{\kappa^2}\bar \mD^{(0,2)} 
\end{bmatrix}
}_{\bar \mD }
\begin{bmatrix}
\bar \val \\
\kappa \bar \vbe_1 \\
\kappa \bar \vbe_2
\end{bmatrix}
&=
\begin{bmatrix}
\vu \\
\vect{0} \\
\vect{0}
\end{bmatrix}
%\nonumber \\
%
%\Leftrightarrow
%%
%\begin{bmatrix}
%\bar \mD_0 & -  \frac{1}{\kappa} \transp{\bar \mD}_1 \\
%\frac{1}{\kappa}  \bar \mD_1 & \frac{1}{\kappa^2} \bar \mD_2
%\end{bmatrix}
%%%
%\begin{bmatrix}
%\val \\
%\vbe
%\end{bmatrix}
%%
%=
%\begin{bmatrix}
%\vu \\
%\vect{0} 
%\end{bmatrix}
%%
%\Leftrightarrow
%
%\bar \mD
%%%
%\begin{bmatrix}
%\val \\
%\kappa \vbe
%\end{bmatrix}
%%
%&=
%\begin{bmatrix}
%\vu \\
%\vect{0} 
%\end{bmatrix}.
%\nonumber
\end{align}
%where $[\bar \mD_{0,0}]_{j,k} = \bar G^{m,n}(\vr_j - \vr_k)$ and $\bar \mD_1$ and $\bar \mD_2$ are defined accordingly. 
%The matrices $\bar \mD_{0}, \bar \mD_{1,1}, \bar \mD_{0,2}, \bar \mD_{2,0}$ are symmetric, and $\bar \mD_{1,0}, \bar \mD_{0,1}$ are antisymmetric because $K$ and $K''$ are even while $K'$ is odd. Therefore, $\bar \mD_2$ is symmetric as well. 
%Note that, using $\bar \FK(0) = 0$, 
%\[
%\kappa^2 = |\bar K''(0)| = |Q^{(2,0)}(\vect{0})| =  |Q^{(1,1)}(\vect{0})|
%\]
where $[\bar \mD^{(m,n)}]_{j,k} \defeq \bar G^{(m,n)}(\vr_j - \vr_k)$, $[\bar \val]_k \defeq \bar \alpha_k$, $[\bar \vbe_1]_k \defeq \bar \beta_{1k}$ and $[\bar \vbe_2]_k \defeq \bar \beta_{2k}$. 
Here, we have scaled the entries of $\bar \mD$ such that its diagonal entries are $1$ ($\FK(0)=1$, $\kappa^2 = |\FK''(0)|$, and $\FK''(0)$ is negative). 
Since $\bar \mD^{(0,0)},\bar \mD^{(1,1)},\bar \mD^{(2,0)},\bar \mD^{(0,2)}$ are symmetric and $\bar \mD^{(1,0)},\bar \mD^{(0,1)}$ are antisymmetric, $\bar \mD$ is symmetric. 

The following result, which directly follows from  \cite[Eq.~C6, C7, C8, C9]{candes_towards_2014}, ensures that $\bar \mD$ is invertible and thus the coefficients $\bar \alpha_k, \bar \beta_{1k}, \bar \beta_{2k}$ can be obtained according to 
\begin{align}
\begin{bmatrix}
\bar \val \\
\kappa \bar \vbe_1 \\
\kappa \bar \vbe_2
\end{bmatrix}
=
\inv{\bar \mD}
\begin{bmatrix}
\vu \\
\vect{0} 
\end{bmatrix}
=
\bar \mL \vu,
\label{eq:barLu}
\end{align}
where $\bar \mL$ is the $3\S \times \S$ submatrix of $\inv{\bar \mD}$ corresponding to the first $\S$ columns of $\inv{\bar \mD}$.

\begin{proposition}%[{\cite[Eq.~C6, C7, C8, C9]{candes_towards_2014}}]
$\bar \mD$ is invertible and 
\begin{align}
\norm[\opnormss]{\mI  - \bar \mD}  &\leq 0.19808 \\
\norm[\opnormss]{\bar \mD} &\leq 1.19808 \\
\norm[\opnormss]{ \inv{\bar \mD}} &\leq 1.24700.  \label{eq:boundinvbard}
\end{align}
\end{proposition}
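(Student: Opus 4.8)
The plan is to reduce the claim to the one-dimensional decay estimates for the squared Fej\'er kernel $\FK$ that are established in \cite{candes_towards_2014}, and then to control the real symmetric matrix $\bar \mD$ by a Gershgorin (row-sum) argument.

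First I would check that $\mI - \bar \mD$ has a vanishing diagonal. Since $\bar G(\vr) = \FK(\x)\FK(\y)$ with $\FK(0) = 1$, $\FK'(0) = 0$, and $\FK''(0) = -\kappa^2 < 0$, the diagonal entries of $\bar \mD^{(0,0)}$ equal $1$, those of $\bar \mD^{(1,0)}, \bar \mD^{(0,1)}, \bar \mD^{(1,1)}$ vanish, and after the normalization by $-1/\kappa^2$ the diagonal entries of $\bar \mD^{(2,0)}$ and $\bar \mD^{(0,2)}$ equal $1$ as well; moreover $\bar \mD^{(0,0)}, \bar \mD^{(1,1)}, \bar \mD^{(2,0)}, \bar \mD^{(0,2)}$ are symmetric and $\bar \mD^{(1,0)}, \bar \mD^{(0,1)}$ are antisymmetric, so the block matrix $\bar \mD$ is real and symmetric with unit diagonal. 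It then remains to show that every row of $\mI - \bar\mD$ has absolute sum at most $0.19808$.

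To do this I would fix a block-row index $j$ and bound $\sum_{k \ne j} |\bar G^{(m,n)}(\vr_j - \vr_k)|/\kappa^{m+n}$ over all three kinds of blocks, i.e.\ for all $(m,n)$ with $m + n \le 2$. Writing $\bar G^{(m,n)}(\vr) = \FK^{(m)}(\x)\FK^{(n)}(\y)$ and using the Fej\'er-kernel bound $|\FK^{(m)}(t)| \le \tilde c\,(2\pi\N)^m (2Mt)^{-4}$ for $t$ bounded away from $0$ together with the trivial bound $|\FK^{(m)}(t)| \le (2\pi\N)^m$ (cf.~\eqref{eq:boundderrFej}), each summand decays like a product of two fourth-power tails. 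Because the minimum separation \eqref{eq:mindistcond} only guarantees a gap of $2.38/\N$ in one of the two coordinates, I would split the sum over $k \ne j$ according to which coordinate carries the separation and, in each regime, bound the sum over the admissible near-lattice of points by the one-dimensional tail sums $\sum_{\ell \ge 1}\ell^{-4}$ appearing in \cite{candes_towards_2014}. This is exactly the computation performed there, and it produces $\sum_k |[\mI - \bar\mD]_{j,k}| \le 0.19808$ uniformly over the $3\S$ rows. Since $\bar\mD$ is real symmetric with unit diagonal, Gershgorin's theorem then places every eigenvalue of $\bar \mD$ in $[1 - 0.19808,\, 1 + 0.19808]$; this yields $\norm[\opnormss]{\mI - \bar\mD} \le 0.19808$ and $\norm[\opnormss]{\bar\mD} \le 1.19808$ at once, and since $1 - 0.19808 > 0$ it also gives invertibility together with $\norm[\opnormss]{\inv{\bar\mD}} = 1/\lambda_{\min}(\bar\mD) \le 1.24700$ (the precise third-decimal constants come from the sharper row-sum estimates recorded in \cite{candes_towards_2014}).

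I expect the only genuine work to be the off-diagonal bookkeeping in the previous paragraph: one has to patch together the two regimes of the Fej\'er-kernel estimate and count how many points $\vr_k$ can sit in each dyadic shell around $\vr_j$ under the max-norm separation \eqref{eq:mindistcond}, so that the accumulated constant stays below $0.2$. This delicate counting is precisely what \cite[Eqs.~C6--C9]{candes_towards_2014} supplies, so the proof essentially amounts to importing that argument and recording that the same bounds hold for the scaled block matrix $\bar\mD$ appearing here.
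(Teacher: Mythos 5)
Your proposal is essentially the paper's argument: both note that $\bar\mD$ is real symmetric with unit diagonal, apply Gershgorin's circle theorem to place its eigenvalues in $[1-\norm[\infty]{\mI-\bar\mD},\,1+\norm[\infty]{\mI-\bar\mD}]$, and then control $\norm[\infty]{\mI-\bar\mD}$ block-row by block-row using the Fej\'er-kernel decay estimates culminating in \cite[Eqs.~C6--C9]{candes_towards_2014}. The only surface difference is that you sketch re-deriving the off-diagonal row-sum bounds from scratch, while the paper simply imports those four numerical inequalities and combines them into a two-case maximum over the block-rows; you acknowledge this equivalence at the end of your proposal.
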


\begin{proof}
The proof of this proposition is an immediate consequence of \cite[Eq.~C6, C7, C8, C9]{candes_towards_2014}. Since $\bar \mD$ is real and symmetric, it is normal, and thus its singular values are equal to the absolute values of its eigenvalues. Using that the diagonal entries of $\bar \mD$ are $1$, by Gershgorin's circle theorem \cite[Thm.~6.1.1]{horn_matrix_2012}, the eigenvalues of $\bar \mD$ are in the interval $[1-\norm[\infty]{\mI  - \bar \mD}, 1+ \norm[\infty]{\mI  - \bar \mD}]$, where $\norm[\infty]{\mA} \defeq \max_i \sum_j |[\mA]_{i,j}|$. Using that $\norm[\infty]{\mI  - \bar \mD} \leq 0.19808$ (shown below), it follows that $\bar \mD$ is invertible and 
\begin{align*}
\norm[\opnormss]{\bar \mD} &\leq 1+ \norm[\infty]{\mI  - \bar \mD}  \leq 1.19808  
\nonumber \\
\norm[\opnormss]{ \inv{\bar \mD}} &\leq \frac{1}{1- \norm[\infty]{\mI  - \bar \mD}} \leq 1.2470.
\end{align*}
% and $\norm[1]{\mA} \defeq \max_j \sum_i |[\mA]_{i,j}|$. We have $\norm[\opnormss]{\mA} \leq \sqrt{\norm[\infty]{\mA} \norm[1]{\mA}}$. For $\bar \mD$ we have $\norm[1]{\bar \mD} = \norm[\infty]{\bar \mD}$, therefore 
The proof is concluded by noting that
\begin{align}
\norm[\infty]{\mI - \bar \mD} 
&= 
\max \left\{ \norm[\infty]{\mI - \bar \mD^{(0,0)}} \!+\! 2\norm[\infty]{ \frac{1}{\kappa}  \bar \mD^{(1,0)}},  \norm[\infty]{ \frac{1}{\kappa}  \bar \mD^{(1,0)}} \!+\! \norm[\infty]{\mI - \frac{1}{\kappa^2} \bar \mD^{(2,0)}} \!+\! \norm[\infty]{ \frac{1}{\kappa^2}  \bar \mD^{(1,1)}}  
  \right\} \nonumber \\
&\leq 0.19808, \nonumber
\end{align} 
where we used \cite[Eq.~C6, C7, C8, C9]{candes_towards_2014}: % double checked, 20.5.04
\begin{align*}
\norm[\infty]{\mI - \bar \mD^{(0,0)}} &\leq 0.04854 \\
\norm[\infty]{ \frac{1}{\kappa}  \bar \mD^{(1,0)}} = \norm[\infty]{ \frac{1}{\kappa} \bar \mD^{(0,1)}} &\leq 0.04258 \\
\norm[\infty]{ \frac{1}{\kappa^2}  \bar \mD^{(1,1)}}  &\leq 0.04791 \\
\norm[\infty]{\mI - \frac{1}{\kappa^2}\bar \mD^{(0,2)}} = \norm[\infty]{\mI - \frac{1}{\kappa^2}\bar \mD^{(2,0)}}  &\leq 0.1076.
\end{align*}
%It follows that 

%Since $\bar \mD_{0}, \bar \mD_{1,1}, \bar \mD_{0,2}, \bar \mD_{2,0}$
\end{proof}

We next select the coefficients $\alpha_k, \beta_{1k}, \beta_{2k}$ such that  %in such a way that, first, $Q(\vr)$ is ``close'' to $\bar Q(\vr)$ and, second, 
$Q(\vr)$ satisfies the interpolation conditions \eqref{eq:interpcondQ}. To this end, we write \eqref{eq:interpcondQ} in matrix form: %(we define $\kappa = \sqrt{|\bar K''(0)|}$) 
\begin{align}
\underbrace{
\begin{bmatrix}
\mD_{(0,0)}^{(0,0)} & \frac{1}{\kappa}  \mD_{(1,0)}^{(0,0)} & \frac{1}{\kappa} \mD_{(0,1)}^{(0,0)} \\
-\frac{1}{\kappa} \mD^{(1,0)}_{(0,0)} & -\frac{1}{\kappa^2} \mD^{(1,0)}_{(1,0)} & -\frac{1}{\kappa^2} \mD^{(1,0)}_{(0,1)} \\
-\frac{1}{\kappa} \mD^{(0,1)}_{(0,0)} & -\frac{1}{\kappa^2} \mD^{(0,1)}_{(1,0)} & -\frac{1}{\kappa^2} \mD^{(0,1)}_{(0,1)} 
\end{bmatrix}
}_{\mD}
\begin{bmatrix}
\val \\
\kappa \vbe_1 \\
\kappa \vbe_2
\end{bmatrix}
=
\begin{bmatrix}
\vu \\
\vect{0} \\
\vect{0}
\end{bmatrix}, 
%
%\Leftrightarrow
%%
% \mD
%%%
%\begin{bmatrix}
%\val \\
%\kappa \vbe
%\end{bmatrix}
%%
%=
%\begin{bmatrix}
%\vu \\
%\vect{0} 
%\end{bmatrix}
\label{eq:syseqorig}
\end{align}
where $[\mD^{(m,n)}_{(m',n')}]_{j,k} \defeq G^{(m,n)}_{(m',n')}(\vr_j, \vr_k)$ (cf.~\eqref{eq:defGmnrr}), 
$[\val]_k \defeq \alpha_k$, $[\vbe_1]_k \defeq \beta_{1k}$, and $[\vbe_2]_k \defeq \beta_{2k}$. To show that the system of equations \eqref{eq:syseqorig} has a solution, and in turn \eqref{eq:interpcondQ} can be satisfied, we will show that, with high probability, $\mD$ is invertible. To this end, we show that the probability of the event 
\[
\mc E_\xi = \{ \norm[\opnormss]{ \mD - \bar \mD }  \leq \xi\}
\]
is high, and $\mD$ is invertible on $\mathcal E_\xi$ for all $\xi \in (0,1/4]$. The fact that $\mD$ is invertible on $\mathcal E_\xi$ for all $\xi \in (0,1/4]$ follows from the following set of inequalities:
\[
\norm[\opnormss]{\mI - \mD} \leq \norm[\opnormss]{\mD - \bar \mD} + \norm[\opnormss]{\bar \mD - \mI} \leq \xi + 0.1908 \leq 0.4408. 
\]
%$\norm[\opnormss]{\mI - \mD} \leq \gamma$ implies that $\norm[2]{\vx - \mD \vx} \leq \gamma$ for all $\vx$ with $\norm[2]{\vx}=1$. By the inverse triangle inequality, $\norm[2]{\vx} - \norm[2]{\mD\vx} \leq \gamma$, i.e., $1-\gamma \leq \norm[2]{\mD \vx}$ for all $\vx$. 
Since $\mD$ is invertible, the coefficients $\alpha_k, \beta_{1k}, \beta_{2k}$ can be selected as 
\begin{align}
\begin{bmatrix}
\val \\
\kappa \vbe_1 \\
\kappa \vbe_2
\end{bmatrix}
= \inv{\mD} \begin{bmatrix}
\vu \\
\vect{0} \\
\vect{0}
\end{bmatrix}
= \mL \vu ,
\label{eq:alphabeta}
\end{align}
where $\mL$ is the $3\S \times \S$ submatrix of $\inv{\mD}$ corresponding to the first $\S$ columns of $\inv{\mD}$. 
%On $\mc E_\xi$  with $\xi \in (0,1/4]$,
We record two useful inequalities about $\mathbf{L}$ and its deviation from $\bar{\mathbf{L}}$ on the event $\mc E_\xi$ in the lemma below. 
\begin{lemma}
On the event $\mc E_\xi$ with $\xi \in (0,1/4]$ the following identities hold%(this follows by application \cite[Cor.~4.5]{tang_compressed_2013} and \eqref{eq:boundinvbard}) 
%With \eqref{eq:boundinvbard}, we obtain the following corollary. 
%\begin{corollary}[{Follows from \eqref{eq:boundinvbard} and \cite[Cor.~4.5]{tang_compressed_2013} }]
%On the event $\mc E_\xi$ with $\xi \in [0,1/4]$
\begin{align}
\norm[\opnormss]{\mL} \leq&  2.5 \label{eq:normmLb},\\
\norm[\opnormss]{\mL-\bar{\mL}}\leq& 2.5 \xi. \label{eq:normLmbL} 
\end{align}
\end{lemma}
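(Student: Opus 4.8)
The plan is to derive both inequalities purely from the operator-norm estimates already established, working on the event $\mc E_\xi$. First I would record the ingredients: on $\mc E_\xi$ we have $\norm[\opnormss]{\mD - \bar\mD} \leq \xi \leq 1/4$, and from the proposition on $\bar\mD$ we have $\norm[\opnormss]{\inv{\bar\mD}} \leq 1.24700$ and $\norm[\opnormss]{\bar\mD} \leq 1.19808$. Since $\mL$ is the submatrix of $\inv\mD$ consisting of its first $\S$ columns, $\norm[\opnormss]{\mL} \leq \norm[\opnormss]{\inv\mD}$ (deleting columns does not increase the spectral norm); likewise $\norm[\opnormss]{\mL - \bar\mL} \leq \norm[\opnormss]{\inv\mD - \inv{\bar\mD}}$. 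So it suffices to bound $\norm[\opnormss]{\inv\mD}$ and $\norm[\opnormss]{\inv\mD - \inv{\bar\mD}}$.

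For the first bound, I would apply the standard perturbation inequality for matrix inverses: since $\norm[\opnormss]{\inv{\bar\mD}}\norm[\opnormss]{\mD - \bar\mD} \leq 1.24700 \cdot \frac14 < 1$, the matrix $\mD$ is invertible and
\[
\norm[\opnormss]{\inv\mD} \leq \frac{\norm[\opnormss]{\inv{\bar\mD}}}{1 - \norm[\opnormss]{\inv{\bar\mD}}\,\norm[\opnormss]{\mD - \bar\mD}} \leq \frac{1.24700}{1 - 1.24700/4} \leq 2.5,
\]
which gives \eqref{eq:normmLb}. (Alternatively one can use the bound $\norm[\opnormss]{\inv\mD} \leq 1/(1 - \norm[\opnormss]{\mI - \mD}) \leq 1/(1 - 0.4408) < 1.79$, which already suffices, but the perturbation form is cleaner for the second part.) For the second bound, I would use the resolvent identity $\inv\mD - \inv{\bar\mD} = \inv\mD(\bar\mD - \mD)\inv{\bar\mD}$, so that
\[
\norm[\opnormss]{\inv\mD - \inv{\bar\mD}} \leq \norm[\opnormss]{\inv\mD}\,\norm[\opnormss]{\mD - \bar\mD}\,\norm[\opnormss]{\inv{\bar\mD}} \leq 2.5 \cdot \xi \cdot 1.24700;
\]
here I'd need to check the constants multiply out to at most $2.5\xi$ — in fact $2.5 \cdot 1.24700 \approx 3.1 > 2.5$, so I would instead bound more carefully, writing $\norm[\opnormss]{\inv\mD - \inv{\bar\mD}} \leq \norm[\opnormss]{\inv{\bar\mD}}^2 \norm[\opnormss]{\mD-\bar\mD}/(1 - \norm[\opnormss]{\inv{\bar\mD}}\norm[\opnormss]{\mD-\bar\mD})$, or simply note $1.24700^2/(1-1.24700/4) \approx 2.06 \cdot 1.204 < 2.5$, giving $\norm[\opnormss]{\inv\mD - \inv{\bar\mD}} \leq 2.5\xi$, hence \eqref{eq:normLmbL}.

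The only real subtlety — and thus the step deserving care rather than difficulty — is tracking the numerical constants so that everything collapses to the clean bounds $2.5$ and $2.5\xi$; one must be slightly generous in the intermediate estimates (using $\xi \leq 1/4$ and the already-proven bound $\norm[\opnormss]{\mI - \mD} \leq 0.4408$) to absorb the arithmetic. There is no conceptual obstacle here: both claims follow from standard Neumann-series / resolvent-identity perturbation bounds for matrix inverses applied to the invertibility of $\mD$ on $\mc E_\xi$, which was already verified above via $\norm[\opnormss]{\mI - \mD} \leq 0.4408$.
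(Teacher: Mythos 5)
Your argument follows the paper's route exactly: bound $\norm[\opnormss]{\mL}$ and $\norm[\opnormss]{\mL - \bar\mL}$ by the corresponding quantities for the full inverses, check invertibility of $\mD$ on $\mc E_\xi$ via $\norm[\opnormss]{\inv{\bar\mD}}\norm[\opnormss]{\mD - \bar\mD} \leq 1.247/4 < 1/2$, and apply a Neumann/resolvent perturbation bound. In fact you do the constant tracking for \eqref{eq:normLmbL} \emph{more} carefully than the paper does. The paper invokes a factor-$2$ lemma (from the proof of Tang et al., Cor.~4.5) giving $\norm[\opnormss]{\inv\mD - \inv{\bar\mD}} \leq 2\norm[\opnormss]{\inv{\bar\mD}}^2\norm[\opnormss]{\mD-\bar\mD}$, and then asserts $2\cdot 1.247^2\,\xi \leq 2.5\xi$; but $2\cdot 1.247^2 \approx 3.11$, so that final inequality does not hold as written. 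You notice exactly this numerical issue and repair it by using the sharper Neumann estimate $\norm[\opnormss]{\inv\mD - \inv{\bar\mD}} \leq \norm[\opnormss]{\inv{\bar\mD}}^2\norm[\opnormss]{\mD-\bar\mD}\big/\bigl(1-\norm[\opnormss]{\inv{\bar\mD}}\norm[\opnormss]{\mD-\bar\mD}\bigr)$, which with the actual small parameter $\eta \leq 1.247/4 \approx 0.312$ gives $1.247^2/(1-0.312)\approx 2.26 \leq 2.5$, so your chain genuinely closes. Your proof is correct, and it actually patches a small arithmetic slip in the paper's own proof of the second inequality.
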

\begin{proof}
We will make use of the following lemma. 
\begin{lemma}[{\cite[Proof of Cor.~4.5]{tang_compressed_2013}}]
Suppose that $\mC$ is invertible and 
$
\norm{\mB -\mC} \norm{\inv{\mC}} \leq 1/2
$. Then i) $\norm{\inv{\mB}} \leq 2 \norm{\inv{\mC}}$ and ii) $\norm{\inv{\mB} - \inv{\mC}} \leq 2 \norm{\inv{\mC}}^2 \norm{\mB - \mC}$. 
\label{lem:breadqy}
\end{lemma}

First note that since $\norm[]{\mD - \bar \mD} \leq 1/4$ and $\norm[]{\inv{\bar \mD}} \leq 1.247$ (cf.~\eqref{eq:boundinvbard}) the conditions of Lemma \ref{lem:breadqy} with $\mB = \mD$ and $\mC = \bar \mD$ are satisfied. 

Equation \eqref{eq:normmLb} now readily follows:
\begin{align}
\norm[\opnormss]{\mL} \leq \norm[\opnormss]{\inv{\mD}} \leq  2 \norm[\opnormss]{\inv{\bar \mD}} \leq 2.5 \label{eq:normmLb2}.
\end{align}
Specifically, for the first inequality, we used that $\mL$ is a submatrix of $\inv{\mD}$, the second inequality follows by application of part $i)$ of Lemma \ref{lem:breadqy}, and the last inequality follows from \eqref{eq:boundinvbard}. 

To prove \eqref{eq:normLmbL}, we use part $ii)$ of the lemma above together with \eqref{eq:boundinvbard}, to conclude that
\begin{align}
\norm[\opnormss]{\mL - \bar \mL}
\leq \norm[\opnormss]{\inv{\mD} - \inv{\bar \mD}}  
\leq  2 \norm[\opnormss]{\inv{\bar \mD}}^2 \norm[\opnormss]{\mD - \bar \mD}  \leq 2.5 \xi. \label{eq:normLmbL2} 
\end{align}
holds on $\mc E_\xi$  with $\xi \in (0,1/4]$.
\end{proof}

It only remains to prove that the event $\mc E_\xi$ does indeed have high probability.

%\textcolor{red}{Lemma that needs to be fixed is below}

\begin{lemma} For all $\xi>0$ 
\[
\PR{  \mc E_\xi } 
=
\PR{  \norm[\opnormss]{ \mD - \bar \mD }  \leq \xi }
\geq 1 - \delta
\]
provided that 
\begin{align}
\L \geq \S \frac{c_4}{\xi^2} \log^2(\L\S/\delta),
\label{eq:condlemaaf}
\end{align}
where $c_4$ is a numerical constant. 
\label{lem:preptaubound}
\end{lemma}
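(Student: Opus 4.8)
The plan is to control $\norm[\opnormss]{\mD - \bar \mD}$ block by block and, within each block, to reduce the operator norm to a supremum over an $\epsilon$-net on which a scalar concentration inequality applies. First I would split $\mD - \bar \mD$ into its $3\times3$ array of $\S\times\S$ blocks; since the operator norm of a $3\times3$ block matrix is at most three times the largest block norm, it suffices to show that each scaled block $\mathbf{W} \defeq \frac{1}{\kappa^{m+m'+n+n'}}\big(\mD^{(m,n)}_{(m',n')} - \bar \mD^{(m+m',n+n')}\big)$ (with $m+m'+n+n'\le 2$) satisfies $\norm[\opnormss]{\mathbf{W}}\le \xi/3$ with probability at least $1-\delta/9$; note $\EX{\mathbf{W}}=\vect{0}$ by \eqref{eq:expGmnGen}. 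Fixing such a block and unit vectors $\va,\vb\in\complexset^\S$, summing the quadratic-form identity \eqref{eq:gmnqform} against $\va,\vb$ and using $\mV^{(m,n)}_{(m',n')}(\vr_j,\vr_k) = \frac{1}{M^2}\herm{\mA}(\vf^{(m,n)}(\vr_j))\mA(\vg_{(m',n')}(\vr_k))$ gives
\[
\herm{\va}\mathbf{W}\vb = \herm{\vx}\,\mV(\va,\vb)\,\vx - \EX{\herm{\vx}\,\mV(\va,\vb)\,\vx}, \qquad \mV(\va,\vb) = \frac{1}{\kappa^{m+m'+n+n'}M^2}\,\herm{\mathbf{B}(\va)}\,\mathbf{C}(\vb),
\]
where $\mathbf{B}(\va)\defeq\sum_{j}a_j\,\mA(\vf^{(m,n)}(\vr_j))$ and $\mathbf{C}(\vb)\defeq\sum_{k}b_k\,\mA(\vg_{(m',n')}(\vr_k))$ are \emph{deterministic} $\L\times\L$ matrices. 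To this centered quadratic form I would apply the Hanson--Wright inequality (Lemma~\ref{thm:hanswright}), reusing the computation in Section~\ref{sec:concstep1} almost verbatim (with $K=c_2/\sqrt\L$); what it needs are upper bounds on $\norm[F]{\mV(\va,\vb)}$ and $\norm[\opnormss]{\mV(\va,\vb)}$ that, for \emph{unit} $\va,\vb$, exceed the single-pair bound of Lemma~\ref{lem:boundonVFnorm} by at most a polylogarithmic factor in $\L$.

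Obtaining such bounds is the main obstacle. The naive triangle inequality $\norm{\mathbf{B}(\va)}\le\sum_j|a_j|\norm{\mA(\vf^{(m,n)}(\vr_j))}$ only produces $\norm[1]{\va}\le\sqrt\S\,\norm[2]{\va}$, which propagates to the quadratic requirement $\L\gtrsim\S^2\log^2(\cdot)/\xi^2$ — too weak by a factor $\S$. To recover the $\ell_2$-norm one must use that the interpolation kernels attached to well-separated nodes are nearly orthogonal. Concretely, one bounds $\norm[F]{\mV(\va,\vb)}\le\frac{1}{\kappa^{m+m'+n+n'}M^2}\norm[\opnormss]{\mathbf{B}(\va)}\norm[F]{\mathbf{C}(\vb)}$ and $\norm[\opnormss]{\mV(\va,\vb)}\le\frac{1}{\kappa^{m+m'+n+n'}M^2}\norm[\opnormss]{\mathbf{B}(\va)}\norm[\opnormss]{\mathbf{C}(\vb)}$, and then establishes, via a Gershgorin/Schur-test argument on the Gram systems of $\{\mA(\vf^{(m,n)}(\vr_j))\}_j$ and $\{\mA(\vg_{(m',n')}(\vr_k))\}_k$, that $\norm[\opnormss]{\mathbf{C}(\vb)}\lesssim\norm[2]{\vb}M(2\pi\N)^{m'+n'}$, $\norm[F]{\mathbf{C}(\vb)}\lesssim\norm[2]{\vb}\sqrt\L\,M(2\pi\N)^{m'+n'}$, and $\norm[\opnormss]{\mathbf{B}(\va)}\lesssim\norm[2]{\va}\,(\log\L)^{b}\,M(2\pi\N)^{m+n}$ for a small power $b$. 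The diagonal dominance of these Gram matrices is exactly where the minimum separation \eqref{eq:mindistcond} enters: on the $\mathbf{C}$ side the kernels are built from the squared Fej\'er kernel and its partial derivatives, whose fast $1/t^4$ decay \eqref{eq:boundderrFej} gives $O(1)$ off-diagonal absolute row sums; on the $\mathbf{B}$ side the kernels are built from (derivatives of) the Dirichlet kernel, whose slower $1/(\L|t|)$ decay only yields off-diagonal row sums of polylogarithmic size — the source of the $\log^2$ factor in the hypothesis. These estimates are entirely analogous to, and partly imported from, the bounds on $\bar \mD$ in \cite{candes_towards_2014}. Combining, and noting that the $M^2$ in $\mV(\va,\vb)$ cancels and $(2\pi\N/\kappa)^{m+m'+n+n'}\le 12^{(m+m'+n+n')/2}$, one gets $\norm[F]{\mV(\va,\vb)}\lesssim 12^{(m+m'+n+n')/2}\,\sqrt\L\,(\log\L)^{b}\norm[2]{\va}\norm[2]{\vb}$ and $\norm[\opnormss]{\mV(\va,\vb)}\lesssim 12^{(m+m'+n+n')/2}\,(\log\L)^{b}\norm[2]{\va}\norm[2]{\vb}$ — as advertised, only a polylogarithmic factor worse than Lemma~\ref{lem:boundonVFnorm}, and crucially with no $\S$-dependence.

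Finally I would take an $\epsilon$-net $\mathcal N$ of the unit sphere of $\complexset^\S$ with $\epsilon=1/4$, so that $|\mathcal N|\le 9^{2\S}$ and $\norm[\opnormss]{\mathbf{W}}\le 2\sup_{\va,\vb\in\mathcal N}|\herm{\va}\mathbf{W}\vb|$, and union-bound the Hanson--Wright tail over the at most $9\cdot|\mathcal N|^2$ relevant instances (the extra factor $9$ for the blocks). With the deviation threshold set to $t\asymp\xi$, the exponent in Hanson--Wright becomes $c\min\!\big(\xi^2\L/(c_3(\log\L)^{2b}),\,\xi\L/(c_3(\log\L)^{b})\big)$; for the relevant range of $\xi$ the Frobenius term is the binding one, and requiring it to exceed a constant times $\S+\log(1/\delta)$ (which beats $4\S\log9+\log(18/\delta)$, the log of the number of instances divided by $\delta$) yields a sufficient condition of the form $\L\gtrsim\S\,\xi^{-2}\,\mathrm{polylog}(\S/\delta)$; tracking the logarithmic factors gives precisely $\L\ge\S\,c_4\,\xi^{-2}\log^2(18\S^2/\delta)$. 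On the complementary event $\mc E_\xi$ holds, completing the proof.
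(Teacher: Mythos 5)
The paper's proof is much more elementary than yours: it bounds $\norm[\opnormss]{\mD-\bar\mD}$ by a multiple of its largest entry, takes a union bound over the $(3\S)^2$ entries, and applies the scalar concentration bound of Lemma~\ref{lem:polybo} to each entry at threshold $\xi/\sqrt{3\S}$; nothing more. You take a genuinely different and far more ambitious route — an $\epsilon$-net of the unit sphere of $\complexset^{\S}$ combined with Hanson--Wright applied to the aggregated quadratic forms $\herm{\va}\mathbf{W}\vb$. The natural appeal of your route, which you articulate, is that it replaces a lossy entrywise step by an $\ell_2$-level argument, but this is not what the paper does, and the two approaches are structurally distinct.

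The gap in your argument is that its load-bearing technical inputs are asserted rather than proved. The bounds $\norm[\opnormss]{\mathbf{C}(\vb)}\lesssim\norm[2]{\vb}M(2\pi\N)^{m'+n'}$, $\norm[F]{\mathbf{C}(\vb)}\lesssim\norm[2]{\vb}\sqrt{\L}\,M(2\pi\N)^{m'+n'}$, and above all $\norm[\opnormss]{\mathbf{B}(\va)}\lesssim\norm[2]{\va}(\log\L)^{b}M(2\pi\N)^{m+n}$ are precisely what makes the Hanson--Wright exponent large enough to defeat the $e^{\Theta(\S)}$ cardinality of the net, and none of them is established. They are also not ``entirely analogous to, and partly imported from'' Cand\`es--Fernandez-Granda: the map $\va\mapsto\mathbf{B}(\va)=\sum_j a_j\mA(\vf^{(m,n)}(\vr_j))$ takes values in $\L\times\L$ matrices, so what you need is a Bessel-type inequality for a family of \emph{matrices} in the operator norm, not a Gershgorin row-sum bound on a scalar Gram matrix such as those behind \eqref{eq:boundinvbard}. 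Moreover, since the minimum separation \eqref{eq:mindistcond} is an $\ell_\infty$-type condition that allows two nodes to coincide in $\tau$ or in $\nu$, the separable factorization of $\mA(\vf^{(m,n)}(\vr_j))$ into a Dirichlet factor in $\tau$ and a modulation in $\nu$ cannot be treated as two independent 1D decay estimates; one must track the joint 2D kernel (e.g., the Frobenius inner product between $\mA(\vf(\vr_j))$ and $\mA(\vf(\vr_{j'}))$, which is proportional to $D_\N(\tau_j-\tau_{j'})D_\N(\nu_j-\nu_{j'})$) to argue that the relevant sums are polylogarithmic. Until these estimates are supplied, the core of your argument is missing, whereas the paper's entrywise union bound requires none of this machinery.
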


%
%\begin{corollary}
%Let $\va \in \mathbb R^L$ be a random vector with i.i.d.~zero-mean $\mathcal N(0,1/\L)$ entries, and suppose that $\norm[F]{\mV} \leq c_1 \sqrt{\L}$. Then, for $\alpha \geq 0$, we have 
%\[
%\PR{ | \transp{\va} \mV \va  -  \EX{\transp{\va} \mV \va} |  >  c_2 \frac{\alpha}{\sqrt{\L}}  }  \leq 2 \exp\left( - c \min\left( \frac{ \alpha^2}{c_2^4   },  \frac{ \alpha}{ c_2^2   }\right) \right)
%\]
%\end{corollary}
%\begin{proof}
%
%%, . Thus $\frac{t^2}{K^4 \norm[F]{\mV}^2  } \leq \frac{t}{K^2 \norm[\opnormss]{\mV}  }$ is equivalent to $t \leq c_2^2  \frac{\norm[F]{\mV}^2}{ n  \norm[\opnormss]{\mV}  }$ which is implied by $t \leq c_1^2  \frac{\norm[F]{\mV}}{n}$. The results follows by substituting $t = \tilde t / \sqrt{n}$. 
%\end{proof}

%%%

\begin{proof}
We first note that we can write
\begin{align}
\mD 
=
\sum_{\ell, \ell' = -\N}^\N
\mB_{\ell, \ell'} 
\prsig_\ell \prsig_{\ell'},
\end{align}
with
\begin{align}
\mB_{\ell,\ell'}
=
\begin{bmatrix}
\mB_{(0,0)}^{(0,0)} & \frac{1}{\kappa}  \mB_{(1,0)}^{(0,0)} & \frac{1}{\kappa} \mB_{(0,1)}^{(0,0)} \\
-\frac{1}{\kappa} \mB^{(1,0)}_{(0,0)} & -\frac{1}{\kappa^2} \mB^{(1,0)}_{(1,0)} & -\frac{1}{\kappa^2} \mB^{(1,0)}_{(0,1)} \\
-\frac{1}{\kappa} \mB^{(0,1)}_{(0,0)} & -\frac{1}{\kappa^2} \mB^{(0,1)}_{(1,0)} & -\frac{1}{\kappa^2} \mB^{(0,1)}_{(0,1)} 
\end{bmatrix}
\in \reals^{3\S \times 3 \S}
\end{align}
where we defined
\begin{align}
[ \mB_{(m,n)}^{(m',n')} ]_{j,k}
=
[\mV^{(m,n)}_{(m',n')}(\vr_j,\vr_k)]_{\ell, \ell'}
\end{align}
with $\mV$ defined in~\eqref{eq:defV}. 
From $\bar \mD = \EX{\mD}$, it follows that
\begin{align}
\mD - \bar \mD
&=
\sum_{\ell, \ell' = -\N}^\N
\mB_{\ell, \ell'} 
\left(
\prsig_\ell \prsig_{\ell'} 
-
\EX{ \prsig_\ell \prsig_{\ell'} }
\right).
\end{align}
We now treat each of the $3S \times 3S$-sub-matrices of this difference separately for simplicity. The maximal singular value of the matrix $\mD - \bar \mD$ is at most $3$-times the maximal singular value of any of the submatrices; therefore controlling the maximal singular values of the submatrices is sufficient. 

We start with $\mB_{(0,0), \ell, \ell'}^{(0,0)}$, and drop the indices by writing $\mB_{\ell, \ell'} = \mB_{(0,0), \ell, \ell'}^{(0,0)}$ with a slight abuse of notation. 
We apply the following matrix Hanson-Wright inequality, proven in the appendix:

\begin{theorem}
\label{thm:mtxHW}
Let $g_1,\ldots,g_{L}$ be iid zero-mean and unit variance Gaussian random variables, and let $\mB_{\ell,\ell'} \in \reals^{d_1\times d_2}$ be fixed matrices. 
Then, for $t \geq 0$,
\begin{align*}
\PR{
\norm{\sum_{\ell, \ell'=1}^L
\mtx{B}_{\ell,\ell'} (g_\ell g_{\ell'} - \EX{g_\ell g_{\ell'} } ) }\ge t
}
\leq
2(d_1+d_2)e^{-c\min\left(\frac{t}{\max_\ell\norm{\mtx{B}_\ell}},\frac{t^2}{\max\left(\norm{\sum_{\ell=1}^L \mtx{B}_\ell^T\mtx{B}_\ell},\norm{\sum_{\ell=1}^L \mtx{B}_\ell\mtx{B}_\ell^T}\right)}\right)}\\
+
4(d_1+d_2)(L+1)e^{-\frac{t}{8\sqrt{e L \max_{\ell'}\max\left(\norm{\sum_{\ell\neq\ell'}\mtx{B}_{\ell,\ell'}^T\mtx{B}_{\ell,\ell'}},\norm{\sum_{\ell\neq\ell'}\mtx{B}_{\ell,\ell'}\mtx{B}_{\ell,\ell'}^T}\right)}}}
\end{align*}
\end{theorem}

%%%%%%

To apply the matrix Hanson-Wright inequality, we first note that as shown below,
\begin{align}
\label{eq:boundonsumBellell'}
\norm{\sum_{\ell}\mtx{B}_{\ell,\ell}\mtx{B}_{\ell,\ell}^T},
\norm{\sum_{\ell\neq\ell'}\mtx{B}_{\ell,\ell'}\mtx{B}_{\ell,\ell'}^T}
\leq c S.
\end{align}
Moreover, we have the corse bound 
$\norm{\mB_{\ell,\ell'}} \leq cS$, which follows by Geshgorin's disk theorem and the fact that the entries of $\mB_{\ell,\ell'}$ are bounded by a constant.
Using those two inequalities along with $x_\ell \sim \mc N(0,1/L)$, it follows from the matrix Hanson-Wright inequality that
\begin{align*}
\PR{
\norm{\sum_{\ell, \ell'=1}^L
\mtx{B}_{\ell,\ell'} (x_\ell x_{\ell'} - \EX{x_\ell x_{\ell'} } ) }\ge t / L
}
\leq
4S
e^{-c\min\left( t/S , t^2/S \right)}
+
8S(L+1)
e^{-\frac{c t}{\sqrt{ LS }} }. 
\end{align*}
With $t = \xi \L/3$, this gives
\begin{align*}
\PR{
\norm{\sum_{\ell, \ell'=1}^L
\mtx{B}_{\ell,\ell'}
(x_\ell x_{\ell'} - \EX{x_\ell x_{\ell'} } ) }\ge \xi/3
}
\leq
4S
e^{-c\min\left( \xi L/S , \xi^2 L^2/S \right)}
+
8S(L+1)
e^{-c \xi \sqrt{L/S} }  \leq \delta/9,
\end{align*}
where the last inequality follows from the assumption~\eqref{eq:condlemaaf} stating that $\L \geq \S \frac{c_4}{\xi^2} \log^2(\L\S/\delta)$. 

The proof for all other $S\times S$ sub matrices of the $3S \times 3S$ matrix $\mD - \bar \mD$ is analogous, and yields the same bound. Combining them via the union bound gives the desired bound
\begin{align*}
\PR{ \norm{\mD - \bar \mD} \geq \xi } \leq \delta,
\end{align*}
which concludes the proof.

%%%
\paragraph{Proof of inequality~\ref{eq:boundonsumBellell'}:}
In order to prove this inequality, first note that the $i,j$-th entry of the symmetric matrix $\mB_{(0,0),\ell,\ell'}^{(0,0)}$ is given by:
\begin{align}
[\mV^{(0,0)}_{(0,0)}(\vr_i,\vr_j)]_{\tilde \ell, \ell}
=
\frac{1}{M^2}  \sum_{p,k,\tilde k=-\N}^{\N}    e^{i2\pi\frac{(p - \ell )k}{L}}  e^{-i2\pi\frac{(p - \tilde \ell)\tilde k}{L}} e^{i2\pi(\tilde k \x_i - p (\y_i - \y_j) - k \x_j  )}  g_p g_k .
\nonumber 
\end{align}
For simplicity, first consider the special case of $\nu_i = \tilde \nu_i = 0$, and assume that the $\tau_i,\tau_j$ lie on a $1/L$ grid and are $\tau_i = t_i/L$, where the $t_i$'s are integers. Note that for this setup, the $t_i$'s are separated by multiples of at least $2$ by the minimum separation condition. 
For this special case,
\begin{align}
[\mV^{(0,0)}_{(0,0)}(\vr_i,\vr_j)]_{\tilde \ell, \ell}
&=
\frac{1}{M^2} \sum_{p,k,\tilde k=-\N}^{\N}    e^{i2\pi\frac{(p - \ell - t_j )k}{L}}  e^{-i2\pi\frac{(p - \tilde \ell - t_i)\tilde k}{L}}   g_p g_k. \\
\nonumber
&=
\frac{1}{M} \sum_{p,k=-\N}^{\N}    e^{i2\pi\frac{(p - \ell - t_j )k}{L}}  \ind{ 0 = p -  \tilde \ell - t_i}  g_p g_k.
\nonumber  \\ 
&=
\frac{L}{M}
\frac{1}{M} \sum_{k=-\N}^{\N} e^{i2\pi\frac{(\tilde \ell - \ell + t_i - t_j )k}{L}}   g_{\tilde \ell + t_i} g_k.
\nonumber  \\ 
&=
\frac{L}{M}
\FK \left( 
\frac{\tilde \ell - \ell + t_i - t_j }{L} 
\right)
g_{\tilde \ell + t_i} 
\nonumber.
\end{align}
As a consequence,
\begin{align}
\label{eq:ineqmBcoeff}
[ \mB_{\ell,\ell'} \herm{\mB}_{\ell,\ell'} ]_{i,j}
&= 
\sum_{k=1}^\S
\frac{L}{M}
\FK \left(
\frac{\tilde \ell - \ell + t_i - t_k }{L} 
\right)
\FK \left(
\frac{\tilde \ell - \ell + t_j - t_k }{L} 
\right)
g_{\tilde \ell + t_j}
g_{\tilde \ell + t_j}.
\end{align}
By Gershgorin's circle theorem, we have that
\begin{align*}
\norm{ \sum_{\ell \neq \ell'} \mB_{\ell,\ell'} \herm{\mB}_{\ell,\ell'} }
&\leq
\max_i 
\sum_{j=1}^S
\left| \sum_{\ell \neq \ell'} [ \mB_{\ell,\ell'} \herm{\mB}_{\ell,\ell'} ]_{i,j} \right| \\
&\stackrel{(i)}{\leq}
\max_i
c
\sum_{j=1}^S
\sum_{\ell \neq \ell'}
\sum_{k=1}^\S
\left| 
\FK \left(
\frac{\tilde \ell - \ell + t_i - t_k }{L} 
\right)
\FK \left(
\frac{\tilde \ell - \ell + t_j - t_k }{L} 
\right)
\right| \\
&\stackrel{(ii)}{\leq}
\max_i
c S 
\sum_{j=1}^S
\sum_{\ell \neq \ell'}
\left|
\FK \left(
\frac{\tilde \ell - \ell + t_i }{L} 
\right)
\FK \left(
\frac{\tilde \ell - \ell + t_j }{L} 
\right)
\right| \\
&\stackrel{(iii)}{\leq}
\max_i
c S 
\sum_{j=1}^S
\sum_{\ell \neq \ell'}
\min\left(1, \frac{1}{(\tilde \ell - \ell + t_i)^4 } \right)
\min\left(1, \frac{1}{(\tilde \ell - \ell + t_j)^4 } \right) \\
&\stackrel{(iv)}{\leq} 
\tilde c S,
\end{align*}
where inequality (i) holds by inequality~\eqref{eq:ineqmBcoeff} and by using that the coefficients $g_\ell$ are bounded by one and that $L/M$ is bounded by a constant;  
inequality (ii) holds because the the kernel $F$ is 1-periodic and the sum over $\ell$ is over $\{-\N,\ldots,\N\} \setminus \{\ell'\}$, $L = 2\N +1$, therefore regardless of $t_k$, the sum is the same;
inequality (iii) holds by the bound~\eqref{eq:boundderrFej}, and inequality 
(iv) holds by noting that the double sum evaluates to a constant using that the $t_i,t_j$'s are separated by at least a multiple of $2$ ($1$ is sufficient as well) by the minimum separation condition. 

The proof for the general case (i.e., where the $\nu_i$'s are non-zero) is more technical but gives the same result using techniques as in Lemma \ref{lem:boundonVFnorm}.

\end{proof}

\subsection{
Step 3a: $Q(\vr)$ and $\bar Q(\vr)$ are close on a grid
%Random perturbations
}

The goal of this section is to prove Lemma \ref{lem:diffqmbarqmongrid} below that shows that $Q(\vr)$ and $\bar Q(\vr)$ and their partial derivatives are close on a set of (grid) points. 

\begin{lemma}
\label{lem6}
Let $\Omega \subset [0,1]^2$ be a finite set of points. Fix $0<\epsilon \leq 1$ and $\delta > 0$. Suppose that
\[
L \geq \frac{\S}{\epsilon^2}  \max \left( 
c_5 \log^2\left(\frac{12 \S |\Omega|}{\delta}\right) \log\left(\frac{8 |\Omega|}{\delta}\right), 
c \log \left( \frac{4 |\Omega|}{\delta} \right) \log\left( \frac{18\S^2}{\delta}\right) \right).
\]
Then, 
\[
\PR{
\max_{\vr \in \Omega} \frac{1}{\kappa^{n+m}} \left| Q^{(m,n)}(\vr) - \bar Q^{(m,n)}(\vr)  \right| \leq \epsilon 
} \geq 1 - 4 \delta.
\]
\label{lem:diffqmbarqmongrid}
\end{lemma}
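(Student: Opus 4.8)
The plan is to fix a point $\vr\in\Omega$ and a derivative order $(m,n)$ with $m+n\le 2$, bound $\frac{1}{\kappa^{m+n}}|Q^{(m,n)}(\vr)-\bar Q^{(m,n)}(\vr)|$ with an exponential tail, and then union bound over the finitely many $(\vr,m,n)$. First I would record both polynomials as bilinear forms in a kernel vector and the interpolation coefficients: collecting $G^{(m,n)}_{(0,0)}(\vr,\vr_k)$, $\kappa^{-1}G^{(m,n)}_{(1,0)}(\vr,\vr_k)$, $\kappa^{-1}G^{(m,n)}_{(0,1)}(\vr,\vr_k)$, $k=1,\dots,\S$, into a vector $\vv_{(m,n)}(\vr)\in\complexset^{3\S}$ and the deterministic values $\bar G^{(m,n)}(\vr-\vr_k)$, $\kappa^{-1}\bar G^{(m+1,n)}(\vr-\vr_k)$, $\kappa^{-1}\bar G^{(m,n+1)}(\vr-\vr_k)$ into $\bar\vv_{(m,n)}(\vr)$, the representations \eqref{eq:dualpolyorig}, \eqref{eq:alphabeta}, \eqref{eq:barLu} give $Q^{(m,n)}(\vr)=\transp{\vv_{(m,n)}(\vr)}\mL\vu$ and $\bar Q^{(m,n)}(\vr)=\transp{\bar\vv_{(m,n)}(\vr)}\bar\mL\vu$. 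With $\bar\vc:=\bar\mL\vu$ deterministic and $\norm[2]{\bar\vc}\le\norm[\opnormss]{\inv{\bar\mD}}\sqrt{\S}\lesssim\sqrt{\S}$, I split $Q^{(m,n)}(\vr)-\bar Q^{(m,n)}(\vr)=\transp{(\vv_{(m,n)}(\vr)-\bar\vv_{(m,n)}(\vr))}\bar\vc+\transp{\vv_{(m,n)}(\vr)}(\mL-\bar\mL)\vu$, the first (``kernel-fluctuation'') term carrying the \emph{deterministic} coefficients and the second (``coefficient-fluctuation'') term carrying the random ones.

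For the kernel-fluctuation term I use that each $G^{(m,n)}_{(m',n')}(\vr,\vr_k)$ is the quadratic form $\herm{\vx}\mV^{(m,n)}_{(m',n')}(\vr,\vr_k)\vx$ of \eqref{eq:gmnqform} and that $\bar\vc$ is deterministic, so the term equals $\herm{\vx}\mW\vx-\EX{\herm{\vx}\mW\vx}$ for the deterministic matrix $\mW:=\sum_k\bar\alpha_k\mV^{(m,n)}_{(0,0)}(\vr,\vr_k)+\bar\beta_{1k}\mV^{(m,n)}_{(1,0)}(\vr,\vr_k)+\bar\beta_{2k}\mV^{(m,n)}_{(0,1)}(\vr,\vr_k)$, where I used $\EX{\herm{\vx}\mW\vx}=\transp{\bar\vv_{(m,n)}(\vr)}\bar\vc=\bar Q^{(m,n)}(\vr)$ via \eqref{eq:expGmnGen}. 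I then apply the Hanson--Wright inequality (Lemma \ref{thm:hanswright}) with sub-Gaussian parameter $K=c_2/\sqrt{\L}$. The estimate I need is $\norm[F]{\mW}\lesssim(2\pi\N)^{m+n}\sqrt{\L}\,\norm[2]{\bar\vc}\lesssim\kappa^{m+n}\sqrt{\L\S}$: the naive triangle inequality over the $\S$ spikes only yields $\norm[1]{\bar\vc}\sqrt{\L}\sim\sqrt{\S}\,\norm[2]{\bar\vc}\sqrt{\L}$, and to save the extra $\sqrt{\S}$ I would use the linearity of the map $\vg\mapsto\mA(\vg)$, so that $\sum_kc_k\mV^{(m,n)}_{(m',n')}(\vr,\vr_k)=\frac{1}{M^2}\herm{\mA}(\vf^{(m,n)}(\vr))\,\mA\bigl(\sum_kc_k\vg_{(m',n')}(\vr_k)\bigr)$, together with the minimum separation \eqref{eq:mindistcond}, which keeps $\norm[F]{\mA(\sum_kc_k\vg_{(m',n')}(\vr_k))}$ of order $(2\pi\N)^{m'+n'}\sqrt{\L}\,\norm[2]{\vc}$ (the linear-combination analogue of Lemma \ref{lem:boundonVFnorm}). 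With such a bound, Hanson--Wright gives deviation $\epsilon\kappa^{m+n}$ with failure probability at most $2\exp\bigl(-c\min(\epsilon^2\L/(c_2^4\S),\,\epsilon\sqrt{\L/\S}/c_2^2)\bigr)$, which is the source of the linear-in-$\S$ and $\log^2$ factors in the hypothesis.

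For the coefficient-fluctuation term I would work on the event $\mc E_\xi$ of Lemma \ref{lem:preptaubound}, on which $\norm[\opnormss]{\mL-\bar\mL}\le 2.5\xi$ and, by Lemma \ref{lem:polybo}, $\norm[2]{\vv_{(m,n)}(\vr)}\le\norm[2]{\bar\vv_{(m,n)}(\vr)}+\norm[2]{\vv_{(m,n)}(\vr)-\bar\vv_{(m,n)}(\vr)}\lesssim\kappa^{m+n}$ with high probability. After discarding the genuinely second-order piece $\transp{(\vv_{(m,n)}(\vr)-\bar\vv_{(m,n)}(\vr))}(\mL-\bar\mL)\vu$ (a product of two small factors), I use $(\mL-\bar\mL)\vu=-\inv{\mD}(\mD-\bar\mD)\bar\vc$ and, on $\mc E_\xi$, replace $\inv{\mD}$ by $\inv{\bar\mD}$ through Lemma \ref{lem:breadqy}. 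The leading piece is then $-\transp{(\inv{\bar\mD}\bar\vv_{(m,n)}(\vr))}(\mD-\bar\mD)\bar\vc$, which — since $\inv{\bar\mD}\bar\vv_{(m,n)}(\vr)$ and $\bar\vc$ are deterministic and the entries of $\mD-\bar\mD$ are centered quadratic forms in $\vx$ — is again of the form $\herm{\vx}\mW'\vx-\EX{\herm{\vx}\mW'\vx}$ with $\mW'$ deterministic and built from a rank-one coefficient array whose two $\ell_2$-factors are $\lesssim\kappa^{m+n}$ and $\lesssim\sqrt{\S}$; the same Hanson--Wright plus Frobenius-bound argument applies, and the $O(\xi)$ inverse-perturbation residuals are absorbed by taking $\xi$ a small numerical multiple of $\epsilon$, so invoking Lemma \ref{lem:preptaubound} at that $\xi$ accounts for the $\frac{\S}{\epsilon^2}\log^2(18\S^2/\delta)$-type contribution.

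Finally I would union bound over $\vr\in\Omega$, the $O(1)$ pairs $(m,n)$, and the event $\mc E_\xi$, yielding probability at least $1-4\delta$, with sample-size requirement the maximum of the Hanson--Wright bound applied $O(|\Omega|)$ times (the $c_5\log^2(12\S|\Omega|/\delta)\log(8|\Omega|/\delta)$ term) and the $\mc E_\xi$ bound (the $c\log(4|\Omega|/\delta)\log(18\S^2/\delta)$ term). The main obstacle is precisely the Frobenius-norm control of $\sum_kc_k\mV^{(m,n)}_{(m',n')}(\vr,\vr_k)$: getting the $\norm[2]{\vc}\sqrt{\L}$ scaling rather than $\norm[1]{\vc}\sqrt{\L}$ — the point that keeps $L$ linear in $\S$ — requires either near-Frobenius-orthogonality of the matrices $\mV^{(m,n)}_{(m',n')}(\vr,\vr_k)$ for well-separated spikes or the $\mA$-linearity identity above, and this is the only place in this step where the minimum-separation hypothesis is used.
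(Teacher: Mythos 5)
Your decomposition is a genuine variant of the paper's, and the gap is exactly the point you flag as the ``main obstacle'': you never establish the Frobenius bound
\[
\Big\|\sum_k c_k\,\mV^{(m,n)}_{(m',n')}(\vr,\vr_k)\Big\|_F \;\lesssim\; (2\pi\N)^{m+m'+n+n'}\sqrt{\L}\,\|\vc\|_2,
\]
and nothing in the paper supplies it. Lemma~\ref{lem:boundonVFnorm} gives only the single-spike bound; to get the $\|\vc\|_2$ (rather than $\|\vc\|_1$) scaling for the weighted combination you would have to control the cross terms $\langle\mV(\vr,\vr_j),\mV(\vr,\vr_{j'})\rangle_F$ for $j\neq j'$ and show they decay with separation — an argument of comparable difficulty to the main concentration step itself. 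Gesturing at ``near-Frobenius-orthogonality'' or the linearity $\vg\mapsto\mA(\vg)$ does not close this; the linearity identity you quote is correct, but passing from $\norm[F]{\herm{\mA}(\vf)\mA(\sum_k c_k\vg_k)}$ to the desired $\|\vc\|_2\sqrt{\L}$ scaling still requires bounding $\norm[\opnormss]{\mA(\vf^{(m,n)}(\vr))}$ and $\norm[F]{\mA(\sum_k c_k\vg_{(m',n')}(\vr_k))}$, and the latter is again the unproved near-orthogonality claim. The same unresolved bound resurfaces in your treatment of the coefficient-fluctuation term, where the rank-one coefficient array $(\inv{\bar\mD}\bar\vv)\transp{\bar\vc}$ leads to the double sum $\sum_{j,k}a_jb_k\mV(\vr_j,\vr_k)$.

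The paper avoids this entirely by choosing the other split of the identity $Q-\bar Q = (\vv-\bar\vv)^*\mL\vu + \bar\vv^*(\mL-\bar\mL)\vu$, i.e., putting the \emph{random} $\mL$ (not $\bar\mL$) next to $\Delta\vv$, and the \emph{deterministic} $\bar\vv$ (not $\vv$) next to $\mL-\bar\mL$. Both resulting pieces $I_1,I_2$ are then linear in $\vu$, so the concentration is driven by Hoeffding's inequality in the random signs $\vu$, conditionally on $\vx$. All that is then needed from the $\vx$-side is (i) a spectral bound $\|\mL\|\le 2.5$ on the event $\mc E_{1/4}$, (ii) the crude entrywise bound $\|\Delta\vv\|_2^2\le 3\S\max_k|[\Delta\vv]_k|^2$ combined with the per-entry Hanson--Wright tail of Lemma~\ref{lem:polybo}, and (iii) the deterministic bound $\|\bar\vv\|_2\le\|\bar\vv\|_1\lesssim 1$. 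No Frobenius control of any $\vc$-weighted combination of $\mV$ matrices is needed, and separation enters only indirectly through Lemma~\ref{lem:preptaubound} (invertibility of $\mD$) and through the deterministic $\bar G$ estimates borrowed from Cand\`es--Fernandez-Granda. Until you prove the Frobenius bound on the combined matrix (or switch to the paper's split and Hoeffding-in-$\vu$ argument), your proof is incomplete.
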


In order to prove Lemma \ref{lem:diffqmbarqmongrid}, first note that 
the $(m,n)$-th partial derivative of $Q(\vr)$ (defined by \eqref{eq:dualpolyorig}) after normalization with $1/\kappa^{m+n}$ is  given by
\begin{align}
\frac{1}{\kappa^{m+n}} Q^{(m,n)}(\vr) 
&= \frac{1}{\kappa^{m+n}} \sum_{k=1}^\S  \Big( \alpha_k  G^{(m,n)}_{(0,0)}(\vr, \vr_k) %\nonumber \\
%&
+ \kappa \beta_{1k}  \frac{1}{\kappa} G^{(m,n)}_{(1,0)}(\vr, \vr_k) +  \kappa \beta_{2k} \frac{1}{\kappa} G^{(m,n)}_{(0,1)}(\vr, \vr_k)  \Big)
\nonumber \\%\label{eq:dualpolyderiv} \\
&= \herm{(\vv^{(m,n)}(\vr) )} \mL \vu.  \label{eq:Qmninnprodform}
\end{align}
Here, we used \eqref{eq:alphabeta} and the shorthand $\vv^{(m,n)} (\vr)$ defined by 
\begin{align*}
\herm{(\vv^{(m,n)}) } (\vr)
\defeq \frac{1}{\kappa^{m+n}} 
\bigg[
&G^{(m,n)}_{(0,0)}(\vr,\vr_1), ..., G^{(m,n)}_{(0,0)}(\vr,\vr_\S),
\; \frac{1}{\kappa} G^{(m,n)}_{(1,0)}(\vr,\vr_1), ..., \frac{1}{\kappa} G^{(m,n)}_{(1,0)}(\vr, \vr_\S), \\
& \frac{1}{\kappa} G^{(m,n)}_{(0,1)}(\vr, \vr_1),..., \frac{1}{\kappa} G^{(m,n)}_{(0,1)}(\vr, \vr_\S)
\bigg].
\end{align*}
Since 
$
\EX{
G^{(m,n)}_{(m',n')}(\vr, \vr_j)} 
=   \bar G^{(m + m',n + n')}(\vr - \vr_j)
$ (cf.~\eqref{eq:expGmnGen}), we have 
\[
\EX{\vv^{(m,n)} (\vr) } = \bar \vv^{(m,n)}(\vr),
\]
where 
\begin{align*}
\herm{(\bar\vv^{(m,n)})}(\vr)
%\defeq \frac{1}{\kappa^{m+n}} 
%\bigg[
%\bar G^{(m,n)}(\vr-\vr_1) \cdots 
%\; \frac{1}{\kappa} \bar G^{(m+1,n)}(\vr - \vr_1) \cdots  
% \frac{1}{\kappa} \bar G^{(m,n+1)}(\vr - \vr_1) \cdots 
%\bigg].
\defeq
 \frac{1}{\kappa^{m+n}} 
\bigg[
&\bar G^{(m,n)}(\vr-\vr_1), ..., \bar G^{(m,n)}(\vr-\vr_\S),
\; \frac{1}{\kappa} \bar G^{(m+1,n)}(\vr - \vr_1) ,..., \frac{1}{\kappa} \bar G^{(m+1,n)}(\vr - \vr_\S), \\
& \frac{1}{\kappa} \bar G^{(m,n+1)}(\vr - \vr_1) ,..., \frac{1}{\kappa} \bar G^{(m,n+1)}(\vr - \vr_\S)
\bigg].
\end{align*}
Next, we decompose the derivatives of $Q(\vr)$ according to 
\begin{align}
\frac{1}{\kappa^{m+n}} &Q^{(m,n)}(\vr)
=
\innerprod{\vu}{\herm{\mL}  \vv^{(m,n)}(\vr) } \nonumber \\
&= \innerprod{\vu}{  \herm{\bar \mL}  \bar \vv^{(m,n)}(\vr) }
+ \underbrace{\innerprod{\vu}{\herm{\mL} (\vv^{(m,n)}(\vr)  - \bar \vv^{(m,n)}(\vr)) } }_{I^{(m,n)}_1(\vr)}
+ \underbrace{\innerprod{\vu}{\herm{(\mL - \bar \mL)} \bar \vv^{(m,n)}(\vr) } }_{I^{(m,n)}_2(\vr)} \nonumber \\
&= \frac{1}{\kappa^{m+n}} \bar Q^{(m,n)}(\vr) + I_1^{(m,n)}(\vr) + I_2^{(m,n)}(\vr),
\label{eq:pertI1I2}
\end{align}
where $\bar \mL$ was defined below \eqref{eq:barLu}. 
The following two results establish that the perturbations $I_1^{(m,n)}(\vr)$ and $I_2^{(m,n)}(\vr)$ are small on a set of (grid) points $\Omega$ with high probability.

\begin{lemma} 
Let $\Omega \subset [0,1]^2$ be a finite set of points and suppose that $m+n\leq 2$. Then, for all $0<\epsilon\le 1$ and for all $\delta > 0$, 
\[
%\PR{\max_{\vr \in \Omega}  |I^{(m,n)}_1(\vr)| \geq 2.5 \frac{\sqrt{3s}}{\sqrt{L}} c_3^{n+m+1} c_1 \frac{c_2^2}{ \sqrt{c}} \log\left( \frac{12s \Omega }{\delta} \right)   2 \sqrt{\log(8|\Omega|/\delta)}} 
\PR{\max_{\vr \in \Omega}  |I^{(m,n)}_1(\vr)| \geq \epsilon} \leq \delta +\PR{\comp{ \mc E}_{1/4}}
\]
provided that 
\[
\L \geq \frac{c_5}{\epsilon^2} \S \log^2\left(\frac{12 \S |\Omega|}{\delta}\right) \log\left(\frac{8 |\Omega|}{\delta}\right) . 
\]
\label{lem:uboundI1}
\end{lemma}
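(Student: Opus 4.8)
The plan is to bound $\max_{\vr\in\Omega}|I_1^{(m,n)}(\vr)|$ by a union bound over the finitely many points $\vr\in\Omega$, reducing matters to a tail bound for a single fixed $\vr$. For fixed $\vr$, recall from \eqref{eq:pertI1I2} that
\[
I_1^{(m,n)}(\vr) = \innerprod{\vu}{\herm{\mL}(\vv^{(m,n)}(\vr)-\bar\vv^{(m,n)}(\vr))}.
\]
The first step is to condition on the event $\mc E_{1/4}$, on which \eqref{eq:normmLb} gives $\norm[\opnormss]{\mL}\le 2.5$; this is why the statement carries the additive term $\PR{\comp{\mc E}_{1/4}}$. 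The second step is to control the random vector $\vw(\vr)\defeq\vv^{(m,n)}(\vr)-\bar\vv^{(m,n)}(\vr)$: its $3\S$ entries are, up to the normalizations by powers of $1/\kappa$, exactly the quantities $G^{(m',n')}_{(m,n)}(\vr,\vr_k)-\bar G^{(m+m',n+n')}(\vr-\vr_k)$ appearing in Lemma \ref{lem:polybo}, with $m+m'+n+n'\le 3$ since $m+n\le 2$ and $m',n'\le 1$. So Lemma \ref{lem:polybo} gives, for each entry, a subexponential tail of the form $2\exp(-c\min(\alpha^2/c_2^4,\alpha/c_2^2))$ at deviation $c_1 12^{3/2}\alpha/\sqrt{\L}$; a union bound over the $3\S$ entries then shows $\norm[2]{\vw(\vr)}$ is of order $\sqrt{\S}\cdot\alpha/\sqrt{\L}$ with the stated probability, i.e. $\norm[2]{\vw(\vr)}\lesssim \sqrt{\S/\L}\,\log(\S|\Omega|/\delta)$ with high probability.

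The third step handles the randomness in $\vu$: given a realization of $\vw(\vr)$ and conditioned on $\mc E_{1/4}$, the quantity $\innerprod{\vu}{\herm{\mL}\vw(\vr)}$ is a sum of independent bounded ($\pm1$) random variables weighted by the fixed vector $\herm{\mL}\vw(\vr)$, which has $\ell_2$ norm at most $2.5\norm[2]{\vw(\vr)}$. Hoeffding's inequality then yields a sub-Gaussian tail: $\PR{|I_1^{(m,n)}(\vr)|\ge t \mid \vw(\vr),\mc E_{1/4}} \le 2\exp(-t^2/(2\cdot 6.25\norm[2]{\vw(\vr)}^2))$. Combining the bound on $\norm[2]{\vw(\vr)}$ from step two with this sub-Gaussian bound, choosing $t=\epsilon$, and requiring $\L\gtrsim \S\log(|\Omega|/\delta)\cdot(\text{the }\vw\text{-bound})^2/\epsilon^2$ produces the sample-complexity condition; tracking the two logarithmic factors — one $\log^2$ from the subexponential union bound over entries of $\vw$, one $\log$ from the Hoeffding step over $|\Omega|$ — gives exactly $\L\ge \frac{c_5}{\epsilon^2}\S\log^2(12\S|\Omega|/\delta)\log(8|\Omega|/\delta)$. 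The final step is the union bound over $\vr\in\Omega$, which is already absorbed by the $|\Omega|$ factors inside the logarithms, and collecting the failure probabilities into $\delta + \PR{\comp{\mc E}_{1/4}}$.

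I expect the main obstacle to be the careful bookkeeping of the two layers of randomness ($\vx$ through $\vw(\vr)$, and $\vu$) together with the conditioning on $\mc E_{1/4}$ — in particular making sure the subexponential tails from Lemma \ref{lem:polybo} are correctly converted into an $\ell_2$-norm bound on a $3\S$-dimensional vector without losing more than a $\log$ factor, and matching the constants so that the two logarithmic factors combine in the stated way. The conditioning on $\mc E_{1/4}$ is slightly delicate because $\mc E_{1/4}$ depends on $\vx$, hence is not independent of $\vw(\vr)$; one resolves this by noting that on $\mc E_{1/4}$ we have the deterministic bound $\norm[\opnormss]{\mL}\le 2.5$, so the Hoeffding step can be applied pointwise on that event regardless of the dependence.
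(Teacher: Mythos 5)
Your proposal is correct and follows essentially the same route as the paper: a union bound over $\Omega$, the decomposition of each $|I_1^{(m,n)}(\vr)|$ into a Hoeffding step in $\vu$ conditional on $\vx$ combined with an entrywise (Lemma \ref{lem:polybo}, $3\S$ entries with $m+m'+n+n'\le 3$) bound on $\norm[2]{\Delta\vv^{(m,n)}(\vr)}$, and a bound on $\norm[\opnormss]{\mL}$ via $\mc E_{1/4}$. The only cosmetic difference is that the paper folds $\{\norm[\opnormss]{\mL}\ge 2.5\}\subseteq\comp{\mc E}_{1/4}$ directly into the union bound rather than explicitly conditioning on $\mc E_{1/4}$, which sidesteps the dependence subtlety you correctly flagged.
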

\begin{lemma}
Let $\Omega \subset [0,1]^2$ be a finite set of points. 
Suppose that $m+n\leq 2$. Then, for all $\epsilon,\delta > 0$, and for all $\xi>0$ with  
\begin{align}
\xi \leq \frac{\epsilon c_6}{\sqrt{\log\left( \frac{4|\Omega|}{\delta} \right)}},
\label{eq:ubontaune}
\end{align}
where $c_6\leq 1/4$ is a numerical constant, it follows
%we have on the event $\mc E_\xi$ that
\[
\PR{\max_{\vr \in \Omega}  |I^{(m,n)}_2(\vr)| \geq \epsilon \Big| \mc E_\xi } \leq \delta.
\]
\label{lem:uboundI2}
\end{lemma}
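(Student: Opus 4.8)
The plan is to exploit that, once the probing signal $\vx$ is frozen, the quantity $I^{(m,n)}_2(\vr) = \innerprod{\vu}{\herm{(\mL - \bar \mL)} \bar \vv^{(m,n)}(\vr)}$ is a \emph{linear} functional of the random sign vector $\vu$, and $\vu$ is independent of $\vx$. Since the conditioning event $\mc E_\xi = \{\norm[\opnormss]{\mD - \bar \mD} \le \xi\}$ is measurable with respect to $\vx$, the conditional law of $\vu$ given $\vx$ is unaffected by further conditioning on $\mc E_\xi$. So I would first prove, for every fixed realization of $\vx$ lying in $\mc E_\xi$, a tail bound using only the randomness of $\vu$, and then integrate it against the conditional measure $\PR{\cdot \mid \mc E_\xi}$ at the very end.

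For a fixed $\vr \in \Omega$, put $\vw(\vr) \defeq \herm{(\mL - \bar \mL)} \bar \vv^{(m,n)}(\vr)$, so that $I^{(m,n)}_2(\vr) = \innerprod{\vu}{\vw(\vr)}$ is a Rademacher sum with complex coefficients. Splitting into real and imaginary parts and applying Hoeffding's inequality to each part, I obtain a subgaussian bound $\PR{|I^{(m,n)}_2(\vr)| \ge \epsilon \,\big|\, \vx} \le 4\exp(-c\,\epsilon^2 / \norm[2]{\vw(\vr)}^2)$. The reason for going through Hoeffding rather than a deterministic Cauchy--Schwarz estimate is that it trades the hopeless $\sqrt{\S}$ factor for a mild $\sqrt{\log(|\Omega|/\delta)}$, which is exactly what makes a condition of the form $\xi \lesssim \epsilon / \sqrt{\log(|\Omega|/\delta)}$ sufficient.

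Next I would bound $\norm[2]{\vw(\vr)}$ uniformly in $\vr$. On $\mc E_\xi$, inequality \eqref{eq:normLmbL} gives $\norm[\opnormss]{\mL - \bar \mL} \le 2.5\xi$, so $\norm[2]{\vw(\vr)} \le 2.5\,\xi\,\norm[2]{\bar \vv^{(m,n)}(\vr)}$, and it remains to show $\norm[2]{\bar \vv^{(m,n)}(\vr)} \le C_0$ for a numerical constant $C_0$. By construction, $\norm[2]{\bar \vv^{(m,n)}(\vr)}^2$ is a sum over $\vr_j \in \T$ of $\kappa$-normalized squared partial derivatives of the deterministic kernel $\bar G$ evaluated at $\vr - \vr_j$, each of total order at most $3$. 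Each such summand is bounded by a numerical constant, and by the decay estimates on $\bar G$ and its normalized derivatives from \cite{candes_towards_2014} together with the minimum-separation condition \eqref{eq:mindistcond}, only a bounded number of the points $\vr_j$ contribute a non-negligible amount while the remaining contributions are summable; hence $\norm[2]{\bar \vv^{(m,n)}(\vr)} \le C_0$ independently of $\vr$ and $\S$. I expect this uniform summability bound to be the only step with real content, and it is essentially a reformulation of estimates already present in \cite{candes_towards_2014}.

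Finally I would assemble the pieces: for every realization $\vx \in \mc E_\xi$ and every $\vr \in \Omega$, $\PR{|I^{(m,n)}_2(\vr)| \ge \epsilon \,\big|\, \vx} \le 4\exp(-c_7\,\epsilon^2/\xi^2)$ with $c_7$ a numerical constant; a union bound over the $|\Omega|$ points of $\Omega$ and then integration over $\vx \in \mc E_\xi$ give $\PR{\max_{\vr \in \Omega} |I^{(m,n)}_2(\vr)| \ge \epsilon \,\big|\, \mc E_\xi} \le 4|\Omega|\exp(-c_7\,\epsilon^2/\xi^2)$. Setting $c_6 \defeq \sqrt{c_7}$ and insisting $c_6 \le 1/4$ (so that $\xi$ remains in the admissible range $(0,1/4]$ for which \eqref{eq:normLmbL} was derived), the hypothesis \eqref{eq:ubontaune} forces $c_7\,\epsilon^2/\xi^2 \ge \log(4|\Omega|/\delta)$, whence the right-hand side is at most $\delta$, which proves the lemma.
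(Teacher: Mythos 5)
Your proposal is correct and follows essentially the same route as the paper's proof: a union bound over $\Omega$, Hoeffding's inequality applied to the Rademacher sum in $\vu$, and the bound $\norm[\opnormss]{\mL-\bar\mL}\le 2.5\xi$ on $\mc E_\xi$ combined with a uniform numerical bound on $\norm{\bar\vv^{(m,n)}(\vr)}$ (the paper uses $\norm[1]{\cdot}$, you use $\norm[2]{\cdot}$, same substance). Your explicit handling of the conditioning — freezing $\vx\in\mc E_\xi$ and then integrating — is a slightly more careful rendering of what the paper does implicitly.
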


Lemma \ref{lem:uboundI1} and \ref{lem:uboundI2} are proven in Section \ref{sec:proflemuboundl1} and \ref{sec:prooflemuboundl2}, respectively.

We are now ready to complete the proof of Lemma \ref{lem6}. From \eqref{eq:pertI1I2}, we obtain for all $\xi>0$ satisfying~\eqref{eq:ubontaune} 
\begin{align}
\PR{\max_{\vr \in \Omega} \frac{1}{\kappa^{n+m}} \left| Q^{(m,n)}(\vr) - \bar Q^{(m,n)}(\vr)  \right| \geq 2 \epsilon } 
&=
\PR{ \max_{\vr \in \Omega} \left|  I_1^{(m,n)}(\vr) + I_2^{(m,n)}(\vr) \right| \geq 2 \epsilon } \nonumber \\
&\hspace{-4cm}\leq
\PR{ \max_{\vr \in \Omega} \left|  I_1^{(m,n)}(\vr) \right| \geq \epsilon }
+
\PR{  \comp{\mc E}_\xi }
+
\PR{ \max_{\vr \in \Omega} \left| I_2^{(m,n)}(\vr) \right| \geq \epsilon | \mc E_\xi } \label{eq:uppbpri1I2}\\
&\hspace{-4cm}\leq 4 \delta,  \nonumber 
\end{align}
where \eqref{eq:uppbpri1I2} follows from the union bound and the fact that $\PR{A} =  \PR{A \cap \comp{B}} + \PR{A \cap B} \leq \PR{\comp{B}} + \PR{A | B}$ by setting  
$B = \mc E_\xi$ and $A = \left\{\max_{\vr \in \Omega} \left| I_2^{(m,n)}(\vr) \right| \geq \epsilon \right\}$, and the last inequality follows from Lemmas \ref{lem:preptaubound}, \ref{lem:uboundI1} and \ref{lem:uboundI2}, respectively.  In more details, we choose $\xi = \epsilon c_6 \log^{-1/2}\left( \frac{4|\Omega|}{\delta} \right)$. It then follows from Lemma \ref{lem:uboundI2} that the third probability in \eqref{eq:uppbpri1I2} is smaller than $\delta$. With this choice of $\xi$, the condition in Lemma \ref{lem:preptaubound} becomes $L \geq \S \frac{c_4}{\epsilon^2 c_6^2} \log\left(\frac{4 |\Omega|}{\delta} \right) \log \left( \frac{18\S^2}{\delta} \right)$, which is satisfied by choosing $c = \frac{c_4}{c_6^2}$. Moreover, $\xi \leq 1/4$ since $\epsilon\leq 1$ and $c_6\leq 1/4$. 
Thus, Lemma \ref{lem:preptaubound} yields $\PR{  \comp{\mc E}_\xi } \leq \delta$ and $\PR{  \comp{\mc E}_{1/4} } \leq \delta$. Finally, observe that the conditions of Lemma \ref{lem:uboundI1} are satisfied, thus the first probability in \eqref{eq:uppbpri1I2} can be upper-bounded by
\[
\PR{ \max_{\vr \in \Omega} \left|  I_1^{(m,n)}(\vr) \right| \geq \epsilon } \leq \delta +\PR{\comp{ \mc E}_{1/4}} \leq 2 \delta.
\] 
This concludes the proof.

\subsubsection{Proof of Lemma \ref{lem:uboundI1} \label{sec:proflemuboundl1}}
Set $\Delta \vv^{(m,n)} \defeq \vv^{(m,n)} (\vr) - \bar \vv^{(m,n)} (\vr)$  for notational convenience. By the union bound, we have for all $a,b\geq 0$,
\begin{align}
\PR{\max_{\vr \in \Omega}  |I^{(m,n)}_1(\vr)| \geq 2.5 a b} 
\hspace{-3cm}&\hspace{3cm}= \PR{\max_{\vr \in \Omega} \left|\innerprod{\vu}{\herm{\mL} \Delta \vv^{(m,n)} }\right| \geq 2.5 a b} \nonumber \\
&\leq \PR{\bigcup_{\vr \in \Omega } 
\left\{ \left|\innerprod{\vu}{\herm{\mL} \Delta \vv^{(m,n)} }\right| \geq \norm[2]{\herm{\mL} \Delta \vv^{(m,n)}} b \right\} 
\cup \left\{  \norm[2]{\herm{\mL} \Delta \vv^{(m,n)}}\geq  2.5 a \right\}  } \nonumber \\
&\leq \PR{\bigcup_{\vr \in \Omega } 
\left\{ \left|\innerprod{\vu}{\herm{\mL} \Delta \vv^{(m,n)} }\right| \geq \norm[2]{\herm{\mL} \Delta \vv^{(m,n)}} b \right\} 
\cup \left\{  \norm[2]{\Delta \vv^{(m,n)}}\geq a \right\} \cup \left\{ \norm[\opnormss]{\mL} \geq 2.5 \right\}  }  \nonumber  \\
&\leq \PR{\norm[\opnormss]{\mL} \geq 2.5 } 
+\sum_{\vr \in \Omega} \left( \PR{\left|\innerprod{\vu}{\herm{\mL} \Delta \vv^{(m,n)} }\right| \geq \norm[2]{\herm{\mL} \Delta \vv^{(m,n)}} b}
 +\PR{\norm[2]{\Delta \vv^{(m,n)}}\geq  a} \right)  \nonumber \\
&\leq \PR{\comp{ \mc E}_{1/4}} +  |\Omega|  4 e^{- \frac{b^2}{4}}
 + \sum_{\vr \in \Omega}  \PR{\norm[2]{\Delta \vv^{(m,n)}}\geq  a} \label{eq:useHoeffanddf} \\
 &\leq \PR{\comp{ \mc E}_{1/4}} + \frac{\delta}{2} + \sum_{\vr \in \Omega} \PR{\norm[2]{\Delta \vv^{(m,n)}}\geq  a},  \label{eq:inposltdel}
\end{align}
where \eqref{eq:useHoeffanddf} follows from application of Hoeffding's inequality (stated below) and from $\{\norm[\opnormss]{\mL} \geq 2.5\} \subseteq \comp{ \mc E}_{1/4}$ according to \eqref{eq:normmLb}.  For \eqref{eq:inposltdel}, we used $|\Omega|  4 e^{- \frac{b^2}{4}} \leq \frac{\delta}{2}$ ensured by choosing $b = 2 \sqrt{\log(8|\Omega|/\delta)}$. 
\begin{lemma}[Hoeffding's inequality]
Suppose the entries of $\vu \in \mathbb R^{\S}$ are i.i.d.~with $\PR{u_i = -1} = \PR{u_i =1} = 1/2$. Then, for all $t\geq 0$, and for all $\vv \in \complexset^\S$
\[
\PR{ \left|\innerprod{\vu}{\vv}\right|  \geq \norm[2]{\vv}  t } \leq 4 e^{- \frac{t^2}{4}}. 
\]
\label{thm:hoeff}
\end{lemma}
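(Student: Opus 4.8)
The plan is to prove Lemma~\ref{thm:hoeff} by reducing the complex inner product to two real Rademacher sums and applying the classical one-dimensional Hoeffding (sub-Gaussian) bound to each. Since the $u_i$ are real, $\innerprod{\vu}{\vv} = \sum_{i=1}^\S u_i v_i$ (conjugation of the real scalars $u_i$ is immaterial), so $\Re \innerprod{\vu}{\vv} = \sum_i u_i \Re(v_i)$ and $\mathrm{Im} \innerprod{\vu}{\vv} = \sum_i u_i \mathrm{Im}(v_i)$; writing $p_i \defeq \Re(v_i)$, $q_i \defeq \mathrm{Im}(v_i)$ we have $\sum_i p_i^2 + \sum_i q_i^2 = \norm[2]{\vv}^2$, so in particular $\sum_i p_i^2 \le \norm[2]{\vv}^2$ and $\sum_i q_i^2 \le \norm[2]{\vv}^2$.

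First I would record the scalar estimate: for i.i.d.\ Rademacher $u_i$, Hoeffding's lemma gives $\Ex{e^{\lambda u_i}} \le e^{\lambda^2/2}$, hence $\Ex{e^{\lambda \sum_i u_i p_i}} \le e^{\lambda^2 (\sum_i p_i^2)/2}$; a Chernoff bound optimized in $\lambda$ yields $\PR{\,|\sum_i u_i p_i| \ge s\,} \le 2 e^{-s^2/(2\sum_i p_i^2)}$, and identically with $q$ in place of $p$. Second, I would combine the two parts using $|z| \le \sqrt{2}\,\max(|\Re z|, |\mathrm{Im}\, z|)$ for $z \in \complexset$: the event $\{\,|\innerprod{\vu}{\vv}| \ge \norm[2]{\vv}\, t\,\}$ is contained in $\{\,|\sum_i u_i p_i| \ge \norm[2]{\vv}\, t/\sqrt{2}\,\} \cup \{\,|\sum_i u_i q_i| \ge \norm[2]{\vv}\, t/\sqrt{2}\,\}$. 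By the union bound and the scalar estimate, its probability is at most $2 e^{-\norm[2]{\vv}^2 t^2/(4\sum_i p_i^2)} + 2 e^{-\norm[2]{\vv}^2 t^2/(4\sum_i q_i^2)} \le 4 e^{-t^2/4}$, where the last step uses $\sum_i p_i^2 \le \norm[2]{\vv}^2$ and $\sum_i q_i^2 \le \norm[2]{\vv}^2$. This gives exactly the stated bound.

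There is essentially no substantive obstacle here; the only thing to be careful about is bookkeeping of constants — specifically checking that the factor $\sqrt{2}$ lost in passing from $|z|$ to $\max(|\Re z|,|\mathrm{Im}\, z|)$, together with the slack $\sum p_i^2 \le \norm[2]{\vv}^2$, still leaves the clean exponent $-t^2/4$ (it does: $s^2 = \norm[2]{\vv}^2 t^2/2$ divided by $2\sum_i p_i^2$ is at least $t^2/4$). An alternative, slightly sharper route avoids the $\sqrt 2$ by bounding $\sup_{\theta}\Ex{\exp(\lambda\,\Re(e^{-i\theta}\innerprod{\vu}{\vv}))}$ directly via $\Re(e^{-i\theta} v_i) \le |v_i|$ and $\sum_i |v_i|^2 = \norm[2]{\vv}^2$, but the real/imaginary split already yields the constant claimed in the lemma, so I would use that simpler argument.
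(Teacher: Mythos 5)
The paper itself gives no proof of Lemma~\ref{thm:hoeff}; it is stated as a known standard fact (this complex Rademacher form of Hoeffding with the constant $4e^{-t^2/4}$ is the same one used without proof in the line-spectral super-resolution papers of Cand\`es--Fernandez-Granda and Tang et al.\ that this work builds on). Your proof is complete and correct, and it follows the standard argument one would expect: split the complex sum into its real and imaginary parts, bound each real Rademacher sum by a Chernoff bound with the MGF estimate $\Ex{e^{\lambda u_i p_i}}=\cosh(\lambda p_i)\le e^{\lambda^2 p_i^2/2}$, and then combine via the inclusion $\{|z|\ge\|\vv\|_2 t\}\subseteq\{|\Re z|\ge\|\vv\|_2 t/\sqrt2\}\cup\{|\mathrm{Im}\,z|\ge\|\vv\|_2 t/\sqrt2\}$ and a union bound. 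The bookkeeping check you flag is exactly the point, and you carry it out correctly: the $\sqrt2$ loss, the slack $\sum_i p_i^2\le\|\vv\|_2^2$, and the factor $2$ from the two-sided bound combine to give $2e^{-\|\vv\|_2^2 t^2/(4\sum_i p_i^2)}+2e^{-\|\vv\|_2^2 t^2/(4\sum_i q_i^2)}\le 4e^{-t^2/4}$. One cosmetic remark: the paper's inner-product convention conjugates the \emph{second} argument, so strictly $\innerprod{\vu}{\vv}=\sum_i \conj{v}_i u_i$ rather than $\sum_i u_i v_i$; this only flips the sign of the imaginary part, which is immaterial to your estimate, so the argument stands as written.
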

We next upper-bound $\PR{\norm[2]{\Delta \vv^{(m,n)}}\geq  a}$ in \eqref{eq:inposltdel}. For all $\alpha \geq 0$, using that $12^{\frac{n+m+1}{2}} \leq 12^{\frac{3}{2}}$, we have
\begin{align}
\PR{\norm[2]{\Delta \vv^{(m,n)} } \geq \frac{\sqrt{3\S}}{\sqrt{L}} 12^{\frac{3}{2}} c_1 \alpha  }
&\leq 
\PR{\norm[2]{\Delta \vv^{(m,n)} } \geq \frac{\sqrt{3\S}}{\sqrt{L}} 12^{\frac{n+m+1}{2}} c_1 \alpha  } \nonumber \\
&=
\PR{\norm[2]{\Delta \vv^{(m,n)}}^2 \geq \frac{3\S}{L} 12^{n+m+1} c_1^2 \alpha^2  } \nonumber \\
&\leq \sum_{k=1}^{3\S} \PR{|[ \Delta \vv^{(m,n)} ]_k|^2\geq  \frac{1}{L} 12^{n+m+1} c_1^2 \alpha^2 } \label{eq:ubagao} \\
&= \sum_{k=1}^{3\S} \PR{|[\Delta \vv^{(m,n)} ]_k| \geq  \frac{1}{\sqrt{L}} 12^{\frac{n+m+1}{2}} c_1 \alpha}  \nonumber \\
&\leq 3\S \cdot 2 \exp\left( - c \min\left( \frac{ \alpha^2}{c_2^4   },  \frac{ \alpha}{ c_2^2   }\right) \right) \label{eq:lobDvgead} \\
&\leq \frac{\delta}{2 |\Omega|} ,\label{eq:choosealphac2sq}
\end{align}
where \eqref{eq:ubagao} follows from the union bound, \eqref{eq:lobDvgead} follows from Lemma \ref{lem:polybo}. Finally, \eqref{eq:choosealphac2sq} follows by choosing $\alpha = \frac{c_2^2}{ c} \log\left( \frac{12 \S |\Omega| }{\delta} \right)$ and using the fact that for $\alpha \geq c_2^2$ (since $c\ge 1$)
%\MS{Please check this is this $c$? I got a bit confused.} 
%\RH{$c$ is ok here}
we have $\min\left( \frac{ \alpha^2}{c_2^4   },  \frac{ \alpha}{ c_2^2   }\right) = \frac{ \alpha}{ c_2^2   }$. We have established that $\PR{\norm[2]{\Delta \vv^{(m,n)} } \geq a  } \leq \frac{\delta}{2|\Omega|}$ with $a = \frac{\sqrt{3\S}}{\sqrt{L}} 12^{\frac{3}{2}} c_1 \frac{c_2^2}{ c} \log\left( \frac{12\S |\Omega| }{\delta} \right)$. 

Substituting \eqref{eq:choosealphac2sq} into \eqref{eq:inposltdel} we get
\[
%\PR{\max_{\vr \in \Omega}  |I^{(m,n)}_1(\vr)| \geq 2.5 \frac{\sqrt{3s}}{\sqrt{L}} c_3^{n+m+1} c_1 \frac{c_2^2}{ \sqrt{c}} \log\left( \frac{12s \Omega }{\delta} \right)   2 \sqrt{\log(8|\Omega|/\delta)}} 
\PR{\max_{\vr \in \Omega}  |I^{(m,n)}_1(\vr)| \geq \sqrt{c_5} \frac{\sqrt{\S}}{\sqrt{L}}  \log\left( \frac{12\S |\Omega| }{\delta} \right) \sqrt{\log\left( \frac{8|\Omega|}{\delta} \right)} 
} \leq \delta + \PR{\comp{ \mc E}_{1/4}},
\]
where $c_5 = (5 \sqrt{3} \,12^{\frac{3}{2}} c_1 \frac{c_2^2}{\sqrt{c}})^2$ is a numerical constant. This concludes the proof.
\subsubsection{\label{sec:prooflemuboundl2}Proof of Lemma \ref{lem:uboundI2}}
%\begin{proof}
%For convenience, set $\vq \defeq \herm{(\mL - \bar \mL)} \bar \vv_{(m,n)}(\vr) $. 
%By a union bound %$b = \epsilon/(c_5 \tau)$, 
%\begin{align}
%&\PR{\max_{\vr \in \Omega}  |I^{(m,n)}_2(\vr)| \geq \epsilon} 
%%
%\leq 
%\PR{  \bigcup_{\vr \in \Omega}  \left\{ \left|\innerprod{ \vu }{\herm{(\mL - \bar \mL)} \bar \vv_{(m,n)}(\vr) }\right| \geq \epsilon  \right\}  
%} \nonumber \\
%%
%&\leq \PR{ \bigcup_{\vr \in \Omega} \left\{  \left|\innerprod{\vu}{\herm{(\mL - \bar \mL)} \bar \vv_{(m,n)}(\vr) }\right| \geq  \norm[2]{\herm{(\mL - \bar \mL)} \bar \vv_{(m,n)}(\vr) }  \frac{\epsilon}{c_5\tau} \right\} 
%\cup
%\left\{
%\norm[2]{\herm{(\mL - \bar \mL)} \bar \vv_{(m,n)}(\vr)} \geq  c_5\tau
%\right\}
%}
%\label{eq:useeq:ubltlvr}\\
%%
%&\leq 
%\PR{\bigcup_{\vr \in \Omega}
%\norm[2]{\herm{(\mL - \bar \mL)} \bar \vv_{(m,n)}(\vr)} \geq  c_5\tau
%}
%+
%\sum_{\vr \in \Omega} \PR{ \left|\innerprod{\vu}{\herm{(\mL - \bar \mL)} \bar \vv_{(m,n)}(\vr) }\right| \geq  \norm[2]{\herm{(\mL - \bar \mL)} \bar \vv_{(m,n)}(\vr) }  \frac{\epsilon}{c_5\tau}  
%}  \\ 
%%
%&\leq  \PR{\norm[\opnormss]{\mL - \bar \mL}  \geq \tau }   +   |\Omega|  4 e^{- \frac{(\epsilon/(c_5\tau))^2}{4}}
%\label{eq:useHoeffanddf2} \\
%&\leq \delta \label{eq:iqledeluc}
%% \leq \delta 
%\end{align}
%where \eqref{eq:useeq:ubltlvr} follows from \eqref{eq:ubltlvr} below,  \eqref{eq:useHoeffanddf2} follows by Hoeffding's inequality (cf.~Theorem \ref{thm:hoeff}), and to obtain \eqref{eq:iqledeluc} we used the assumption  \eqref{eq:ubontaune} ($c_6=1/(2c_5)$). 
%
By the union bound %$b = \epsilon/(c_5 \tau)$, 
\begin{align}
\PR{\max_{\vr \in \Omega}  |I^{(m,n)}_2(\vr)| \geq \epsilon \Big| \mc E_\xi } 
&\leq \sum_{\vr \in \Omega}
\PR{ \left|\innerprod{ \vu }{\herm{(\mL - \bar \mL)} \bar \vv^{(m,n)}(\vr) }\right| \geq \epsilon  \Big| \mc E_\xi } \nonumber \\
&\leq \sum_{\vr \in \Omega}  \PR{\left|\innerprod{\vu}{\herm{(\mL - \bar \mL)} \bar \vv^{(m,n)}(\vr) }\right| \geq  \norm[2]{\herm{(\mL - \bar \mL)} \bar \vv^{(m,n)}(\vr) }  \frac{\epsilon}{c_5\xi}  }
\label{eq:useeq:ubltlvr}\\
&\leq |\Omega|  4 e^{- \frac{(\epsilon/(c_5\xi))^2}{4}}
\label{eq:useHoeffanddf2} \\
&\leq \delta \label{eq:iqledeluc}
% \leq \delta 
\end{align}
where \eqref{eq:useeq:ubltlvr} follows from \eqref{eq:ubltlvr} below,  \eqref{eq:useHoeffanddf2} follows by Hoeffding's inequality (cf.~Lemma \ref{thm:hoeff}), and to obtain \eqref{eq:iqledeluc} we used the assumption  \eqref{eq:ubontaune} with $c_6=1/(2c_5)$. 

To complete the proof, note that by \eqref{eq:normLmbL} we have $\norm[\opnormss]{\mL - \bar \mL} \leq 2.5 \xi$ on $\mc E_\xi$. Thus, conditioned on $\mc E_\xi$,
\begin{align}
\norm[2]{\herm{(\mL - \bar \mL)} \bar \vv^{(m,n)}(\vr) } 
\leq \norm[\opnormss]{\mL - \bar \mL} \norm[2]{ \bar \vv^{(m,n)}(\vr) } \leq 2.5 \xi \norm[1]{ \bar \vv^{(m,n)}(\vr) }
\leq c_5 \xi,
\label{eq:ubltlvr}
\end{align}
where we used $\norm[2]{\cdot} \leq \norm[1]{\cdot}$, and the last inequality follows from the fact that, for all $\vr$, 
\begin{align*}
\norm[1]{ \bar \vv^{(m,n)}(\vr) }
&= \frac{1}{\kappa^{m+n}} \sum_{k=1}^\S \left( 
\left|\bar G^{(m,n)}(\vr-\vr_k)\right|
+ \left| \frac{1}{\kappa} \bar G^{(m+1,n)}(\vr - \vr_k) \right|
+ \left|\frac{1}{\kappa} \bar G^{(m,n+1)}(\vr - \vr_k) \right|  \right)
\leq \frac{c_5}{2.5}.
\end{align*}
Here, $c_5$ is a numerical constant, and we used \cite[C.12, Table 6]{candes_towards_2014} and $\N/ \kappa \leq 0.5514$. 
%

%%%%%%%%%%%%%%%

\subsection{Step 3b: $Q(\vr)$ and $\bar Q(\vr)$ are close for all $\vr$}

We next use an $\epsilon$-net argument together with Lemma \ref{lem:diffqmbarqmongrid} to establish that $Q^{(m,n)}(\vr)$ is close to $\bar Q^{(m,n)}(\vr)$ with high probability uniformly for all $\vr \in [0,1]^2$. 
\begin{lemma}
Let $\epsilon, \delta > 0$. If
\begin{align}
L \geq \S \frac{c}{\epsilon^2} \log^3\left(\frac{c' L^6}{\delta \epsilon^2} \right)
\label{eq:condLleex}
\end{align}
then, with probability at least $1-\delta$, 
%Then, for all $\vr \in [0,1]^2$, and for all $(m,n)$ with $m+n \leq 2$ we have %, on the event $\mc E_\xi$ with $\tau \in [0,1/4]$, %we have, for $m+n\leq 2$,
\begin{align}
\max_{\vr \in [0,1]^2, (m,n)\colon m+n \leq 2}
\frac{1}{\kappa^{n+m}} \left| Q^{(m,n)}(\vr) - \bar Q^{(m,n)}(\vr) \right| 
\leq \epsilon.
\label{eq:ledffbaere}
\end{align}
\label{lem:lemdffqbarqevry}
\end{lemma}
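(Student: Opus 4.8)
The plan is a routine $\epsilon$-net argument over $[0,1]^2$, bootstrapping from Lemma~\ref{lem:diffqmbarqmongrid} (which controls the deviation on any prescribed finite grid) and using Bernstein's polynomial inequality to control the variation of the relevant polynomials between grid points. We may assume $\epsilon\le 1$. Throughout write $h^{(m,n)}(\vr) \defeq \frac{1}{\kappa^{m+n}}\big(Q^{(m,n)}(\vr) - \bar Q^{(m,n)}(\vr)\big)$ and let $M \defeq \max_{\vr\in[0,1]^2}\max_{m+n\le 2}|h^{(m,n)}(\vr)|$, so that the assertion~\eqref{eq:ledffbaere} is exactly $M\le\epsilon$. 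The first step is to observe that $Q$ and $\bar Q$, and hence every $h^{(m,n)}$, are bivariate trigonometric polynomials with frequencies in $\{-\N,\dots,\N\}^2$ (cf.~\eqref{eq:dualpolyinprop}, \eqref{eq:Qtrigonpoly}). Applying Bernstein's polynomial inequality (Proposition~\ref{prop:bernstein}) one variable at a time gives $\sup_\vr|\partial_\tau h^{(m,n)}(\vr)| \le 2\pi\N\sup_\vr|h^{(m,n)}(\vr)|$ and likewise for $\partial_\nu$. Since $\partial_\tau h^{(m,n)} = \kappa\,h^{(m+1,n)}$, $\partial_\nu h^{(m,n)} = \kappa\,h^{(m,n+1)}$, and $\kappa\le 2\pi\N$, this yields, for every $(m,n)$ with $m+n\le 2$,
\[
\sup_{\vr}\norm[2]{\nabla h^{(m,n)}(\vr)} \le 2\sqrt{2}\,\pi\N\,M,
\]
where for $m+n\le 1$ this uses that $h^{(m+1,n)}$ and $h^{(m,n+1)}$ are themselves members of the family $\{h^{(m',n')}:m'+n'\le 2\}$, and for $m+n=2$ it uses the Bernstein bound directly (the order-three derivatives are not controlled by $M$).

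Next I would fix the grid $\Omega \defeq \{(i/m_0,\,j/m_0): 0\le i,j< m_0\}$ with $m_0 \defeq \lceil 4\pi\N\rceil$, so that every $\vr\in[0,1]^2$ has a nearest point $\vr_g\in\Omega$ with $\norm[\infty]{\vr-\vr_g}\le 1/(2m_0)$, and $|\Omega| = m_0^2 \le c_0\L^2$ for a numerical constant $c_0$ (recall $\N=(\L-1)/2$). For any $\vr$ and any $m+n\le 2$,
\[
|h^{(m,n)}(\vr)| \le |h^{(m,n)}(\vr_g)| + \norm[2]{\vr-\vr_g}\sup_{\vr'}\norm[2]{\nabla h^{(m,n)}(\vr')} \le \max_{\vr'\in\Omega}|h^{(m,n)}(\vr')| + \frac{2\pi\N}{m_0}\,M,
\]
and since $2\pi\N/m_0\le 1/2$, taking the supremum over $\vr$ and the maximum over $(m,n)$ gives the self-referential bound $M \le \max_{\vr'\in\Omega}\max_{m+n\le 2}|h^{(m,n)}(\vr')| + \tfrac12 M$, i.e.\ $M \le 2\max_{\vr'\in\Omega}\max_{m+n\le 2}|h^{(m,n)}(\vr')|$.

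Then I would invoke Lemma~\ref{lem:diffqmbarqmongrid} on this grid. There are six pairs $(m,n)$ with $m+n\le 2$; applying the lemma to each of them with $\epsilon/2$ in place of $\epsilon$ and $\delta/24$ in place of $\delta$, and taking a union bound, one gets that with probability at least $1-6\cdot 4\cdot(\delta/24)=1-\delta$ one has $\max_{\vr'\in\Omega}|h^{(m,n)}(\vr')|\le\epsilon/2$ for all $m+n\le 2$ simultaneously; together with the previous display this forces $M\le\epsilon$. It only remains to check that~\eqref{eq:condLleex} implies the hypothesis of Lemma~\ref{lem:diffqmbarqmongrid} after these substitutions. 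Since $|\Omega|\le c_0\L^2$, $\S\le\L$ and $\epsilon\le 1$, every quantity appearing inside a logarithm in that hypothesis is at most $c_1\L^3/\delta\le c_1\L^6/(\delta\epsilon^2)$; hence both products $\log^2(\cdot)\log(\cdot)$ and $\log(\cdot)\log(\cdot)$ are at most $\log^3\!\big(c_1\L^6/(\delta\epsilon^2)\big)$ (the logarithm exceeds $1$ because $\L\ge 1024$), so~\eqref{eq:condLleex} with suitable numerical constants $c,c'$ is more than sufficient.

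The main obstacle is the interplay between grid resolution and the logarithmic factors: the grid must be fine enough (spacing $\sim 1/\N$) that the ``variation between grid points'' term is absorbed into $\tfrac12 M$, yet coarse enough that $|\Omega|$ stays polynomial in $\L$ so the $\log|\Omega|$ contributions from Lemma~\ref{lem:diffqmbarqmongrid} remain $O\!\big(\log(\L/(\delta\epsilon))\big)$; the choice $m_0=\lceil 4\pi\N\rceil$ does both. The handling of the second-order derivatives also requires a little care, since their first derivatives are order-three trigonometric polynomials that lie outside the controlled family and must be bounded by Bernstein's inequality directly; apart from that, everything past the Bernstein/interpolation step is union-bounding and loose estimation of logarithms.
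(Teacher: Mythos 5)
Your proof is correct, and it takes a genuinely different (and in my view cleaner) route than the paper's. The paper also does an $\epsilon$-net argument with Bernstein, but it uses a much finer, $\epsilon$-dependent grid with spacing $\sim \epsilon/(\tilde c L^{5/2})$ and, crucially, needs a separate high-probability a priori bound $\frac{1}{\kappa^{m+n}}\sup_{\vr}|Q^{(m,n)}(\vr)| \le \tfrac{\tilde c}{2}L^{3/2}$ on the random polynomial itself (this is the content of its Step 3b, which in turn requires a concentration inequality for $\norm[2]{\vx}^2$, the bound $\norm[\opnormss]{\mL}\leq 2.5$ on $\mc E_{1/4}$, and an operator-norm bound on $\mV^{(m,n)}_{(m',n')}$). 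Your bootstrap replaces all of that: by bounding $\nabla h^{(m,n)}$ in terms of the target quantity $M$ itself (using Bernstein and the identity $\partial_\tau h^{(m,n)}=\kappa h^{(m+1,n)}$ for $m+n\le 1$, and raw Bernstein for $m+n=2$), the between-grid-points variation is $\le\tfrac12 M$ on a coarse, $\epsilon$-independent grid of spacing $\sim 1/\N$, and absorbing the $\tfrac12 M$ gives $M\le 2\max_{\Omega}|h|$ deterministically for every realization. This removes the need for any a priori polynomial-magnitude bound, shrinks $|\Omega|$ from $\sim L^5/\epsilon^2$ to $\sim L^2$ (irrelevant, since it only enters through logs, but tidier), and keeps the rest to a union bound plus a routine check that \eqref{eq:condLleex} dominates the hypothesis of Lemma~\ref{lem:diffqmbarqmongrid} after replacing $\epsilon\to\epsilon/2$, $\delta\to\delta/24$. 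Two small points worth noting: your ``we may assume $\epsilon\le 1$'' is needed because Lemma~\ref{lem:diffqmbarqmongrid} is itself stated only for $\epsilon\le 1$ (the paper's proof of this lemma has the same implicit restriction, via $\epsilon/3\le 1$); and the bound $\kappa\le 2\pi\N$ that you use in the $m+n\le 1$ case does hold, since $\kappa^2=\tfrac{\pi^2}{3}(\N^2+4\N)<4\pi^2\N^2$ for all $\N\ge 1$.
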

%%%%%%%%%
\begin{proof}
We start by choosing a set of points $\Omega$ (i.e., the $\epsilon$-net) that is sufficiently dense in the $\infty$-norm. Specifically, we choose the points in $\Omega$ on a rectangular grid such that %for all $\vr=\transp{[\x,\y]} \in [0,1]^2$
\begin{align}
\max_{\vr \in [0,1]^2} \min_{\vr_g \in \Omega} \infdist{ \vr - \vr_g} \leq \frac{\epsilon}{3 \tilde c L^{5/2}}.
\label{eq:gridmaxdist}
\end{align}
%For \eqref{eq:gridmaxdist} to hold, the 
The cardinality of the set $\Omega$ is 
\begin{align}
|\Omega| = \left(\frac{3\tilde c L^{5/2}}{\epsilon}\right)^2 = c' L^5/\epsilon^2. 
\end{align}
%many grid points are sufficient for $\Omega$ to satisfy \eqref{eq:gridmaxdist}. 
First, we use Lemma \ref{lem:diffqmbarqmongrid} to show that %$\frac{1}{\kappa^{m+n}}
$\left| Q^{(m,n)}(\vr_g) - \bar Q^{(m,n)}(\vr_g) \right|$ is small for all points $\vr_g \in \Omega$. %, and then we show that this result continues to hold for all $\vr \in [0,1]^2$ uniformly. 
Note that the condition of Lemma \ref{lem:diffqmbarqmongrid} is satisfied by assumption \eqref{eq:condLleex}. Using the union bound over all $6$ pairs $(m,n)$ obeying $m+n\leq 2$, it now follows from Lemma \ref{lem:diffqmbarqmongrid}, that
\begin{align}
\left\{
\max_{\vr_g \in \Omega, m+n\leq 2}  \frac{1}{\kappa^{m+n}} \left| Q^{(m,n)}(\vr_g) - \bar Q^{(m,n)}(\vr_g) \right| 
\leq \frac{\epsilon}{3}
\right\} %\cap \mathcal E_{1/4}
\label{eq:QrmgbarQrg}
\end{align}
holds with probability at least $1-  6\delta' = 1- \frac{\delta}{2}$. Here, $\delta'$ is the original $\delta$ in Lemma \ref{lem:diffqmbarqmongrid}. 
Next, we will prove that this result continues to hold uniformly for all $\vr \in [0,1]^2$. 
To this end, we will also need that the event
%
%In order to show that this result continues to hold for all $\vr \in [0,1]^2$, we will also need that the event
\begin{align}
\left\{
\max_{\vr \in [0,1]^2, m+n\leq 2}\frac{1}{\kappa^{m+n}} \left | Q^{(m,n)}(\vr) \right| \leq  \frac{\tilde c}{2} L^{3/2}  
\right\}
\label{eq:QrmQrg}
\end{align}
holds with probability at least $1-\frac{\delta}{2}$. This is shown in Section \ref{sec:techres1} below. 
By the union bound, the events in \eqref{eq:QrmgbarQrg} and \eqref{eq:QrmQrg} hold simultaneously with probability at least $1-\delta$. The proof is concluded by noting that   \eqref{eq:QrmgbarQrg} and \eqref{eq:QrmQrg} imply \eqref{eq:ledffbaere} (see Section \ref{sec:techres2}). 

%%%%%
\subsubsection{\label{sec:techres1} Proof of the fact that \eqref{eq:QrmQrg} holds with probability at least $1-\frac{\delta}{2}$:} 
In order to show that \eqref{eq:QrmQrg} holds with probability at least $1-\frac{\delta}{2}$, we first upper-bound $|Q^{(m,n)}(\vr)|$. By \eqref{eq:Qmninnprodform},
\begin{align}
\frac{1}{\kappa^{m+n}} \left | Q^{(m,n)}(\vr) \right|   
&= \left| \innerprod{ \mL \vu }{ \vv^{(m,n)}(\vr)}  \right| \nonumber \\
&\leq  \norm[\opnormss]{{\mL}} \norm[2]{ \vu }  \norm[2]{ \vv^{(m,n)} (\vr)} \nonumber \\
&\leq \norm[\opnormss]{{\mL}}  \sqrt{\S}  \norm[2]{ \vv^{(m,n)} (\vr)} \nonumber \\
&\leq \norm[\opnormss]{{\mL}} \sqrt{\S}   \sqrt{3\S}\norm[\infty]{ \vv^{(m,n)} (\vr)} \nonumber \\
&= \norm[\opnormss]{{\mL}} \sqrt{3} \, \S \max_{j, (m', n') \in \{ (0,0), (1,0), (0,1) \}} 
\frac{1}{\kappa^{m+m'+n+n'}} \left | G^{(m,n)}_{(m',n')}(\vr,\vr_j) \right|,
\label{eq:ubqmnarr}
\end{align}
%where we used \eqref{eq:normmLb} to conclude that $\norm[\opnormss]{{\mL}} \leq 2.5$ 
%(on the event $\mathcal E_{\tau},\tau=1/4$) 
%(on the event $\mathcal E_{1/4}$)  
where we used $\norm[2]{\vu} = \sqrt{\S}$, since the entries of $\vu$ are $\pm 1$. 
Next, note that, for all $\vr$ and all $\vr_j$, we have, by~\eqref{eq:gmnqform} 
\begin{align}
\frac{1}{\kappa^{m+m'+n+n'}} \left | G^{(m,n)}_{(m',n')}(\vr,\vr_j) \right| 
&=  
\frac{1}{\kappa^{m+m'+n+n'}} \herm{\vx} \mV^{(m,n)}_{(m',n')}(\vr,\vr_j) \vx 
\leq 
\frac{1}{\kappa^{m+m'+n+n'}} \norm[2]{\vx}^2 \norm[\opnormss]{ \mV^{(m,n)}_{(m',n')}(\vr,\vr_j) }  
%\leq \frac{1}{\kappa^{m+n}} \norm[2]{\va}^2 \norm[F]{\mV} 
\nonumber \\
&\leq c_1\frac{(2\pi \N)^{m + m' + n + n'} }{\kappa^{m + m'+n+n'}}  \sqrt{\L}  \norm[2]{\vx}^2 
\leq c_1 12^{\frac{m + m'+n+n'}{2}} \sqrt{\L} \norm[2]{\vx}^2 \nonumber \\%\label{eq:ulebvfop} \\
&\leq
c_1 12^{\frac{3}{2}} \sqrt{\L} \norm[2]{\vx}^2,
\label{eq:fbongmnra}
\end{align}
where we used Lemma \ref{lem:boundonVFnorm} to conclude $\norm[\opnormss]{ \mV^{(m,n)}_{(m',n')}(\vr,\vr_j) } \leq \norm[F]{ \mV^{(m,n)}_{(m',n')}(\vr,\vr_j) } \leq c_1 (2\pi \N)^{m + m'+ n + n'} \sqrt{L}$ and \eqref{eq:fbongmnra} follows from $m+m'+n+n'\leq 3$ (recall that $m+n \leq 2$). 
Substituting \eqref{eq:fbongmnra} into \eqref{eq:ubqmnarr} and using that $\S \leq \L$  (by assumption \eqref{eq:condLleex}) yields 
\[
\frac{1}{\kappa^{m+n}} \left | Q^{(m,n)}(\vr) \right|  
\leq 
\sqrt{3} \, 12^{\frac{3}{2}}  c_1  L^{3/2}
\norm[\opnormss]{{\mL}}  \norm[2]{\vx}^2. 
\]
It follows that (with $\frac{\tilde c}{2} = 2.5 \cdot 3 \cdot \sqrt{3} \, 12^{\frac{3}{2}}  c_1$)
\begin{align}
\PR{\max_{\vr \in [0,1]^2, m+n\leq 2}\frac{1}{\kappa^{m+n}} \left | Q^{(m,n)}(\vr) \right| 
\geq \frac{\tilde c}{2} L^{3/2}
} 
&\leq
\PR{\norm[\opnormss]{{\mL}}  \norm[2]{\vx}^2 \geq 2.5 \cdot 3} \nonumber \\
&\leq 
\PR{\norm[\opnormss]{{\mL}} \geq 2.5 }
+
\PR{ \norm[2]{\vx}^2 \geq 3} \label{eq:ublapr} \\
&\leq \frac{\delta}{2} \label{eq:funbmaxqrgl}
\end{align}
as desired. 
Here, \eqref{eq:ublapr} follows from the union bound and \eqref{eq:funbmaxqrgl} follows from 
$
\PR{\norm[\opnormss]{{\mL}} \geq 2.5} \leq \PR{\comp{\mathcal E}_{1/4} } \leq \frac{\delta}{4}
$ (by \eqref{eq:normmLb} and application of Lemma \ref{lem:preptaubound}; note that the condition of Lemma \ref{lem:preptaubound} is satisfied by \eqref{eq:condLleex}) and $\PR{ \norm[2]{\vx}^2 \geq 3} \leq \frac{\delta}{4}$, shown below. 
Using that $4\log(4/\delta) \leq \L$ (by \eqref{eq:condLleex}), we obtain
\begin{align}
\PR{\norm[2]{\vx}^2 \geq 3   } 
&\leq \PR{\norm[2]{\vx}^2 \geq 2\left(1 + \frac{2 \log(4/\delta) }{L}  \right) } \nonumber \\
&\leq \PR{\norm[2]{\vx} \geq \left(1 + \frac{\sqrt{2 \log(4/\delta)}}{\sqrt{L}}  \right) }
\leq e^{- \frac{2 \log(4/\delta) }{2}} = \frac{\delta}{4},
\end{align}
where we used $\sqrt{2(1+\beta^2)}\geq (1+\beta)$, for all $\beta$, and a standard concentration inequality for the norm of a Gaussian random vector, e.g., \cite[Eq. 1.6]{ledoux_probability_1991}. 
This concludes the proof of \eqref{eq:QrmQrg} holding with probability at least $1-\frac{\delta}{2}$. 

%%%%%
\subsubsection{\label{sec:techres2} Proof of the fact that \eqref{eq:QrmgbarQrg} and \eqref{eq:QrmQrg} imply \eqref{eq:ledffbaere}:}
Consider a point $\vr \in [0,1]^2$ and let $\vr_g$ be the point in $\Omega$ closest to $\vr$ in $\infty$-distance. By the triangle inequality,
\begin{align}
&\frac{1}{\kappa^{n+m}} \left| Q^{(m,n)}(\vr) - \bar Q^{(m,n)}(\vr) \right| 
\leq \nonumber \\
&\hspace{0.5cm}\frac{1}{\kappa^{n+m}} \left[ 
\left| Q^{(m,n)}(\vr) - Q^{(m,n)}(\vr_g) \right| 
+ \left| Q^{(m,n)}(\vr_g) - \bar Q^{(m,n)}(\vr_g) \right| 
+ \left| \bar Q^{(m,n)}(\vr_g) - \bar Q^{(m,n)}(\vr) \right|
\right].
\label{eq:Qmqgrid}
\end{align}
We next upper-bound the terms in \eqref{eq:Qmqgrid} separately. With a slight abuse of notation, we write $Q^{(m,n)}(\x,\y) = Q^{(m,n)}(\transp{[\x,\y]}) \allowbreak = Q^{(m,n)}(\vr)$. The first absolute value in \eqref{eq:Qmqgrid} can be upper-bounded according to
\begin{align}
\left| Q^{(m,n)}(\vr) - Q^{(m,n)}(\vr_g) \right| 
&= \left| Q^{(m,n)}(\x,\y) - Q^{(m,n)}(\x,\y_g) + Q^{(m,n)}(\x,\y_g)    - Q^{(m,n)}(\x_g,\y_g) \right| \nonumber \\
&\leq \left| Q^{(m,n)}(\x,\y) - Q^{(m,n)}(\x,\y_g)\right| + \left| Q^{(m,n)}(\x,\y_g)    - Q^{(m,n)}(\x_g,\y_g) \right| \nonumber \\
&\leq |\y - \y_g| \sup_{z} \left|Q^{(m,n+1)}(\x,z)\right|
+ |\x - \x_g| \sup_{z}  \left|Q^{(m+1,n)}(z,\y_g)\right|  \nonumber \\
&\leq  |\y- \y_g| 2 \pi \N \sup_{z}  \left|Q^{(m,n)}(\x,z)\right|
+ |\x - \x_g| 2 \pi \N \sup_{z}  \left|Q^{(m,n)}(z,\y_g)\right|, 
\label{eq:ubqrmqrg}
\end{align}
where \eqref{eq:ubqrmqrg} follows from Bernstein's polynomial inequality, stated below (note that $Q^{(m,n)}(\x,\y)$ is a trigonometric polynomial of degree $\N$ in both $\x$ and $\y$). 
%
%
%\begin{proposition}[Bernstein's polynomial inequality]
%Let $p_N$ be a polynomial of degree $N$ with complex coefficients. Then 
%\[
%\sup_{|z| \leq 1} \left| \frac{d}{dz} p(z)  \right|  \leq N \sup_{|z| \leq 1} p(z).
%\]
%\end{proposition}
%To upper bound the terms in \eqref{eq:Qmqgrid}, first note that $Q^{(m,n)}(\vr)$ is a polynomial of degree $\N$ in $\omega^{x}$ and $\omega^{y}$ ($\omega = e^{2\pi j}$). Thus (we write $Q^{(m,n)}(\omega^{x},\omega^{y}) = Q^{(m,n)}(\vr)$ with a slight abuse of notation)
%\begin{align}
%\left| Q^{(m,n)}(\vr) - Q^{(m,n)}(\vr_g) \right| 
%%
%&= \left| Q(\omega^x,\omega^y) -  Q(\omega^x,\omega^{y_r})  +  Q(\omega^{x},\omega^{y_r}) - Q(\omega^{x_r},\omega^{y_r}) \right| \nonumber \\
%&= \left| Q(\omega^x,\omega^y) -  Q(\omega^x,\omega^{y_r}) \right|  + \left| Q(\omega^{x},\omega^{y_r}) - Q(\omega^{x_r},\omega^{y_r}) \right| \nonumber \\
%&\leq |\omega^y- \omega^{y_r}| \sup_{z,|z|=1} \frac{d}{dz} Q(\omega^x,z) + |\omega^\x - \omega^{x_r}| \sup_{z, |z|=1}  \frac{d}{dz}  Q(z,\omega^{y_r}) \nonumber \\
%&\leq 4\pi |\y- \y_r|  \N \sup_{z,|z|\leq 1}  Q(\omega^x,z) + 4\pi |\x - x_r| \N \sup_{z, |z|\leq1} Q(z,\omega^{y_r}) 
%\label{eq:ubqrmqrg}
%\end{align}
%where the last inequality follows by application of the following inequality.
%%
%%
%\begin{proposition}[Bernstein's polynomial inequality]
%Let $p_N$ be a polynomial of degree $N$ with complex coefficients. Then 
%\[
%\sup_{|z| \leq 1} \left| \frac{d}{dz} p(z)  \right|  \leq N \sup_{|z| \leq 1} p(z).
%\]
%\end{proposition}
%
%
%%%%%%%%
\begin{proposition}[Bernstein's polynomial inequality {\cite[Cor.~8]{harris_bernstein_1996}% Bernstein's polynomial inequalities and functional analysis - Lawrence A. Harris
}]
Let $p(\theta)$ be a trigonometric polynomial of degree $\N$ with complex coefficients $p_k$, i.e., $p(\theta) = \sum_{k=-\N}^{\N}  p_k e^{i2\pi \theta k}$. Then 
\[
\sup_{\theta} \left| \frac{d}{d\theta} p(\theta)  \right|  \leq 2 \pi \N \sup_{\theta} |p(\theta)|.
\]
\label{prop:bernstein}
\end{proposition}
%%%%%%%%
%
%
Substituting \eqref{eq:QrmQrg} into \eqref{eq:ubqrmqrg} yields
\begin{align}
\frac{1}{\kappa^{m+n}} \left| Q^{(m,n)}(\vr) - Q^{(m,n)}(\vr_g) \right| 
\leq  \frac{\tilde c}{2} L^{5/2} 
( |\x- \x_g| + |\y- \y_g|)
\leq
\tilde c  L^{5/2}
\infdist{\vr - \vr_g}%\max(|\x-\x_r| , |\y-\y_r|) 
\leq \frac{\epsilon}{3}, 
\label{eq:diffQrQrg}
\end{align}
where the last inequality follows from \eqref{eq:gridmaxdist}. 

We next upper-bound the third absolute value in \eqref{eq:Qmqgrid}. 
Using steps analogous to those leading to \eqref{eq:diffQrQrg}, %and using $\bar Q(\vr)\leq 1$, for all $\vr$ (cf.~\cite{candes_towards_2014}), it follows that 
%Similarly, 
we obtain  
\begin{align}
\frac{1}{\kappa^{m+n}} \left| \bar Q^{(m,n)}(\vr_g) - \bar Q^{(m,n)}(\vr) \right| 
%\leq \tilde 2\pi \N  12^{\frac{3}{2}}  \max(|x-x_r| , |y-y_r|) \tilde c \sqrt{L}
\leq \frac{\epsilon}{3}. 
\label{eq:barQrmbarQrg}
\end{align}
%
%Using a union bound,
%\[
%\PR{\frac{1}{\kappa^{m+n}} \left| Q^{(m,n)}(\vr_g) - \bar Q^{(m,n)}(\vr_g) \right| 
%\leq \frac{\epsilon}{3}, \forall (m,n), m+n\leq 2  } \leq 1- \sum_{(m,n), m+n\leq 2} \PR{\frac{1}{\kappa^{m+n}} \left| Q^{(m,n)}(\vr_g) - \bar Q^{(m,n)}(\vr_g) \right| 
%\geq \frac{\epsilon}{3} }
%\]
Substituting \eqref{eq:QrmgbarQrg}, \eqref{eq:diffQrQrg}, and \eqref{eq:barQrmbarQrg} into \eqref{eq:Qmqgrid} yields that
\[
\frac{1}{\kappa^{n+m}} \left| Q^{(m,n)}(\vr) - \bar Q^{(m,n)}(\vr) \right|  \leq \epsilon,  \text{ for all } (m,n)\colon m+n \leq 2 \text{ and for all } \vr \in [0,1]^2.
\]
This concludes the proof of Lemma \ref{lem:lemdffqbarqevry}.  
\end{proof}

%%%%%%%%%%%%%%%%%%%%%%%%%%%%%%%
\subsection{Step 3c: Ensuring that $\abs{Q(\vr)} < 1$ for all $\vr \notin \T$}

\begin{lemma}
Suppose that 
\[
L \geq \S c  \log^3\left(\frac{c' L^6}{\delta} \right).
\]
Then with probability at least $1 - \delta$ the following statements hold:
\begin{enumerate}
\item \label{it:stat1} For all $\vr$, that satisfy $\min_{\vr_j \in \T} \infdist{\vr - \vr_j } \geq 0.2447/\N$ %for some $\vr_j \in \T$ and additionally are closer to $\vr_j$ (in the $\infty$-distance) than to any other element in $\T$, 
we have that 
$
\abs{Q(\vr)} < 0.9963.
$
\item \label{it:stat2} For all $\vr \notin \T$ that satisfy $0 < \infdist{\vr - \vr_j} \leq 0.2447/\N$ for some $\vr_j \in \T$, we have that $\abs{Q(\vr)} < 1$. 
\end{enumerate}
\end{lemma}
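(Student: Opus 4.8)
The plan is to transfer the corresponding deterministic statements about $\bar Q(\vr)$ from \cite{candes_towards_2014} to the random polynomial $Q(\vr)$ using the uniform closeness established in Lemma \ref{lem:lemdffqbarqevry}. Recall from the construction in \cite{candes_towards_2014} that the deterministic polynomial $\bar Q$ satisfies: (a) $\abs{\bar Q(\vr)} \leq \bar C_0 < 1$ for all $\vr$ with $\min_{\vr_j\in\T}\infdist{\vr-\vr_j}\geq 0.2447/\N$ (the ``far region'' bound), and (b) in the ``near region'' $0 < \infdist{\vr-\vr_j}\leq 0.2447/\N$, the Hessian of $\abs{\bar Q}^2$ is strictly negative definite, so that $\bar Q$ has a strict local maximum of modulus exactly $1$ at each $\vr_j$ and $\abs{\bar Q(\vr)} < 1$ strictly in a punctured neighborhood. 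Quantitatively, \cite{candes_towards_2014} gives explicit bounds like $\abs{\bar Q(\vr)} \leq 1 - c\kappa^2\infdist{\vr-\vr_j}^2$ (or a bound on the second derivative such as $\frac{1}{\kappa^2}\bar Q^{(2,0)} \leq -c_a$, etc.) near each $\vr_j$. These are exactly the ingredients I would invoke.

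First I would apply Lemma \ref{lem:lemdffqbarqevry} with $\epsilon$ chosen to be a sufficiently small numerical constant (small enough that all the margins below are positive); under the stated hypothesis $L \geq \S c\log^3(c'L^6/\delta)$ this holds with probability at least $1-\delta$, giving $\frac{1}{\kappa^{m+n}}\abs{Q^{(m,n)}(\vr) - \bar Q^{(m,n)}(\vr)} \leq \epsilon$ uniformly over $\vr\in[0,1]^2$ and all $(m,n)$ with $m+n\leq 2$. On this event I argue statement \ref{it:stat1} directly: for $\vr$ in the far region, $\abs{Q(\vr)} \leq \abs{\bar Q(\vr)} + \abs{Q(\vr)-\bar Q(\vr)} \leq \bar C_0 + \epsilon$, and since $\bar C_0 < 1$ is an explicit constant from \cite{candes_towards_2014}, choosing $\epsilon$ small enough makes this $< 0.9963$. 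I would just track the constants to confirm $0.9963$ is achievable.

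For statement \ref{it:stat2}, the near-region case, I cannot use a crude modulus bound because $\abs{\bar Q}$ reaches $1$; instead I work with $g(\vr) \defeq \abs{Q(\vr)}^2$ and show $g(\vr) < 1$ for $0 < \infdist{\vr-\vr_j}\leq 0.2447/\N$. The idea is a second-order Taylor argument around $\vr_j$: from \eqref{eq:interpcondQ} we have $Q(\vr_j) = u_j$ (so $\abs{Q(\vr_j)}=1$) and $Q^{(1,0)}(\vr_j) = Q^{(0,1)}(\vr_j) = 0$, which gives $\nabla g(\vr_j) = 0$ and $g(\vr_j)=1$. The Hessian of $g$ at a point can be written in terms of $Q, \nabla Q, \nabla^2 Q$; since all these are uniformly within $\epsilon$ (after the $\kappa$-normalization) of the corresponding quantities for $\bar Q$, and since $\bar Q$'s version of this Hessian is negative definite with a uniform spectral gap bounded away from zero (this is precisely what \cite{candes_towards_2014} proves to get strict local maxima), the Hessian of $g$ remains negative definite throughout the near region for $\epsilon$ small. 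Hence $g$ is strictly concave along any segment from $\vr_j$, so $g(\vr) < g(\vr_j) = 1$ for all $\vr\neq\vr_j$ in that region. I would be careful to do this coordinate-wise or via the full $2\times 2$ Hessian, matching whichever form \cite{candes_towards_2014} uses (their Lemma on the second derivatives of $\bar Q$ near the $\vr_j$).

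The main obstacle I expect is the near-region bookkeeping: making the perturbation argument for the Hessian of $\abs{Q}^2$ fully rigorous requires expanding $\partial^2(\abs{Q}^2)$ into products of $Q$ and its first two derivatives, bounding each factor (using $\abs{Q}\leq 1+\epsilon$, $\abs{\nabla Q}/\kappa \leq \epsilon$ near $\vr_j$ since $\bar Q$'s gradient vanishes there, $\abs{\nabla^2 Q}/\kappa^2$ bounded), and then showing the resulting perturbation of the Hessian is smaller than the spectral gap of $\bar Q$'s Hessian — all with explicit enough constants that they are consistent with the threshold $0.2447/\N$ inherited from \cite{candes_towards_2014}. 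The far-region statement is comparatively routine. Throughout, the only probabilistic input is the single application of Lemma \ref{lem:lemdffqbarqevry}; everything else is deterministic on that event.
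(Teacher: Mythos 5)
Your proposal is correct and matches the paper's approach: a single application of Lemma \ref{lem:lemdffqbarqevry} with a small fixed $\epsilon$ (the paper uses $\epsilon=0.0005$), then the far-region bound by triangle inequality against $\abs{\bar Q(\vr)}\leq 0.9958$ from \cite{candes_towards_2014}, and the near-region bound by showing a Hessian is negative definite using the interpolation conditions \eqref{eq:interpcondQ} together with the explicit bounds of \cite[Sec.~C.2]{candes_towards_2014} (Proposition \ref{propcanc2}). The only cosmetic difference is that you propose to argue concavity of $\abs{Q}^2$ and perturb its spectral gap, whereas the paper works directly with $\tilde Q=\abs{Q}$ and verifies $\mathrm{trace}(\mathbf{H})<0$ and $\det(\mathbf{H})>0$ by plugging in the numerical bounds; both are equivalent ways to certify negative definiteness and either would close the argument.
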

\begin{proof}
Choose $\epsilon =0.0005$. It follows from Lemma \ref{lem:lemdffqbarqevry} that 
\begin{align}
\frac{1}{\kappa^{n+m}} \left| Q^{(m,n)}(\vr) - \bar Q^{(m,n)}(\vr) \right| 
\leq 0.0005
\label{eq:conddiffQbarQinfipr}
\end{align}
 for all $(m,n)\colon m+n \leq 2$, and for all $\vr$  with probability at least $1 - \delta$. To prove the lemma we will show that statements \ref{it:stat1} and \ref{it:stat2} follow from \eqref{eq:conddiffQbarQinfipr} and certain properties of $\bar Q^{(m,n)}(\vr)$ established in \cite{candes_towards_2014}. 

Statement \ref{it:stat1} follows directly by combining \eqref{eq:conddiffQbarQinfipr} with the following result via the triangle inequality. 
\begin{proposition}[{\cite[Lem.~C.4]{candes_towards_2014}}]
For all $\vr$, that satisfy $\min_{\vr_j \in \T} |\vr - \vr_j | \geq 0.2447/\N$ %for some $\vr_j \in \T$ and additionally are closer to $\vr_j$ (in the $\infty$-distance) than to any other element in $\T$, 
we have that 
$
\abs{Q(\vr)} < 0.9958.
$ 
\end{proposition}

%%%%%

In order to prove statement \ref{it:stat2}, assume without loss of generality that $\vect{0} \in \T$, and consider $\vr$ with $|\vr| \leq 0.2447/\N$. Statement \ref{it:stat2} is established by showing that the Hessian matrix of $\tilde Q(\vr) \defeq |Q(\vr)|$, i.e.,
\[
\mH = 
\begin{bmatrix}
\tilde Q^{(2,0)}(\vr) & \tilde Q^{(1,1)}(\vr) \\
\tilde Q^{(1,1)}(\vr) & \tilde Q^{(0,2)}(\vr) 
\end{bmatrix},
\quad \tilde Q^{(m,n)}(\vr) \defeq \frac{\derd^m }{ \derd \x^m} \frac{\derd^n }{ \derd \y^n}  \tilde Q(\vr)
\]
is negative definite. This is done by showing that 
\begin{align}
\mathrm{trace}(\mH) = \tilde Q^{(2,0)} + \tilde Q^{(0,2)} < 0 \label{eq:traceHleq0} \\ 
\mathrm{det}(\mH)     = \tilde Q^{(2,0)} \tilde Q^{(0,2)} - (\tilde Q^{(1,1)})^2   > 0,  \label{eq:detHgeq0}
\end{align}
which implies that both eigenvalues of $\mH$ are strictly negative. To prove \eqref{eq:traceHleq0} and \eqref{eq:detHgeq0}, we will need the following result.

\begin{proposition}[{\cite[Sec.~C.2]{candes_towards_2014}}]
For $|\vr| \leq 0.2447/\N$ and for $\N \geq 512$,
\begin{align}
1\geq \bar Q(\vr) \geq 0.6447 \\
\frac{1}{\kappa^2}\bar Q^{(2,0)}(\vr) \leq -0.3550 \\
\frac{1}{\kappa^2}|\bar Q^{(1,1)}(\vr)| \leq 0.3251 \\
\frac{1}{\kappa^2}|\bar Q^{(1,0)}(\vr)| \leq 0.3344.
\end{align}
\label{propcanc2}
\end{proposition}

Define $Q_R^{(m,n)} = \frac{1}{\kappa^{m+n}} \mathrm{Re}(Q^{(m,n)})$ and $Q_I^{(m,n)} = \frac{1}{\kappa^{m+n}} \mathrm{Im}(Q^{(m,n)})$. We have that 
\[
\frac{1}{\kappa}\tilde Q^{(1,0)} = \frac{Q_R^{(1,0)}Q_R + Q_I^{(1,0)}Q_I  }{|Q|}
\]
therefore 
\begin{align}
\frac{1}{\kappa^2}
\tilde Q^{(2,0)} 
&= -\frac{(Q_R Q_R^{(1,0)} + Q_I Q_I^{(1,0)})^2 }{|Q|^3} 
+ \frac{|Q^{(1,0)}|^2 + Q_R Q_R^{(2,0)}  + Q_I Q_I^{(2,0)} }{|Q|} \nonumber \\
&=-\frac{Q_R^2 {Q_R^{(1,0)}}^2 + 2Q_R Q_R^{(1,0)}  Q_I Q_I^{(1,0)} + Q_I^2 {Q_I^{(1,0)}}^2  }{|Q|^3} 
+ \frac{ {Q_R^{(1,0)}}^2 + {Q_I^{(1,0)}}^2 +  Q_R Q_R^{(2,0)}  + Q_I Q_I^{(2,0)}  }{|Q|} \nonumber \\
&=
\left(1-\frac{Q_R^2}{|Q|^2}\right) \frac{{Q_R^{(1,0)}}^2}{|Q|} -\frac{2Q_R Q_R^{(1,0)}  Q_I Q_I^{(1,0)} + Q_I^2 {Q_I^{(1,0)}}^2  }{|Q|^3} 
 + \frac{{Q_I^{(1,0)}}^2   + Q_I Q_I^{(2,0)} }{|Q|}+ \frac{Q_R}{|Q|} Q_R^{(2,0)}.
\label{eq:lhsoftildq20}
\end{align}
By Proposition \ref{propcanc2}, using the triangle inequality, and using the fact that $\bar Q^{(m,n)}(\vr)$ is real, the following bounds are in force: 
\begin{align*}
  Q_R(\vr) &\leq \bar Q(\vr) + \epsilon \leq 1+\epsilon  \\
  Q_R(\vr) &\geq \bar Q(\vr) - \epsilon \geq 0.6447 - \epsilon \\
Q_I^{(m,n)} &\leq \epsilon \\
Q_R^{(2,0)}(\vr)  &\leq  \frac{1}{\kappa^2}\bar Q^{(2,0)}(\vr) + \epsilon \leq -0.3550 +\epsilon \\
|Q_R^{(1,1)}| &\leq \frac{1}{\kappa^2}|\bar Q^{(1,1)}(\vr)| + \epsilon \leq 0.3251 + \epsilon \\
|Q^{(1,0)}_R(\vr)| &\leq \frac{1}{\kappa^2} |\bar Q^{(1,0)}(\vr)| + \epsilon \leq 0.3344 +\epsilon.
\end{align*}
Using these bounds in \eqref{eq:lhsoftildq20} 
%\begin{align*}
%\left( 1 - \frac{1 - \frac{\epsilon}{\bar Q} }{1 + \frac{\epsilon}{ \bar Q} } \right) 
%\left( 1 - \left(\frac{1 - \frac{\epsilon}{0.6447-\epsilon} }{1 + \frac{\epsilon}{0.6447-\epsilon} } \right)^2 \right) 
%\frac{(0.3344+\epsilon)^2}{0.6447-\epsilon} + \frac{2(1+\epsilon)(0.3344+\epsilon)\epsilon^2 + \epsilon^4 }{(0.6447-\epsilon)^3} \\
%+\frac{\epsilon^2  + \epsilon^2}{0.6447-\epsilon} - (0.3550 - \epsilon) 
%\left(\frac{1 - \frac{\epsilon}{0.6447-\epsilon} }{1 + \frac{\epsilon}{0.6447-\epsilon} } \right) 
%\end{align*}
with $\epsilon = 0.0005$ we obtain $\tilde Q^{(2,0)} < -0.3539$, which implies that \eqref{eq:traceHleq0} is satisfied. 
 
It remains to verify \eqref{eq:detHgeq0}. First note that 
\begin{align}
&\frac{1}{\kappa^2}
\tilde Q^{(1,1)} \nonumber \\
&= 
\frac{Q_R^{(1,1)}Q_R + Q_R^{(1,0)}Q_R^{(0,1)} + Q_I^{(1,1)}Q_I + Q_I^{(1,0)}Q_I^{(0,1)} }{|Q|} 
- \frac{ (Q_R^{(0,1)}Q_R + Q_I^{(0,1)}Q_I ) (Q_R^{(1,0)}Q_R + Q_I^{(1,0)}Q_I)  }{|Q|^3} \nonumber \\
&= Q_R^{(1,1)} \frac{Q_R}{|Q|} + \frac{Q_R^{(1,0)}Q_R^{(0,1)}}{|Q|} \left(1- \frac{Q_R^2}{|Q|^2} \right) 
+ \frac{ Q_I^{(1,1)}Q_I + Q_I^{(1,0)}Q_I^{(0,1)} }{|Q|} \nonumber \\
&- \frac{Q_R^{(0,1)}Q_R Q_I^{(1,0)}Q_I +  Q_I^{(0,1)}Q_I (Q_R^{(1,0)}Q_R + Q_I^{(1,0)}Q_I)  }{|Q|^3}. \label{eq:tilQ11}
\end{align}
Using the bounds above in \eqref{eq:tilQ11} yields,
%\begin{align*}
%\frac{1}{\kappa^2}|\tilde Q^{(1,1)}| 
%&\leq (0.3251 + \epsilon) \left(\frac{1 + \frac{\epsilon}{0.6447-\epsilon} }{1 - \frac{\epsilon}{0.6447-\epsilon} } \right)
%+
%\frac{(0.3344 +\epsilon)^2}{0.6447-\epsilon} %\left( 1 - \left(\frac{1 - \frac{\epsilon}{0.6447-\epsilon} }{1 + \frac{\epsilon}{0.6447-\epsilon} } \right)^2 \right)
%\\
%&+
%\frac{2 \epsilon^2}{0.6447-\epsilon}
%+
%\frac{(0.3344 +\epsilon)^2\epsilon^2  +\epsilon^2 (  (0.3344 +\epsilon)(1+\epsilon) +\epsilon^2   ) }{ (0.6447-\epsilon)^3 }
%\end{align*}
with $\epsilon = 0.0005$, that $\frac{1}{\kappa^2}|\tilde Q^{(1,1)}| \leq 0.3267$. With $\frac{1}{\kappa^2}\tilde Q^{(2,0)} < -0.3539$, it follows that the RHS of \eqref{eq:traceHleq0} can be lower-bounded by   
\[
\frac{1}{\kappa^2}(0.3539^2 - 0.3267^2)  = \frac{1}{\kappa^2}0.01855 > 0,
\]
i.e., \eqref{eq:detHgeq0} holds. This concludes the proof of Statement \ref{it:stat2}. 
\end{proof}

\section*{Funding}
VM was supported by the Swiss National Science Foundation fellowship for advanced researchers under grant PA00P2\_139678.

\section*{Acknowledgments}
RH would like to thank C\'{e}line Aubel, Helmut B\"{o}lcskei, Emmanuel Cand\`{e}s, and Nora Loose for helpful discussions.
RH would also like to thank Emmanuel Cand\`{e}s 
for his hospitality during a visit to the Statistics Department at  Stanford, and Helmut B\"{o}lcskei for his support and for initiating this visit.  
We would also like to thank the referees for helpful comments and suggestions, which greatly improved the manuscript. 
The authors would like to thank Arash Amini, Mohammad Mahdi Kamjoo, Saeed Razavikia, and Sajad Daei for making us aware of an error in an earlier proof of Lemma~\ref{lem:preptaubound}, which we have now fixed. 
The authors would also like to thank Radoslaw Adamczak on helpful pointers to the literature on matrix Hanson-Wright inequalities.

%%%%%%%%%%%%%%%%%%%%%%%%%%%%%%%%%%%%%%%%%

%\bibliographystyle{IMANUM-BIB}
%\bibliographystyle{imamat}
\bibliographystyle{plain}
\bibliography{bibliography.bib}

\appendices
%%%%%%%%%%%%

\section{Equivalence of 
\eqref{eq:periorel} 
and
\eqref{eq:iowithdirchkernel1} 
\label{app:discspfunc}
}

%Next we show that \eqref{eq:periorel} and \eqref{eq:iowithdirchkernel1} are equivalent. 
Note that with
\begin{align}
D_{\N}(t) 
%=\sum_{k\in \mb Z} \sinc\left( L(t - k) \right)
%=
%\frac{\sin(\pi L t)}{L \sin(\pi t)} 
= \frac{1}{L} \sum_{k=-\N}^{\N} e^{i2\pi t k}
\label{eq:usingDirichTrigon}
\end{align}
we obtain
\begin{align}
\sum_{r = -\N}^\N 
e^{i2\pi \frac{rp}{\L}}
D_{\N} \! \left( \frac{r }{\L} - \nu_j \right)
&=
\sum_{r = -\N}^\N 
e^{i2\pi \frac{rp}{\L}}
\frac{1}{L} \sum_{k=-\N}^{\N} e^{i2\pi k \left( \frac{r }{\L} - \nu_j \right)}  \nonumber \\
&=
\sum_{k=-\N}^{\N} 
e^{-i2\pi k \nu_j}  
\frac{1}{L}
\sum_{r = -\N}^\N 
e^{i2\pi \frac{r}{\L}  (p+k)}
\nonumber \\
&=
e^{i2\pi p \nu_j}. 
\label{eq:sumdirfre}
\end{align}
Using \eqref{eq:sumdirfre} in \eqref{eq:iowithdirchkernel1} and using again \eqref{eq:usingDirichTrigon}, we get 

\begin{align}
y_p
&= \sum_{j=1}^{\S}  b_j
e^{i2\pi p \nu_j }
\sum_{\ell=-\N}^{\N}   
\frac{1}{L} \sum_{k=-\N}^{\N} e^{i2\pi k \left( \frac{\ell}{L}  -  \tau_j \right) }  
x_{p-\ell}  
\nonumber \\
&= \sum_{j=1}^{\S} b_j
e^{i2\pi p  \nu_j }  
\frac{1}{L} \sum_{\ell,k=-\N}^{\N}  e^{-i2\pi k \tau_j }   e^{i2\pi (p-\ell) \frac{k}{L}  }  x_{\ell},  \nonumber
\end{align}
where we used that $x_\ell$ is $\L$-periodic. This is \eqref{eq:periorel}, as desired. 

%\begin{align}
%%y\left(\frac{p}{\Bint}\right) 
%y_p
%&= %\frac{1}{B\Tint} 
%\sum_{n=1}^{\S}  b_n
%\sum_{\ell, r=-\N}^{\N}   
%\left(\frac{1}{L^2} \sum_{k,q=-\N}^{\N} e^{i2\pi k \left( \frac{\ell}{L}  -\tau_n \right) }  e^{i2\pi q \left(  \frac{r}{L} -  \nu_n \right) }
% \right) x_{p-\ell} 
%e^{i 2 \pi \frac{r p}{L}} \nonumber \\
%%
%&= \sum_{n=1}^{\S}  b_n
%\sum_{\ell=\N}^{\N}   
%\frac{1}{L} \sum_{k,q=-\N}^{\N} e^{i2\pi k \left( \frac{\ell}{L}  - \tau_n \right) }  e^{-i2\pi q \nu_n  }
%x_{p-\ell} 
%\frac{1}{L}
%\sum_{r=\N}^{\N} e^{i 2 \pi \frac{r}{L} (p+q) } 
%\nonumber \\
%%
%&= \sum_{n=1}^{\S}  b_n
%e^{i2\pi p \nu_n }
%\sum_{\ell=-\N}^{\N}   
%\frac{1}{L} \sum_{k=-\N}^{\N} e^{i2\pi k \left( \frac{\ell}{L}  -  \tau_n \right) }  
%x_{p-\ell}  
%\nonumber \\
%%
%&= \sum_{n=1}^{\S} b_n
%e^{i2\pi p  \nu_n  }  
%\frac{1}{L} \sum_{\ell,k=-\N}^{\N}  e^{-i2\pi k \tau_n }   e^{i2\pi (p-\ell) \frac{k}{L}  }  x_{\ell}  \nonumber
%\end{align}
%which is \eqref{eq:periorel}, as desired. 

%%%%%%%%%%%%

\section{\label{app:iorelproof}Proof of \eqref{eq:sfunc_discrete}}

%The proof of the I/O relation \eqref{eq:sfunc_discrete} appears in \cite{bello_characterization_1963}. 
%For convenience of the reader, we present the details below. 
First, write \eqref{eq:ltvsys} in its equivalent form
\begin{equation}
y(t) = \int  \tfunc(t,f) \prsigF(f) e^{i2\pi f t} df,
\label{eq:rep_tvtf}
\end{equation}
where $\prsigF(f) = \int x(t) e^{i2\pi f t} dt$ is the Fourier transform of $\prsig(t)$, and $\tfunc(t,f)$ is the time-varying transfer function given by 
\begin{equation}
\tfunc(t,f) \defeq \iint \sfunc(\tau,\nu) e^{i2\pi (\nu t - \tau f)} d\tau d \nu.
\label{eq:rel_sf_tvtf}
\end{equation}
Since  $\prsig(t)$ is band-limited to $[-\Bint/2,\Bint/2]$, we may write 
\[
\prsigF(f) = \prsigF(f) H_I(f), \quad H_I(f) \defeq \begin{cases} 1, & |f| \leq B/2 \\ 0,  &\text{else}.   \end{cases}
\]
For $y(t)$ on $[-\Tint/2,\Tint/2]$ we may write  
%\VM{Discuss and rephrase this.}
%\VM{Need to discuss: why do you write $y(t)$ not $(H x)(t)$ here? If you're trying to make a point, it does not come across. It's highly confusing. I would substitute $(H x)(t)$ by $y(t)$ throughout and also get rid of $H$ subscript in the spreading and in $L$ function.}
 \[
y(t) =  y(t) h_O(t), \quad h_O(t) \defeq \begin{cases} 1, & |t| \leq  \Tint/2 \\ 0,  &\text{else}.   \end{cases}
\]
With the input band-limitation and the output time-limitation, 
\eqref{eq:rep_tvtf} becomes 
\begin{equation}
y(t) = \int \btlim{\tfunc} (t,f)    \prsigF(f) e^{i2\pi f t} df,
\label{eq:inoutbtlim}
\end{equation}
where 
\begin{equation}
\btlim{\tfunc} (t,f) \defeq \tfunc(t,f)  h_O(t) H_I(f)
\label{eq:TVTF}
\end{equation}
i.e., the effect of input band-limitation and output time-limitation is accounted for by passing the probing signal through a system with time varying transfer function given by $\btlim{\tfunc}$. 
The spreading function $\btlim{\sfunc}$ of the system \eqref{eq:inoutbtlim} and the time-varying transfer function $\btlim{\tfunc}$ are related by the two-dimensional Fourier transform in \eqref{eq:rel_sf_tvtf}. 
We see that $\btlim{\tfunc} (t,f)$ is ``band-limited'' with respect to $t$ and $f$, and hence, by the sampling theorem, $\btlim{\sfunc}$ can be expressed in terms of its samples as 
\begin{align}
\btlim{ \sfunc}(\tau,\nu) =   \sum_{m, \ell \in \mb Z}  \btlim{\sfunc} \left(\frac{m}{B},\frac{l}{\Tint}\right) \sinc \left(  \left( \tau  - \frac{m}{B}\right)B\right) \sinc\left( \left(\nu - \frac{\ell}{\Tint}\right)\Tint  \right). 
\label{eq:samplthmsbar}
\end{align}
%Inserting \eqref{eq:disc_sfunc} into \eqref{eq:hilbertschmidt_s} yields the input-output relation 
In terms of $\btlim{ \sfunc}(\tau,\nu)$ \eqref{eq:inoutbtlim} can be written as 
\begin{equation}
y(t) = \iint  \overline{\sfunc} (\tau,\nu) \prsig(t-\tau)  e^{i2\pi \nu t}  d\nu d\tau
\label{eq:ltvsys2}
\end{equation}
and with \eqref{eq:samplthmsbar}
\begin{align*}
y(t) 
&= 
  \sum_{m, \ell \in \mb Z}  \btlim{\sfunc}  \left(\frac{m}{B},\frac{\ell}{\Tint}\right) 
  \int  \sinc\left( \left(\tau  - \frac{m}{B}\right)B\right) \prsig(t-\tau) d\tau \int   \sinc\left( \left(\nu - \frac{\ell}{\Tint}\right)\Tint  \right)  e^{i2\pi \nu t}  d\nu 
\\%
&= 
 \sum_{m,\ell \in \mb Z} \btlim{\sfunc} \left(\frac{m}{B},\frac{\ell}{\Tint} \right) \prsig \left( t-\frac{m}{B}  \right) e^{j 2 \pi \frac{\ell}{\Tint} t}. 
%\label{eq:sfunc_discrete2}
\end{align*}
%Finally the operator $H$ is characterized in terms of the samples $\btlim{\sfunc} \left(m/(2B),l/(2\Tint) \right)$ of the spreading function. 
According to \eqref{eq:rel_sf_tvtf} and \eqref{eq:TVTF},  $\btlim{\sfunc }(\tau,\nu)$ and $\sfunc(\tau,\nu)$ are related as
\begin{align}
\overline{\sfunc}(\tau,\nu) \label{eq:btlimiorel2} = %B\Tint  \sinc( \tau B ) \ast \sfunc(\tau,\nu) \ast \sinc( \nu  \Tint  ).
 \int \!\int \! \sfunc(\tau',\nu')    \sinc( (\tau \!- \!\tau') B ) \sinc( (\nu \!-\! \nu')  \Tint  )  d\tau' d\nu'.
\end{align}
This concludes the proof of \eqref{eq:sfunc_discrete}.

%%%
\section{Proof of \eqref{eq:iowithdirchkernelapprox} and \eqref{eq:iowithdirchkernel} \label{eq:proofeq:iowithdirchkernel} }

Starting with \eqref{eq:sfunc_discretesample2}, we first use that $BT = \L$ and  $\tau_j = \tauc_j \frac{B}{L}$ and $\nu_j = \nuc_j \frac{T}{L}$ to obtain 
\begin{align}
\TL{y}_p
= 
\sum_{j=1}^{\S} b_j
\left(
\sum_{\ell \in \mb Z} 
\sinc(\ell - \tau_j \L) \TL{x}_{p-\ell}  
\right)
\left(
\sum_{k \in \mb Z} 
\sinc(k - \nu_j \L) 
e^{i2\pi \frac{k p}{\Bint \Tint}}
\right).
\label{eq:firststepyp}
\end{align}
Next, changing the order of summation according to $k = r + \L q$ with $r = \N,...,\N$ and $q \in \mb Z$ yields
\begin{align}
\sum_{k \in \mb Z} 
\sinc(k - \nu_j \L) 
e^{i2\pi \frac{k p}{\Bint \Tint}}
&=
\sum_{r = -\N}^\N 
\sum_{q \in \mb Z} 
\sinc\!\left( \left( \frac{r }{\L} - \nu_j +q \right) \L \right) 
e^{i2\pi \frac{(r + \L q) p}{\L}}    \nonumber \\
&= 
\sum_{r = -\N}^\N 
e^{i2\pi \frac{rp}{\L}}
\sum_{q \in \mb Z} 
\sinc\!\left( \left( \frac{r }{\L} - \nu_j +q \right) \L \right) 
    \nonumber \\
&=
\sum_{r = -\N}^\N 
e^{i2\pi \frac{rp}{\L}}
D_{\N} \! \left( \frac{r }{\L} - \nu_j \right).
\label{eq:iowithdirchkernel2}
\end{align}
Here, \eqref{eq:iowithdirchkernel2} follows from the definition of the Dirichlet kernel in \eqref{eq:dirichletsum}. 
%we used that
%\begin{align}
%D_{\N}(t) 
%=\sum_{k\in \mb Z} \sinc\left( L(t - k) \right)
%=
%\frac{\sin(\pi L t)}{L \sin(\pi t)} .
%%= \frac{1}{L} \sum_{k=-\N}^{\N} e^{i2\pi t k}
%\label{eq:dirichletAsTrig}
%\end{align}
Next, note that 
\begin{align}
\sum_{\ell \in \mb Z}
\TL{x}_{p-\ell}
\sinc\left( \ell  -  \L \tau_j  \right)
&=
\sum_{\ell \in \mb Z}
\TL{x}_{\ell}
\sinc\left( p - \ell - \L \tau_j \right) \nonumber \\
&=
\sum_{\tilde \ell = -\N}^\N
x_{\tilde \ell}
\sum_{k=-1}^1
\sinc\left( p - \tilde \ell  - \L k- \L \tau_j \right) \label{eq:usetruncsam} \\
&=
\sum_{\ell = -\N}^\N
x_{\ell}
\TL{D}_\N \left( \frac{p-\ell}{\L}  - \tau_j \right),
 \label{eq:bydefDN} 
\end{align}
where \eqref{eq:usetruncsam} follows from $\TL{x}(\ell/B) = \TL{x}_\ell =  x_\ell$ for $\ell \in [-\N - \L, \L + \N]$ and $\TL{x}(\ell/B) = 0$ for all other $\ell$, and from $\L$-periodicity of $x_\ell$; 
 \eqref{eq:bydefDN} follows by the definition of $\TL{D}_\N(t)$ in~\eqref{eq:dirichletsum1}. 
Substituting \eqref{eq:iowithdirchkernel2} and \eqref{eq:bydefDN} into \eqref{eq:firststepyp} yields \eqref{eq:iowithdirchkernelapprox}, as desired.

Next, consider the case where $x(t)$ is $T$-periodic. In this case, the samples $x_\ell = x(\ell/B)$ are $L$-periodic. 
Similarly to the derivation above, changing the order of summation of the sum in the second brackets in \eqref{eq:firststepyp} according to $\ell = \tilde \ell + \L k$ yields 
\begin{align}
\sum_{\ell \in \mb Z} 
\sinc(\ell - \tau_j \L) \prsig_{p-\ell} 
= \sum_{\tilde \ell = -\N}^\N 
x_{p-\tilde \ell}
\sum_{k \in \mb Z}
\sinc\left(  \left( \frac{\tilde \ell}{\L} - \tau_j + k \right) \L   \right) 
= \sum_{\ell = -\N}^\N 
D_{\N} \! \left( \frac{\ell}{\L} - \tau_j \right)
x_{p- \ell},
\label{eq:sincuxper}
\end{align}
where we used that $x_\ell$ is $\L$-periodic in the first equality, and the last equality follows by definition of the Dirichlet kernel in \eqref{eq:dirichletsum}. Using  \eqref{eq:iowithdirchkernel2} and \eqref{eq:sincuxper} in \eqref{eq:firststepyp} yields \eqref{eq:iowithdirchkernel}. 

\section{Proof of Proposition \ref{prop:errbound} \label{app:errbound}}

By \eqref{eq:sumdirfre}, \eqref{eq:iowithdirchkernel2} is equal to $e^{i2\pi p \nu_j}$. Using this in \eqref{eq:iowithdirchkernelapprox} and \eqref{eq:iowithdirchkernel} yields 
\[
\TL{y}_p  - y_p = 
\sum_{j=1}^{\S} b_j
e^{i2\pi  \nu_j p }
%\sum_{\ell=-\N}^\N
%\epsilon_{j,p,\ell} x_\ell
%%\TL{D}_{\N} \! \left( \frac{p-\ell}{\L} - \tau_j \right)  x_{\ell}, 
%\quad p = -\N,...,\N
\underbrace{
\sum_{\ell=-\N}^\N
\epsilon_{j,p,\ell}  x_\ell 
}_{\epsilon_{j,p}},
\]
where we defined the error 
\[
\epsilon_{j,p,\ell} 
\defeq 
\tilde D_\N\left(  \frac{p - \ell}{\L} - \tau_j \right) - D_\N\left( \frac{p - \ell}{\L} - \tau_j  \right). 
\]
For $t\in [-1.5,..,1.5]$, we have 
\[
\tilde D_\N(t) - D_\N(t)  \leq c/L.
\]
Using that $\frac{p - \ell}{\L} - \tau_j \in [-1.5,1.5]$ it follows that $\epsilon_{j,p,\ell} \leq c/L$.

Recall that the $b_j$ have random sign and the $x_\ell$ are i.i.d.~$\mc N(0,1/\L)$ distributed. By the union bound, we have, for all $\beta>0$,
\begin{align}
\PR{|\TL{y}_p - y_p| 
\geq
\beta
 \norm[2]{\vb} 
c \frac{\beta}{\L}
}
&\leq 
\PR{
\left| 
\sum_{j=1}^{\S} b_j e^{i2\pi p  \nu_j  } \epsilon_{j,p}
\right| 
\geq 
\beta \norm[2]{\vb} 
\max_{j,p} |\epsilon_{j,p}|
}  
+
\PR{
\max_{j,p} |\epsilon_{j,p}| \geq  c \frac{\beta}{\L}
} \label{eq:ubstebou}   \\
&\leq (4+2\L^2) e^{-\frac{\beta^2}{L}},
\label{eq:errboundf}
\end{align}
where \eqref{eq:errboundf} is established immediately below. Setting $\alpha=\beta^2$ completes the proof of Proposition~\ref{prop:errbound}. 

It is left to establish \eqref{eq:errboundf}. 
With 
\[
\left(
\sum_{j=1}^\S 
\left(
|b_j| e^{i2\pi p  \nu_j  }
\epsilon_{j,p}
\right)^2
\right)^{1/2}
\leq \beta \norm[2]{\vb} \max_{j,p} |\epsilon_{j,p}|
\]
the first probability in \eqref{eq:ubstebou} can be upper-bounded by 
\[ 
\PR{
\left| 
\sum_{j=1}^{\S} \sign(b_j) |b_j| e^{i2\pi p  \nu_j  } \epsilon_{j,p}
\right| 
\geq 
\beta 
\left(
\sum_{j=1}^\S 
\left(
|b_j| e^{i2\pi p  \nu_j  }
\epsilon_{j,p}
\right)^2
\right)^{1/2} 
}
\leq 4 e^{- \frac{\beta^2}{2}}
\]
where we applied Hoeffding's inequality, i.e., Lemma \ref{thm:hoeff} (recall that the $b_j$ have random sign). 
%\Comment{
%\[
%\PR{
%|\epsilon_p|
%\geq \norm[2]{\vb}  \beta
%}
%\leq
%4 e^{\frac{-\beta^2}{2}}
%\]
%}
By the union bound, the second probability in \eqref{eq:ubstebou} is upper-bounded by 
\begin{align}
\PR{
\max_{j,p}
\left| 
\epsilon_{j,p}
\right|
\geq  
c
\frac{\beta}{\L}
}
\leq 
\sum_{j,p} 
\PR{
\left| 
%\sum_{\ell=-\N}^\N
%\epsilon_{\ell, p,j} x_\ell
\epsilon_{j,p}
\right|
\geq  c
\frac{\beta}{\L}
}
\leq
\S \L  2 e^{- \frac{\alpha^2}{2}},
\label{eq:treatgausse}
\end{align}
where the last inequality is proven as follows. 
Since the $x_\ell$ are i.i.d.~$\mc N(0,1/\L)$, we have that 
$\epsilon_{j,p} = \sum_{\ell=-\N}^\N
\epsilon_{\ell, p,j} x_\ell$ is Gaussian with variance $\frac{1}{\L} \sum_{\ell=-\N}^\N
\epsilon_{\ell, p,j}^2  \leq \frac{c^2}{\L^2}$, where we used that $\epsilon_{\ell, p,j} \leq c/\L$. 
Eq.~\eqref{eq:treatgausse} now follows from a standard bound on the tail probability of a Gaussian \cite[Prop.~19.4.2]{lapidoth_foundation_2009} random variable.

%\begin{lemma}[{\cite[Prop.~19.4.2]{lapidoth_foundation_2009}}]
%Let $x\sim\mathcal N(0,1)$. For $\beta \geq \frac{1}{\sqrt{2\pi}}$, we have
%\begin{equation}
%\PR{x \geq \beta} \leq e^{- \frac{\beta^2}{2}}.
%\label{eq:qfunctionb}
%\end{equation}
%\label{lem:qfunction}
%\end{lemma}

%With this choice of $x_\ell$, it follows that 
%\begin{align}
%\sum_{\ell \in \mb Z}
%\sinc(\ell - \tauc_n B)  x_{p-\ell}
%=
%\sum_{\tilde \ell = -\N}^\N 
%\sum_{k=0}^1
%\sinc(\tilde \ell + k \L - \tauc_n B)  x_{p-\tilde \ell - k\L}   \nonumber \\
%\end{align}
%
%
%we have $p - \tau_j L \in [-3N,..., N]$. 
%We choose the sample above according to 
%
%\begin{align}
%\sum_{\tilde \ell = -\N}^\N \sum_{k \in \mb Z} 
% \sinc(\tilde \ell + k \L - \tauc_n B) 
% x_{p-\tilde \ell - k\L}  
% \nonumber \\
%=   
%\sum_{\ell = -\N}^\N 
%D_{\N} \! \left( \frac{\ell}{\L} - \tau_n \right)
%x_{p- \ell}
%\end{align}
%
%for simplicity:
%
%\begin{align}
%\sum_{\tilde \ell = -\N}^\N \sum_{k \in \mb Z} 
% \sinc(\tilde \ell + k \L - \tau_n \L ) 
% x_{p-\tilde \ell - k\L}  
% \nonumber \\
%=   
%\sum_{\ell = -\N}^\N 
%D_{\N} \! \left( \frac{\ell}{\L} - \tau_n \right)
%x_{p- \ell}
%\end{align}
%
%\begin{align}
%\sum_\ell \sinc(p - \tau_j L - \ell) x_\ell
%\end{align}

%%%%%%%%%%%%%%%%%%%%%%%%%%%%%%%%%%%%

\section{\label{sec:proofdiscrete}Proof of Theorem \ref{cor:discretesuperres}}

%The proof is based on construction of a dual polynomial. 
The following proposition---standard in the theory of compressed sensing (see e.g.,~\cite{candes_robust_2006})---shows that the existence of a certain dual polynomial guarantees that $\mathrm{L1}(\vy)$ in \eqref{eq:l1minmG} succeeds in reconstructing $\vs$. 

\begin{proposition}
Let $\vy=\mR\vs$ and let $\dT$ denote the support of $\vs$. Assume that $\mR_{\dT}$ has full column rank. If there exists a vector $\vv$ in the row space of $\mR$ with 
\begin{align}
\vv_{\dT} = \sign(\vs_{\dT}) \quad \text{and} \quad \norm[\infty]{\vv_{\comp{\dT}}} < 1
\label{eq:dualcerl1}
\end{align}
then $\vs$ is the unique minimizer of $\mathrm{L1}(\vy)$ in \eqref{eq:l1minmG}. 
\end{proposition}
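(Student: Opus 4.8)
The statement is a standard dual-certificate sufficient condition for exact recovery via $\ell_1$-minimization, and I would prove it by the classical convex-analysis argument. The plan is as follows. Let $\vs$ be the ground-truth signal with support $\dT$, and suppose $\tilde{\vs}$ is any feasible point of $\mathrm{L1}(\vy)$, i.e.\ $\mR\tilde{\vs} = \vy = \mR\vs$. Write $\vh \defeq \tilde{\vs} - \vs$, so $\vh \in \ker(\mR)$. The goal is to show $\norm[1]{\tilde{\vs}} > \norm[1]{\vs}$ whenever $\vh \neq 0$, which gives uniqueness of $\vs$ as the minimizer.

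First I would split the $\ell_1$-norm over the support and its complement and use convexity of $|\cdot|$ together with the subgradient inequality: for any $\vw$ with $\norm[\infty]{\vw} \le 1$ and $\vw_\dT = \sign(\vs_\dT)$,
\begin{align*}
\norm[1]{\tilde{\vs}}
&= \norm[1]{\vs_\dT + \vh_\dT} + \norm[1]{\vh_{\comp\dT}}
\ge \norm[1]{\vs} + \Re\innerprod{\sign(\vs_\dT)}{\vh_\dT} + \norm[1]{\vh_{\comp\dT}} \\
&= \norm[1]{\vs} + \Re\innerprod{\vw}{\vh} - \Re\innerprod{\vw_{\comp\dT}}{\vh_{\comp\dT}} + \norm[1]{\vh_{\comp\dT}}.
\end{align*}
Now take $\vw = \vv$, the dual certificate from \eqref{eq:dualcerl1}. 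Since $\vv$ lies in the row space of $\mR$, we can write $\vv = \herm{\mR}\vq$ for some $\vq$, and because $\vh \in \ker(\mR)$ we get $\innerprod{\vv}{\vh} = \innerprod{\herm{\mR}\vq}{\vh} = \innerprod{\vq}{\mR\vh} = 0$. Combining with $\norm[\infty]{\vv_{\comp\dT}} < 1$, which gives $|\Re\innerprod{\vv_{\comp\dT}}{\vh_{\comp\dT}}| \le \norm[\infty]{\vv_{\comp\dT}}\norm[1]{\vh_{\comp\dT}}$, we obtain
\[
\norm[1]{\tilde{\vs}} \ge \norm[1]{\vs} + \left(1 - \norm[\infty]{\vv_{\comp\dT}}\right)\norm[1]{\vh_{\comp\dT}} \ge \norm[1]{\vs},
\]
with equality only if $\vh_{\comp\dT} = 0$.

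It remains to upgrade this to strict inequality when $\vh \neq 0$, i.e.\ to rule out $\vh \neq 0$ with $\vh_{\comp\dT} = 0$. If $\vh$ is supported entirely on $\dT$ and $\vh \in \ker(\mR)$, then $\mR_\dT \vh_\dT = 0$; since $\mR_\dT$ has full column rank by hypothesis, this forces $\vh_\dT = 0$, hence $\vh = 0$. Therefore any feasible $\tilde{\vs} \neq \vs$ has $\norm[1]{\tilde{\vs}} > \norm[1]{\vs}$, establishing that $\vs$ is the unique minimizer. There is no serious obstacle here — the only point requiring a little care is the handling of complex signs (using $\Re\innerprod{\cdot}{\cdot}$ throughout, and the identity $|a+b| \ge |a| + \Re(\sign(a)\conj{b})$ for the support-term bound), and invoking the full-column-rank assumption to close the equality case. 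Everything else is the textbook primal-dual witness argument.
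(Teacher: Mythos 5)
Your proof is correct and is precisely the standard primal–dual witness argument. The paper itself does not prove this proposition: it explicitly flags it as ``standard in the theory of compressed sensing'' and cites \cite{candes_robust_2006}, then moves on to use it to deduce Theorem~\ref{cor:discretesuperres}. So there is no in-paper proof to compare against; your argument — decompose $\vh=\tilde\vs-\vs\in\ker\mR$, apply the subgradient inequality on $\dT$, use membership of $\vv$ in the row space to kill $\Re\innerprod{\vv}{\vh}$, bound the off-support term by H\"older with $\norm[\infty]{\vv_{\comp\dT}}<1$, and close the equality case via full column rank of $\mR_\dT$ — is the argument the citation points to. Two small observations that do not affect correctness: you should be slightly careful to state the complex subgradient inequality as $|a+b|\geq|a|+\Re(\overline{\sign(a)}\,b)$ so that it lines up with the inner-product convention $\innerprod{\va}{\vb}=\herm{\va}\vb$ used in this paper (your parenthetical $\Re(\sign(a)\conj{b})$ is equivalent since $\Re z = \Re\conj z$, but it is worth writing it in the form that matches the convention); and the strict inequality conclusion requires both $\norm[\infty]{\vv_{\comp\dT}}<1$ \emph{strictly} and the rank condition, which you invoke correctly.
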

The proof now follows directly from Proposition \ref{prop:dualpolynomial}. To see this, set $\vu = \sign(\vs_{\dT})$ in Proposition \ref{prop:dualpolynomial} and consider the polynomial $Q(\vr)$ from Proposition \ref{prop:dualpolynomial}. Define $\vv$ as
 $[\vv]_{(m,n)} = Q([m/\K,n/\K])$ and note that $\vv$ satisfies \eqref{eq:dualcerl1} since $Q([m/\K,n/\K]) = \sign(\vs_{(m,n)})$ for $(m,n) \in \dT$ and $\abs{Q([m/\K,n/\K])} < 1$ for $(m,n) \notin \dT$.

%%%%%%%%%%%%%%%%%%%%%%%%%%%%%%%%%%%%

\section{\label{sec:boundU}Bound on $U$}
%\VM{I did not check the derivations below in the paper.}
%\Comment{to be show, see march19.mw.

We have that 
\begin{align}
U( t ) 
&= 
\hat c (2\pi N)^{-m} \sum_{p=-\N}^{\N} \min\left(1, \frac{1}{p^4} \right)    
P^{(m)}(p/\L - t)
\label{eq:Pmbuform} 
\end{align}
with 
\[
P^{(m)}(t) 
\defeq  \frac{1}{M} \sum_{k=-\N}^{\N}  (-i2\pi k)^m e^{i2\pi t k  }
= \frac{\derd^m }{ \derd t^m}   \frac{\sin(L \pi t)}{M\sin(\pi t)}.
\]
We start by upper-bounding $|P^{(m)}(t)|$. First note that $|P^{(m)}(t)|$ is a $1$-periodic and symmetric function, thus in order to upper-bound $|P^{(m)}(t)|$, we only need to consider $t \in [0,1/2]$. 
 
For $m=0$, we have that 
\[
|P^{(0)}(t)| \leq  \min\left(4, \frac{1}{M |\sin(\pi t)|} \right).
\]
Next, consider the case $m=1$, and assume that $t\geq 1/\L$. We have
\[
P^{(1)}(t) = \frac{\cos(\L \pi t) \L \pi}{M \sin(\pi t)} - \frac{\pi \sin(\L \pi t) \cos(\pi t)  }{M \sin^2(\pi t)}.
\]
Using that $\sin(\pi t) \geq 2 t \geq 2/\L$ for $1/\L\leq t \leq 1/2$ we get 
\[
|P^{(1)}(t)| \leq  \frac{ 1.5 \L \pi}{M | \sin(\pi t)|}. 
\]
Next, consider the case $m=2$. We have 
\[
P^{(2)}(t)
= 
\frac{\pi^2 (1-\L^2) \sin(\L \pi t)   }{M \sin(\pi t)} 
- \frac{2 L \pi^2 \cos(\L \pi t) \cos(\pi t)}{M \sin^2(\pi t)}
+ \frac{2 \pi^2  \sin(L \pi t) \cos^2(\pi t) }{M \sin^3(\pi t)}.
\]
Using again that that $\sin(\pi t) \geq 2 t \geq 2/\L$ for $1/\L\leq t \leq 1/2$ we get
\[
|P^{(2)}(t)|
\leq
\frac{2.5 \L^2  \pi^2}{M |\sin(\pi t)|}.
\]
Analogously, we can obtain bounds for $m=3,4$. We therefore obtain, for $1/\L \leq |t| \leq 1/2$, 
\begin{align}
|P^{(m)}(t)| \leq (\L\pi)^m  \frac{c_1}{M |\sin(\pi t)|} \leq \underbrace{1.0039c_1}_{c_2 \defeq}   (2\pi\N)^m  \frac{1}{M |\sin(\pi t)|},
\label{eq:Pmb2}
\end{align}
where $c_1$ is a numerical constant and where we 
used that $(\L/(2\N))^m \leq 1.0039$ for $\N\geq 512$ and $m\leq 4$. Regarding the range $0 \leq |t| \leq 1/\L$, simply note that by Bernstein's polynomial inequality (cf.~Proposition \ref{prop:bernstein}) we have, for all $t$, from $|P^{(0)}(t)| \leq 4$, that 
\begin{align}
|P^{(m)}(t)|  \leq 4 (2\pi \N )^m. %\leq 4 c_2 (2\N \pi)^m
\label{eq:Pmb1}
\end{align}
%
%
%
%Consider the case of $m=0$. 
%Due to
%\[
%\left| \frac{1}{M} \sum_{\tilde k=-\N}^{\N}   \omega^{- t \tilde k  } \right| 
%= 
%\left| \frac{\sin(L \pi t)}{M\sin(\pi t)} \right| 
%\leq \min\left(4, \frac{1}{M |\sin(\pi t)|} \right) .
%\]
%we have, for $m=
With $c_2\geq1$, application of \eqref{eq:Pmb2} and \eqref{eq:Pmb1} on \eqref{eq:Pmbuform} yields
\begin{align*}
U( t ) 
&\leq 
\hat c \sum_{p=-\N}^{\N} \min\left(1, \frac{1}{p^4} \right)    
c_2
\begin{cases}
4, & |p/\L - t + n| \leq 1/\L, n \in \mathbb Z\\
\frac{1}{M |\sin(\pi (p/\L - t))|}, & \text{else}.
\end{cases}
\end{align*}
The RHS above is $1$-periodic in $t$ and symmetric around the origin. Thus, it suffices to consider $t \in [0,1/2]$. Assume furthermore that $Lt$ is an even integer, the proof for general $t$ is similar. 
For $p\geq 0$, we have that $|p/L - t| \leq 1/2$ and thus  
$M |\sin(\pi (p/L - t) )| \geq M |2 (p/L - t)| = 2M/L |p- Lt| \geq 1/2 |p-Lt|$. It follows that  
\begin{align}
U(t) 
&\leq 
\hat c  c_2\sum_{p=0}^{\N} \min\left(1, \frac{1}{p^4} \right)    \min\left(4, \frac{2}{| p - L t|} \right)
\nonumber \\
&\leq 
\hat c c_2\sum_{p=0}^{Lt/2} \min\left(1, \frac{1}{p^4} \right)    
 \frac{2}{L t - p} 
+
\sum_{p=Lt/2+1}^{Lt-1}  \frac{1}{p^4}  \frac{2}{L t-p} 
+
\sum_{p=Lt}^{\N} \frac{1}{p^4}  4   
\nonumber \\
&= 
\frac{\hat c c_2}{Lt}
\left(
\sum_{p=0}^{Lt/2} \min\left(1, \frac{1}{p^4} \right)    
 \frac{Lt}{L t - p} 
+
\sum_{p=Lt/2+1}^{Lt-1}  \frac{Lt}{p^4} \frac{2}{L t-p} 
+
\sum_{p=Lt}^{\N} \frac{4 Lt}{p^4}     
\right) \nonumber \\
&\leq 
\frac{\hat c c_2}{Lt}
\left(
\sum_{p=0}^{Lt/2} 2 \min\left(1, \frac{1}{p^4} \right)  
+
\sum_{p=Lt/2+1}^{Lt-1} 2\frac{2}{p^3} 
+
\sum_{p=Lt}^{\N} \frac{4}{p^3} 
\right) \nonumber \\
&\leq \frac{\tilde c}{Lt}. \nonumber
\end{align}
Analogously we can upper-bound the sum over $p=-\N,...,-1$, which yields $U(t) \leq \frac{c}{L|t|}$, as desired. 
%} 

%%%%%%%%%%%%%%%%
\section{Proof of Proposition \ref{prop:dualmin} \label{sec:proofprop:dualmin}}

The argument is standard, see e.g., \cite[Prop.~2.4]{tang_compressed_2013}. 
By definition, $\vq$ is dual feasible. To see this, note that 
\begin{align}
\norm[\setA^\ast]{\herm{\mG} \vq} 
= \sup_{\vr \in [0,1]^2} \left| \innerprod{\herm{\mG} \vq}{\va(\vr)} \right|
= \sup_{\vr \in [0,1]^2} \left| \innerprod{\vq}{\mG\va(\vr)} \right| 
= \sup_{\vr \in [0,1]^2} |Q(\vr)| \leq 1
\label{eq:AdnormGqles}
\end{align}
where the last inequality holds by assumption. 
By \eqref{eq:dualpolyinatmincon}, we obtain
\begin{align}
\innerprod{\vq}{\vy} 
&= \innerprod{\vq}{ \mG \sum_{\vr_n \in \T} b_n \va(\vr_n) } 
= \sum_{\vr_n \in \T}  \conj{b}_n \innerprod{\vq}{ \mG \va(\vr_n) } 
%&= \sum_{\vr_n \in \T}  \conj{b}_n  Q(\vr_n) 
= \sum_{\vr_n \in \T}  \conj{b}_n  \sign(b_n) 
= \sum_{\vr_n \in \T}  |b_n| %\nonumber \\
\geq \norm[\setA]{\vz},
\end{align}
where the last inequality holds by definition of the atomic norm. By H\"older's inequality we have that 
\[
\Re \innerprod{\vq}{\vy} 
= \Re \innerprod{\vq}{\mG \vz} 
= \Re \innerprod{\herm{\mG}\vq}{\vz}
\leq \norm[\setA^\ast]{\herm{\mG} \vq} \norm[\setA]{\vz} \leq \norm[\setA]{\vz}
\]
where we used \eqref{eq:dualpolyinatmincon} for the last inequality. 
We thus have established that $\Re \innerprod{\vq}{\vy} = \norm[\setA]{\vz}$. 
Since $(\vz, \vq)$ is primal-dual feasible, it follows from strong duality that $\vz$ is a primal optimal solution and $\vq$ is a dual optimal solution. 

It remains to establish uniqueness. To this end, suppose that $\hat \vz = \sum_{\vr_n \in \hat \T} \hat b_n \va(\vr_n)$ with $\norm[\setA]{\hat \vz} = \sum_{\vr_n \in \hat \T} |\hat b_n|$ and $\hat \T \neq \T$ is another optimal solution. We then have 
\begin{align*}
\Re \innerprod{\vq}{\mG \hat \vz} 
&= \Re \innerprod{\vq}{\mG \sum_{\vr_n \in \hat \T} \hat b_n \va(\vr_n) }  \\
&= \sum_{\vr_n \in \T} \Re \left( \conj{\hat b}_n \innerprod{\vq}{\mG \va(\vr_n) } \right)
+ \sum_{\vr_n \in \hat \T \setminus \T} \Re \left( \conj{\hat b}_n \innerprod{\vq}{\mG \va(\vr_n) } \right)   \\
&< \sum_{\vr_n \in \T} |\hat b_n|
+ \sum_{\vr_n \in \hat \T \setminus \T} |\hat b_n| \\
&= \norm[\setA]{\hat \vz}
\end{align*}
where we used that $|Q(\vr)| < 1$ for $\vr \notin \T$. This contradicts strong duality and implies that all optimal solutions must be supported on $\T$. Since the set of atoms with $\vr_n \in \T$ are linearly independent, it follows that the optimal solution is unique.

%%%%%
\section{Comparison to MUSIC \label{app:MUSIC}}

As mentioned previously, the MUSIC algorithm (and related methods) can not be applied directly to the super-resolution radar problem, since MUSIC in general requires multiple measurements (snapshots). 
However, as pointed out by Peter Stoica (personal communication, June 2015), multiple measurements can be obtained from a single measurement by sending a periodic input signal. 
Specifically, let $L = M^2$, and suppose that the entries of $\vx \in \complexset^L$ are $M$-periodic, i.e., $\vx = \transp{[\transp{\tilde \vx}, ..., \transp{\tilde \vx}]}$, where $\tilde \vx \in \complexset^M$. 
We next define a version of our time and frequency shift operators in \eqref{eq:deftimefreqshifts} 
\begin{align}
[\mc T_{\tau}^{(L)} \vx ]_p
\defeq
\frac{1}{L}
\sum_{k=0}^{L-1} 
\left[ 
\left(
\sum_{\ell=0}^{L-1}
x_{\ell} e^{- i2\pi \frac{\ell k}{L}  }
\right)
e^{-i2\pi k  \tau }   
\right]
e^{i2\pi \frac{p k}{L}  }, 
\quad 
[\mc F_{\nu}^{(L)} \vx ]_p
\defeq x_p e^{i2\pi p \nu },\quad p=0,...,\L-1
\label{eq:deftimefreqshifts}
\end{align}
where we made the dependence on the length of $\vx$ (i.e., $L$), explicit. 
By using that the entries of $\vx$ are $M$-periodic, we obtain % that (see Section \ref{sec:dperidoicio} for details):
\begin{align}
\mc F_{\nu}^{(L)} \mc T_{\tau}^{(L)} \vx
 = 
 \begin{bmatrix}
 \mc F_{\nu}^{(M)} \mc T_{M\tau}^{(M)} \tilde \vx \\
e^{i2\pi \nu  M}  \mc F_{\nu}^{(M)} \mc T_{M\tau}^{(M)} \tilde \vx \\
 \hdots \\
e^{i2\pi \nu M (M-1)}   \mc F_{\nu}^{(M)} \mc T_{M\tau}^{(M)} \tilde \vx 
\end{bmatrix}.
\label{eq:periodicio}
\end{align}
We can now extract the vectors $\vy_p \in \complexset^M, p=0,...,M-1$,  from $\vy$ in \eqref{eq:periorel} according to $\vy = \transp{[\vy_0,...,\vy_{M-1}]}, \vy_i \in \complexset^M$, 
and obtain the input-output relation
\begin{align}
\vy_p 
= 
\sum_{j=1}^{S}
 \underbrace{b_j  e^{i2\pi \nu_j  M p} }_{b_{j,p}'}  \mc F_{\nu_j}^{(M)} \mc T_{M\tau_j}^{(M)} \tilde \vx
,
 \quad p=0,...,M-1.
 \label{eq:musicstoica}
\end{align}
The MUSIC algorithm can be applied to the measurements in \eqref{eq:musicstoica} in order to extract the time-frequency shifts $(\tau_j,\nu_j)$. However, the following three conditions are necessary for MUSIC to succeed: 
\begin{enumerate}
\item \label{mcond1} $S < M = \sqrt{L}$, i.e., the number of measurements must be quadratic in the number of time-frequency shifts. By Theorem \ref{thm:mainres}, the convex program $\AN(\vy)$ in \eqref{eq:primal} succeeds even when $S$ is (up to log-factors) linear in $L$.  
\item \label{mcond2} $\tau M \in [0,1]$, since $\mc T^{(M)}_{M\tau}$ is $1$-periodic in $M\tau$. 
The atomic norm minimization approach in \eqref{eq:primal} only requires $\tau \in [0,1]$. 
%This condition is not satisfied in our example in Section 5.2, thus, for this example we cannot compare our method to MUSIC. 
\item \label{mcond3} The $\nu_j$ need to be distinct. The atomic norm minimization approach \eqref{eq:primal} does only require the minimum separation condition to be satisfied either in time, or in frequency. 
%To see this, note that, if all the $\nu_j$ are equal, we are back to the single snapshot case because the $\vy_t$ are equivalent for all $t$. 
%If we can move the $\tau_j$ into the $b_j'$ as well, we are ready to apply MUSIC, for any pair of time-frequency shifts; nevertheless we have a square-root bottleneck. 
\end{enumerate}
%%%
While in the noiseless case, recovery of the time-frequency shifts under conditions \ref{mcond1}, \ref{mcond2}, and \ref{mcond3}, succeeds provided the probing sequence $\tilde \vx$ is chosen appropriately (e.g., by drawing the entries $\tilde \vx$ i.i.d.~uniform from the complex unit disk), 
 recovery of the time-frequency shifts by application of the MUSIC algorithm to the measurement in \eqref{eq:musicstoica} is significantly more sensitive to noise than our convex programming approach is. 
This is a consequence of periodizing the input signal $\vx$, and %; it turns out that recovery from the measurement in \eqref{eq:musicstoica} in general requires a significantly larger separation in frequency, than recovery with the convex program from the measurement in \eqref{eq:periorel} requires. 
%Specifically, for the same separation in the time frequency plane, MUSIC is significantly more sensitive to noise. 
is exemplified with the following numerical example. % where we compare MUSIC to our convex programming approach for different signal-to-noise ratios, and number of time-frequency shifts $S$. 

As in Section \ref{sec:discretesuperes}, we assume that the time-frequency shifts $(\tau_j,\nu_j), j=1,...,S$, lie on a fine grid with grid constant $\left( \frac{1}{L \cdot \SRF}, \frac{1}{L \cdot \SRF} \right)$, with super-resolution factor $\SRF=6$, and suppose that $(\tau_j, \nu_j) \in [0,1/M]^2$. 
We set $M = 17$, choose the time-frequency shifts as 
$\tau_j = \frac{j}{L}, \nu_j = \frac{j}{L}, j = 1,...,S$, $S \in \{1,4,16\}$, and draw the corresponding attenuation factors $b_j$ i.i.d.~uniformly at random from the complex unit disc. Note that by construction, the time-frequency shifts lie on the grid, and Conditions \ref{mcond1}, \ref{mcond2}, and \ref{mcond3} are satisfied. 

%In order to apply the MUSIC algorithm, we first generate the MUSIC-probing signal $\tilde \vx \in \complexset^{M}$ by drawing its entries i.i.d.~uniformly at random from the complex unit disc and 
We generate the MUSIC-compatible measurements $\vy_p$ according to \eqref{eq:musicstoica}, where the entries of the probing signal $\tilde \vx \in \complexset^{M}$ are drawn i.i.d.~at random from the complex unit disc, and 
 consider the following variant of the MUSIC algorithm. 
Given the measurements $\vy_p, p=0,...,M-1$, we compute the matrix of eigenvectors $\mU \in \complexset^{M \times M-S}$ corresponding to the $M-S$ smallest eigenvalues of $\mZ = \sum_{p=0}^{M-1} \vy_p \herm{\vy}_p$. 
The time-frequency shifts are identified as the $S$ time-frequency shifts $(\tau,\nu) \in 
\{ \left( \frac{k}{L \cdot \SRF}, \frac{\ell}{L \cdot \SRF} \right) \colon (k,\ell) \in \left\{0,...,L\cdot \SRF/M-1\right\}^2 \}$ that minimize 
$\norm[2]{\herm{\mU} \mc F_{\nu}^{(M)} \mc T_{M\tau}^{(M)} \tilde \vx}$.
 
We compare the MUSIC algorithm to our convex programming approach. Specifically, we generate the probing signal $\vx \in \complexset^L$ by drawing its entries i.i.d. from the complex unit disc, and generate the measurement $\vy \in \complexset^L$ according to \eqref{eq:periorel}. We then recover the time-frequency shifts with $\text{L1-ERR}$ defined in \eqref{eq:BDDN}, where the 
 matrix $\mR$ in \eqref{eq:BDDN} has columns $\mc F_{\nu}
\mc T_{\tau}
\vx$, $(\nu,\tau) \in \{ \left( \frac{k}{L \cdot \SRF}, \frac{\ell}{L \cdot \SRF} \right) \colon (k,\ell) \in \{ 0,...,L\cdot \SRF/M -1 \}^2 \}$. 

The results are plotted in Figure \ref{fig:cmpmusicconvex}. 
While L1-ERR perfectly recovers the position of the time-frequency shifts on the grid as long as $\text{SNR} \leq 15$dB, the resolution error of MUSIC is quite large if there are many time-frequency shifts, even for large SNRs (i.e., for $\text{SNR} = 50$dB). 

This is not surprising, since MUSIC, due to the periodization of the input signal required to obtain the input-output relation \eqref{eq:musicstoica}, has to deal with a significantly worse conditioned matrix. To see this, set for simplicity the time shifts to $\tau_j=0$ and note that, with this choice, writing \eqref{eq:musicstoica} in matrix-vector form yields 
\[
\vy_p = \tilde \mF   \vb_p, \quad \tilde \mF \defeq [ \mc F_{\nu_1}^{(M)} \tilde \vx,..., \mc F_{\nu_S}^{(M)}  \tilde \vx ], 
\]
where $[\vb_p]_j = b_{j,p}$. The matrix $\tilde \mF \in \reals^{M\times S}$ is equivalent to the upper left corner of a $M^2\times M^2$ DFT matrix with scaled rows (scaled by the $[\tilde \vx]_p$), i.e., $[\tilde \mF]_{p,j} = [\tilde \vx]_p e^{i2\pi \frac{p j}{M^2}}, p = 0,..,M-1, j= 1,..., S$, and as such is ill-conditioned, in particular if $M$ and $S$ are large.

\begin{figure}[!ht]
\centering
\includegraphics{./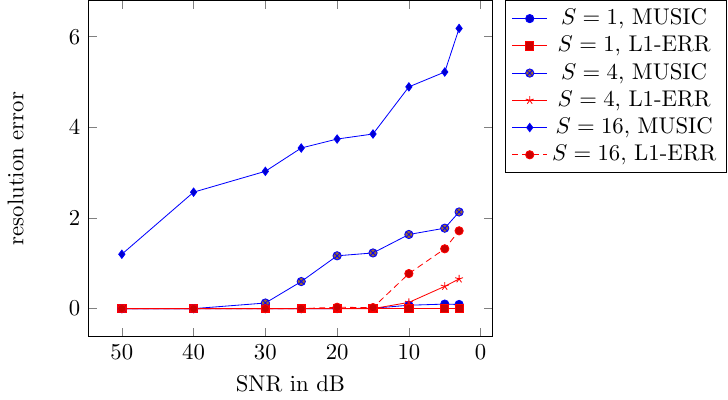}
%\begin{tikzpicture}    
%
%\begin{axis}[
%xlabel = SNR in dB,
%ylabel=resolution error,
%%xmax = 40,
%x dir=reverse,
%legend pos=outer north east
%]
%%\nextgroupplot[legend style={legend pos=south east,}]
%
%\addplot +[blue] table[x index=0,y index=1]{./fig/resultsSRF_1.dat};
%\addlegendentry{$S=1$, MUSIC };
%\addplot +[red] table[x index=0,y index=2]{./fig/resultsSRF_1.dat};
%\addlegendentry{$S=1$, L1-ERR};
%
%%\addplot +[blue] table[x index=0,y index=1]{./fig/resultsSRF_2.dat};
%%\addlegendentry{$S=2$};
%%\addplot +[red] table[x index=0,y index=2]{./fig/resultsSRF_2.dat};
%%\addlegendentry{$S=2$};
%
%\addplot +[blue] table[x index=0,y index=1]{./fig/resultsSRF_4.dat};
%\addlegendentry{$S=4$, MUSIC};
%\addplot +[red] table[x index=0,y index=2]{./fig/resultsSRF_4.dat};
%\addlegendentry{$S=4$, L1-ERR};
%
%%\addplot +[blue] table[x index=0,y index=1]{./fig/resultsSRF_8.dat};
%%\addlegendentry{$S=8$};
%%\addplot +[red] table[x index=0,y index=2]{./fig/resultsSRF_8.dat};
%%\addlegendentry{$S=8$};
%
%\addplot +[blue] table[x index=0,y index=1]{./fig/resultsSRF_16.dat};
%\addlegendentry{$S=16$, MUSIC};
%\addplot +[red] table[x index=0,y index=2]{./fig/resultsSRF_16.dat};
%\addlegendentry{$S=16$, L1-ERR};
%
%\end{axis}
%
%\end{tikzpicture}

\caption{
\label{fig:cmpmusicconvex}
Resolution error, defined as 
the average over $j=1,...,S$ of $\L \sqrt{ (\hat \tau_j - \tau_j)^2 + (\hat \nu_j - \nu_j)^2}$, where the $(\hat \tau_j, \hat \nu_j)$ are the time-frequency shifts obtained by i) application of the MUSIC algorithm to the samples $\vy_p + \vn_p, p=0,...,M-1$ in \eqref{eq:musicstoica}, and ii)
application of \text{L1-ERR} to the sample $\vy + \vn$ in \eqref{eq:periorel}. Here, $\vn_p\in \complexset^M$ and $\vn \in \complexset^L$ is additive Gaussian noise, such that the signal-to-noise ratio is $\text{SNR} \defeq \left( \sum_{p=0}^{M-1} \norm[2]{ \vy_p }^2 \right) / \left( \sum_{p=0}^{M-1} \norm[2]{ \vn_p}^2 \right)$ 
 and
 $\text{SNR} \defeq \norm[2]{\vy}^2 / \norm[2]{ \vn }^2$.   
}

\end{figure}

%%%
\section{Proof of Theorem~\ref{thm:mtxHW}}

\newcommand{\E}{\operatorname{\mathbb{E}}}

In this section, we prove the matrix Hanson-Wright inequality given in Theorem~\ref{thm:mtxHW}, restated below:
\begin{theorem}
%\label{thm:mtxHW}
Let $g_1,\ldots,g_{L}$ be iid zero-mean and unit variance Gaussian random variables, and let $\mB_{\ell,\ell'} \in \reals^{d_1\times d_2}$ be fixed matrices. 
Then, for $t \geq 0$,
\begin{align*}
\PR{
\norm{\sum_{\ell, \ell'=1}^L
\mtx{B}_{\ell,\ell'} (g_\ell g_{\ell'} - \EX{g_\ell g_{\ell'} } ) }\ge t
}
\leq
2(d_1+d_2)e^{-c\min\left(\frac{t}{\max_\ell\norm{\mtx{B}_\ell}},\frac{t^2}{\max\left(\norm{\sum_{\ell=1}^L \mtx{B}_\ell^T\mtx{B}_\ell},\norm{\sum_{\ell=1}^L \mtx{B}_\ell\mtx{B}_\ell^T}\right)}\right)}\\
+
4(d_1+d_2)(L+1)e^{-\frac{t}{8\sqrt{e L \max_{\ell'}\max\left(\norm{\sum_{\ell\neq\ell'}\mtx{B}_{\ell,\ell'}^T\mtx{B}_{\ell,\ell'}},\norm{\sum_{\ell\neq\ell'}\mtx{B}_{\ell,\ell'}\mtx{B}_{\ell,\ell'}^T}\right)}}}
\end{align*}
\end{theorem}

The proof of this statement is a generalization of the decoupling proof of the classical Hanson-Wright inequality but with some intricate modifications (see \cite[Section 6.2]{Vershynin_2018} for a nice overview of the classical decoupling proof). 
See also~\cite{Adamczak_Latala_Meller_2020} for a general Hanson-Wright inequality for Banach spaces. 

Define the double-sided tail
\begin{align*}
p:=
\PR{
\norm[\opnormss]{\sum_{\ell=1}^L\sum_{\ell'=1}^Lg_\ell g_{\ell'}\mtx{B}_{\ell,\ell'}-\sum_{\ell=1}^L\mtx{B}_{\ell,\ell}}\ge t }
\end{align*}
Now consider the following decomposition
\begin{align*}
\sum_{\ell=1}^L\sum_{\ell'=1}^Lg_\ell g_{\ell'}\mtx{B}_{\ell,\ell'}-\sum_{\ell=1}^L\mtx{B}_{\ell,\ell}=\sum_{\ell=1}^L (g_\ell^2-1)\mtx{B}_{\ell,\ell}+\sum_{\ell,\ell':\text{ }\ell\neq \ell'} g_\ell g_{\ell'} \mtx{B}_{\ell,\ell'}.
\end{align*}
Therefore, the problem reduces to estimating the diagonal and off-diagonal sums:
\begin{align*}
p\le \PR{ \norm[\opnormss]{\sum_{\ell=1}^L (g_\ell^2-1)\mtx{B}_{\ell,\ell}}\ge \frac{t}{2}}
+
\PR{ 
\norm[\opnormss]{\sum_{\ell,\ell':\text{ }\ell\neq \ell'} g_\ell g_{\ell'} \mtx{B}_{\ell,\ell'}}\ge \frac{t}{2} } := p_1+p_2.
\end{align*}
We proceed by bounding each of these terms.

\paragraph{Diagonal sum:} First define the symmetric matrices
\begin{align*}
\mtx{A}_{\ell}=\begin{bmatrix} \mtx{0} & \mtx{B}_{\ell,\ell} \\ \mtx{B}_{\ell,\ell}^T & \mtx{0} \end{bmatrix}
\end{align*}
and note that 
\begin{align*}
\norm[\opnormss]{\sum_{\ell=1}^L(g_\ell^2-1)\mtx{B}_{\ell,\ell} }=\norm[\opnormss]{\sum_{\ell=1}^L (g_\ell^2-1) \mtx{A}_{\ell}}.
\end{align*}
Furthermore, note that 
\begin{align*}
\norm[\opnormss]{\sum_{\ell=1}^L(g_\ell^2-1) \mtx{A}_{\ell}}=\max\left(\lambda_{\max}\left(\sum_{\ell=1}^L(g_\ell^2-1) \mtx{A}_{\ell}\right),
\lambda_{\max}\left(-\sum_{\ell=1}^L(g_\ell^2-1) \mtx{A}_{\ell}\right)
\right).
\end{align*}
Thus, using Markov's inequality
\begin{align*}
\PR{ \norm[\opnormss]{\sum_{\ell=1}^L(g_\ell^2-1) \mtx{B}_{\ell,\ell}}\ge \frac{t}{2} }
=&\PR{\norm[\opnormss]{\sum_{\ell=1}^L(g_\ell^2-1) \mtx{A}_{\ell}}\ge \frac{t}{2} } \\
=&2\PR{ \lambda_{\max}\left(\sum_{\ell=1}^L(g_\ell^2-1) \mtx{A}_{\ell}\right)\ge \frac{t}{2} }\\
=&2 \PR{ e^{\theta\cdot \lambda_{\max}\left(\sum_{\ell=1}^L(g_\ell^2-1) \mtx{A}_{\ell}\right)}\ge e^{\frac{\theta t}{2}} }\\
\le&2e^{-\frac{\theta t}{2}}\E_g\Bigg[e^{\theta\cdot\lambda_{\max}\left(\sum_{\ell=1}^L(g_\ell^2-1) \mtx{A}_{\ell}\right)}\Bigg]\\
\le&2e^{-\frac{\theta t}{2}}\E_g\Bigg[\text{trace}\left(e^{\theta\cdot\sum_{\ell=1}^L(g_\ell^2-1) \mtx{A}_{\ell}}\right)\Bigg]\\
\overset{(a)}{\le}&2e^{-\frac{\theta t}{2}}\text{trace}\left(e^{\sum_{\ell=1}^L\log \left(\E_g\Big[ e^{\theta(g_\ell^2-1) \mtx{A}_{\ell}}\Big]\right)}\right),
\end{align*}
where (a) follows from using \cite[Equation (5.15)]{Vershynin_2018}. 
To continue note that for a matrix with eigenvalue decomposition $\mtx{A}_\ell=\mtx{V}_\ell\mtx{\Lambda}_\ell\mtx{V}_\ell^T$ for $\theta\ge 0$ obeying $\theta\le \frac{c}{\lambda_{\max}(\mtx{A}_\ell)}=\frac{c}{\norm[\opnormss]{\mtx{B}_\ell}}$ we have
\begin{align*}
\E_g\Big[ e^{\theta(g^2-1) \mtx{A}_{\ell}}\Big]=\mtx{V}_\ell \E_g\Big[ e^{\theta(g^2-1) \mtx{\Lambda}_{\ell}}\Big]\mtx{V}_\ell^T\preceq \mtx{V}_\ell e^{-C\theta^2\Lambda_\ell^2}  \mtx{V}_\ell^T=e^{-C\theta^2\mtx{A}_\ell^2},
\end{align*}
where in the inequality we used the fact that $g^2-1$ is a subexponential random variable and hence \cite[Lemma 5.15]{vershynin_introduction_2012} on the moment generating function of subexponential random variable applies.

Thus, continuing from the above and minimizing the expression over $0\le \theta\le \frac{1}{4\max_\ell \norm[\opnormss]{\mtx{B}_\ell}}$ we have
\begin{align*}
\PR{ \norm[\opnormss]{\sum_{\ell=1}^L(g_\ell^2-1) \mtx{B}_{\ell,\ell}}\ge \frac{t}{2} }
\le& 2(d_1+d_2)e^{-c\min\left(\frac{t}{\max_\ell\norm[\opnormss]{\mtx{B}_\ell}},\frac{t^2}{\max\left(\norm[\opnormss]{\sum_{\ell=1}^L \mtx{B}_\ell^T\mtx{B}_\ell},\norm[\opnormss]{\sum_{\ell=1}^L \mtx{B}_\ell\mtx{B}_\ell^T}\right)}\right)}.
\end{align*}

\paragraph{Off-diagonal sum:} First define the symmetric matrices
\begin{align*}
\mtx{A}_{\ell,\ell'}=\begin{bmatrix} \mtx{0} & \mtx{B}_{\ell,\ell'} \\ \mtx{B}_{\ell,\ell'}^T & \mtx{0} \end{bmatrix}\quad\text{and}\quad \mtx{A}_{\ell,\ell}=\mtx{0}
\end{align*}
and note that 
\begin{align*}
\norm[\opnormss]{\sum_{\ell,\ell':\text{ }\ell\neq \ell'} g_\ell g_{\ell'} \mtx{B}_{\ell,\ell'}}=\norm[\opnormss]{\sum_{\ell=1}^L\sum_{\ell'=1}^L g_\ell g_{\ell'} \mtx{A}_{\ell,\ell'}},
\end{align*}
Furthermore, note that 
\begin{align*}
\norm[\opnormss]{\sum_{\ell=1}^L\sum_{\ell'=1}^L g_\ell g_{\ell'} \mtx{A}_{\ell,\ell'}}=\max\left(\lambda_{\max}\left(\sum_{\ell=1}^L\sum_{\ell'=1}^L g_\ell g_{\ell'} \mtx{A}_{\ell,\ell'}\right),
\lambda_{\max}\left(-\sum_{\ell=1}^L\sum_{\ell'=1}^L g_\ell g_{\ell'} \mtx{A}_{\ell,\ell'}\right)
\right).
\end{align*}
Thus, using Markov's inequality
\begin{align*}
\PR{\norm[\opnormss]{\sum_{\ell,\ell':\text{ }\ell\neq \ell'} g_\ell g_{\ell'} \mtx{B}_{\ell,\ell'}}\ge \frac{t}{2} }
=&
\PR{\norm[\opnormss]{\sum_{\ell=1}^L\sum_{\ell'=1}^L g_\ell g_{\ell'} \mtx{A}_{\ell,\ell'}}\ge \frac{t}{2} }
\\
=&2\PR{\lambda_{\max}\left(\sum_{\ell=1}^L\sum_{\ell'=1}^L g_\ell g_{\ell'} \mtx{A}_{\ell,\ell'}\right)\ge \frac{t}{2} }\\
=&2\PR{e^{\theta\cdot \lambda_{\max}\left(\sum_{\ell=1}^L\sum_{\ell'=1}^L g_\ell g_{\ell'} \mtx{A}_{\ell,\ell'}\right)}\ge e^{\frac{\theta t}{2}} }\\
\le&2e^{-\frac{\theta t}{2}}\E_g\Bigg[e^{\theta\cdot\lambda_{\max}\left(\sum_{\ell=1}^L\sum_{\ell'=1}^L g_\ell g_{\ell'} \mtx{A}_{\ell,\ell'}\right)}\Bigg]\\
\le&2e^{-\frac{\theta t}{2}}\E_g\E_{\widetilde{g}}\Bigg[e^{4\theta\cdot\lambda_{\max}\left(\sum_{\ell=1}^L\sum_{\ell'=1}^L g_\ell \widetilde{g}_{\ell'} \mtx{A}_{\ell,\ell'}\right)}\Bigg],
\end{align*}
where in the last line we used the result in \cite[Exercise 6.1.5]{Vershynin_2018} and $\widetilde{g}_\ell$ are independent copies of $g_\ell$. Continuing the above chain of inequalities we have
\begin{align*}
\PR{
\norm[\opnormss]{\sum_{\ell,\ell':\text{ }\ell\neq \ell'} g_\ell g_{\ell'} \mtx{B}_{\ell,\ell'}}\ge \frac{t}{2} }
\le&2e^{-\frac{\theta t}{2}}\E_g\E_{\widetilde{g}}\Bigg[e^{4\theta\cdot\lambda_{\max}\left(\sum_{\ell=1}^L\sum_{\ell'=1}^L g_\ell \widetilde{g}_{\ell'} \mtx{A}_{\ell,\ell'}\right)}\Bigg]\\
=&2e^{-\frac{\theta t}{2}}\E_g\E_{\widetilde{g}}\Bigg[\lambda_{\max}\left(e^{4\theta\cdot\sum_{\ell=1}^L\sum_{\ell'=1}^L g_\ell \widetilde{g}_{\ell'} \mtx{A}_{\ell,\ell'}}\right)\Bigg]\\
\le&2e^{-\frac{\theta t}{2}}\E_g\E_{\widetilde{g}}\Bigg[\text{trace}\left( e^{4\theta\cdot\sum_{\ell=1}^L\sum_{\ell'=1}^Lg_\ell \widetilde{g}_{\ell'} \mtx{A}_{\ell,\ell'}}\right)\Bigg]\\
=&2e^{-\frac{\theta t}{2}}\E_{\widetilde{g}}\E_g\Bigg[\text{trace}\left( e^{4\theta\cdot\sum_{\ell=1}^L g_\ell \left(\sum_{\ell'=1}^L\widetilde{g}_{\ell'} \mtx{A}_{\ell,\ell'}\right)}\right)\Bigg].
\end{align*}
Now note that using \cite[Equation (5.15)]{Vershynin_2018}, we have
\begin{align*}
\E_g\Bigg[\text{trace}\left( e^{4\theta\cdot\sum_{\ell=1}^L g_\ell \left(\underset{\ell': \ell'\neq \ell}{\sum}\widetilde{g}_{\ell'} \mtx{A}_{\ell,\ell'}\right)}\right)\Bigg]\le&\text{trace}\left(e^{\sum_{\ell=1}^L \log  \E_{g_\ell}\Big[e^{4\theta g_\ell\cdot \left(\sum_{\ell'=1}^L\widetilde{g}_{\ell'} \mtx{A}_{\ell,\ell'}\right)}\Big]}\right)\\
=&\text{trace}\left(e^{8\theta^2\sum_{\ell=1}^L\left(\sum_{\ell'=1}^L\widetilde{g}_{\ell'} \mtx{A}_{\ell,\ell'}\right)^2}\right),
\end{align*}
where in the last line we used the fact that for a matrix $\mtx{A}$ and a Gaussian random variable $g$ we have $\log E_g e^{g\theta \mtx{A}}=\frac{\theta^2}{2}\mtx{A}^2$. Combining the last two identities we conclude that
\begin{align*}
\mathbb{P}\Bigg\{\norm[\opnormss]{\sum_{\ell,\ell':\text{ }\ell\neq \ell'} g_\ell g_{\ell'} \mtx{B}_{\ell,\ell'}}\ge \frac{t}{2}\Bigg\}\le 2e^{-\frac{\theta t}{2}}\E_{\widetilde{g}}\Bigg[\text{trace}\left(e^{8\theta^2\sum_{\ell=1}^L\left(\sum_{\ell'=1}^L\widetilde{g}_{\ell'} \mtx{A}_{\ell,\ell'}\right)^2}\right)\Bigg].
\end{align*}
To continue, define the matrices 
\begin{align*}
\mtx{C}_{\ell,\ell'}=\mtx{e}_{\ell}^T\otimes \mtx{A}_{\ell,\ell'}\quad\text{and}\quad
 \mtx{D}_{\ell'}=
 \begin{bmatrix} \mtx{0} & \sum_{\ell=1}^L\mtx{C}_{\ell,\ell'}^T\\
 \sum_{\ell=1}^L\mtx{C}_{\ell,\ell'} & \mtx{0}
 \end{bmatrix}
\end{align*}
and note that
\begin{align*}
\sum_{\ell=1}^L\left(\sum_{\ell'=1}^L\widetilde{g}_{\ell'} \mtx{A}_{\ell,\ell'}\right)^2
=\left(\sum_{\ell'=1}^L\widetilde{g}_{\ell'} \left(\sum_{\ell=1}^L\mtx{C}_{\ell,\ell'}\right)\right)\left(\sum_{\ell'=1}^L\widetilde{g}_{\ell'} \left(\sum_{\ell=1}^L\mtx{C}_{\ell,\ell'}\right)\right)^T
\end{align*}
Furthermore, the matrix $\left(\sum_{\ell'=1}^L\widetilde{g}_{\ell'}  \mtx{D}_{\ell'}\right)^2$ is block diagonal with the two matrices
\begin{align*}
%&\left(\sum_{\ell'=1}^L\widetilde{g}_{\ell'}  \mtx{D}_{\ell'}\right)^2
%=
 %\begin{bmatrix}  
 \left(\sum_{\ell'=1}^L\widetilde{g}_{\ell'} \left(\sum_{\ell=1}^L\mtx{C}_{\ell,\ell'}\right)\right)^T\left(\sum_{\ell'=1}^L\widetilde{g}_{\ell'} \left(\sum_{\ell=1}^L\mtx{C}_{\ell,\ell'}\right)\right)
 ,\quad  %& \mtx{0} \\ \mtx{0} & 
 \left(\sum_{\ell'=1}^L\widetilde{g}_{\ell'} \left(\sum_{\ell=1}^L\mtx{C}_{\ell,\ell'}\right)\right)\left(\sum_{\ell'=1}^L\widetilde{g}_{\ell'} \left(\sum_{\ell=1}^L\mtx{C}_{\ell,\ell'}\right)\right)^T
 %\end{bmatrix}
\end{align*}
on its diagonal, 
so that
\begin{align*}
\text{trace}\left(e^{8\theta^2\sum_{\ell=1}^L\left(\sum_{\ell'=1}^L\widetilde{g}_{\ell'} \mtx{A}_{\ell,\ell'}\right)^2}\right)=&\text{trace}\left(e^{4\theta^2\left(\sum_{\ell'=1}^L\widetilde{g}_{\ell'}  \mtx{D}_{\ell'}\right)^2}\right).
\end{align*}
Thus
\begin{align*}
&\E_{\widetilde{g}}\Bigg[\text{trace}\left(e^{8\theta^2\sum_{\ell=1}^L\left(\sum_{\ell'=1}^L\widetilde{g}_{\ell'} \mtx{A}_{\ell,\ell'}\right)^2}\right)\Bigg]\\
=&\E_{\widetilde{g}}\Bigg[\text{trace}\left(e^{4\theta^2\left(\sum_{\ell'=1}^L\widetilde{g}_{\ell'}  \mtx{D}_{\ell'}\right)^2}\right)\Bigg]\\
=&\sum_{s=0}^{+\infty}\frac{(4\theta^2)^s}{s!}\E_{\widetilde{g}}\Bigg[\text{trace}\left(\left(\sum_{\ell'=1}^L\widetilde{g}_{\ell'}  \mtx{D}_{\ell'}\right)^{2s}\right)\Bigg]\\
\le&\text{trace}(\mtx{I})+\sum_{s=1}^{+\infty}\frac{(4\theta^2(2s-1))^s}{s!}\text{trace}\left(\left(\sum_{\ell'=1}^L\mtx{D}_{\ell'}^2\right)^{s}\right)\\
=&\text{trace}\left(\mtx{I}+\sum_{s=1}^{+\infty}\frac{(2s-1)^s}{s!}\left(4\theta^2\sum_{\ell'=1}^L\mtx{D}_{\ell'}^2\right)^{s}\right)\\
\le&(d_1+d_2)(L+1)\lambda_{\max}\left(\mtx{I}+\sum_{s=1}^{+\infty}\frac{(2s-1)^s}{s!}\left(4\theta^2\sum_{\ell'=1}^L\mtx{D}_{\ell'}^2\right)^{s}\right)\\
\le&(d_1+d_2)(L+1)\left(1+\sum_{s=1}^{+\infty}\frac{(2s-1)^s}{s!}\left(4\theta^2\norm[\opnormss]{\sum_{\ell'=1}^L\mtx{D}_{\ell'}^2}\right)^{s}\right)\\
\le&(d_1+d_2)(L+1)\left(1+\sum_{s=1}^{+\infty}\frac{(2s-1)^s}{s!}\left(4\theta^2L\cdot\max_{\ell'}\norm[\opnormss]{\mtx{D}_{\ell'}^2}\right)^{s}\right)\\
=&(d_1+d_2)(L+1)\left(1+\sum_{s=1}^{+\infty}\frac{(2s-1)^s}{s!}\left(4\theta^2L\cdot\max_{\ell'}\max\left(\norm[\opnormss]{\sum_{\ell\neq\ell'}\mtx{B}_{\ell,\ell'}^T\mtx{B}_{\ell,\ell'}},\norm[\opnormss]{\sum_{\ell\neq\ell'}\mtx{B}_{\ell,\ell'}\mtx{B}_{\ell,\ell'}^T}\right)\right)^{s}\right)\\
\le&(d_1+d_2)(L+1)\left(1+\sum_{s=1}^{+\infty}\frac{(2s)^s}{s!}\left(4\theta^2L\cdot\max_{\ell'}\max\left(\norm[\opnormss]{\sum_{\ell\neq\ell'}\mtx{B}_{\ell,\ell'}^T\mtx{B}_{\ell,\ell'}},\norm[\opnormss]{\sum_{\ell\neq\ell'}\mtx{B}_{\ell,\ell'}\mtx{B}_{\ell,\ell'}^T}\right)\right)^{s}\right)\\
\le&(d_1+d_2)(L+1)\left(1+\sum_{s=1}^{+\infty}\left(8e\theta^2L\cdot\max_{\ell'}\max\left(\norm[\opnormss]{\sum_{\ell\neq\ell'}\mtx{B}_{\ell,\ell'}^T\mtx{B}_{\ell,\ell'}},\norm[\opnormss]{\sum_{\ell\neq\ell'}\mtx{B}_{\ell,\ell'}\mtx{B}_{\ell,\ell'}^T}\right)\right)^{s}\right)\\
\le&\frac{(d_1+d_2)(L+1)}{1-8e\theta^2L\cdot\max_{\ell'}\max\left(\norm[\opnormss]{\sum_{\ell\neq\ell'}\mtx{B}_{\ell,\ell'}^T\mtx{B}_{\ell,\ell'}},\norm[\opnormss]{\sum_{\ell\neq\ell'}\mtx{B}_{\ell,\ell'}\mtx{B}_{\ell,\ell'}^T}\right)}\\
\le&2(d_1+d_2)(L+1)
\end{align*}
where in the first inequality we used \cite[Proposition 7.1]{Tropp_2018}  and the last inequality holds when $16e\theta^2L\cdot\max_{\ell'}\max\left(\norm[\opnormss]{\sum_{\ell\neq\ell'}\mtx{B}_{\ell,\ell'}^T\mtx{B}_{\ell,\ell'}},\norm[\opnormss]{\sum_{\ell\neq\ell'}\mtx{B}_{\ell,\ell'}\mtx{B}_{\ell,\ell'}^T}\right)\le 1$ which is equivalent to 
\begin{align*}
\theta\le \frac{1}{4\sqrt{e L \max_{\ell'}\max\left(\norm[\opnormss]{\sum_{\ell\neq\ell'}\mtx{B}_{\ell,\ell'}^T\mtx{B}_{\ell,\ell'}},\norm[\opnormss]{\sum_{\ell\neq\ell'}\mtx{B}_{\ell,\ell'}\mtx{B}_{\ell,\ell'}^T}\right)}}
\end{align*}
Thus, using the latter chain of inequalities with the above upper bound on $\theta$ we have
\begin{align*}
\PR{ \norm[\opnormss]{\sum_{\ell,\ell':\text{ }\ell\neq \ell'} g_\ell g_{\ell'} \mtx{B}_{\ell,\ell'}}\ge \frac{t}{2} }
\le 4(d_1+d_2)(L+1)e^{-\frac{t}{8\sqrt{e L \max_{\ell'}\max\left(\norm[\opnormss]{\sum_{\ell\neq\ell'}\mtx{B}_{\ell,\ell'}^T\mtx{B}_{\ell,\ell'}},\norm[\opnormss]{\sum_{\ell\neq\ell'}\mtx{B}_{\ell,\ell'}\mtx{B}_{\ell,\ell'}^T}\right)}}},
\end{align*}
which concludes the bound on the off-diagonal term.

Combining the bounds of the diagonal and off-diagonal terms concludes the proof.

\end{document}